
\documentclass[10pt,journal,compsoc]{IEEEtran}
%


%

%
\ifCLASSOPTIONcompsoc
  \usepackage[nocompress]{cite}
\else
  \usepackage{cite}
\fi
%

%
\ifCLASSINFOpdf
\else
\fi
\hyphenation{op-tical net-works semi-conduc-tor}

\IEEEoverridecommandlockouts 

\usepackage{lineno,hyperref}
\modulolinenumbers[5]
\usepackage{tikz}
\usepackage{graphicx}
\usepackage{subfigure}  
\usepackage[ruled,vlined]{algorithm2e} 
\usepackage{amsmath,amssymb}
\usepackage{epstopdf}
\usepackage{textcomp}
\usepackage{color}
\usepackage{diagbox}
\usepackage{amsthm,amsmath,amssymb}
\usepackage{mathrsfs}

\newcommand{\tabincell}[2]{\begin{tabular}{@{}#1@{}}#2\end{tabular}}

\usepackage{amsthm}
\newtheorem{theorem}{Theorem}

\begin{document}
%
\title{Gaussian Curvature Filter on 3D Meshes}
%
%
%
%

\author{Wenming~Tang\textsuperscript{1},
        Yuanhao~Gong\textsuperscript{1},
        Kanglin~Liu\textsuperscript{2},
        Jun~Liu\textsuperscript{3},
		Wei~Pan\textsuperscript{4},
        Bozhi~Liu\textsuperscript{5},
        and Guoping~Qiu\textsuperscript{*}
\IEEEcompsocitemizethanks{\IEEEcompsocthanksitem \textsuperscript{1} Wenming~Tang and Yuanhao~Gong equally contributed to this work.\protect\\
\IEEEcompsocthanksitem Wenming~Tang, Yuanhao~Gong, Kanglin~Liu, Jun~Liu and Bozhi~Liu are with the College of Information Engineering, Shenzhen University,  Guangdong Key Laboratory of Intelligent Information Processing, Shenzhen Institute of Artificial Intelligence and Robotics for Society, Shenzhen, China.\protect\\
E-mail: tangwenming@szu.edu.cn, gong@szu.edu.cn, max\_liu@szu.edu.cn, ljunbit@126.com, Liu@szu.edu.cn.
\IEEEcompsocthanksitem Wei~Pan  is with the School of Mechanical \& Automotive Engineering,
South China University of Technology,  Department of Research and Development,
OPT Machine Vision Tech Co., Ltd, Jinsheng Road, Chang’an, Dongguan 523860, Guangdong, China.\protect\\E-mail: vpan@foxmail.com.
\IEEEcompsocthanksitem Guoping~Qiu (Corresponding author) is with the College of Information Engineering, Shenzhen University,  Guangdong Key Laboratory of Intelligent Information Processing, Shenzhen Institute of Artificial Intelligence and Robotics for Society, Shenzhen, China, and also with the School of Computer Science, University of Nottingham, Nottingham NG8 1BB, U.K.\protect\\E-mail: guoping.qiu@nottingham.ac.uk.
}
}

\IEEEtitleabstractindextext{%
	\begin{abstract}
	Minimizing the Gaussian curvature of meshes can play a fundamental role in 3D mesh processing. However, there is a lack of computationally efficient and robust Gaussian curvature optimization method. In this paper, we present a simple yet effective method that can efficiently reduce Gaussian curvature for 3D meshes. We first present the mathematical foundation of our method. Then, we introduce a simple and robust implicit Gaussian curvature optimization method named Gaussian Curvature Filter (GCF). GCF implicitly minimizes Gaussian curvature without the need to explicitly calculate the Gaussian curvature itself. GCF is highly efficient and this method can be used in a large range of applications that involve Gaussian curvature. We conduct extensive experiments to demonstrate that GCF significantly outperforms state-of-the-art methods in minimizing Gaussian curvature, and geometric feature preserving soothing on 3D meshes. GCF program is available at https://github.com/tangwenming/GCF-filter.
	\end{abstract}

\begin{IEEEkeywords}
Gaussian curvature, filter, mesh smoothing, feature preserving.
\end{IEEEkeywords}}

\maketitle

\IEEEdisplaynontitleabstractindextext

%
\IEEEpeerreviewmaketitle

\IEEEraisesectionheading{\section{Introduction}\label{sec:introduction}}

\par Among Various representations of 3D models, triangular meshes are perhaps the most popular. Triangular meshes usually contain two parts. The first part is a set of vertices, representing 3D spatial locations of the surface. The other part is a set of triangular faces that indicate the connectivity between vertices. With the topological information of adjacent vertices, triangular meshes can represent the geometric details of a surface. 

\par Automatic 3D mesh generation has made great progress in the past few years. There are several ways to generate triangular meshes such as, interactive design from CAD software, 3D scanning, end-to-end generation and so on. In 3D scanning, a scanner can automatically obtain the 3D coordinates of the surface to produce a high-quality 3D mesh~\cite{kriegel2012next}. In end-to-end methods, the 2D images are used to train a neural network, which generates the corresponding 3D mesh~\cite{wang2018pixel2mesh}.
\begin{figure}[htbp]
	\centering
	\includegraphics[height=0.6\linewidth]{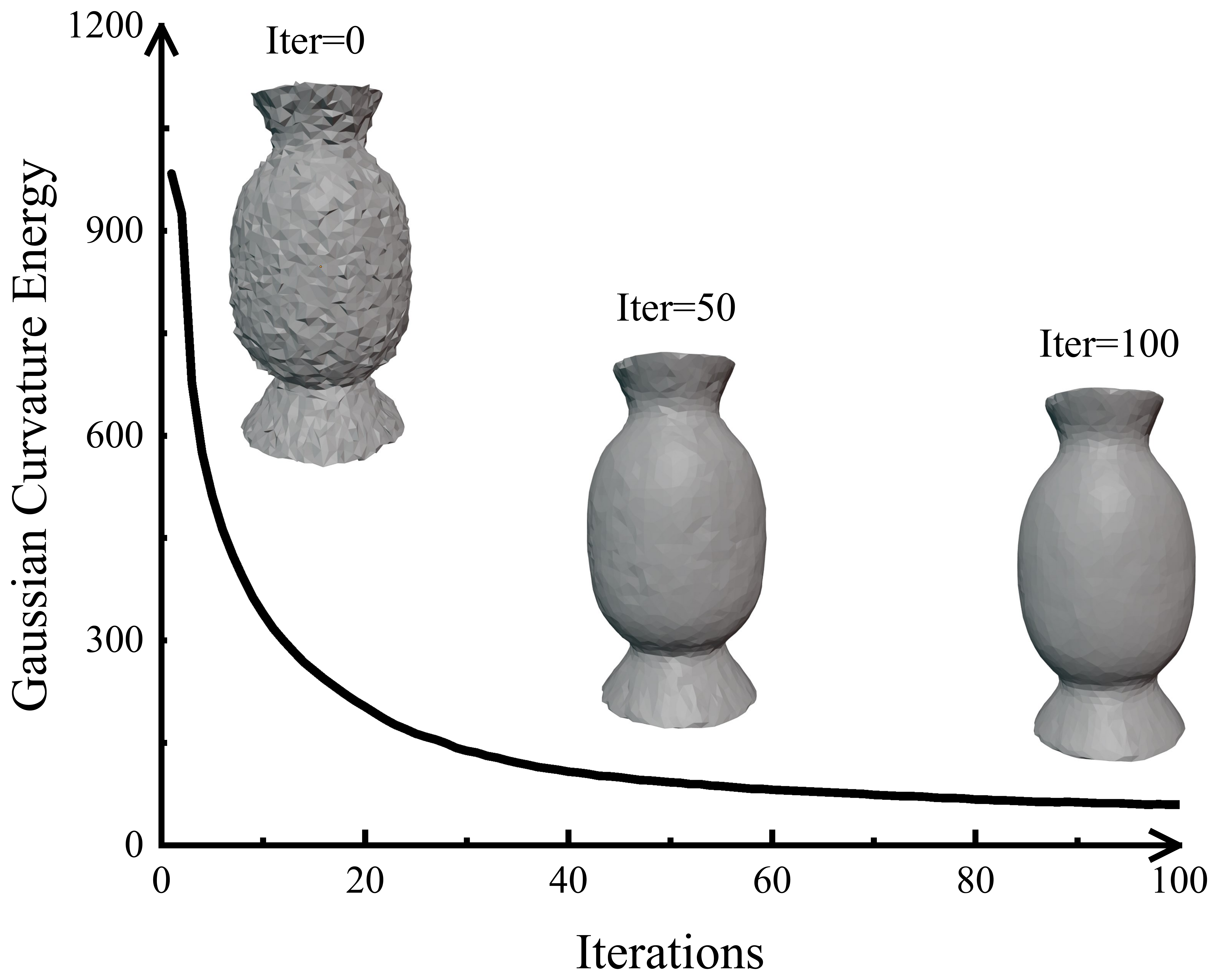}
	\caption{\label{Fig.first} Gaussian curvature filter on the noisy Vase mesh.}
\end{figure}

\par Unfortunately, the 3D mesh obtained through these technologies is often noisy. As a result, the obtained 3D mesh cannot be directly used in practice. For this reason, smoothing methods for 3D meshes become indispensable. In the literature, various methods have been developed. These methods can be categorized into three types: optimization~\cite{fleishman2003bilateral,jones2003non,he2013mesh,wang2014decoupling,wei2014bi,yadav2017mesh}, training~\cite{wang2016mesh,arvanitis2018feature,wei2018learning}, and filtering~\cite{sun2007fast,zheng2011bilateral,zhu2013coarse,zhang2015guided,lu2015robust,li2018non}. Optimization-based methods for mesh smoothing need some manually set parameters, which often need to be optimized iteratively to satisfy the assumed regularization. Jones et al.~\cite{jones2003non} proposed a method that captures the smoothness of a surface by defining local first-order predictors. Lei and Schaefer~\cite{he2013mesh} proposed an $L_0$ minimization method that maximizes the flat regions of the model and removes noise while preserving sharp features. This method is very practical for models with rich flat features. Such methods are generally time consuming and sometimes do not converge for certain given parameter values~\cite{wang2016mesh}. Training-based methods need to provide sufficient training data by learning the mapping relationship between the noisy model and the ground truth model. The trained network can achieve model denoising and feature preservation similar to the noise distribution of the training model. However, the disadvantage of the training-based approach is that it is difficult to find sufficient models to train the network parameters, to have good generalization capabilities for different meshes and different noise levels. Filter-based methods are implemented based on mutual constraints between vertex normals and face normals. Zheng et al.~\cite{zheng2011bilateral} treated the vertex position update as a quadratic optimization problem based on the two normal fields. Sun et al.~\cite{sun2007fast} proposed a two stages denoising method. The first is model patch normal filtering: the weighted average of the neighboring face normals of each face is used, and then the vertices are updated according to the filtered face normals. Those methods use global or local statistics, and may also rely on several manually set parameters. Setting different parameters is conducive to better filtering a specific model, but it also brings inconvenience to use.

\par Curvature is an important geometric feature of surfaces. It is often used as an important tool for surface analysis and processing. The literature has reported that using curvature features on 3D mesh surfaces can achieve good results in surface fairing~\cite{zhao2006triangular}. They designed a diffusion equation whose diffusion direction depends on the mean curvature normal, and its magnitude is a defined function of Gaussian curvature. Michael et al.~\cite{eigensatz2008curvature} proposed a 3D geometry processing framework to achieve 3D mesh filtering and editing by utilizing the curvature distribution of the surface. Gaussian curvature is a specific type of curvature.  It is an intrinsic measurement of surfaces. It has been applied to images and 3D meshes~\cite{desbrun1999implicit,gong2017curvature,lee2005noise}.

\par A 3D mesh that contains noise has higher Gaussian curvature in absolute value than its corresponding noise-free model. Therefore, reducing the curvature energy can smooth or denoise the meshes~\cite{gong2017curvature,gong2015spectrally}. Based on this observation, we can formulate the problem of noise removal on 3D mesh as that of reducing the Gaussian curvature.

\par However, minimizing Gaussian curvature is challenging. It is traditionally carried out by Gaussian curvature flow~\cite{zhao2006triangular}. This method requires the explicit computation of Gaussian curvature. Although Gabriel Taubin~\cite{taubin1995estimating} proposed a method of explicitly estimating the Gaussian curvature on closed manifold meshes, in order to optimize the Gaussian curvature of a model, it is not the best choice to explicitly calculate the Gaussian curvature of each vertex. Another problem with Gaussian curvature flow is that the time step has to be small to ensure numerical stability~\cite{zhao2006triangular}. As a result, such geometric flow is time consuming. These two issues hamper the application of Gaussian curvature on meshes.

\par In order to solve the above-mentioned problems in 3D mesh processing, we propose a simple, easy to implement, and robust filter that can efficiently minimize Gaussian curvature. Our method can effectively remove noise and preserve geometric features as illustrated in Fig. ~\ref{Fig.first}. Different from most existing methods that have many free parameters, our method has only one. The contributions of this paper are as follows: 
\begin{itemize}
	\item We propose a simple and robust implicit Gaussian curvature optimization method, which we call Gaussian Curvature Filter(GCF). GCF does not need to explicitly calculate the Gaussian curvature.
	\item We developed a Gaussian curvature optimization algorithm using a 1-ring neighborhood to preserve the model's original geometric features and simultaneously optimize the Gaussian curvature. 
	\item Our algorithm has only one parameter - the number of iterations. Our method is more robust than existing methods and outperforms state-of-the-art qualitatively and quantitatively.
\end{itemize}
	
	\section{RELATED WORK}
	Before presenting our method that minimizes Gaussian curvature energy and preserves the geometric features, we mainly discuss the related work from three aspects: the Gaussian curvature in image processing, Gaussian curvature in mesh processing, and finally, mesh smoothing capability and feature preservation property.
\subsection{Gaussian curvature in image processing}
Researchers have made great progress in image smoothing using the geometric properties of Gaussian curvature in past decades. Lee et al.~\cite{lee2005noise} design a Gaussian curvature-driven diffusion equation for image noise removal. This method can maintain the boundary and some details better than the mean curvature.  Jidesh et al.~\cite{jidesh2012fourth} proposed Gaussian curvature to guide image smoothing of fourth-order partial differential equations (PDE). It works for image smoothing and maintains curved edges, slopes, and corners. 

\par The calculation formula of Gaussian curvature is generally complicated and has some numerical issues. To overcome these issues, researchers found simple filters to optimize Gaussian curvature. Gong et al.~\cite{gong2013local} proposed a locally weighted Gaussian curvature as a regularized variational model and designed a closed-form solution. It has achieved excellent results in image smoothing, smoothing, texture decomposition and image sharpening. They further proposed an optimization method for regularizers based on Gaussian curvature, mean curvature and total variation~\cite{gong2017curvature}. These pixel local filters can be used to efficiently reduce the energy of the entire model, thus significantly reduce computational complexity because there is no need to explicitly calculate Gaussian curvature itself. 

\subsection{Gaussian curvature in mesh processing}
\par In 3D geometry, researchers have found some approximation methods to calculate Gaussian curvature for discrete meshes~\cite{peng2003estimating,surazhsky2003comparison}. Michael et al.~\cite{eigensatz2008curvature} proposed a framework for 3D geometry processing that provides direct access to surface curvature to facilitate advanced shape editing, filtering, and synthesis algorithms. This algorithm framework is widely used in geometric processing, including smoothing, feature enhancement, and multi-scale curvature editing. There have been some works~\cite{stein2018developability,rabinovich2018discrete, ion2020shape,sellan2020developability} about the application of Gaussian curvature in fied of developable 3D meshes. For example, Oded et al.~\cite{stein2018developability} used a variational approach that drives a given mesh toward developable pieces separated by regular seam curves. The partial developability of a mesh makes the mesh convenient for industrial manufacturing. Similarly, the real-time nature of the algorithm needs to be greatly improved. There has been several works based on curvature flow~\cite{zhao2006triangular,kazhdan2012can,crane2013robust} in the field of 3D mesh optimization. For example, Zhao et al.~\cite{zhao2006triangular} applied Gaussian curvature flow to mesh fairing. They designed a diffusion equation whose evolution direction relies on the normal and the step size is a manually defined function of Gaussian curvature. The corner and edge features of the mesh are preserved during fairing. However, the Gaussian curvature of each vertex is explicitly calculated, and the computation complexity is too high.

\par In Gaussian curvature optimization methods, the Gaussian curvature flow is the most classical one. The Gaussian curvature flow method relies on high-precision Gaussian curvature calculations. It also requires the time step size to be small for ensuring numerical stability. If the step size is set too large, the algorithm may be unstable. If the step size is too small, the convergence speed is slow~\cite{zhao2006triangular}. How to set a reasonable step size is an open problem.

\subsection{Mesh smoothing and feature preservation}
There are many types of 3D mesh smoothing and feature preservation methods. Here, we mainly discuss the optimization-based and filter-based state-of-the-art methods.

\par Optimization-based methods achieve global optimization constrained by the priors of the ground truth geometry and noise distribution. He et al.~\cite{he2013mesh} proposed a $L_0$ minimization based method that achieves smoothing by maximizing the plane area of the model. In a model with rich planar features, this method can preserve some sharp geometric features during the smoothing. Wang et al.~\cite{wang2014decoupling} implement smoothing and feature preservation in two steps. First, the global Laplace optimization algorithm is used to denoise, and then an L1-analysis compressed sensing optimization is used to recover sharp features.

\par Filter-based methods are commonly implemented by moving the vertex position along the vertex or face normal. In~\cite{zhu2013coarse,zhang2015guided,lu2015robust}, researchers perform the smoothing and feature preservation by moving the vertex position of the model. The vertex is moved along the normal direction. And the moving step size is an empirical parameter. Lu et al.~\cite{lu2015robust} constructs geometric edges by extracting geometric features of the input model, and iteratively optimizes vertex positions for smoothing and feature preservation by guiding the geometric edges. In~\cite{lu2015robust,yadav2017mesh,lu2017efficient,zhang2015guided,li2018non}, iterative optimization of the face normal is used as a guide for smoothing and feature preservation. Li et al.~\cite{li2018non} present a non-local low-rank normal filtering method. Smoothing and feature preservation of synthetic and real scan models are achieved by guided normal patch covariance and low-rank matrix approximation.

\par Most of existing smoothing and feature preservation algorithms have the following problems: 1) Excessive dependence on the \textit{a priori} assumptions~( for example, edge and corner features), thus resulting in many parameters to be manually set; 2) It is difficult to find the optimal parameters;  3) Based on the method of face normal filtering, the original feature is easily damaged while smoothing texture-rich models.
	
	\section{ Gaussian Curvature Filter on Mesh}
	\label{sec:GCF}
In this section, we show a simple iterative filter that can efficiently reduce Gaussian curvature for meshes. Meanwhile, our method preserves geometric features of the input mesh during the optimization process. We show a mathematical theory behind this filter, which guarantees to reduce Gaussian curvature.
\subsection{Variational Energy}
In many applications, reducing Gaussian curvature is usually imposed by following variational model ( see~\cite{gong2015spectrally}, page 131, formula 6.2):
\begin{equation}
\label{eq:total}
\arg\min_{\{v'_i\}}\sum_{i=1}^{N}[\frac{1}{2}(v_i-v_i')^2+\lambda |K(v_i')|]\,,
\end{equation} where $\{v_i\}$ are the input vertices $N$ is the number of vertices, $v_i'$ is the desired output, $K(v_i')$ is the Gaussian curvature at $v_i'$ and $\lambda>0$ is a scalar parameter that usually is related to noise level. The first quadratic term measures the similarity between the input and the output. The second term measures the Gaussian curvature energy of the output mesh. The main challenge in this model is how to efficiently minimize the Gaussian curvature. The definition for a ``discrete Gaussian curvature'' on a triangle mesh is via a vertex’s angular deficit~\cite{meyer2003discrete}:
\begin{equation}
\label{eq:Gaussian curvature}
K(v_i') = (2\pi - \sum_{j\in N(i)} \theta_{ij})/A_{ N(i)},
\end{equation} where $N(i)$ are the triangles incident on vertex $i$ and $\theta_{ij}$ is the angle at vertex $i$ in triangle $j$, $A_{ N(i)}$ is the sum of areas of the $N(i)$ triangles~\cite{meyer2003discrete}. 

According to ~\cite{guoliang2013geometric} theorem $1.15$, the Gaussian curvature energy (GCE) is defined as
\begin{equation}
\label{eq:GC}
{\cal E}_{\mathrm{GC}}(v_i')=\sum_{i=1}^{N}|K(v_i')|\,.
\end{equation} This energy measures the developability of the mesh $\{v_i'\}$. Different from the Eq.~\ref{eq:total}, this energy does not consider the similarity between the output $\{v_i'\}$ and input $\{v_i\}$. Therefore, only minimizing Gaussian curvature energy does not preserve the geometric features of the input mesh during the optimization. We will discuss how to minimize Gaussian curvature and preserve geometric features in our method.

When ${\cal E}_{\mathrm{GC}}=0$, it is clear that $K=0$ everywhere on the surface. Such surface is called a developable surface, which can be mapped to a plane without any distortion. That is why it is called ``developable''. Reducing the Gaussian curvature on the surface is trying to make the surface developable. Developable surfaces can be easily manufactured and produced in industry~\cite{stein2018developability}. This is one reason that minimizing Gaussian curvature is an important topic. Although we know that developable surfaces are very useful, this is not the focus of this paper. This paper is not to obtain a developable surface, but to design an implicit optimization of the Gaussian curvature to achieve smoothing and feature preservation of the model.
\subsection{Mathematical Foundation}
\label{sec:math}
For any developable surface ${\cal S}$ (Gaussian curvature is zero everywhere on the surface), we denote ${\cal TS}$ as its tangent space. We have following theorem:
\begin{theorem}\label{the-proj}
	$\forall \vec{x}\in {\cal S}$, $\forall \epsilon>0$, $\exists \vec{x}_0\in {\cal S},0<|\vec{x}-\vec{x}_0|<\epsilon$, s.t. $\vec{x_0}\in {\cal TS}(\vec{x})$.
\end{theorem}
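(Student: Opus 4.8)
The plan is to exploit the classical structural fact that a developable surface ($K\equiv 0$) is ruled, with the tangent plane remaining constant along each ruling; the desired point $\vec{x}_0$ will then be a nearby point on the ruling through $\vec{x}$. Concretely, I would work with the Gauss map $N$ and the shape operator $S=-dN$ (the Weingarten map), and fix $\vec{x}\in\mathcal{S}$ together with an arbitrary $\epsilon>0$.

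First I would dispose of the degenerate case where $\vec{x}$ is a planar point surrounded by planar points: there a whole neighborhood of $\vec{x}$ in $\mathcal{S}$ lies in the affine tangent plane $\mathcal{TS}(\vec{x})$, so any distinct nearby point serves as $\vec{x}_0$. In the main case, since $K=\det S=0$, the self-adjoint operator $S$ has a zero eigenvalue at $\vec{x}$; I would pick a unit eigenvector $e$ with $Se=0$, extend it to a smooth unit tangent field, and follow its integral curve $\gamma(t)\subset\mathcal{S}$ with $\gamma(0)=\vec{x}$ and $\gamma'(0)=e$.

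The heart of the argument is to show $\gamma$ lies in the tangent plane $\mathcal{TS}(\vec{x})$. Differentiating the normal along $\gamma$ gives $\tfrac{d}{dt}N(\gamma(t))=dN(\gamma')=-S(e)=0$, so $N$ is constant along $\gamma$. Then, using tangency $\gamma'\cdot N=0$, one computes $\tfrac{d}{dt}\big[(\gamma(t)-\vec{x})\cdot N\big]=\gamma'\cdot N+(\gamma-\vec{x})\cdot\tfrac{d}{dt}N=0$, whence $(\gamma(t)-\vec{x})\cdot N\equiv 0$; that is, $\gamma(t)\in\mathcal{TS}(\vec{x})$ for all small $t$. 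Since $e$ is a unit vector, $\gamma$ is an immersion near $0$, so for small $t\neq 0$ the point $\vec{x}_0:=\gamma(t)$ is distinct from $\vec{x}$ and can be made to satisfy $0<|\vec{x}-\vec{x}_0|<\epsilon$; by construction $\vec{x}_0\in\mathcal{S}\cap\mathcal{TS}(\vec{x})$, which is exactly the claim.

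The main obstacle I anticipate is the rigorous treatment of points where $S$ degenerates further, namely isolated planar points or the boundary between flat and non-flat regions, where the smooth choice of the zero-eigendirection $e$ and the existence of its integral curve are not immediate. I expect to resolve this either by restricting to the open non-planar locus and passing to a limit by continuity of $N$ and $\mathcal{TS}$, or by invoking the local classification of developable surfaces (plane, cylinder, cone, tangent developable) to exhibit the ruling directly. Everything else reduces to the one-line computation with $dN=-S$ above.
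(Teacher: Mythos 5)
Your argument is correct in its core but follows a genuinely different route from the paper's. The paper never touches the shape operator: it invokes the structure theorem for developable surfaces (citing Pottmann et al.) to write $\mathcal{S}$ as a ruled surface $\vec{r}(u,v)=\vec{r}_A(u)+v\,\vec{r}_B(u)$ with $\vec{r}_B$ a unit vector, sets $\vec{x}_0=\vec{r}(u,v+\epsilon)=\vec{x}+\epsilon\,\vec{r}_B(u)$, and observes that since $\mathcal{TS}(\vec{x})$ is spanned by $\mathrm{d}\vec{r}/\mathrm{d}u$ and $\mathrm{d}\vec{r}/\mathrm{d}v=\vec{r}_B(u)$, the point $\vec{x}_0$ lies in that plane; the whole proof is this two-line computation after the citation. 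You instead derive the same conclusion intrinsically: integrate the zero eigendirection of $S$ to a curve $\gamma$, show $N$ is constant along $\gamma$ because $dN(\gamma')=-S(\gamma')=0$, and conclude $(\gamma(t)-\vec{x})\cdot N\equiv 0$. Geometrically your $\gamma$ is the paper's ruling (its direction is the $\kappa=0$ principal direction), so both proofs move the point the same way; what differs is the machinery. Your version is more self-contained — it needs no classification or parametrization of developable surfaces, and it establishes only the weaker fact actually required (the curve lies in $\mathcal{TS}(\vec{x})$, not that it is a straight line) — and it makes explicit where $\vec{x}_0$ sits, in the spirit of the paper's remark that $\kappa_1\kappa_2=0 \Leftrightarrow \min\{|\kappa_1|,|\kappa_2|\}=0$. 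Two caveats. First, for the key identity to hold at all $t$ and not just $t=0$, the extension of $e$ must be the kernel line field of $S$, not an arbitrary smooth extension of a single vector; this is legitimate near any point with $S\neq 0$, since there the eigenvalues $0$ and $2H\neq 0$ are distinct and the eigenspaces vary smoothly, but you should say so. Second, the degenerate case you flag — planar points lying on the boundary of the flat locus, where $S=0$ yet non-planar points accumulate — is a real hole in your argument as written, and patching it by a limiting argument requires nontrivial structure theory of flat surfaces (rulings of uniform length near such points, à la Massey or Hartman--Nirenberg); note, however, that the paper's proof has exactly the same blind spot hidden inside its citation, since the ruled parametrization it assumes is only guaranteed away from such points, so your explicit accounting of the difficulty is a strength of your write-up rather than a deficiency relative to the paper.
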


\begin{proof}
	Let $\vec{x}=\vec{r}(u,v)\in {\cal S}$, where~$\vec{r}=(x,y,z)\in R^3$, and~$(u,v)$ is the parametric coordinate. Since ${\cal S}$ is developable, $\vec{r}(u,v)$ can be represented as $\vec{r}(u,v)=\vec{r}_A(u)+v\vec{r}_B(u)$ ~\cite{pottmann2009computational}, where $ \vec{r}_A(u)$ is the directrix and $\vec{r}_B(u)$ is a unit vector. Let $\vec{x}_0=\vec{r}(u,v_0)\in {\cal S}$, where $v_0=v+\epsilon$ and $\epsilon\neq 0$, then $\vec{x}_0=\vec{r}_A(u)+(v+\epsilon)\vec{r}_B(u)$. For two arbitrary scalars $\alpha_1$ and $\alpha_2$, the tangent plane at $\vec{x}$ is
	\begin{equation}\label{eq:plane}
	{\cal TS}(\vec{x}) = \vec{r} +  \alpha_1\frac{\mathrm{d}\vec{r}}{\mathrm{d}u} +  \alpha_2 \frac{\mathrm{d}\vec{r}}{\mathrm{d}v}=\vec{r} +  \alpha_1\frac{\mathrm{d}\vec{r}}{\mathrm{d}u} +  \alpha_2 \vec{r}_B(u)\,.
	\end{equation} 
	Because of Eq.~\ref{eq:plane}, $\vec{x}_0$ is on the plane that passes $\vec{x}$ and is spanned by the two vectors $\frac{\mathrm{d}\vec{r}}{\mathrm{d}u}$ and $\vec{r}_B$. Therefore, $\vec{r}_0\in {\cal TS}(\vec{x})$.
\end{proof}

This theorem indicates that for any point $\vec{x}$ on a developable surface there must be another neighbor point $\vec{x_0}$ that lives on its tangent plane. This conclusion is the theoretical foundation for our method.

\begin{figure}
	\includegraphics[width=0.8\linewidth]{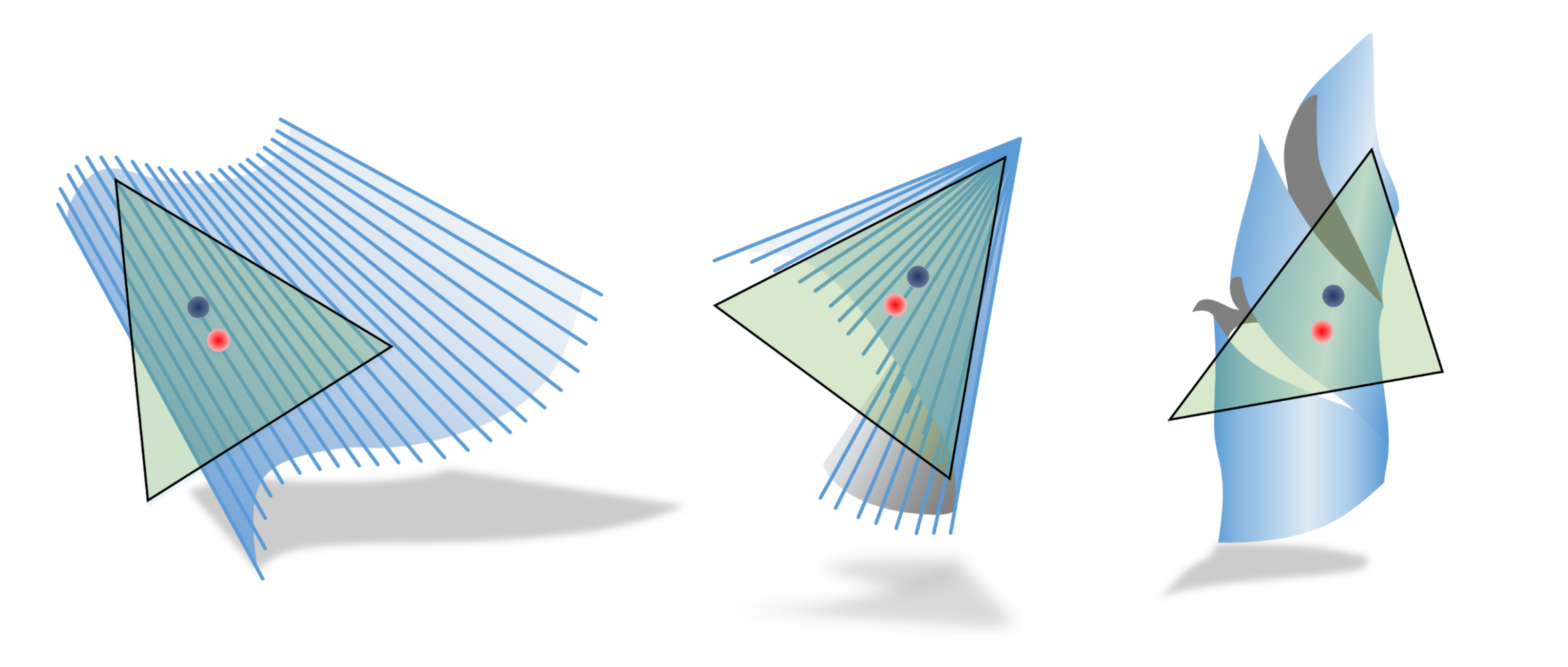}
	\centering
	\setlength{\arraycolsep}{0pt}
	\[
	\begin{array}{ccc}
	\makebox[2.9cm]{\small (a) Cylindrical surface}&
	\makebox[3.2cm]{\small (b) Conical surface}&
	\makebox[3.2cm]{\small (c) Tangent surface}\\
	\end{array}
	\]
	\caption{Theorem~\ref{the-proj} on three types of developable surfaces.}
	\label{fig:theorem}
\end{figure}

This theorem can be verified on developable surfaces. In mathematics, it is already known that there are only three types of developable surface: cylinder, cone and tangent developable. As shown in Fig.~\ref{fig:theorem}, for any point $\vec{x}$ (red point) on such surface, there is another point $\vec{x_0}$ (blue point) that lives on its tangent plane (green triangle ).

This theorem can also be explained from another point of view. In differential geometry, Gaussian curvature of a vertex on a surface is the product of the principal curvature $\kappa_1$ and $\kappa_2$ at the vertex. That is $K =\kappa_1 \kappa_2$.~In literature ~\cite{gong2015spectrally} chapter $6.1.2$, it has been proved that minimizing a principal curvature in the Gaussian curvature of a vertex is to minimize the Gaussian curvature of the vertex for the 2D discrete images. More specifically, we have following relationship: 
\begin{equation}\label{eq:curvature}
\kappa_1\kappa_2=0 \Longleftrightarrow \min\{|\kappa_1|,|\kappa_2|\}=0.
\end{equation} This result is stronger than Theorem~\ref{the-proj} because it tells where $\vec{x_0}$ should be. Theorem~\ref{the-proj} ~and Formula ~\ref{eq:curvature}~can tell us that Gaussian curvature can be minimized without calculating principal curvature.

\par Although this theory is for continuous surfaces, it is still valid for discrete triangular meshes. And all numerical experiments in this paper have confirmed its validity. The only issue on the meshes is that $\vec{x}_0$ is not necessarily a vertex on the mesh (even not on the mesh). However, we can use one of the 1-ring neighboring vertexes to approximate $\vec{x}_0$. Such approximation works well for practical applications as confirmed in this paper. The procedure to find this vertex will be explained in Section~\ref{sec:vmd}.

\par In this paper, we adopt Theorem~\ref{the-proj} and apply it on mesh processing. According to Theorem~\ref{the-proj}, we can reduce the Gauss curvature of the vertex by moving its position such that one of its neighbors falls on its tangent plane. If the Gaussian curvature at the vertex is zero, then the moving distance is zero because one of its neighbors already lives on its tangent plane, see Fig.~\ref{Fig.hl_compare}. Otherwise, the absolute value of Gaussian curvature is high before the movement. After the movement, the processed vertex is closer to a developable surface. Therefore, the Gaussian curvature is reduced. 

\subsection{Discrete Neighborhood on Meshes}

Based on Theorem~\ref{the-proj}, there is always a neighbor point $\vec{x_0}$ that lives on the tangent plane of $\vec{x}$ for any point $\vec{x}$ on a developable surface. However, on the discrete mesh, this point $\vec{x_0}$ is not necessarily a vertex in real applications. To overcome this issue, we take all the 1-ring topology neighborhood vertices as possible candidates and finally adopt only one as an approximation to $\vec{x_0}$. Although such approximation introduces some numerical error, it simplifies the way to find $\vec{x_0}$ on triangular meshes. Our numerical experiments confirm that such approximation works well on triangular meshes in practical applications, see Fig.~\ref{fig:com_gcf06_100}~and~\ref{fig:com_denosing and_features}.

\subsection{ Our Method}
Our method can be roughly divided into two stages. The first part is to classify all the vertices of a mesh according to their neighborhood relationship, so as to ensure that a certain vertex is moved in the local area and its neighborhood is fixed. In this paper, we call it the Greedy Domain Decomposition Algorithm, or GDD for short. The second part is the vertex update algorithm. The vertex update is performed according to the normal direction and the minimum absolute distance.

\begin{figure}[tbp]
	\centering
	\subfigure[]{
		\includegraphics[width=3.1cm]{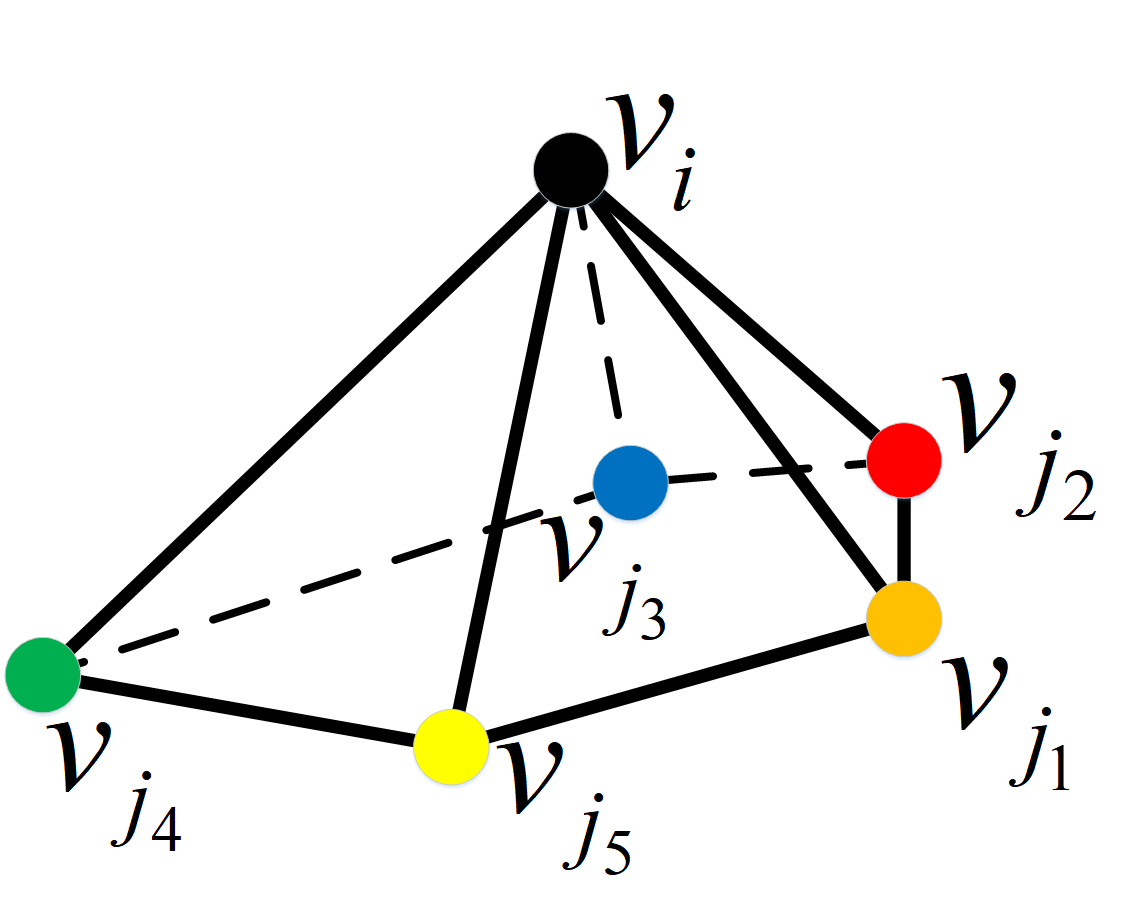}}
	\subfigure[]{
		\includegraphics[width=5.3cm]{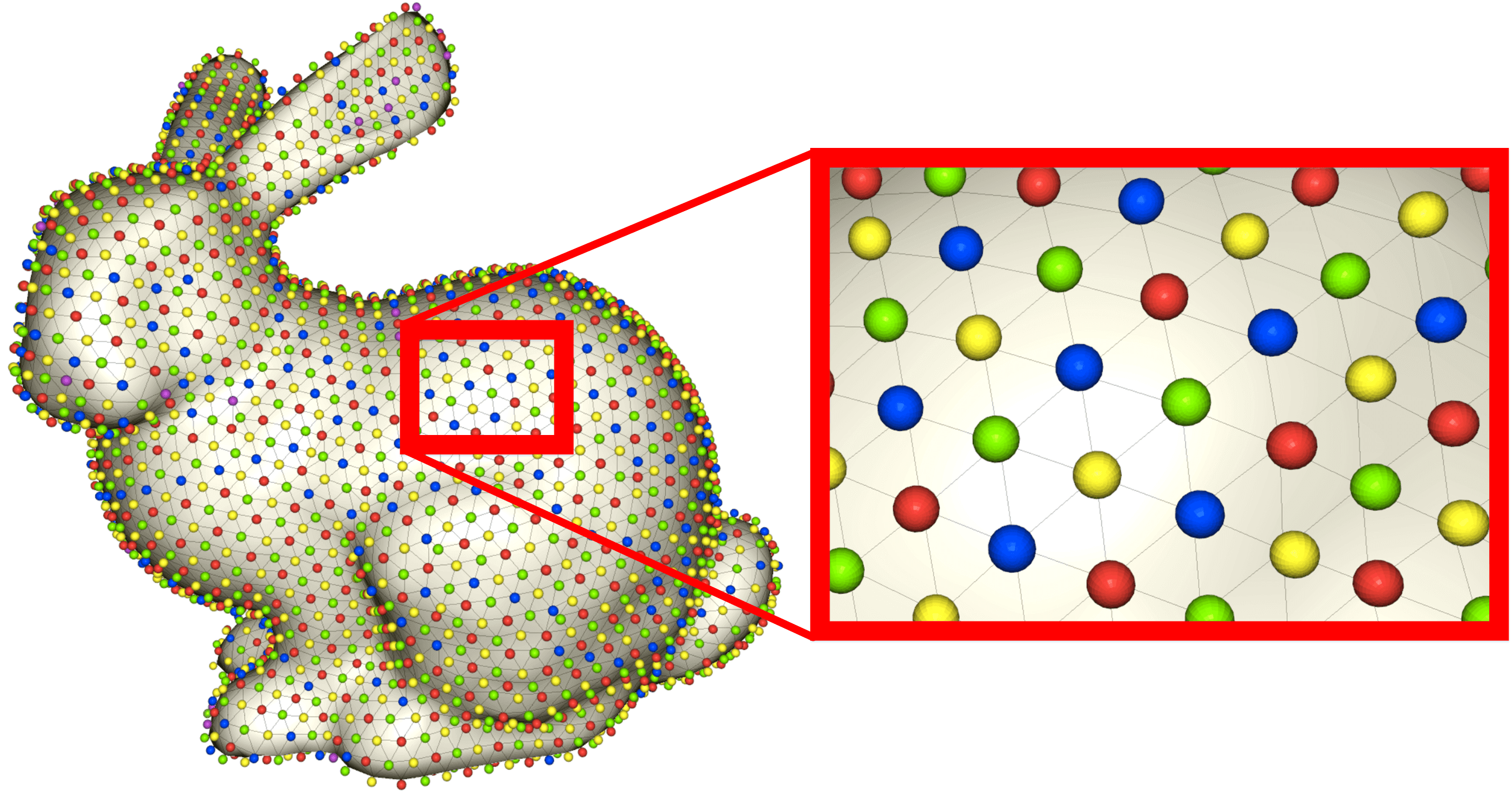}}
	\caption{\label{Fig.com_detail} The 1-ring neighborhood and the domain decomposition result on the Stanford bunny mesh. (a) is the basic structure of the 1-ring topology neighborhood. (b) is the result of the domain decomposition algorithm on the Stanford bunny.}
\end{figure}

\begin{algorithm} 
	\SetAlgoNoLine  
	\caption{GREEDY DOMAIN DECOMPOSITION} 
	\label{algorithms:DOMAIN DECOMPOSITION}
	\KwIn{Vertices V=\{$v_1$,~$v_2$,~...,~$v_n$\},
		Neighbor list P=\{P$_1$,~P$_2$,~...,~P$_n$\}} 
	
	Initialization each vertex color $C_i=0,i=1,\cdots,n$\;
	
	\For{$each~ v_i~ \in~ V$}
	{ 
		Through the greedy algorithm, until the color of the vertex $v_i$ is different from the vertex set of its neighborhood $P_i$, the maximum value k of the color $C_i$ obtained is the number of domain sets $D_k$, and each domain set only contains vertices of the same color.
		%
		%
		%
		%
		%
		%
		%
		%
	}
	\KwOut{vertex domain set \{D$_1$,~D$_2$,~...,~D$_k$\}} where all vertices in D$_i$ have the same color label.
\end{algorithm} 

\subsubsection{Greedy Domain Decomposition}
A 1-ring neighborhood of a triangular mesh is usually composed of the similar structure as Fig.~\ref{Fig.com_detail} $(a)$. The local shape structure consists of a vertex $v_i$ and its neighborhood vertex set $P_i$=\{$v_{j_1}, ..., v_{j_5}$\}. Vertex and neighborhood vertices are connected by edges. 



\par Implementation of this algorithm is described in {\bf Algorithm ~\ref{algorithms:DOMAIN DECOMPOSITION}}, where  $V= \{v_1, v_2, ..., v_n\}$ is the set of all vertices of a mesh, $P= \{P_1, P_2, ..., P_n\}$ is the set of neighborhood points of each vertex, and $D =\{D_1, D_2, ..., D_n\}$ is the color label of each vertex after greedy domain decomposition (Algorithm ~\ref{algorithms:DOMAIN DECOMPOSITION}).

\par The advantages of greedy domain decomposition for mesh vertices are as follows: First, it can ensure that the vertex moves while the neighboring vertices do not move. Second, all vertices are divided into several independent sets. Therefore, each set can move independently. This mechanism can speed up the convergence of our algorithm, as shown in Fig.~\ref{Fig.the convergence average slope}. This is another reason why our algorithm is faster than the Gaussian curvature flow~\cite{zhao2006triangular} (see table~\ref{table1}). We only give the numerical convergence rate in the experiment. Here we define the average convergence slope (${\cal_{ACS}}$) as:
	\begin{equation}\label{eq:ACS}
	{\cal_{ACS}} = \frac{1}{M*(N-2)}\sum_{j=1}^M\sum_{t=2}^{N-1} \frac{lg(\frac{||{{{\cal E}_{\mathrm{GC}}}_{j}}^{{(t+1)}}-{{{\cal E}_{\mathrm{GC}}}_{j}}^{(t)}||_\infty}{||{{{\cal E}_{\mathrm{GC}}}_{j}}^{(t)}-{{{\cal E}_{\mathrm{GC}}}_{j}}^{(t-1)}||_\infty})}{lg(t+1)-log(t)}\,.
	\end{equation}
where, M represents the number of meshes, and N represents the number of iterations of each mesh. ${\cal E}_{\mathrm{GC}}$ is Gaussian curvature energy. 
The result of the greedy domain decomposition of a triangle mesh by {\bf Algorithm ~\ref{algorithms:DOMAIN DECOMPOSITION}} is shown in Fig. ~\ref{Fig.com_detail} $(b)$. We can see that each vertex color is different from the color of its neighborhood vertices, and all the vertices of the bunny are independently divided into several sets.

\begin{figure}[htbp]
	\centering
	\subfigure[one normal projection]{
		\includegraphics[width=3.5cm]{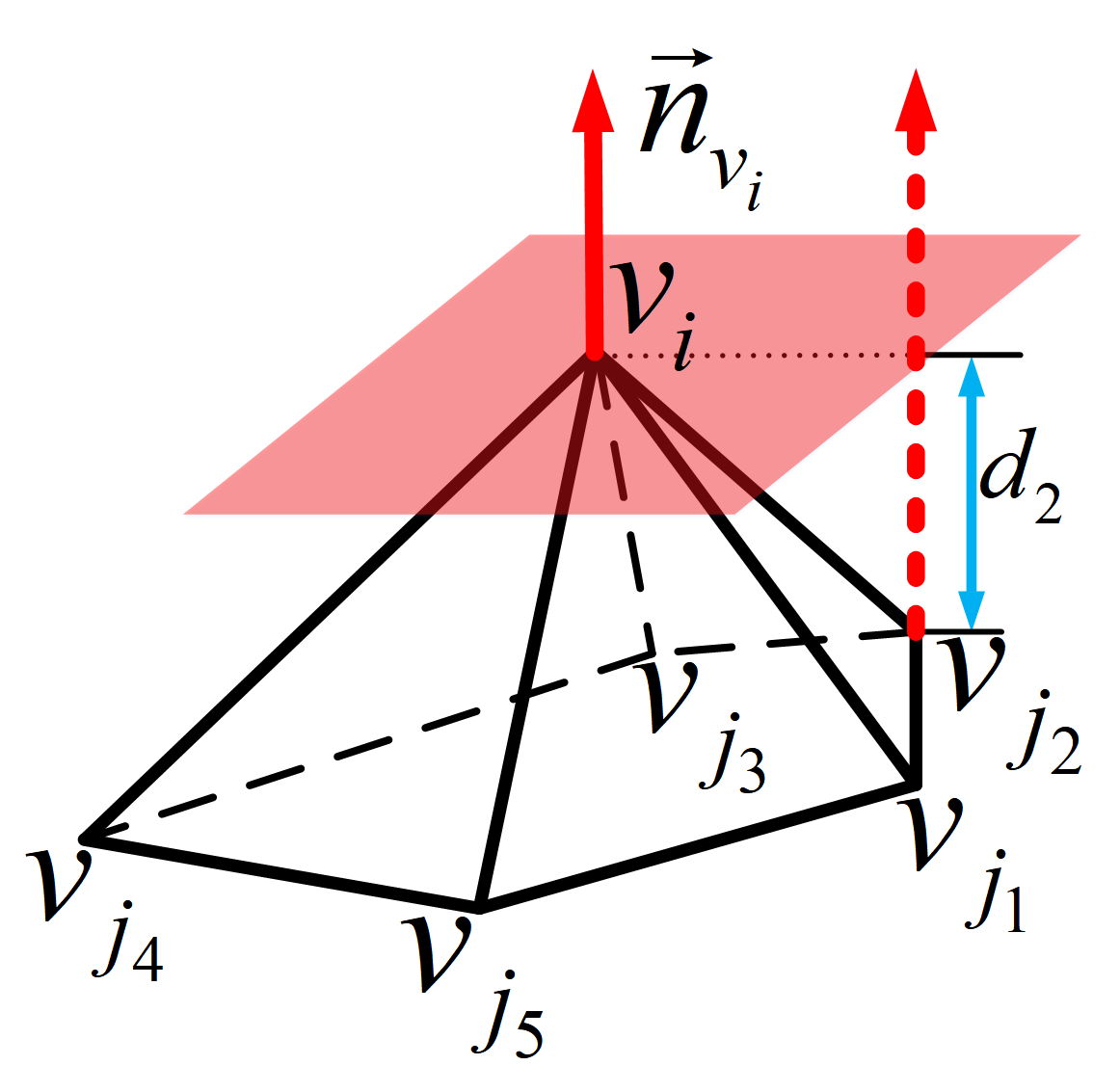}}
	\subfigure[multi normal projection]{
		\includegraphics[width=3.5cm]{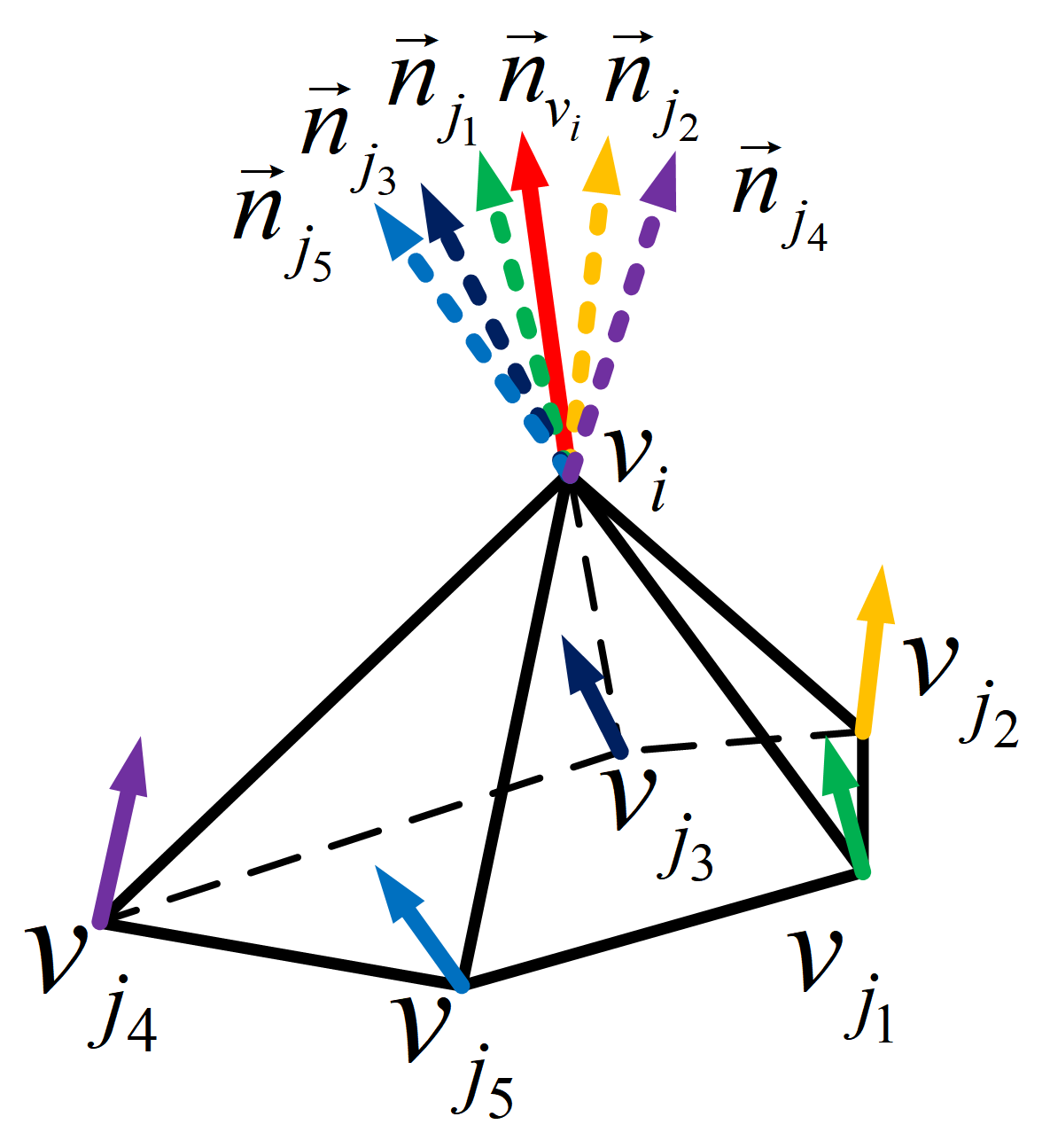}}
	\subfigure[vertex update]{
		\includegraphics[width=4.0cm]{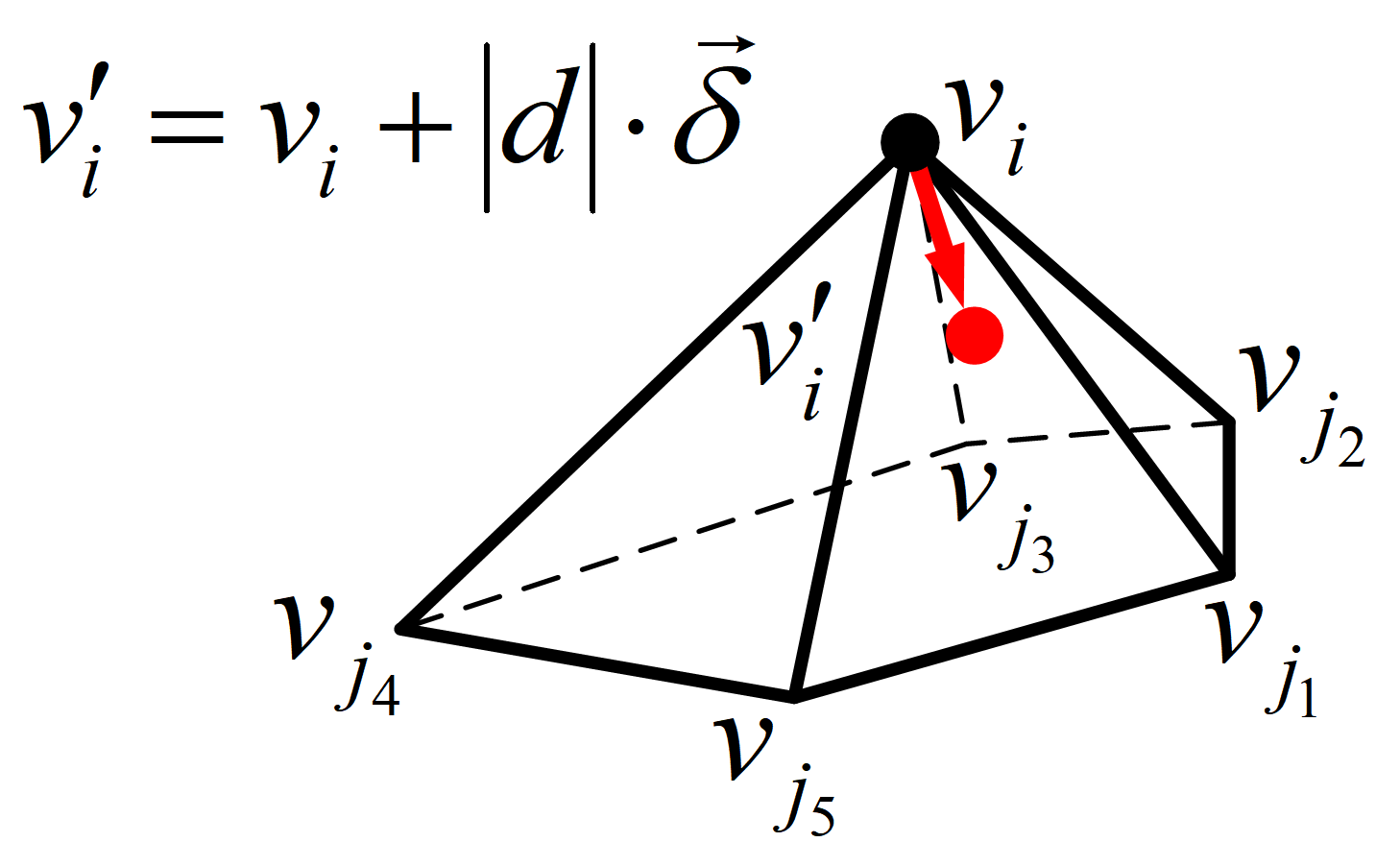}}
	\caption{Projection strategy and the vertex update. (a) And (b) are one and multi normal projection strategy. (c) shows the vertex movement direction and amplitude.}
	\label{Fig.projection_and_update}
\end{figure}

\subsubsection{Vertex Moving Direction}
The differential coordinate of the i-th vertex $v_i = (x_i, y_i, z_i) \in V = \{v_1, v_2, ..., v_n\}$ is the difference between the absolute coordinates of $v_i$ and the center of mass of its immediate neighbors $P_i = \{v_{j_1}, v_{j_2}, ..., v_{j_m}\}$ in the mesh~\cite{sorkine2006differential}, i.e.
\begin{equation}\label{eq:differential}
\vec{\delta}_{v_i} = (\delta_{x_i}, \delta_{y_i}, \delta_{z_i}) = v_{i} - \frac{1}{m}\sum_{v_{j}\in P(v_i)} v_{j}\,.
\end{equation}
\par In differential geometry, the direction of the differential coordinate vector $\vec{\delta}_{v_i}$  approximates the normal direction of the local area~\cite{taubin1995signal}. Following these works, we use the $\vec{\delta}$ unit vector Eq.~\ref{eq:Laplace} (reverse normal direction) as the moving direction. 
\begin{equation}\label{eq:Laplace}
\vec{\delta}=\frac{-\vec{\delta}_{v_i}}{\|\vec{\delta}_{v_i}\|}\,.
\end{equation}
\subsubsection{Vertex Moving Distance}
\label{sec:vmd}
After the greedy domain decomposition, we update each independent vertex set separately. During each iteration,  we have to find the moving direction of the vertex and also the corresponding moving distance. We propose a multi normal projection strategy for computing the moving distance, As shown in Fig. ~\ref{Fig.projection_and_update}. we optimize the Gaussian curvature of $v_i$ by moving the vertex $v_i$. So we need to calculate the magnitude of the movement of the vertex $v_i$. It is shown in Fig.~\ref{Fig.projection_and_update} $(a)$ that multiple edges \{$\overrightarrow{v_{i}v}_{j_1}$,$\overrightarrow{v_{i}v}_{j_2}$...,$\overrightarrow{v_{i}v}_{j_5}$\} composed by vertex $v_i$ and neighborhood vertices \{$v_{j_1}$,$v_{j_2}$...$v_{j_5}$\} are projected to the vertex $v_i$ unit normal vector $\vec{n}_{v_i}$ to calculate the distance set d=\{$d_1$, $d_2$, ..., $d_5$\}. 
\par As shown in Fig.~\ref{Fig.projection_and_update} $(b)$, we compute the normal $\vec{n}_{vi}$~(Eq.\ref{eq:nvi})~\cite{zhao2006accurate} of the vertex $v_i$ and then the normal of each neighborhood vertices. 
	\begin{equation}\label{eq:nvi}
	\vec{n}_{v_i}=\sum_{j\in F_v(i)} A_j\vec{n}_{Fj},
	\end{equation}
	where  $A_j$ is the corresponding face area and $\vec{n}_{Fj}$is the number of the $j$th face normal. $F_v(i)$ is the number of faces in the $i$th vertex-ring. It should be noted that the neighborhood vertex normal is the cross-product unit vector of the two edges of the neighborhood vertices. More specifically, the $\vec{n}_{j_k}$ unit vector in Fig.~\ref{Fig.projection_and_update} $(b)$ is given by~(Eq.\ref{eq:cross})
\begin{equation}\label{eq:cross}
\vec{n}_{j_k}=\frac{\overrightarrow{v_{j_k}v}_{j_{(k-1)}}\times \overrightarrow{v_{j_k}v}_{j_{(k+1)}}}{\|\overrightarrow{v_{j_k}v}_{j(_{k-1)}}\times \overrightarrow{v_{j_k}v}_{j_{(k+1)}}\|}.
\end{equation}
We calculate the projection distance of the unit vector set \{$\vec{n}_{v_i}$, $\vec{n}_{j_1}$, ..., $\vec{n}_{j_5}$\} from the vertex of the 1-ring neighborhood structure. Then each edge \{$\overrightarrow{v_{i}v}_{j_1}$, $\overrightarrow{v_{i}v}_{j_2}$, ..., $\overrightarrow{v_{i}v}_{j_5}$\} has a projection to each unit vector \{$\vec{n}_{v_i}$, $\vec{n}_{j_1}$, ..., $\vec{n}_{j_5}$\}.  As a result, we have all possible projection distances \{$d_1$, $d_2$, ..., $d_n$\} (in this example $n=5\times6=30$). We choose the smallest absolute value in this set $|d|$ as the moving amplitude of vertex  $v_i$.

In summary, the minimal moving vertex distance for $v_i$ $i \in \{1, ..., n\}$ is computed
\begin{equation}
d = 
\begin{array}{l}
\min\limits_{k}\{\left|<\{\vec{N}\},~\overrightarrow{v_{i}v}_{j_k}>\right|\},\{\vec{N}\}=\{\vec{n}_{v_i}, \vec{n}_{j_1}, ..., \vec{n}_{j_m}\}\\
\end{array}
\end{equation}
where $<,~>$ is the standard inner product, $~k=1, ..., m;$. Such minimal distance corresponds to a neighboring vertex $v_{j_k}$. And this vertex is selected as an approximation to $\vec{x}_0$ as described in Theorem~\ref{the-proj}.

\begin{figure}[htbp]
	\centering
	\includegraphics[width=8.0cm]{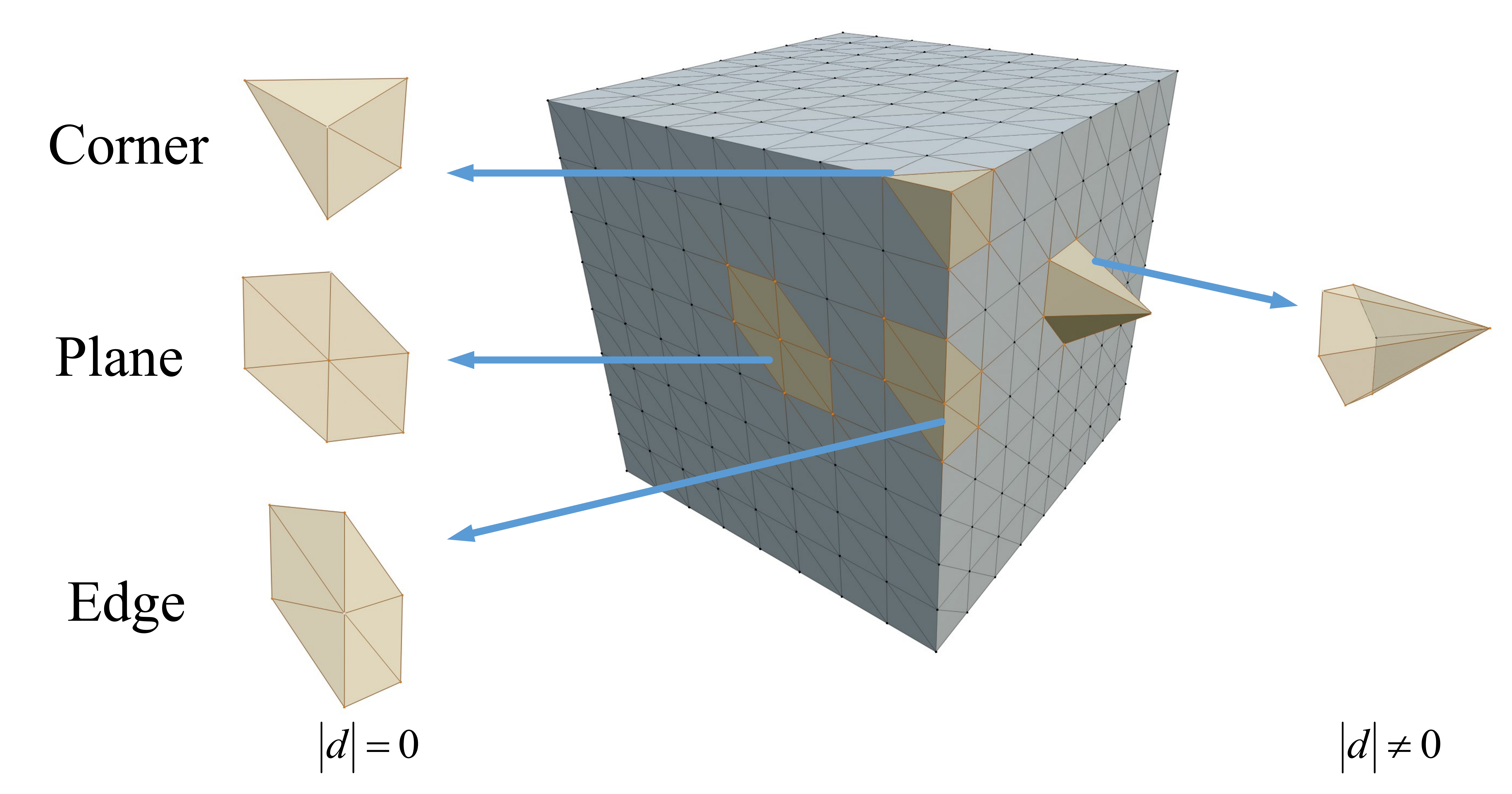}
	\caption{\label{Fig.constrain} Multi normal projection strategy distance.}
\end{figure}
\par Utilizing the multi normal projection strategy as shown in Figure ~\ref{Fig.projection_and_update} $(b)$ ensures that geometric features are preserved  when optimizing the Gaussian curvature. This property is important for the vertices at the corner, edge, and plane geometry, as shown in Fig.~\ref{Fig.constrain}. Since the vertices are contained in the above geometrical features, the minimum absolute value projection distance obtained by the projection strategy of Fig.~\ref{Fig.projection_and_update} $(b)$ is $|d|= 0$. 
\begin{equation}
	\exists~\vec{n}\in \{\vec{N}\},~ \vec{n}~\bot~\overrightarrow{v_{i}v}_{jk}~\Rightarrow~ \exists~k, <\{\vec{N}\}, ~\overrightarrow{v_{i}v}_{jk}>~=~0.
\end{equation}
Therefore, the vertex moving distance 0 and the spatial position is preserved. This kind of projection strategy makes our algorithm have strong smoothing and feature preservation capability for different noise levels. Not surprisingly, the performance of feature preservation in the noise-free model is still robust, as shown in Fig.~\ref{fig:com_gcf06_100}~(Vase).

\subsubsection{Vertex Update Algorithm}
Through the above computation, we obtain the minimum projection distance of the vertex $v_i$, see Fig. ~\ref{Fig.projection_and_update} $(c)$. {\bf algorithm~\ref{algorithms:VERTEX UPDATE}:} The vertex update then is given by

\begin{equation}
v'=v+|d|\cdot \vec{\delta}\,.
\end{equation}
It is worth noting that our algorithm is fundamentally different from the existing Laplace method. The comparative experimental results are shown in Fig.~\ref{Fig.hl_compare}.
\begin{algorithm} 
	\SetAlgoNoLine  
	\caption{GAUSSIAN CURVATURE FILTER ON MESH} 
	\label{algorithms:VERTEX UPDATE}
	\KwIn{Vertex set V=\{$v_1$,~$v_2$,~...,~$v_n$\}, 
		Vertex normal set N=\{$\vec{n_1}$,~$\vec{n_2}$,~...,~$\vec{n_n}$\},
		Neighbor List P=\{P$_1$,~P$_2$,~...,~P$_n$\},
		Domains D=\{D$_1$,~D$_2$,~...,~D$_k$\}, IterationNumber} 
	
	\For{$i = 0 \to IterationNumber-1$}
	{ 
		\For{$j = 0 \to k-1$}
		{
			//each vertex in the same domain
			
			\For{$t = 0 \to D_k[j].size()-1$}
			{
				$index=D_k[j][t]$\;
				$current =  V[index]$\;
				\eIf{current on boundary}
				{
					$v'[index] = v[index]$\;
				}
				{
					Compute $\vec{\delta}$ by $Eq.~\ref{eq:Laplace}$\;
					Compute $ProjN$ by $Eq.~\ref{eq:cross}$\;
					put $N[index]$ into $ProjN$\;
					//find the min projection distance\;
					$MIN = + \infty$\;

					\For{$p = 0 \to P[index].size()-1$}
					{
						\For{$r=0 \to ProjN.size()-1$}
						{
							$ProjectDistance = abs((P[index][p]-current)*ProjN[r])$\;
							\If{$ProjectDistance<MIN$}
							{
								$MIN = ProjectDistance$\;
							}
						}
					}
					// vertex update\;
					$v'[index] = v[index]+\vec{\delta}*MIN$\;
				}
				
			}
		}
	}
	\KwOut{Vertices V'=\{$v'_1$,~$v'_2$,~...,~$v'_n$\}} 
\end{algorithm} 
	
	\section{Experiments}
	In this section, we perform several experiments to show two properties of our method: minimizing the Gaussian curvature and smoothing with feature preservation. In the aspect of Gaussian curvature optimization, we choose ~\cite{zhao2006triangular} and ~\cite{stein2018developability} for comparative experiments. In the aspect of  geometric feature preserving noise removal, there are many methods, including optimization based methods ~\cite{jones2003non} and ~\cite{he2013mesh}, and filter based methods ~\cite{fleishman2003bilateral}, ~\cite{sun2007fast}, ~\cite{zheng2011bilateral},~\cite{zhang2015guided}, and~\cite{li2018non}. We compare our approach with these methods with respect to these two aspects.     
\subsection{Minimize Gaussian Curvature}
To show the property of minimizing Gaussian curvature, we compare our method with two approaches. The first one is the classical Gaussian curvature flow method~\cite{zhao2006triangular}. We compare them by processing two commonly used but representative meshes Max Planck head and Vase. Moreover, we also show the performance of both methods on these meshes when adding some random noise. 

\par We further compare our method with a recent approach from SIGGRAPH 2018 by Stein et. al~\cite{stein2018developability}. We compare both methods on noise-free and noisy meshes respectively. We will discuss the results in later sections.
\par We chose Gaussian curvature energy (GCE, see formula~\ref{eq:GC}), mean square angle error (MSAE), the maximum and average distance from vertex to vertex ($\mathscr{D}$$_{max}$, $\mathscr{D}$$_{mean}$)~\cite{zhang2015guided}, and the Gaussian curvature distribution curve Kullback-Leibler divergence KLD for quantitative evaluation with other comparison methods. The definition of MSAE comes from previous work~\cite{shen2004fuzzy, zhao2018robust, pan2020hlo}~:
	\begin{equation}\label{eq:MSAE}
	MSAE=E\lbrack\angle(n_p,n_o)\rbrack.
	\end{equation}
	Where $E$ is the expectation operator and $\angle(n_p,n_o)$ is the angle between the processed normal $n_p$ and the original normal $n_o$. 

\subsubsection{Parameter settings}
In \cite{zhao2006triangular}, the algorithm has five parameters to be manually adjusted, $k$, $\rho$, $\beta$, $\epsilon$, $\alpha$. For the specific meaning of each parameter, see Eq. 2 in \cite{zhao2006triangular}. According to the author's suggestion, we set $\beta=2$, $\epsilon=0.001$, and $\alpha=0.0005$. See table~\ref{table1} for detailed parameters. It is important to note that this parameter has to set to different values for different models, or different noise levels of the same model. The author also mentioned the importance of an implicit step size algorithm. 

\par However, our algorithm only has one parameter, i.e., the iteration number. How to choose the number of iterations, the number of GCF iterations only depends on the noise level of the model. The higher the noise level, the more iterations are required.

\begin{figure*}[htbp]
	\centering
	\includegraphics[height=0.07\linewidth]{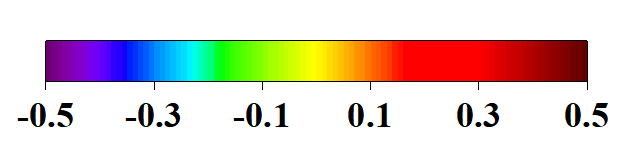}\\
	\centering
	\setlength{\arraycolsep}{0pt}
	\makebox[\textwidth][c]{$
		\begin{array}{c|c}	
		\includegraphics[height=0.3\linewidth]{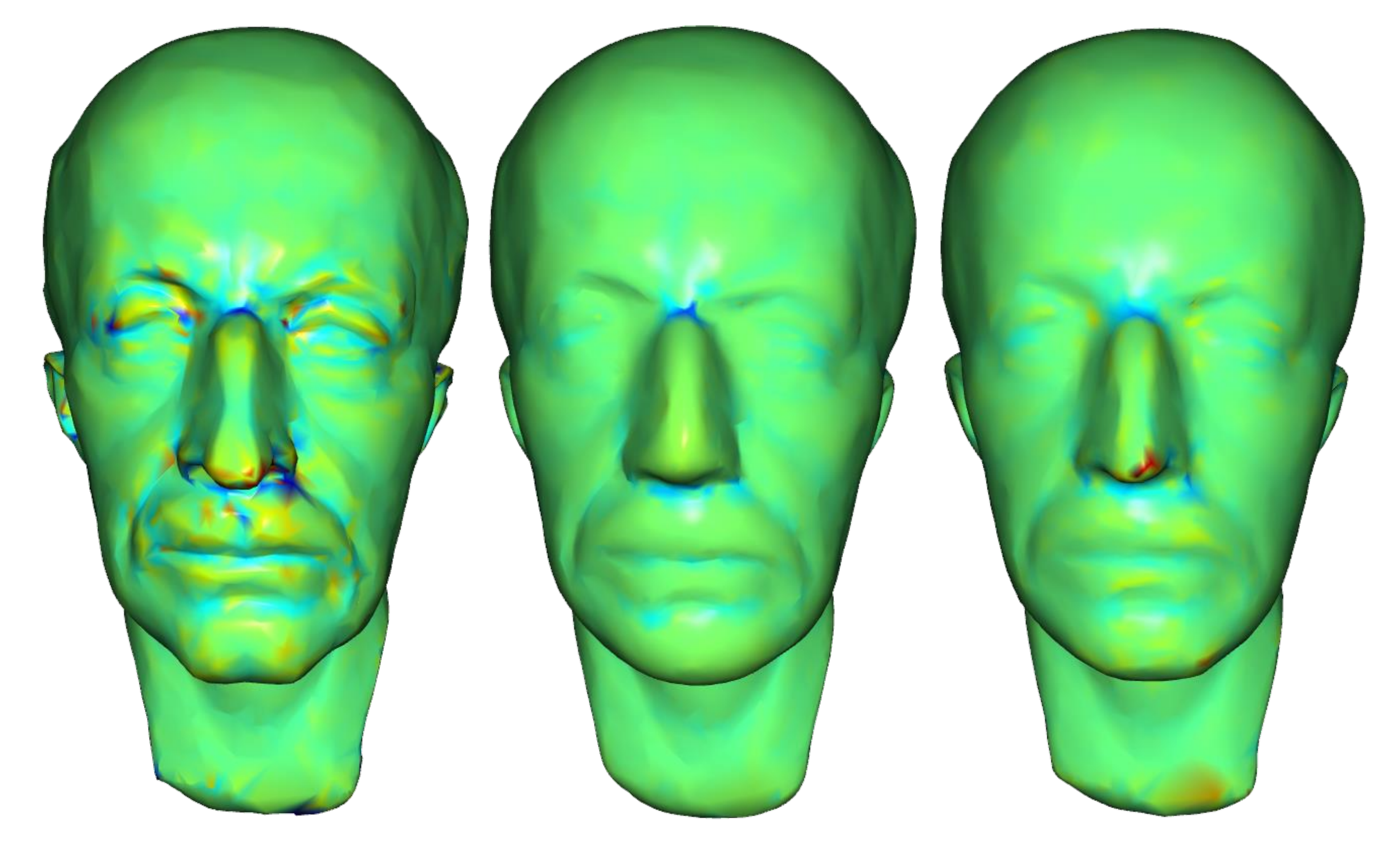}&
		\includegraphics[height=0.3\linewidth]{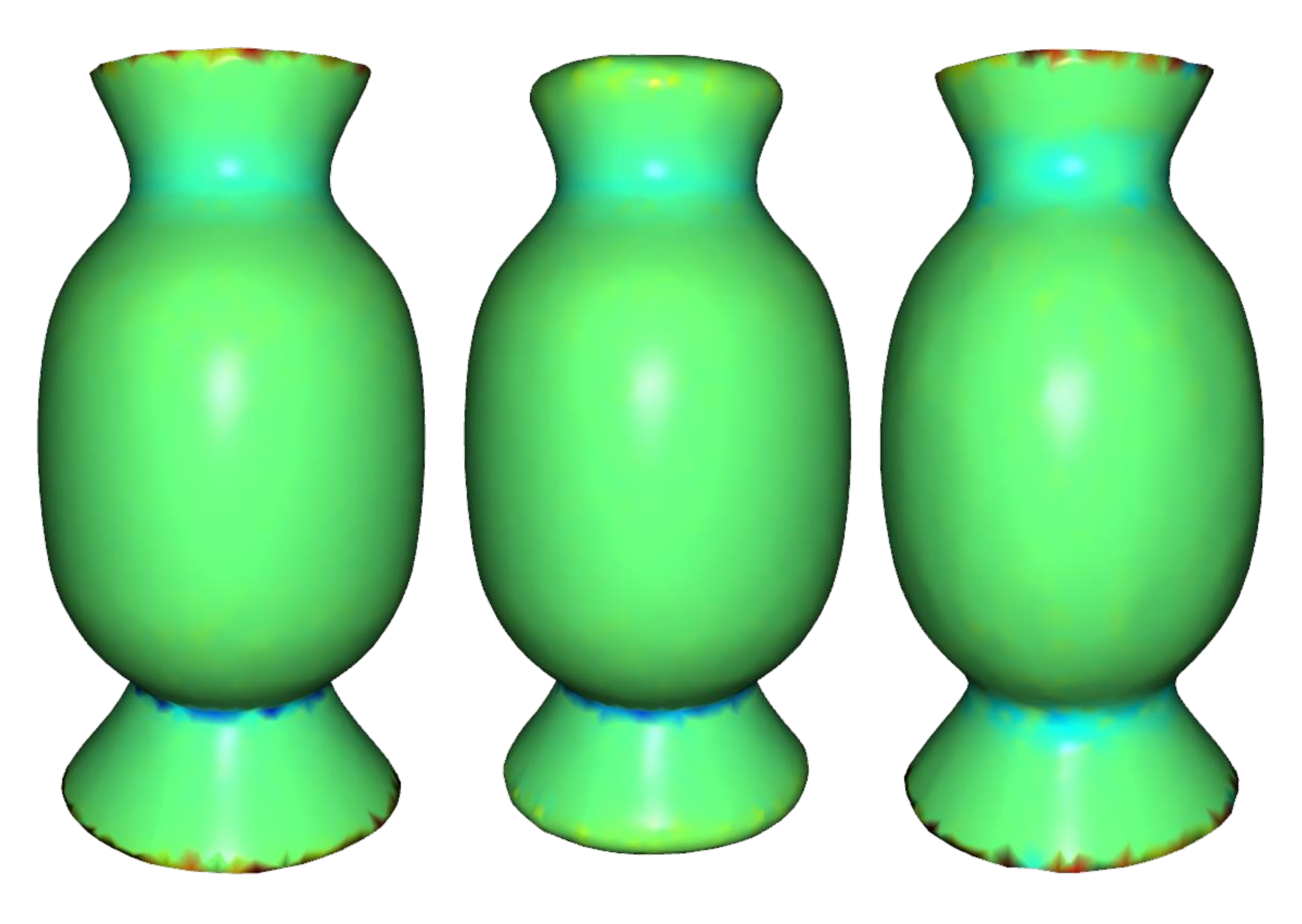}\\
		\begin{array}{ccc}
		\makebox[0.3cm]{\small Input}&
		\makebox[3.0cm]{\small GC Flow}&
		\makebox[0.3cm]{\small Ours}\\
		\end{array}&
		\begin{array}{ccc}
		\makebox[0.3cm]{\small Input}&
		\makebox[3.0cm]{\small GC Flow}&
		\makebox[0.3cm]{\small Ours}\\
		\end{array}\\
		\textrm{(a1) Max Planck (noise-free)}&\textrm{(b1) Vase (noise-free)}\\
		\setlength{\arraycolsep}{0pt}
		\begin{array}{cc}	
		\includegraphics[height=0.22\linewidth]{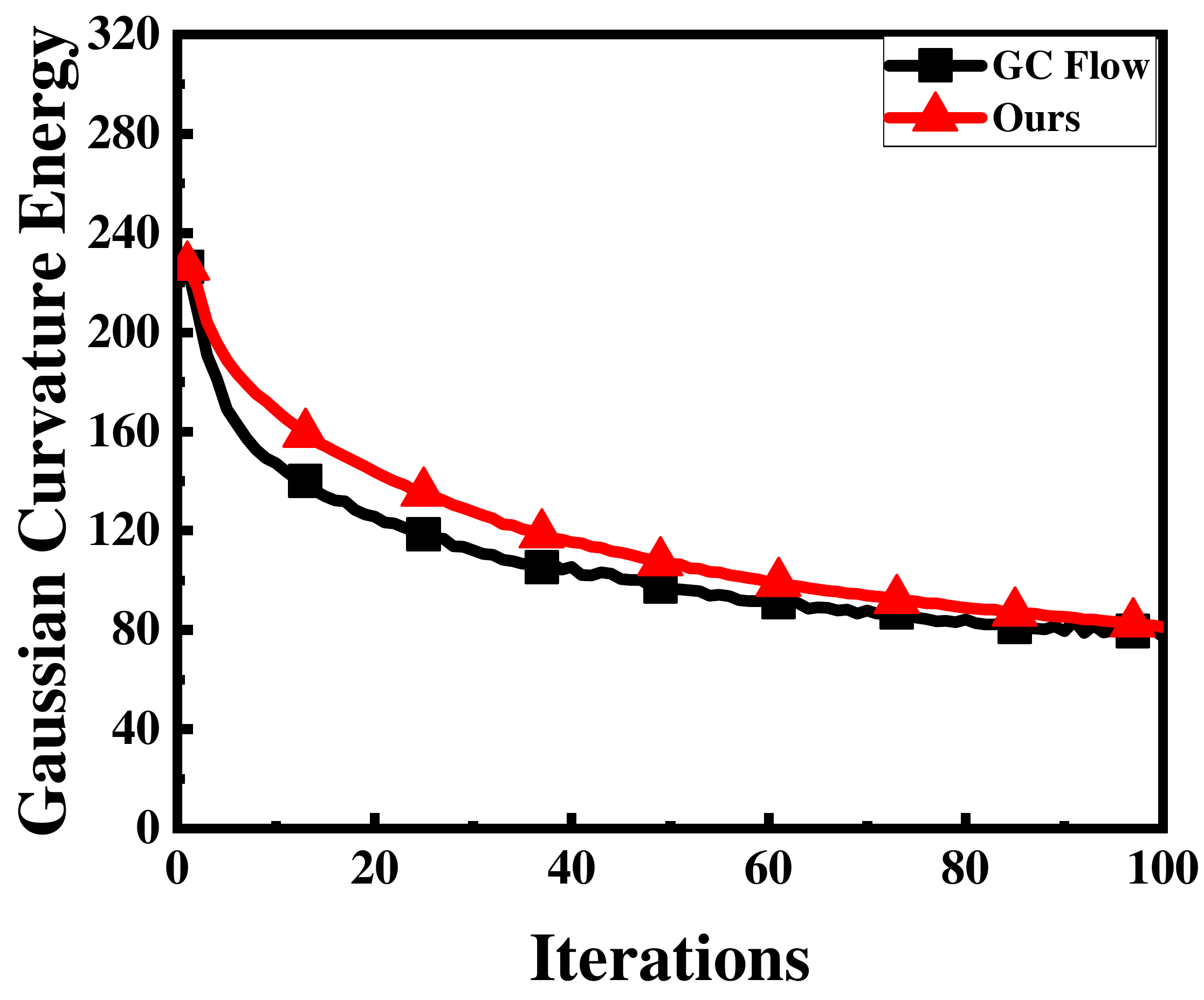}&
		\includegraphics[height=0.22\linewidth]{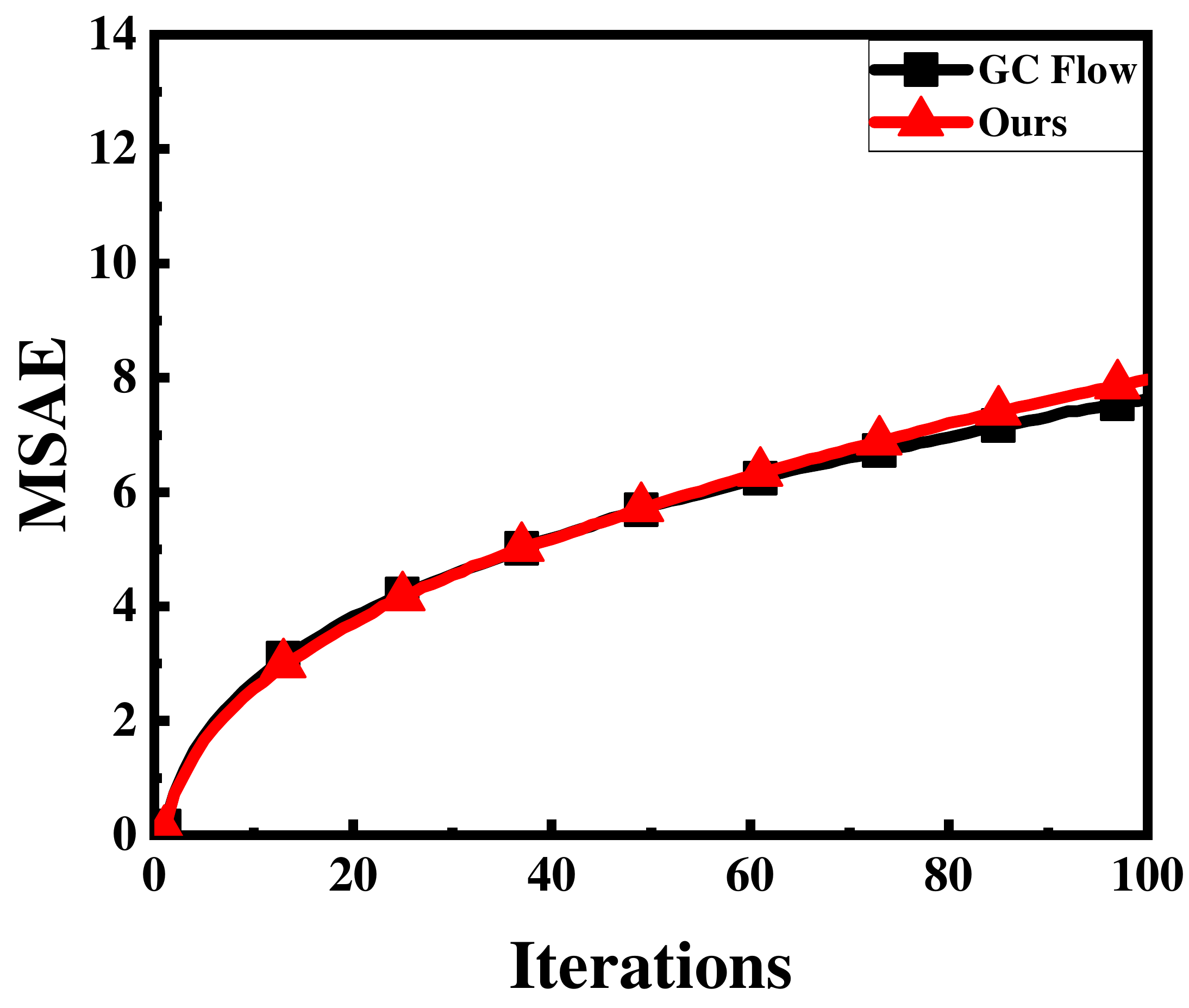}\\
		\textrm{(a2) Max Planck GCE}&\textrm{(a3) Max Planck MSAE}\\
		\end{array}&
		\setlength{\arraycolsep}{0pt}
		\begin{array}{cc}	
		\includegraphics[height=0.22\linewidth]{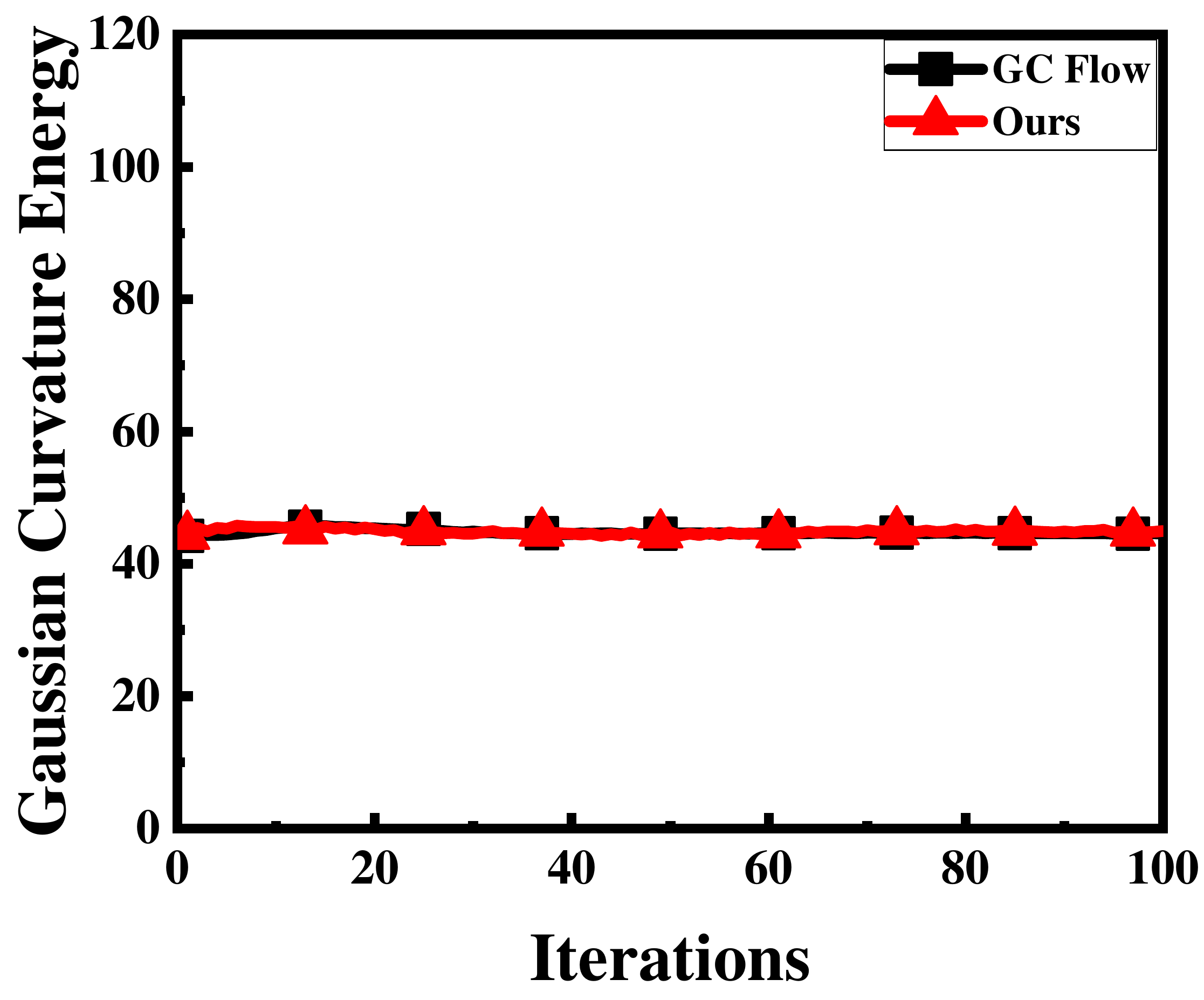}&
		\includegraphics[height=0.22\linewidth]{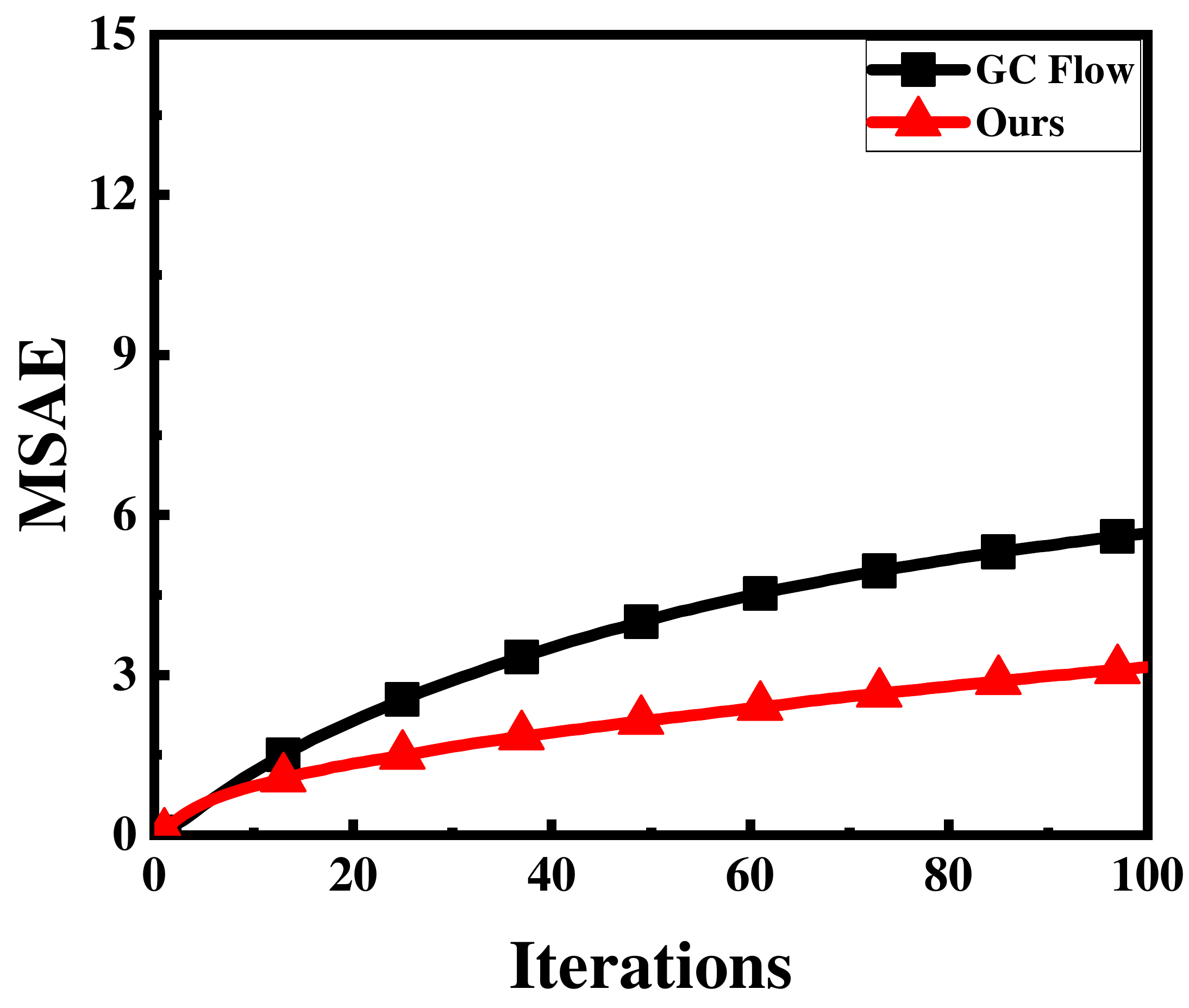}\\
		\textrm{(b2) Vase GCE}&\textrm{(b3) Vase MSAE}\\
		\end{array}\\
		\hline
		\begin{array}{c}	
		\includegraphics[height=0.3\linewidth]{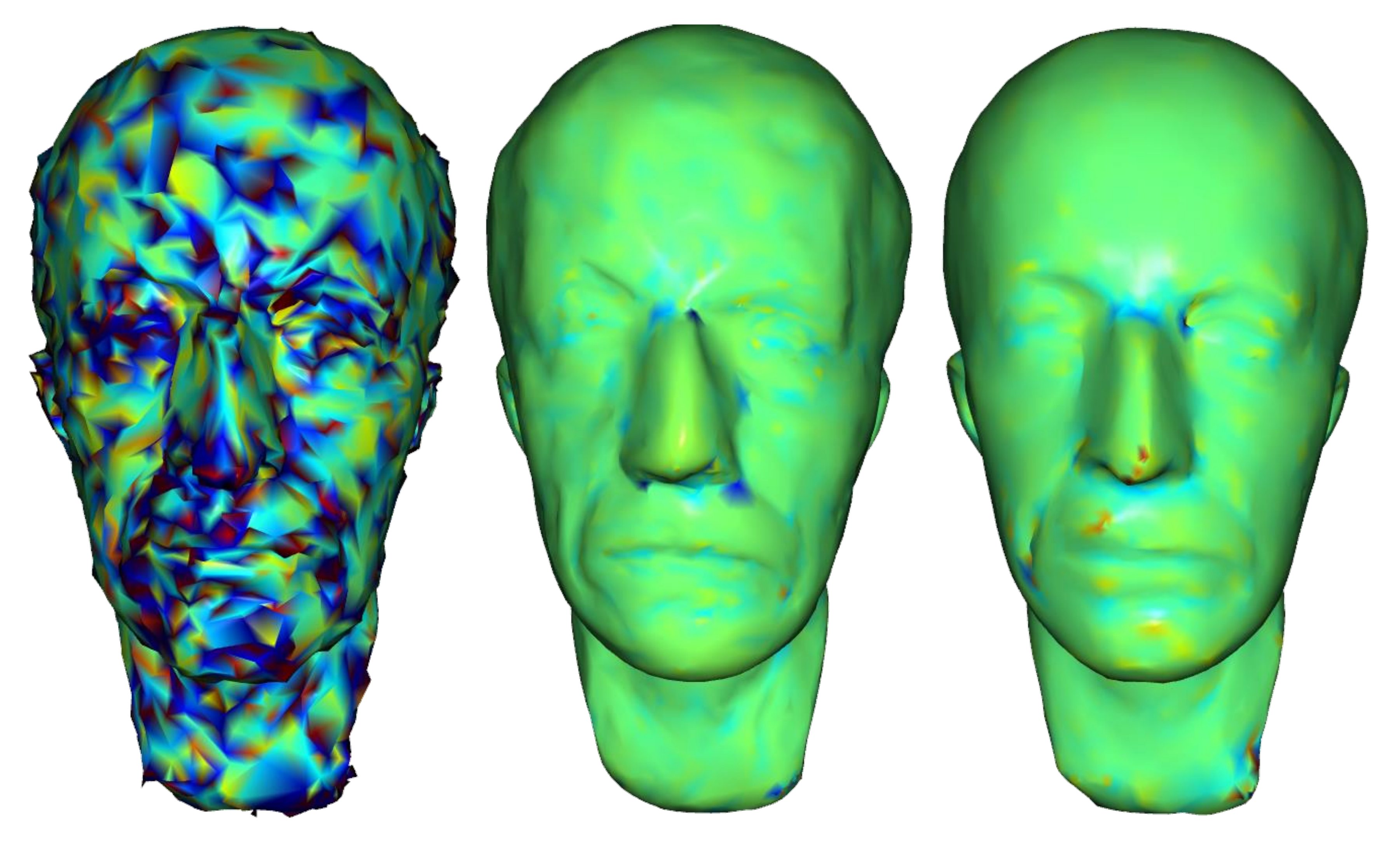}\\
		\begin{array}{ccc}
		\makebox[0.3cm]{\small Input}&
		\makebox[3.0cm]{\small GC Flow}&
		\makebox[0.3cm]{\small Ours}\\
		\end{array}\\
		\textrm{(c1) Max Planck (noise level $\sigma_{n}=0.3{e}_l$)}\\
		\end{array}&
		\begin{array}{c}	
		\includegraphics[height=0.3\linewidth]{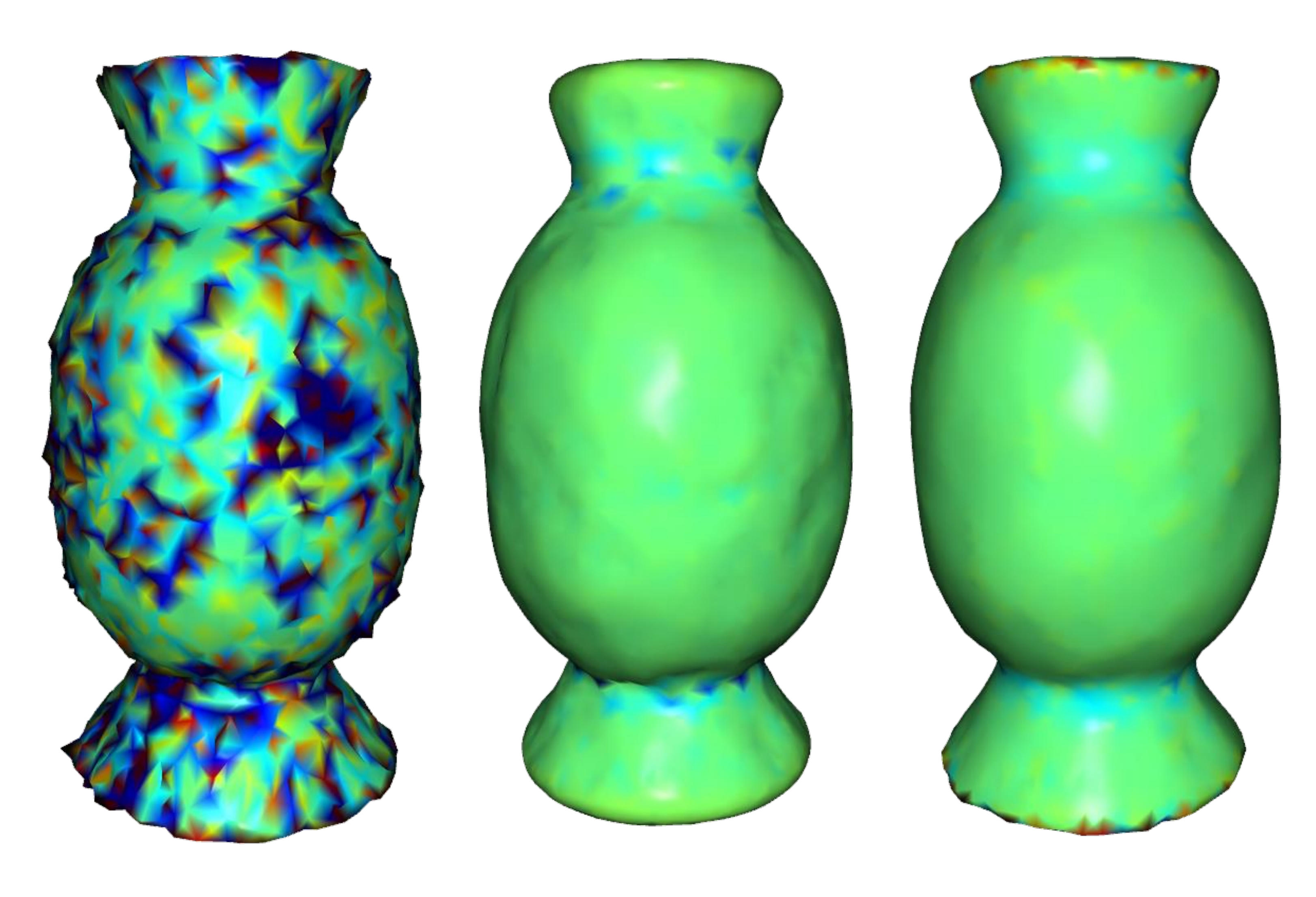}\\
		\begin{array}{ccc}
		\makebox[0.3cm]{\small Input}&
		\makebox[3.0cm]{\small GC Flow}&
		\makebox[0.3cm]{\small Ours}\\
		\end{array}\\
		\textrm{(d1) Vase (noise level $\sigma_{n}=0.3{e}_l$)}\\
		\end{array}\\
		\setlength{\arraycolsep}{0pt}
		\begin{array}{cc}	
		\includegraphics[height=0.22\linewidth]{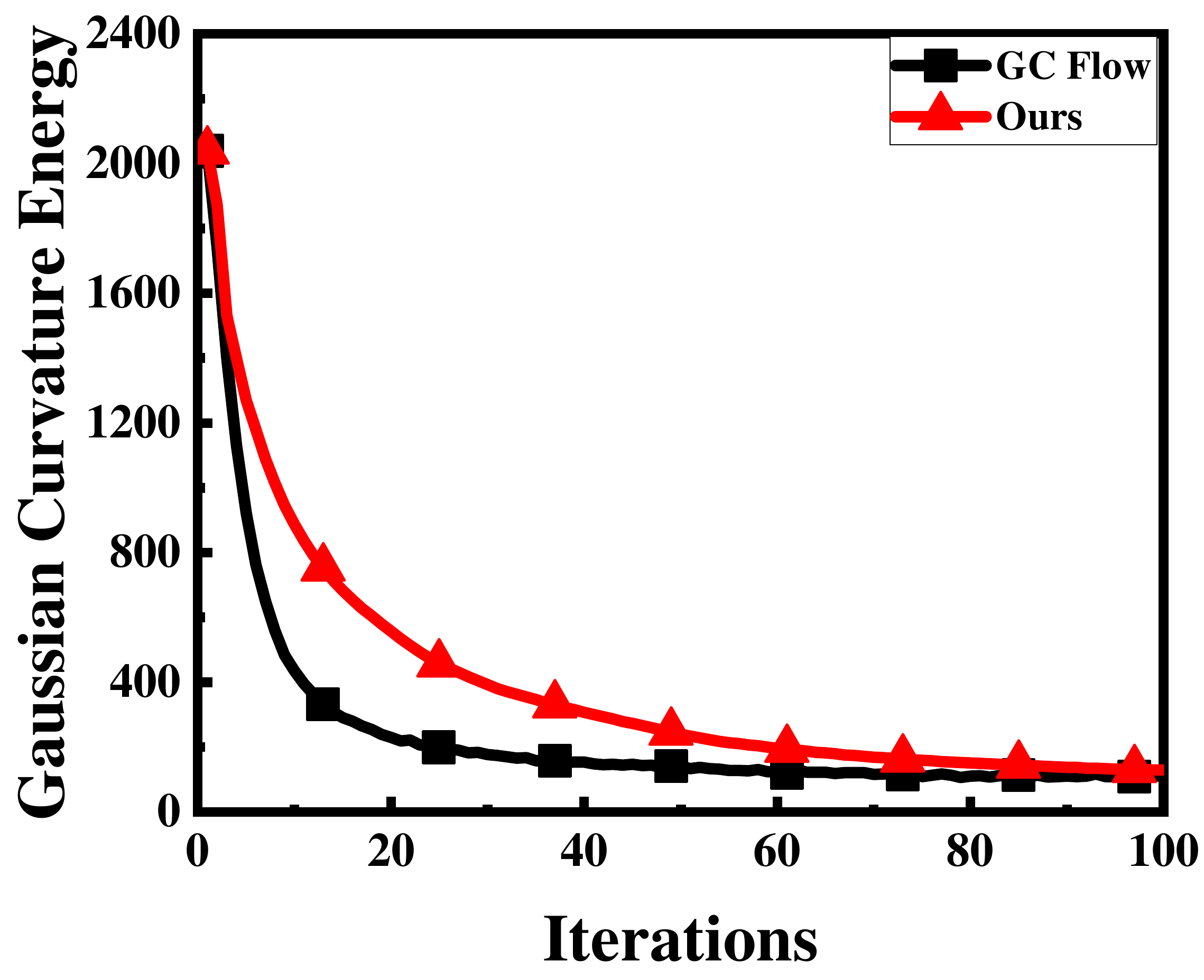}&
		\includegraphics[height=0.22\linewidth]{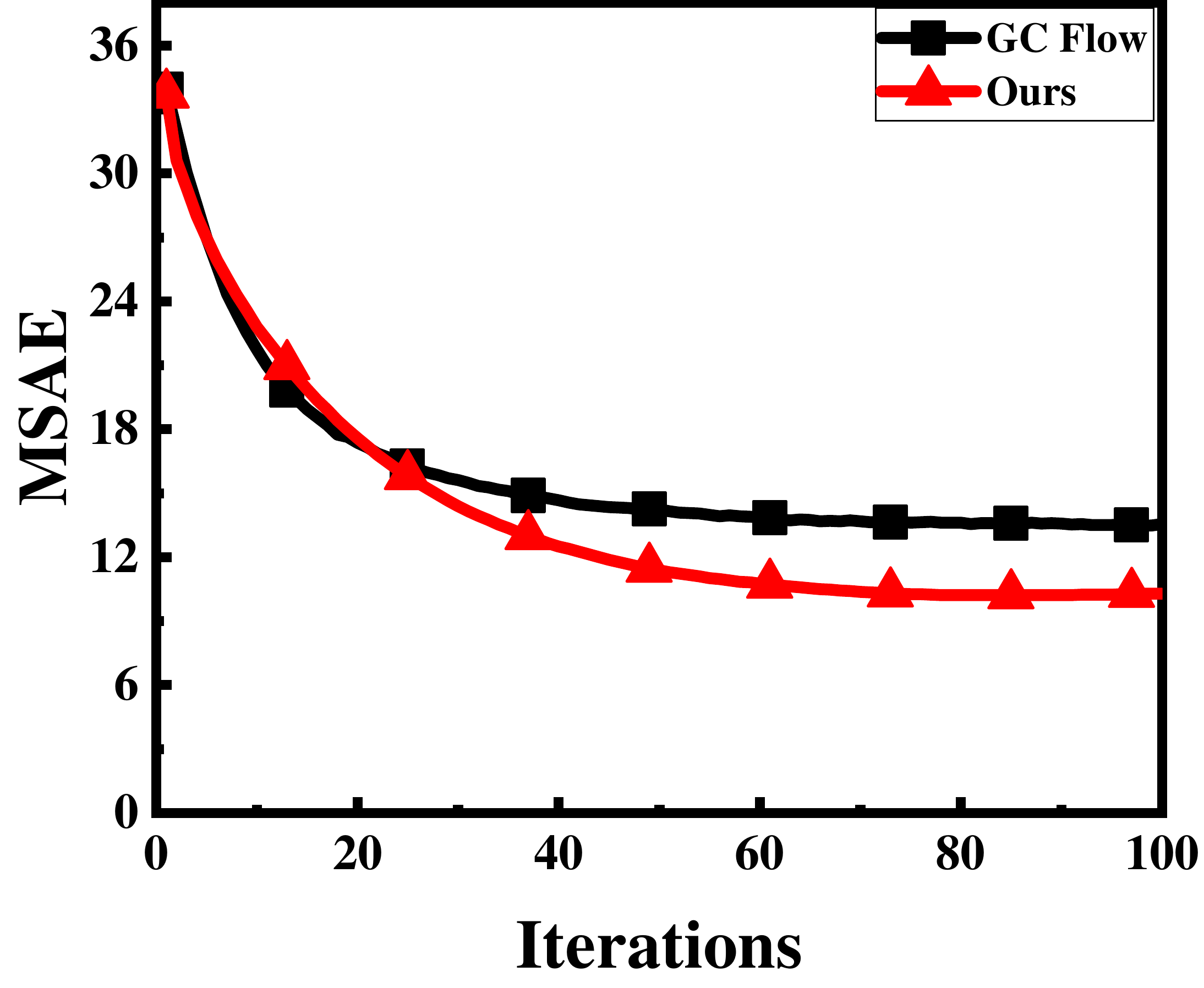}\\
		\textrm{(c2) Max Planck GCE}&\textrm{(c3) Max Planck MSAE}\\
		\end{array}&
		\setlength{\arraycolsep}{0pt}
		\begin{array}{cc}	
		\includegraphics[height=0.22\linewidth]{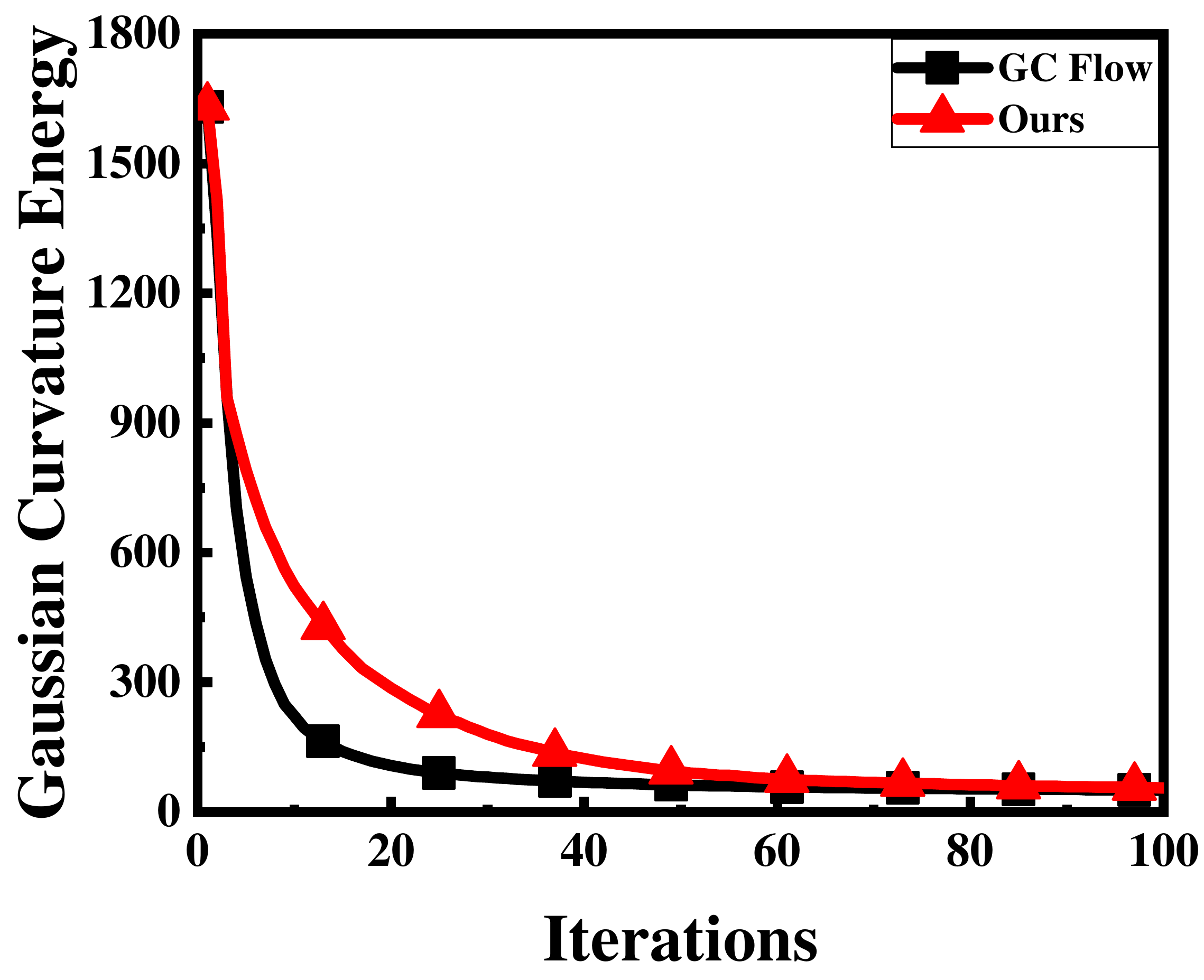}&
		\includegraphics[height=0.22\linewidth]{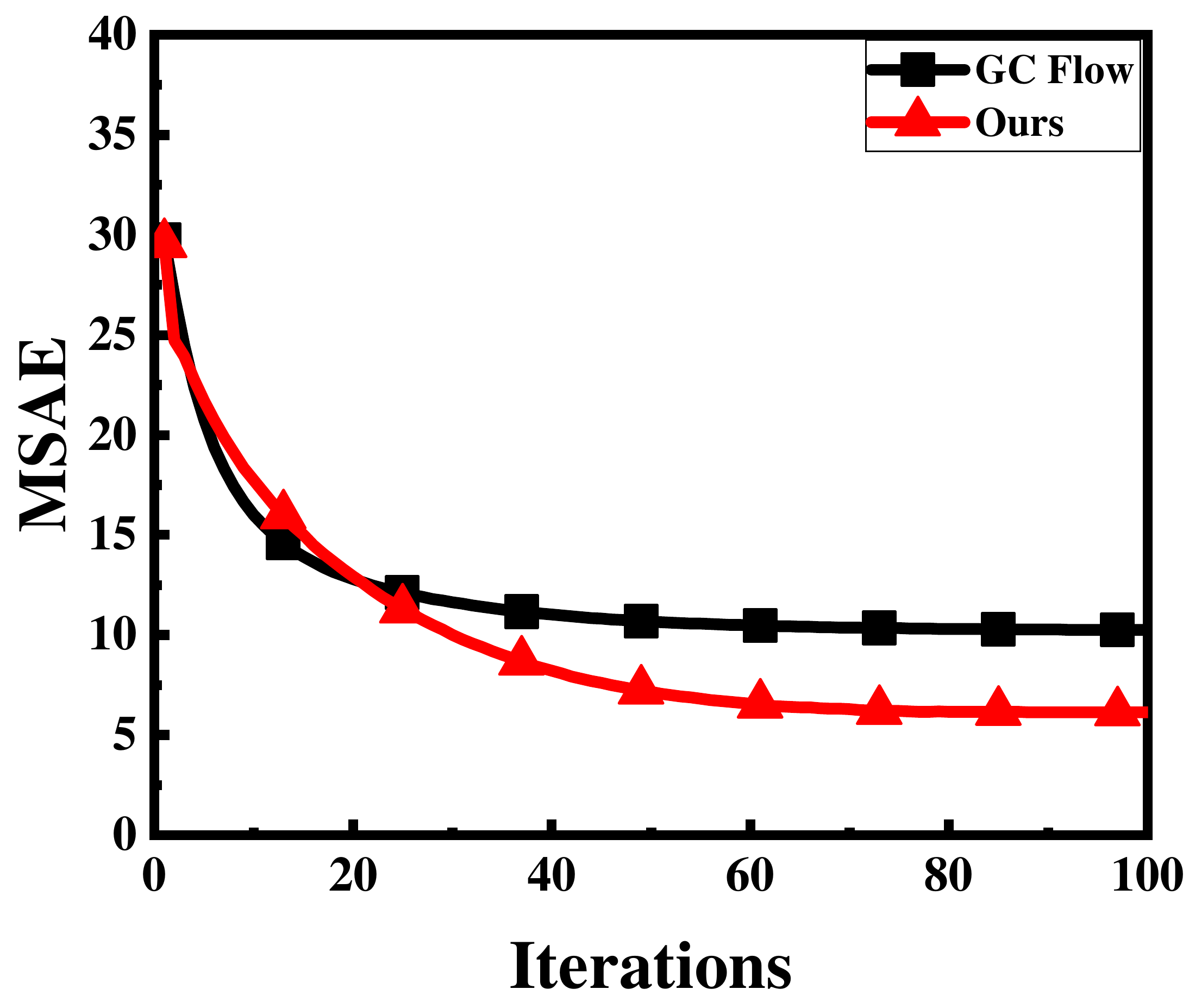}\\
		\textrm{(d2) Vase GCE}&\textrm{(d3) Vase MSAE}\\
		\end{array}\\
		\end{array}
		$}
	\caption{\label{fig:com_gcf06_100} Minimizing Gaussian curvature on noise-free meshes (top row) and noisy meshes (bottom row). In each panel, from left to right: the original, Gaussian curvature flow method, our method. GC is an abbreviation for Gaussian curvature.}
\end{figure*}

\begin{figure*}[htbp]
	\centering
	\includegraphics[height=0.07\linewidth]{graph/0_3colorbar.png}\\
	\setlength{\arraycolsep}{0pt}
	\makebox[\textwidth][c]{$
		\begin{array}{c|c}	
		\includegraphics[width=8cm]{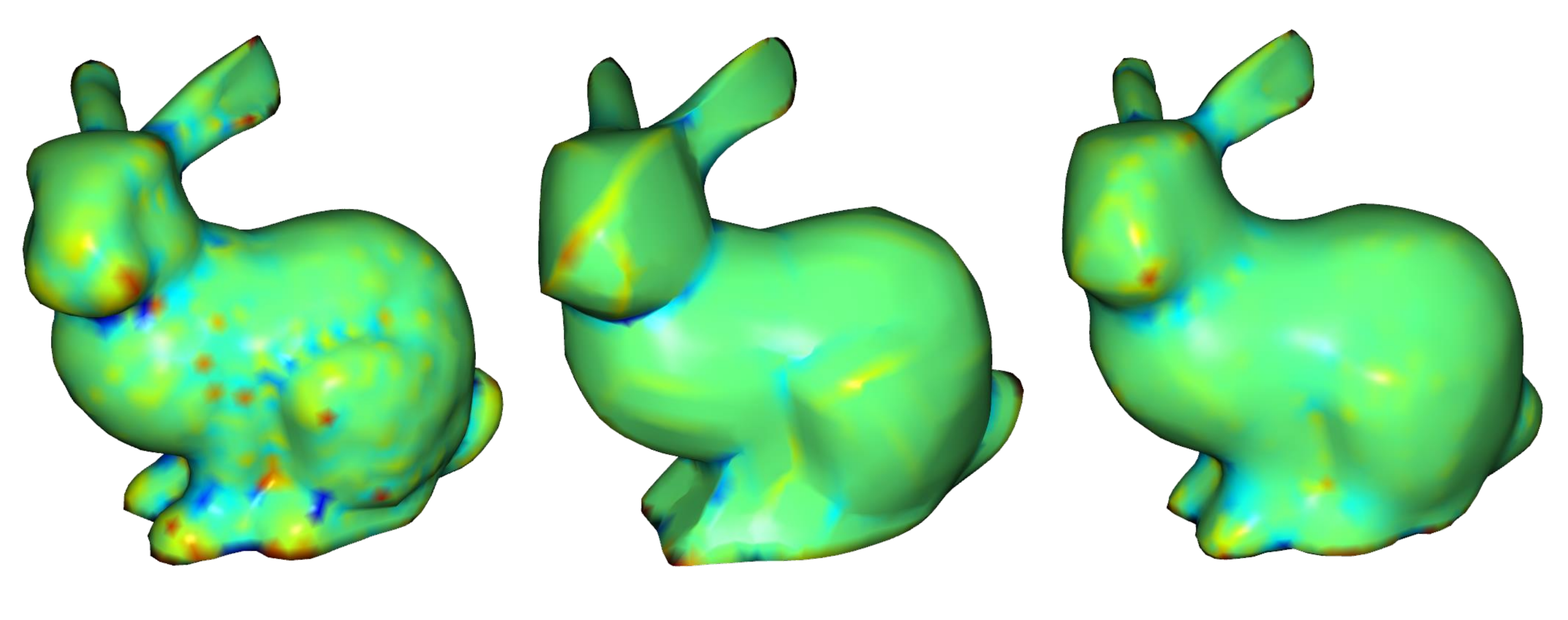}&
		\includegraphics[width=8cm]{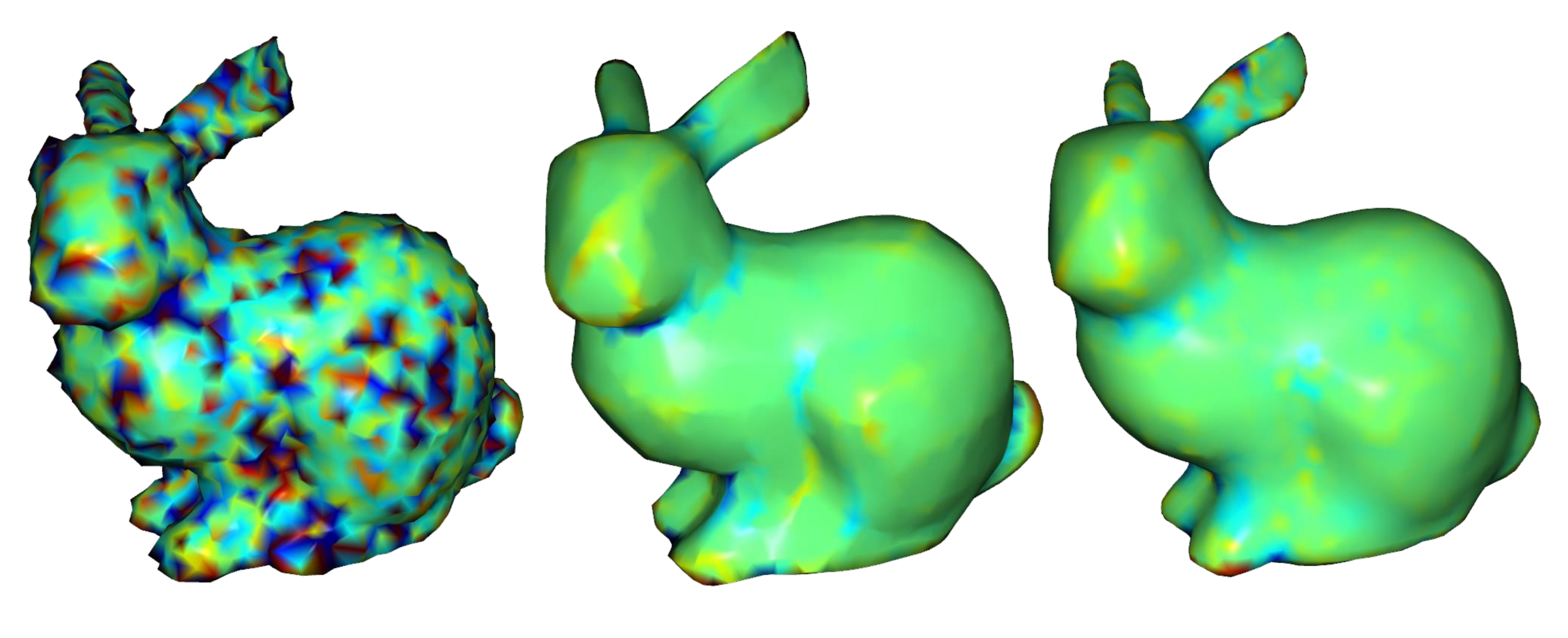}\\
		\begin{array}{ccc}
		\makebox[3.16cm]{~Input}&
		\makebox[3.16cm]{~\cite{stein2018developability}}&
		\makebox[3.16cm]{~Ours}\\	
		\end{array}&
		\begin{array}{ccc}
		\makebox[2.33cm]{~Input}&
		\makebox[2.33cm]{~\cite{stein2018developability}}&
		\makebox[2.33cm]{~Ours}\\	
		\end{array}\\
		\textrm{(a) Bunny(noise free)}&\textrm{(b) Bunny($\sigma_{n}=0.3{e}_l$)}\\
		\end{array}
		$}
	\caption{Comparison between ~\cite{stein2018developability} and our algorithm. Left: noise free case. Right: noisy case. The algorithm ~\cite{stein2018developability} converges to the comparative experimental results in the case of GCE and MSAE which are the same as GCF. The experimental results of ~\cite{stein2018developability} use the default bunny parameters provided by the author's open source. The final running time of the two algorithms is shown in Table~\ref{table3}.}
	\label{fig:crane}
\end{figure*}

\begin{figure*}[!]
	\centering
	\setlength{\arraycolsep}{0pt}
	\makebox[\textwidth][c]{$
		\begin{array}{c}
		\includegraphics[width=8.3cm]{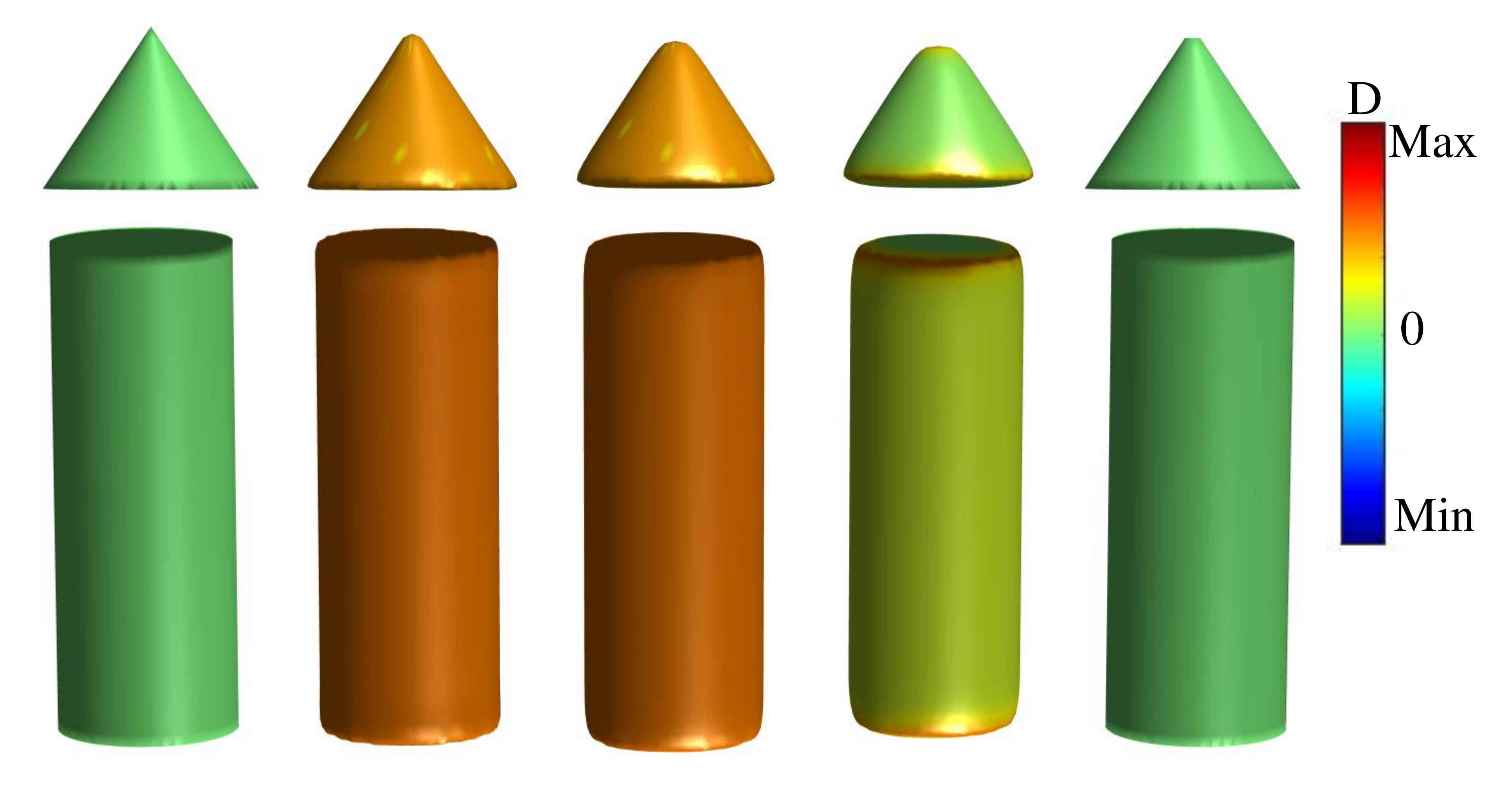}\\
		
		\begin{array}{lcccc}
			\makebox[1.5cm]{~Input}&
			\makebox[1.8cm]{~\cite{vollmer1999improved}}&
			\makebox[1.4cm]{~\cite{desbrun1999implicit}}&
			\makebox[1.4cm]{~\cite{sorkine2005laplacian}}&
			\makebox[1.9cm]{~Ours}\\
		\end{array}\\
		
		Cone~~~\left \{~~
		\begin{aligned}
			\begin{array}{ccccc}
				\makebox[0.5cm]{~$MSAE:$~}&
				\makebox[1.4cm]{~$5.50$}&
				\makebox[1.4cm]{~$9.45$}&
				\makebox[1.8cm]{~$12.92$}&
				\makebox[1.8cm]{~$\pmb{1.69}$}\\
				\makebox[0.5cm]{~$\mathscr{D}$$_{mean:}$~}&
				\makebox[1.4cm]{~$29.34$}&
				\makebox[1.4cm]{~$29.13$}&
				\makebox[1.8cm]{~$1.93$}&
				\makebox[1.8cm]{~$\pmb{0.00}$}\\
				\makebox[0.5cm]{~$\mathscr{D}$$_{max:}$~}&
				\makebox[1.4cm]{~$53.34$}&
				\makebox[1.4cm]{~$52.31$}&
				\makebox[1.8cm]{~$7.41$}&
				\makebox[1.8cm]{~$\pmb{3.10}$}\\
			\end{array}
		\end{aligned}
		\right.\\
		
		Cylinder\left \{~~
		\begin{aligned}	
			\begin{array}{ccccc}
				\makebox[0.5cm]{~$MSAE:$~}&
				\makebox[1.4cm]{~$2.90$}&
				\makebox[1.4cm]{~$4.18$}&
				\makebox[1.8cm]{~$5.32$}&
				\makebox[1.8cm]{~$\pmb{0.08}$}\\
				\makebox[0.5cm]{~$\mathscr{D}$$_{mean:}$~}&
				\makebox[1.4cm]{~$73.89$}&
				\makebox[1.4cm]{~$74.20$}&
				\makebox[1.8cm]{~$1.69$}&
				\makebox[1.8cm]{~$\pmb{0.00}$}\\	
				\makebox[0.5cm]{~$\mathscr{D}$$_{max:}$~}&
				\makebox[1.4cm]{~$158.99$}&
				\makebox[1.4cm]{~$160.56$}&
				\makebox[1.8cm]{~$5.57$}&
				\makebox[1.8cm]{~$\pmb{0.33}$}\\
			\end{array}
		\end{aligned}
		\right.\\
		
	\end{array}$}
\caption{Comparison between our algorithm and three common Laplacian smoothing algorithms on the model of cone and cylinder. The number of iterations of our method is 40, and the others default to 10.}
\label{Fig.hl_compare}
\end{figure*}
\begin{figure}[htbp]
\centering
\subfigure[GCF without GDD~(${\cal_{ACS}}:-1.45$)]{
	\includegraphics[width=5cm]{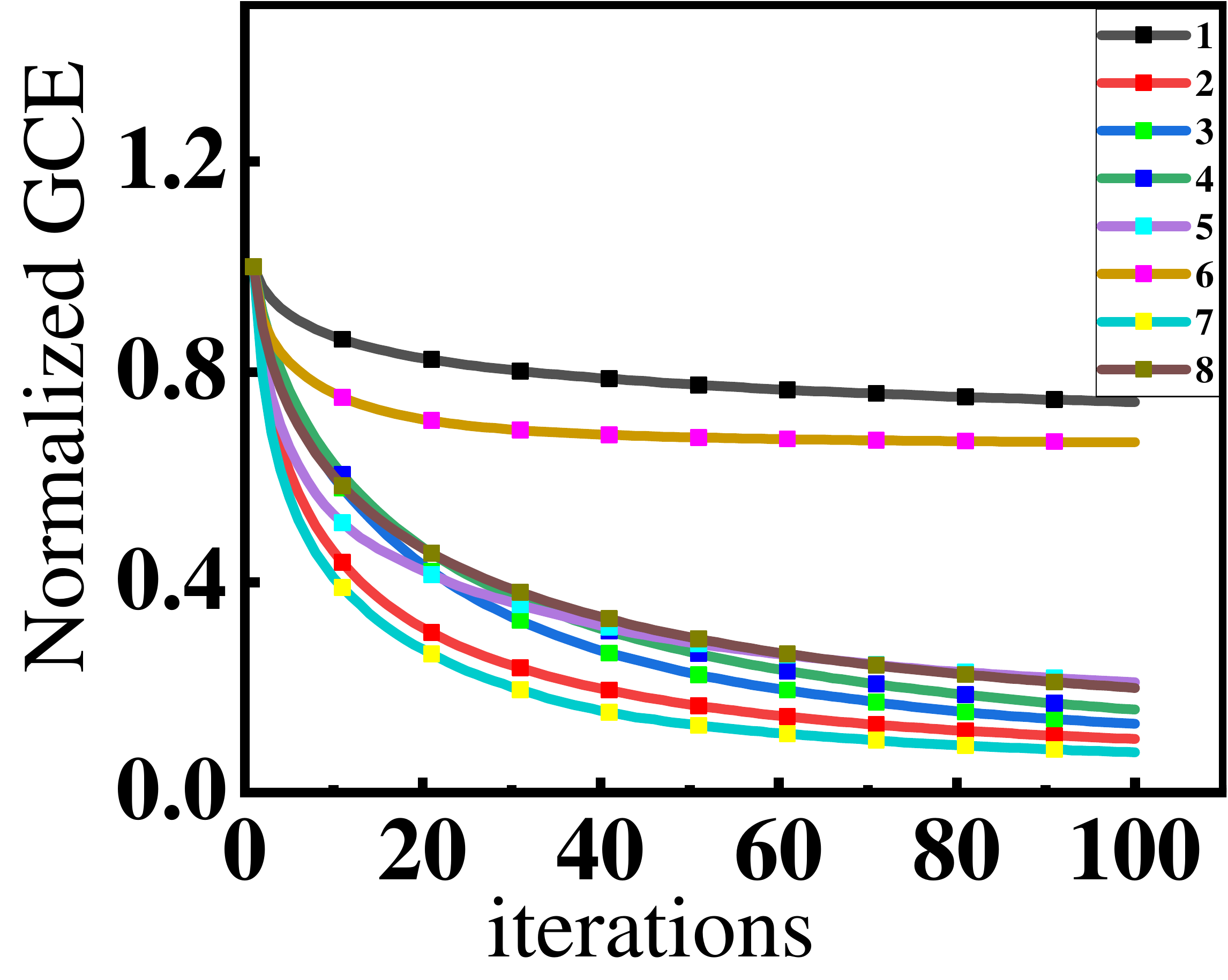}}
\subfigure[GCF with GDD~(${\cal_{ACS}}:\pmb{-1.70}$)]{
	\includegraphics[width=5cm]{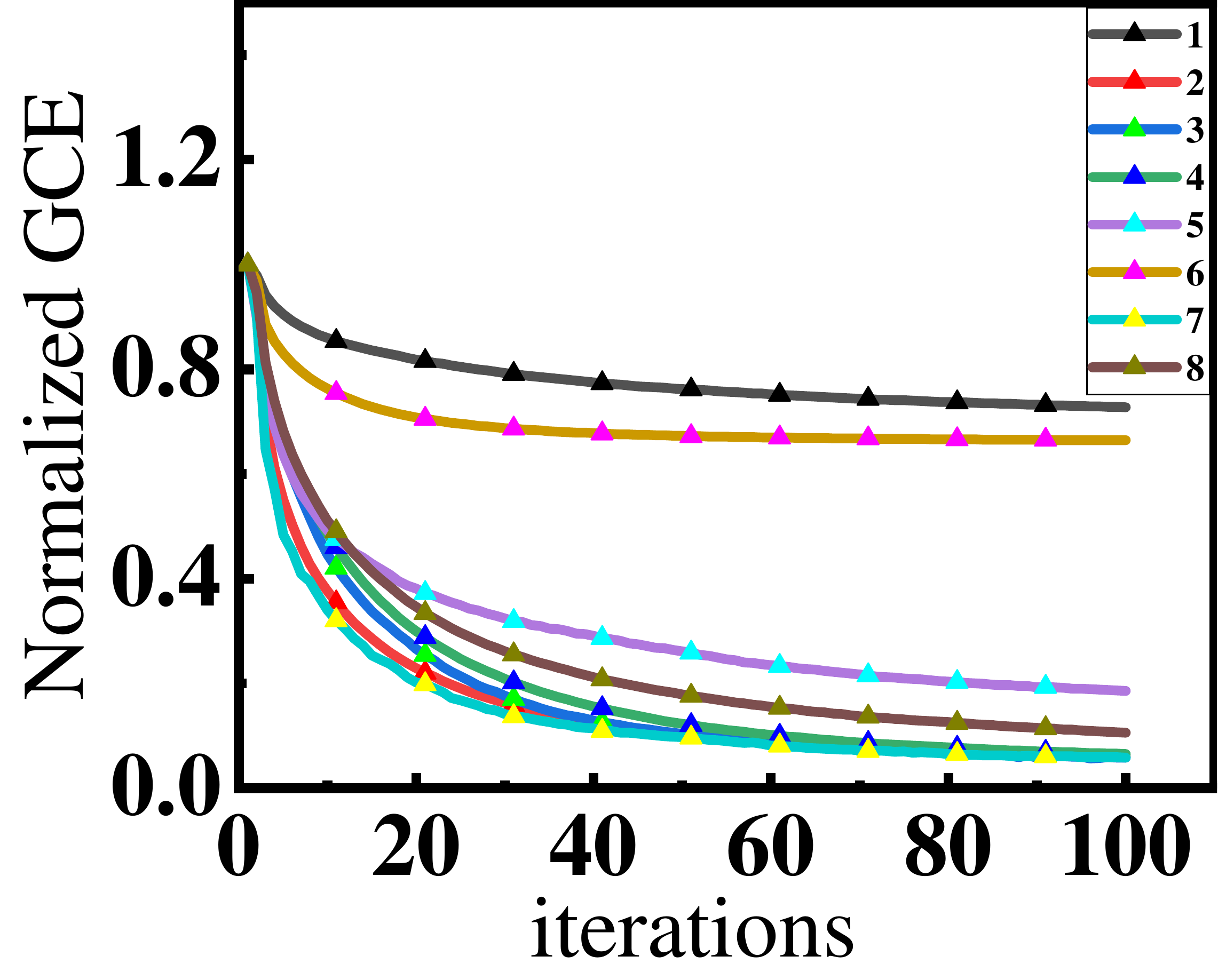}}
\caption{ The influence of GDD on the convergence of GCF. We randomly selected 8 different meshes, GCF with GDD and without GDD for 100 iterations respectively. For comparison, we normalize GCE(~Divided by the largest Gaussian curvature energy value). The smaller the average convergence slope ACS~\ref{eq:ACS}, the better.}
\label{Fig.the convergence average slope}
\end{figure}	
\begin{table*}[!]
\centering
\caption{Running time(~in debug mode) comparison between the standard Gaussian curvature flow method and our method on the Max Planck ($\left|V\right|$: 5272, $\left|F\right|$: 1054) and Vase ($\left|V\right|$: 3827, $\left|F\right|$: 7650).}\label{table1}
\resizebox{1.0\textwidth}{!}{ 
\begin{tabular}{cccc}\hline
Models & Methods & Parameters  & \tabincell{l}{Time($s$):W/O GDD}\\
\hline
\tabincell{l}{Max Planck \\(noise-free)} &
\tabincell{l}{ 
	\cite{zhao2006triangular} \\
	Ours\\
}
& \tabincell{c}{($100,1.0,2,0.001,0.0005$)\\($100$)\\}
& \tabincell{c}{$163.94$/$1016.41$\\$\pmb{56.39}$/$428.58$\\}
\\ \hline
\tabincell{l}{Vase \\(noise-free)} &
\tabincell{l}{ 
	\cite{zhao2006triangular} \\
	Ours\\
}
& \tabincell{c}{($100,0.01,2,0.001,0.0005$)\\($100$)\\}
& \tabincell{c}{$100.65$/$744.75$\\$\pmb{40.34}$/$320.48$\\}
\\ \hline
\tabincell{l}{Max Planck \\($\sigma_{n}=0.3{e}_l$)} &
\tabincell{l}{ 
	\cite{zhao2006triangular} \\
	Ours\\
}
& \tabincell{c}{($100,1.0,2,0.001,0.0005$)\\($100$)\\}
& \tabincell{c}{$158.23$/$1123.56$\\$\pmb{62.41}$/$425.20$\\}
\\ \hline
\tabincell{l}{Vase \\($\sigma_{n}=0.3{e}_l$)} &
\tabincell{l}{ 
	\cite{zhao2006triangular} \\
	Ours\\
}
& \tabincell{c}{($100,0.01,2,0.001,0.0005$)\\($100$)\\}
& \tabincell{c}{$106.81$/$833.12$\\$\pmb{45.61}$/$317.00$\\}
\\
\hline           
\end{tabular}
}
\end{table*}

\begin{table*}[!]
\centering
\caption{Quantitative comparison of ~\cite{stein2018developability} on the bunny ($\left|V\right|$: 3301, $\left|F\right|$: 6598). The running time(~in debug mode) is measured in seconds.}\label{table3}
\resizebox{1.0\textwidth}{!}{ 
\begin{tabular}{ccccccc}\hline
Models & Methods & \tabincell{c}{MSAE} & \tabincell{c}{GCE}& \tabincell{c}{$\mathscr{D}$$_{mean:}$}& \tabincell{c}{$\mathscr{D}$$_{max:}$}& \tabincell{c}{Time(s)}\\
\hline
\tabincell{c}{Bunny\\(noise-free)} &
\tabincell{c}{ 
	~\cite{stein2018developability} \\
	~Ours \\
}
& \tabincell{l}{$9.68$\\$10.65$\\}
& \tabincell{l}{$71.07$\\$71.83$\\}
& \tabincell{l}{$\pmb{8.21\times10^{-3}}$\\$1.21\times10^{-2}$\\}
& \tabincell{l}{$6.07\times10^{-2}$\\$\pmb{5.75\times10^{-2}}$\\}
& \tabincell{l}{$1129.10$\\$\pmb{25.34}$\\}
\\ \hline
\tabincell{c}{Bunny\\($\sigma_{n}=0.3{e}_l$)} &
\tabincell{c}{ 
	~\cite{stein2018developability} \\
	~Ours \\
}
& \tabincell{l}{$10.10$\\$10.79$\\}
& \tabincell{l}{$100.82$\\$100.00$\\}
& \tabincell{l}{$\pmb{1.03\times10^{-2}}$\\$1.20\times10^{-2}$\\}
& \tabincell{l}{$6.75\times10^{-2}$\\$\pmb{4.00\times10^{-2}}$\\}
& \tabincell{l}{$1303.93$\\$\pmb{28.80}$\\}
\\   \hline           
\end{tabular}
}
\end{table*}	

\subsubsection{Result analysis}
In Fig.~\ref{fig:com_gcf06_100}, we use two meshes, Max Planck head and Vase. We perform the Gaussian curvature flow~\cite{ zhao2006triangular} and GCF on these meshes, respectively.

Visually, on the noise-free model: the result of~\cite{ zhao2006triangular} is not much different from ours in the color of Gaussian curvature. But in detail, for the noise-free model, we can see that on the vase model, the feature preservation of our result is more obvious. On the noisy model: Our algorithm is not only more prominent in feature preservation, but also smoother in local details.
\par Quantitatively, our algorithm can achieve the same Gaussian curvature energy as~\cite{ zhao2006triangular}, but  lower than~\cite{ zhao2006triangular} on MSAE, which also proves that our algorithm is not only minimizing Gaussian curvature energy, but also preserving ground truth features. Our algorithm has distinguishing advantage in time consumption. In Table~\ref{table1}, we can see that without adding our GDD in~\cite{ zhao2006triangular}, we are almost 20 times faster. After adding our GDD to ~\cite{ zhao2006triangular}, our algorithm is also nearly 3 times faster (the experiment is on the same computer: Intel Xeon 4 cores, 3.7Ghz, 96GB RAM).
\par The weakness of ~\cite{ zhao2006triangular} is that the temporal direction discretization currently used is explicit, the inherent problem with this approach is that explicit methods behave poorly if the system is stiff, and in order to converge to the correct solution it is necessary to use small time steps. So there are too many parameters, and different models are sensitive to the size of the parameter values.
In Fig.~\ref{fig:com_gcf06_100}, combining quantitative and visual results, we can draw the following conclusions: Firstly, we can obtain the same Gaussian curvature energy value as ~\cite{ zhao2006triangular}, which proves that we can achieve the effect of explicit calculation optimization through implicit optimization. Secondly, there is no need to explicitly perform the calculation of Gaussian curvature and the addition of GDD, which makes our algorithm gain a clear advantage in time-consuming. Thirdly, our algorithm can simulteneously smooth the mesh and preserve its features.


\par We further compare our method with the developed method~\cite{stein2018developability}. The results are shown in Fig.~\ref{fig:crane}. In this comparative experiment, we do not make a developable comparison with it, because our work is not focused on developing. We run ~\cite{stein2018developability} according to the model and default parameters given by the author until convergence. We run our algorithm to roughly the same result (GCE, MSAE), and then compare $D_{max}$, $D_{mean}$ and time-consuming between the processed model and the ground truth model. 
\par Method~\cite{stein2018developability} gathers Gaussian curvature to regular seam curves by defining different energies, and through continuous iterative optimization to achieve piecewise developable surfaces. Our method does not introduce developable energy constraints, so there is no regular seam curves. But to a certain extent, optimization of Gaussian curvature energy can be compared. From Fig.~\ref{fig:crane} and Table~\ref{table3}, we can see that our algorithm can achieve the same effect as~\cite{stein2018developability} in optimizing Gaussian curvature, and it takes less time.

\subsection{smoothing with Feature Preservation}
To evaluate GCF performs in smoothing and feature preservation, we choose seven representative state-of-the-art methods for comparison. 
\subsubsection{Parameter settings}
From the previous multiple sets of experiments, our algorithm roughly changed after 40-50 iterations. For some models, in order to better balance smoothing and feature preservation, 40 iterations are generally selected. For large noise, the number of iterations can be increased according to the noise level. Both the filter based methods and the optimization based methods use their default parameters(except for the Bunny). The detailed parameters are listed in Table~\ref{table2}. The meanings of the specific parameters can be found in the respective papers. 

\begin{figure*}[htbp]
	\centering
	\setlength{\arraycolsep}{0pt}
	\makebox[\textwidth][c]{$
		\begin{array}{cccccccccc}	
		\includegraphics[width=1.72cm]{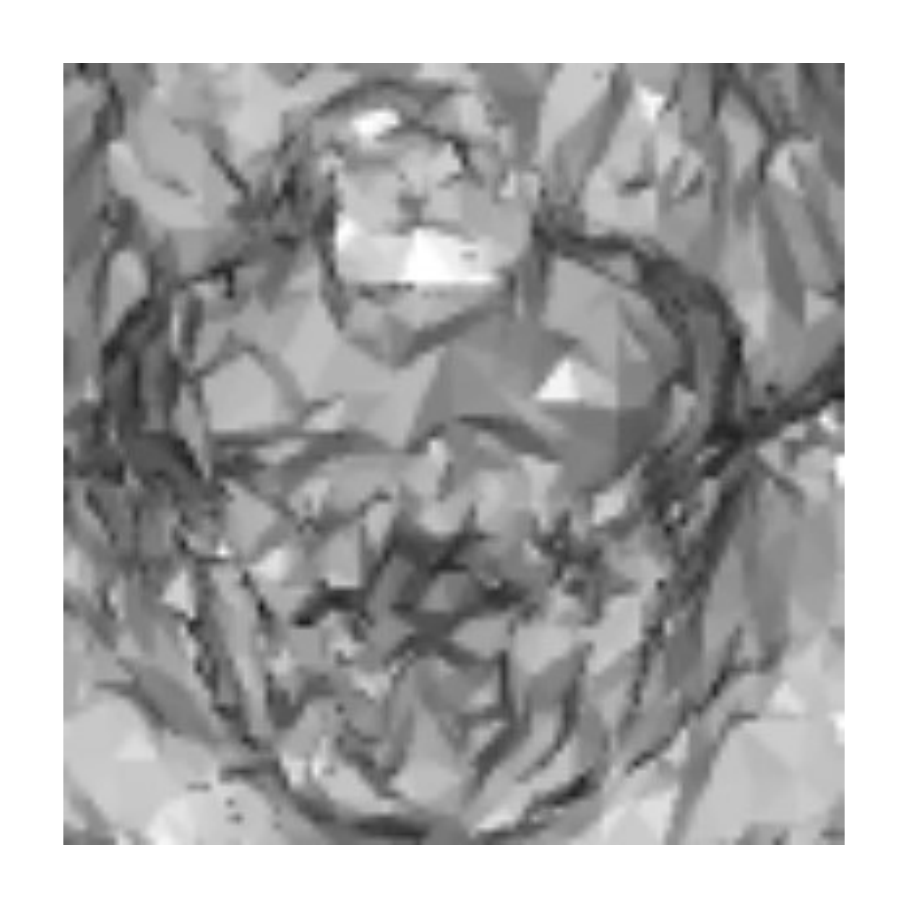}&
		\includegraphics[width=1.72cm]{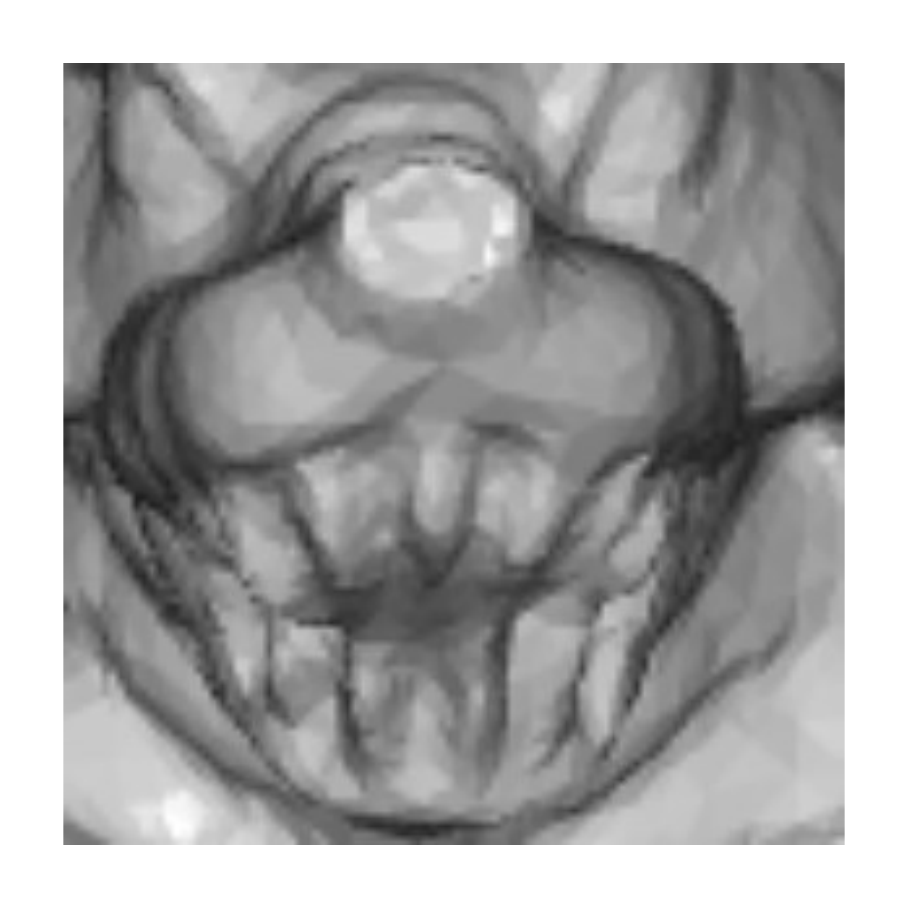}&
		\includegraphics[width=1.72cm]{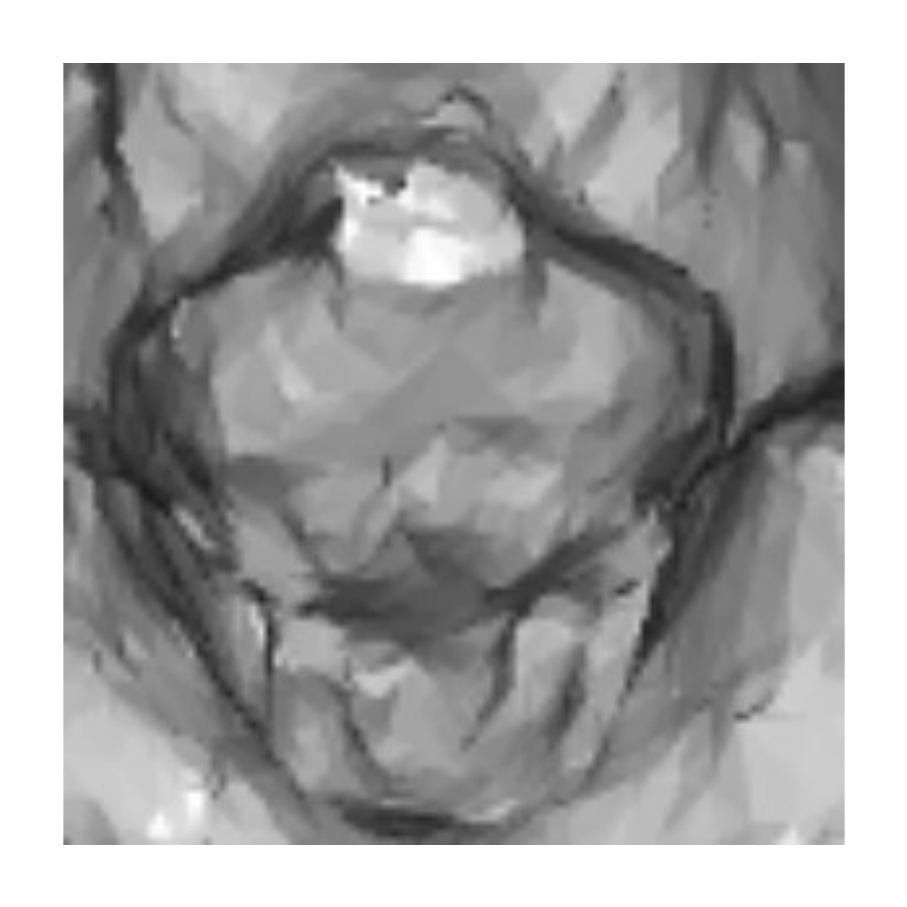}&
		\includegraphics[width=1.72cm]{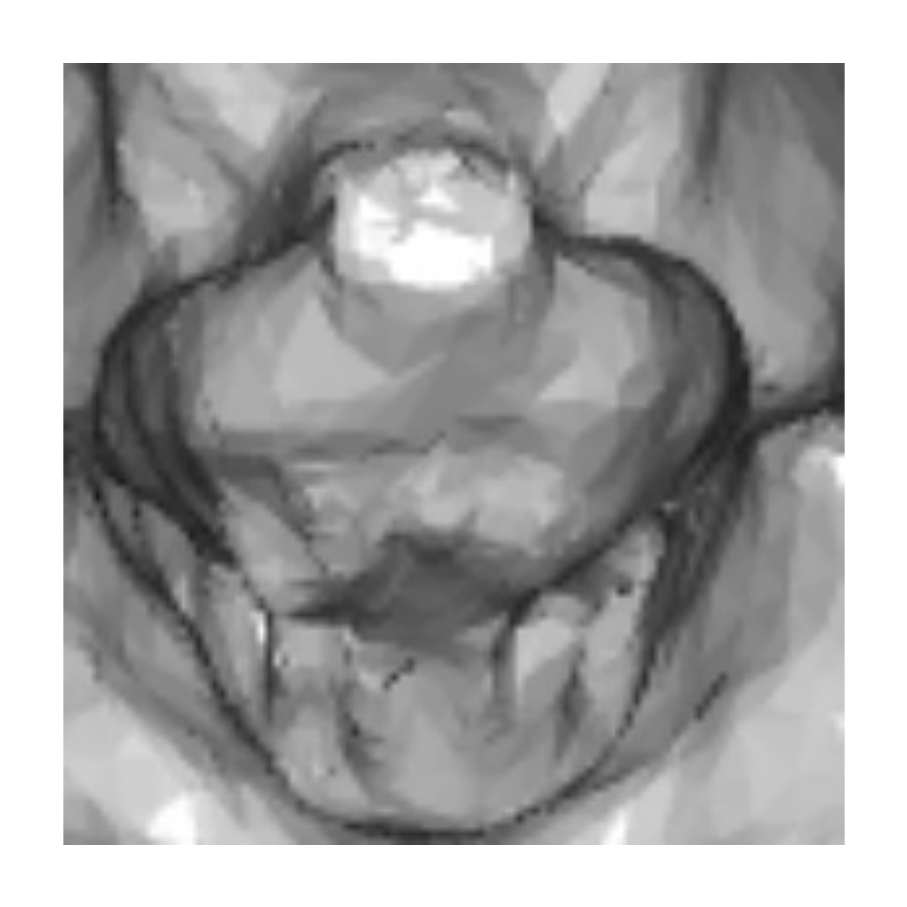}&
		\includegraphics[width=1.72cm]{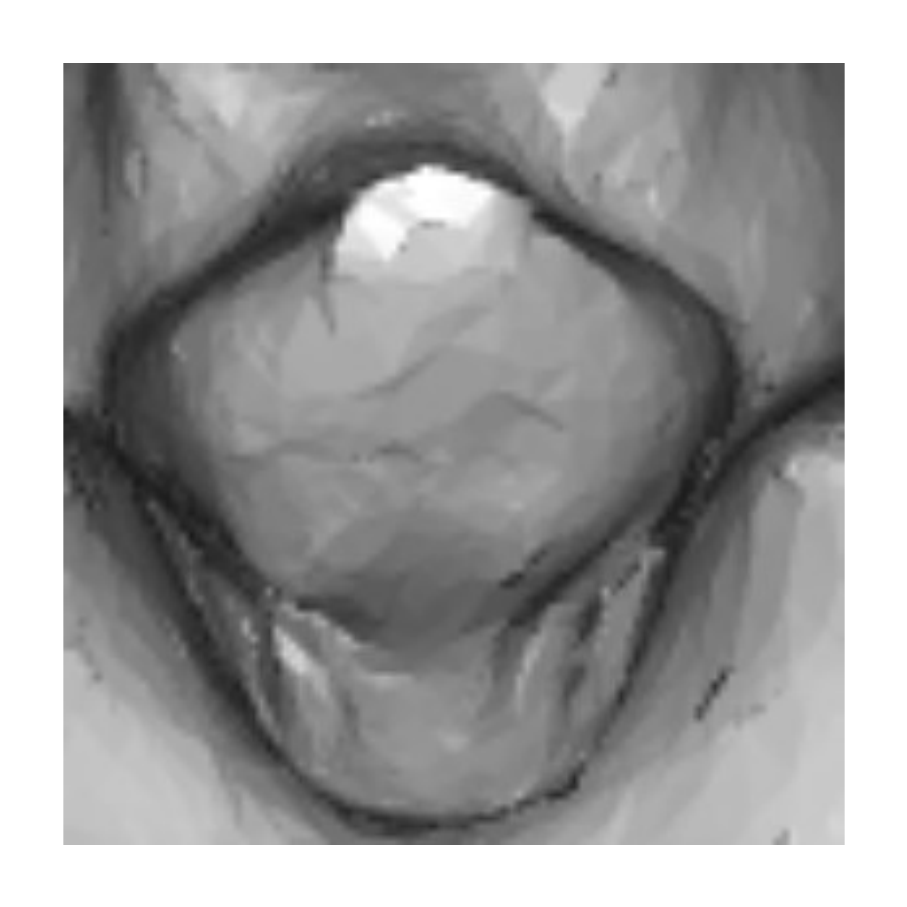}&
		\includegraphics[width=1.72cm]{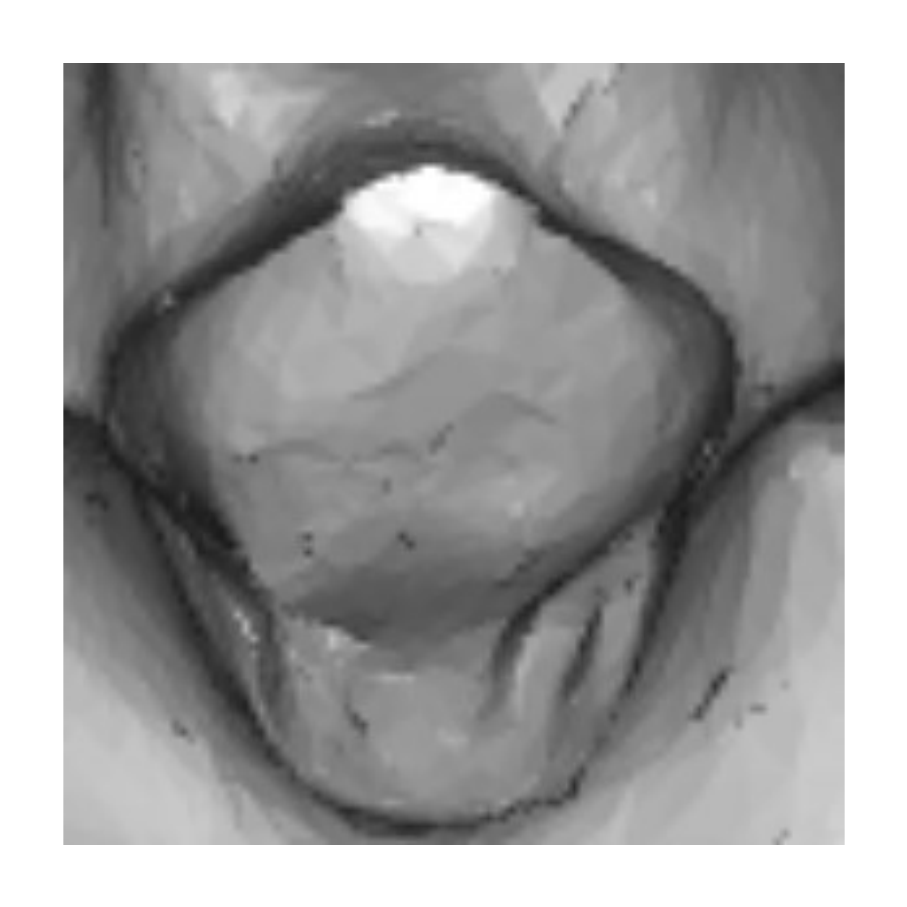}&
		\includegraphics[width=1.72cm]{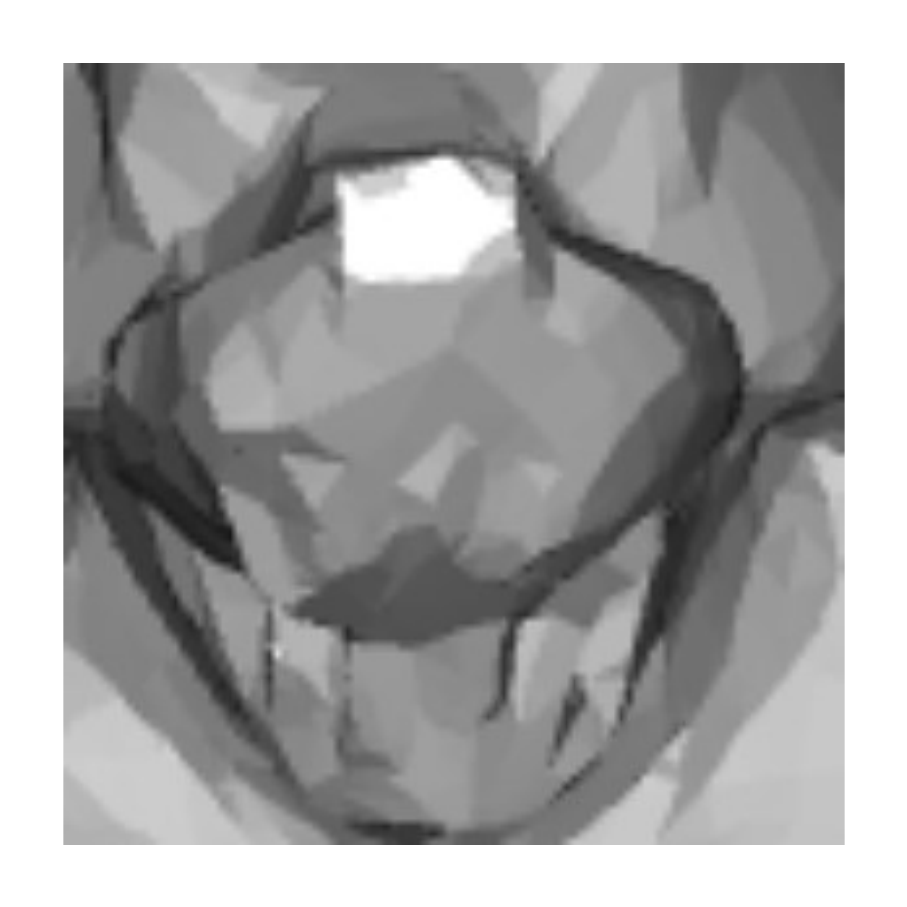}&
		\includegraphics[width=1.72cm]{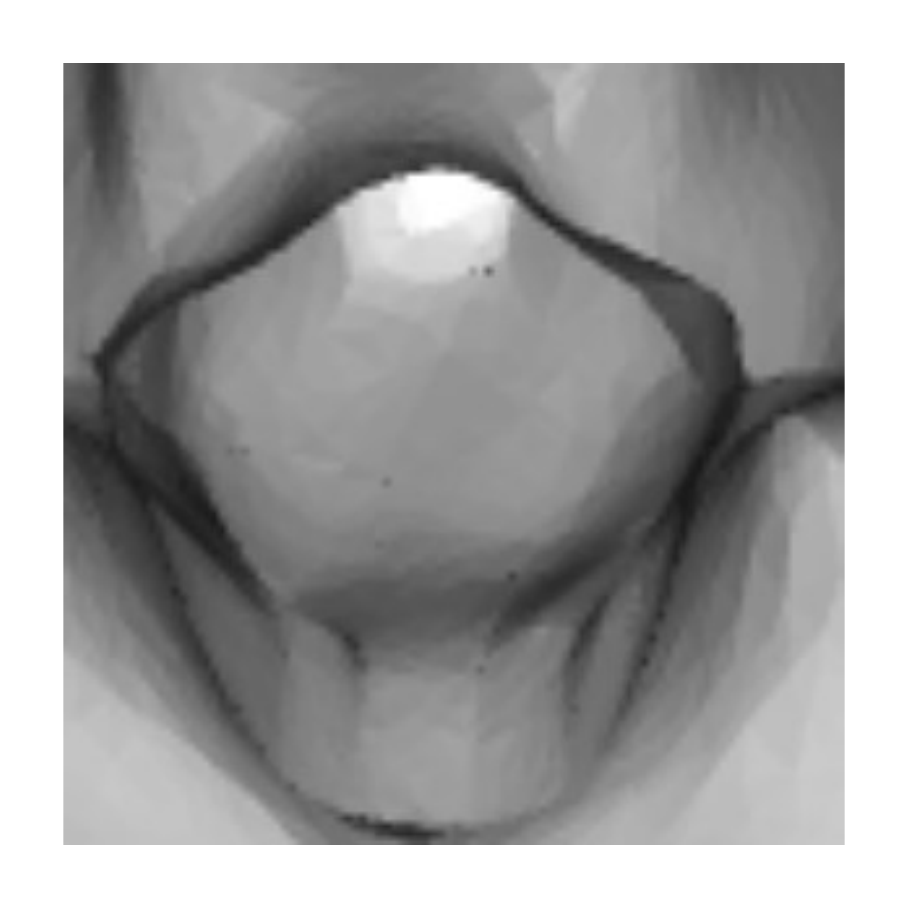}&
		\includegraphics[width=1.72cm]{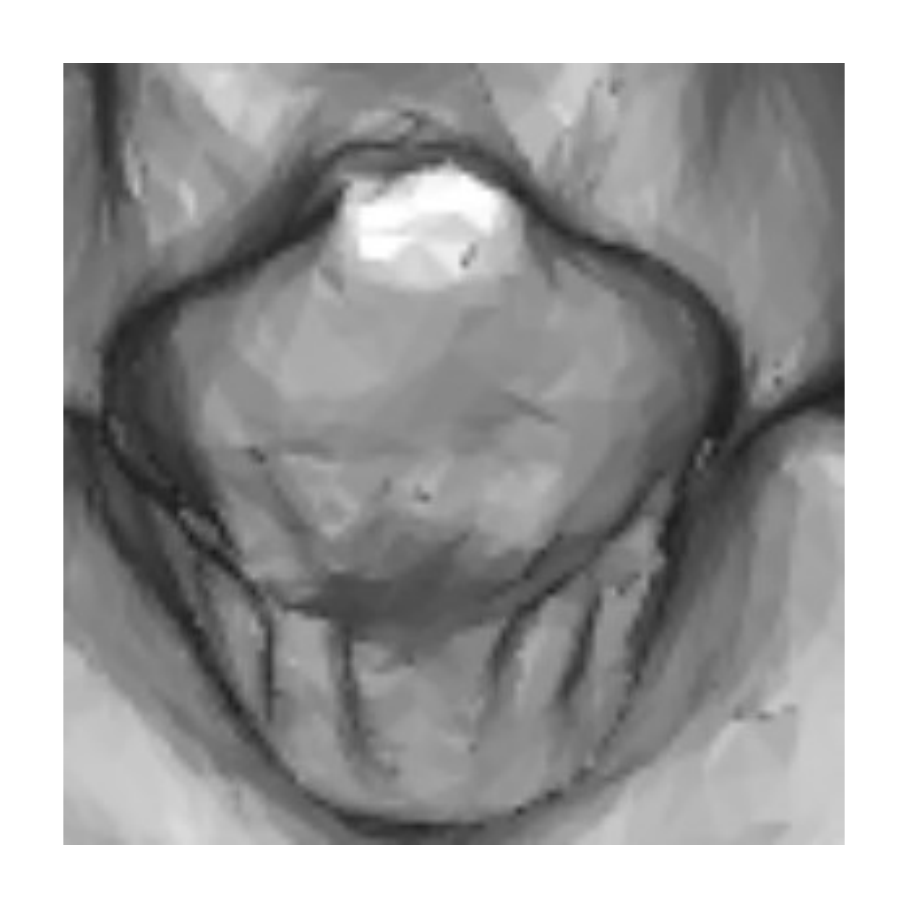}&
		\includegraphics[width=1.72cm]{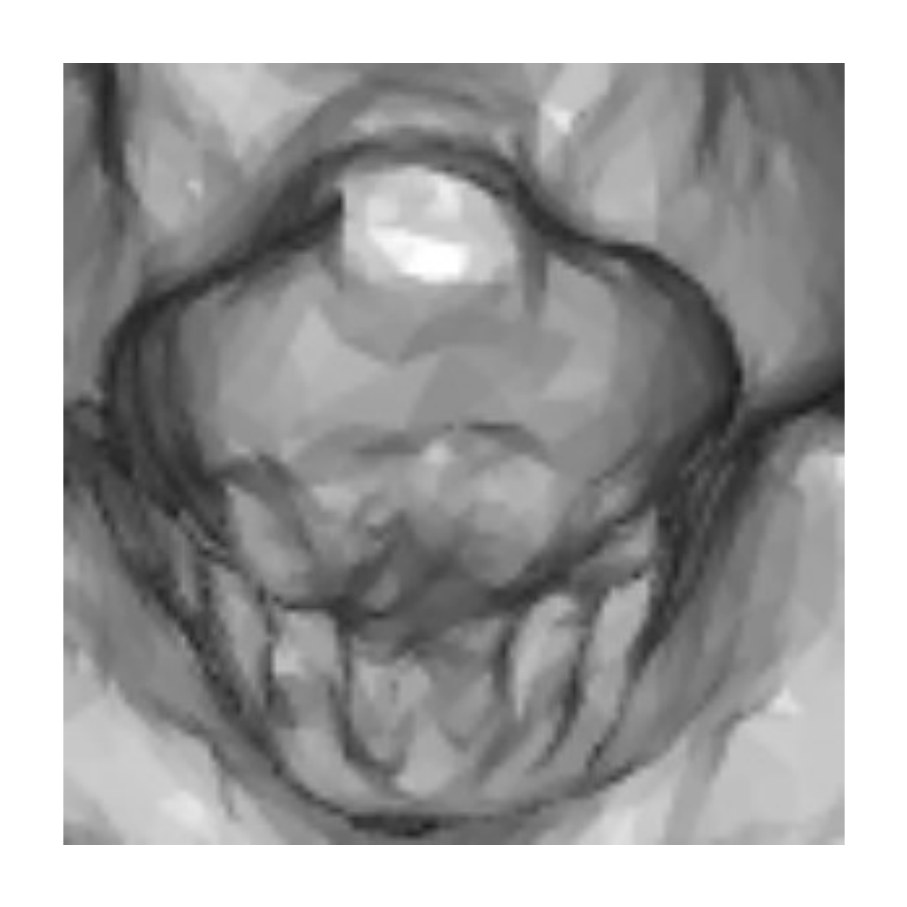}\\
		\includegraphics[width=1.72cm]{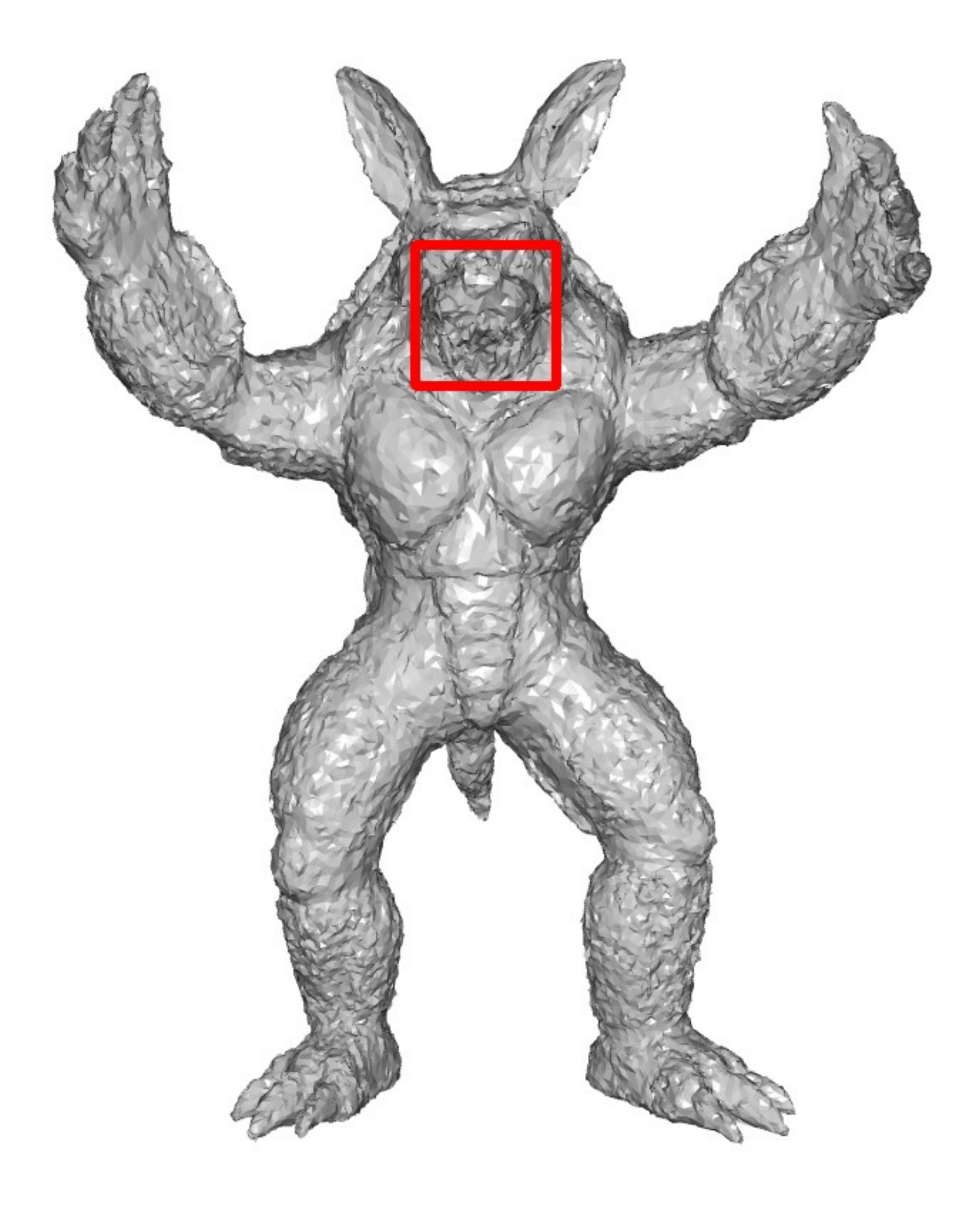}&
		\includegraphics[width=1.72cm]{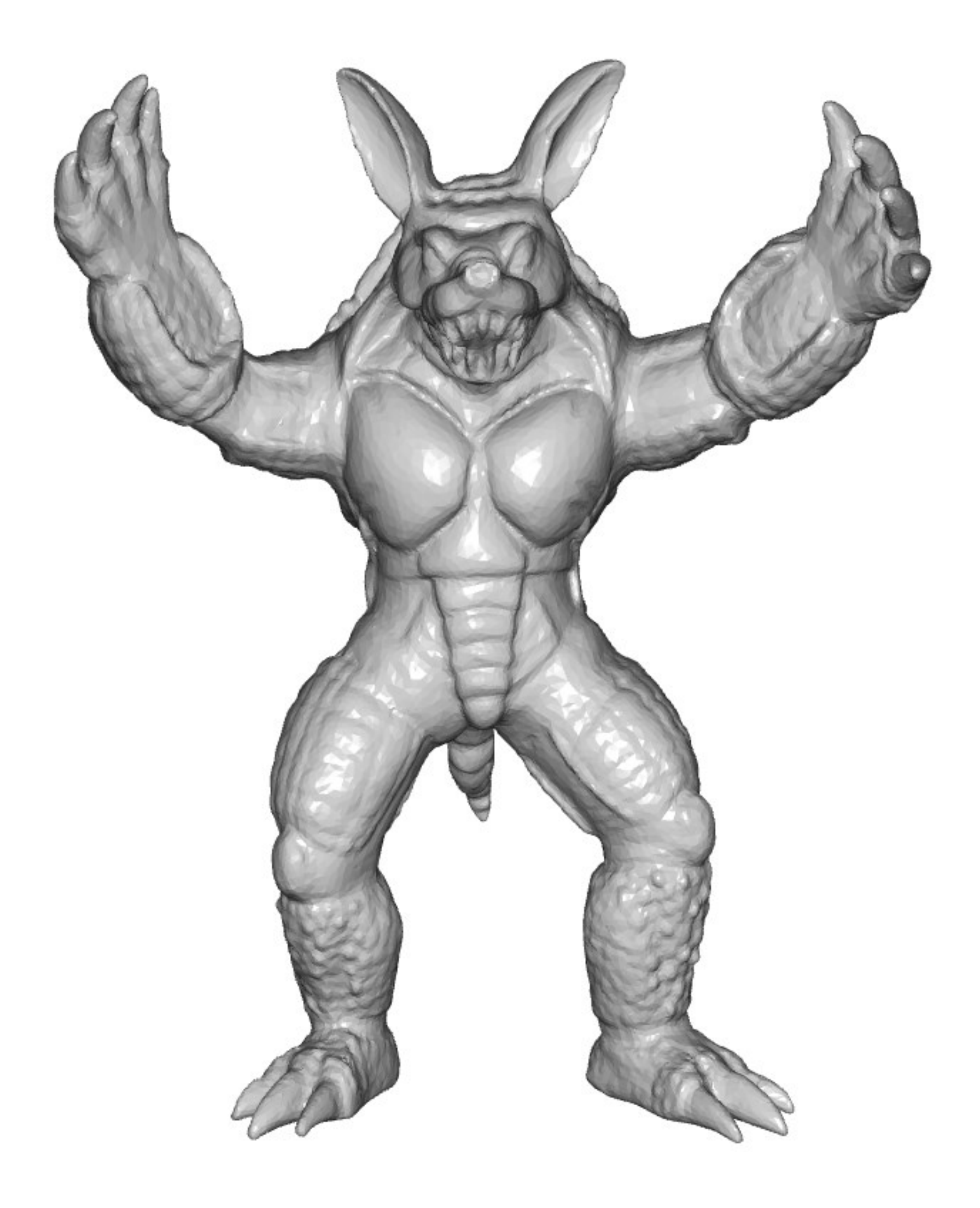}&
		\includegraphics[width=1.72cm]{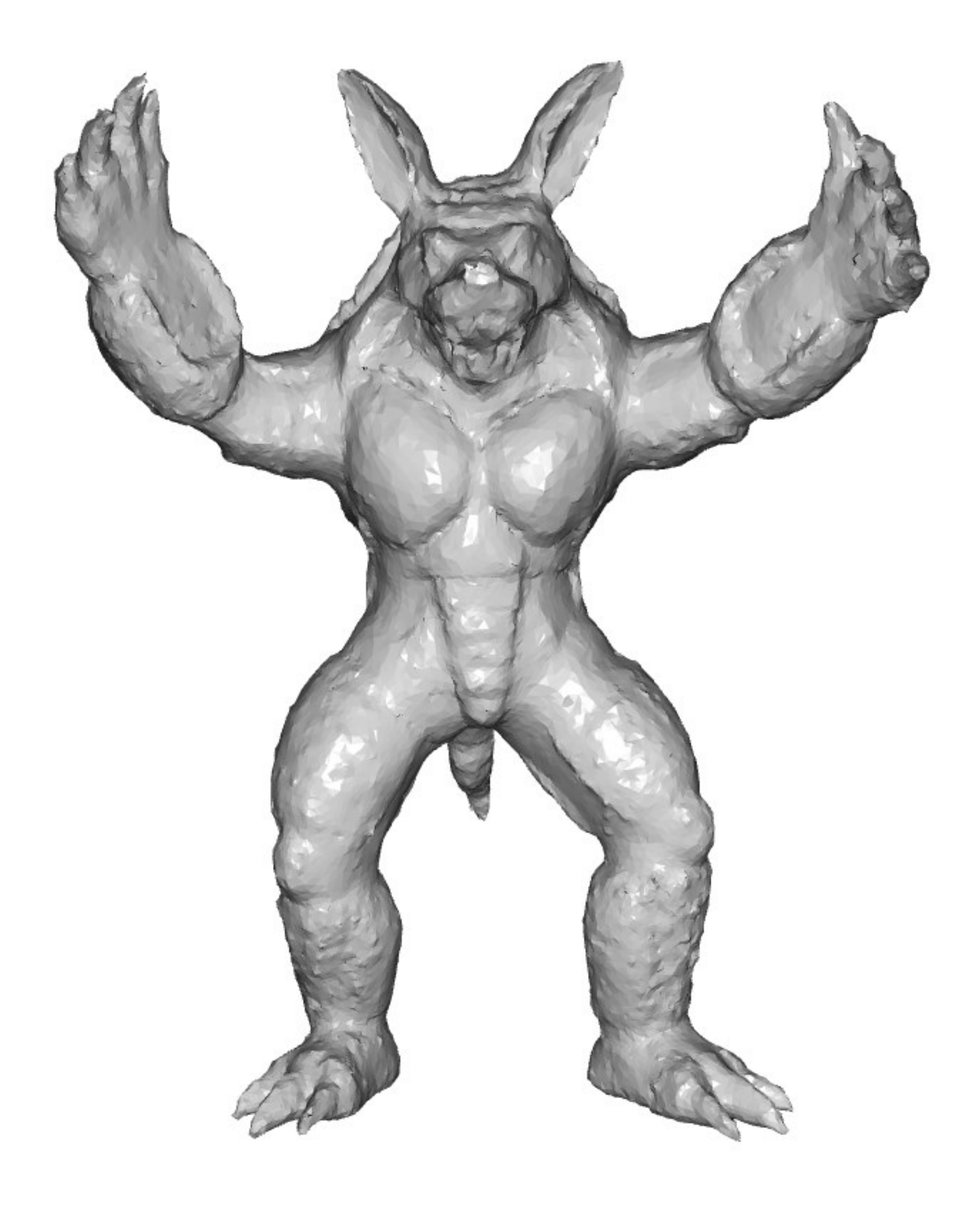}&
		\includegraphics[width=1.72cm]{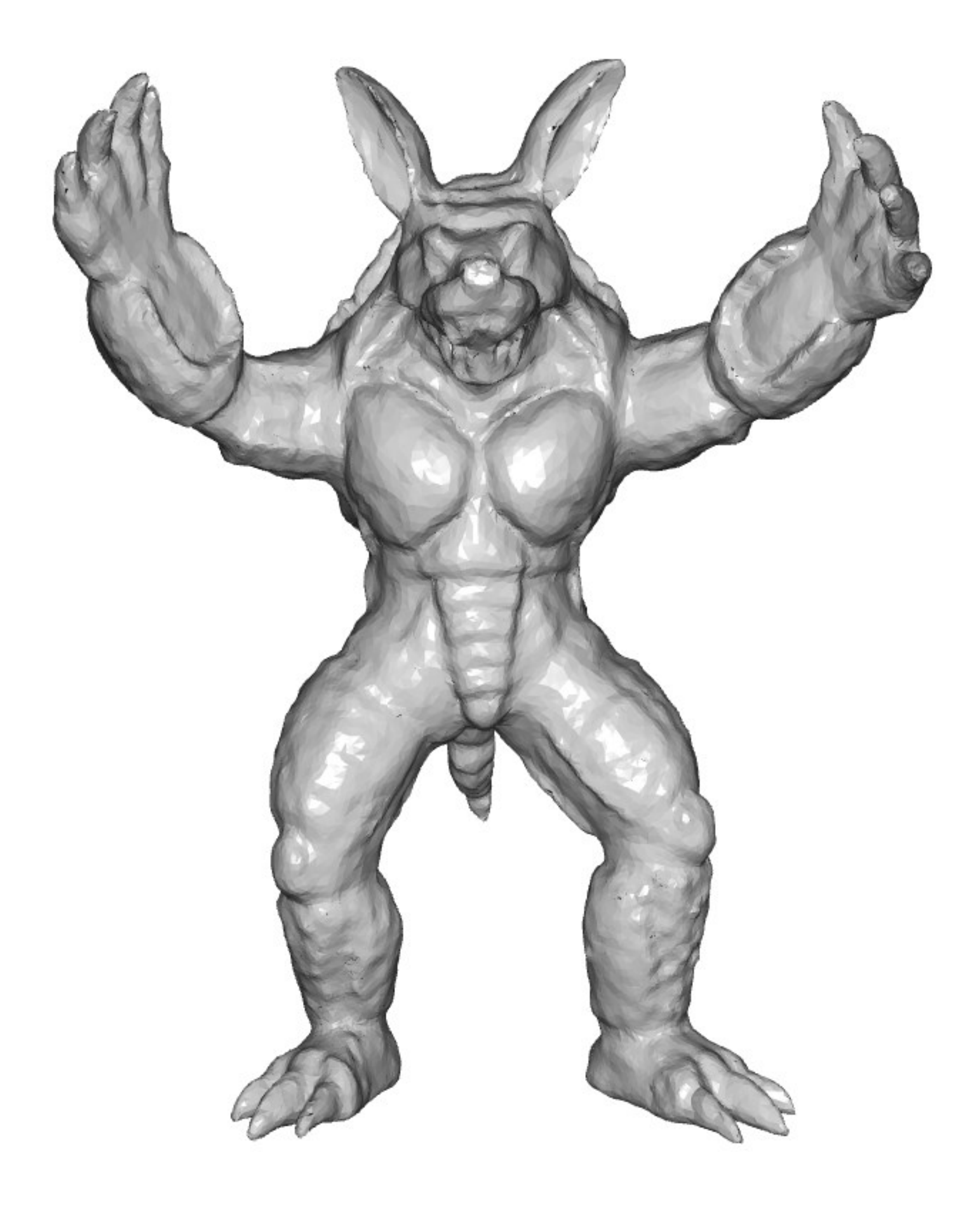}&
		\includegraphics[width=1.72cm]{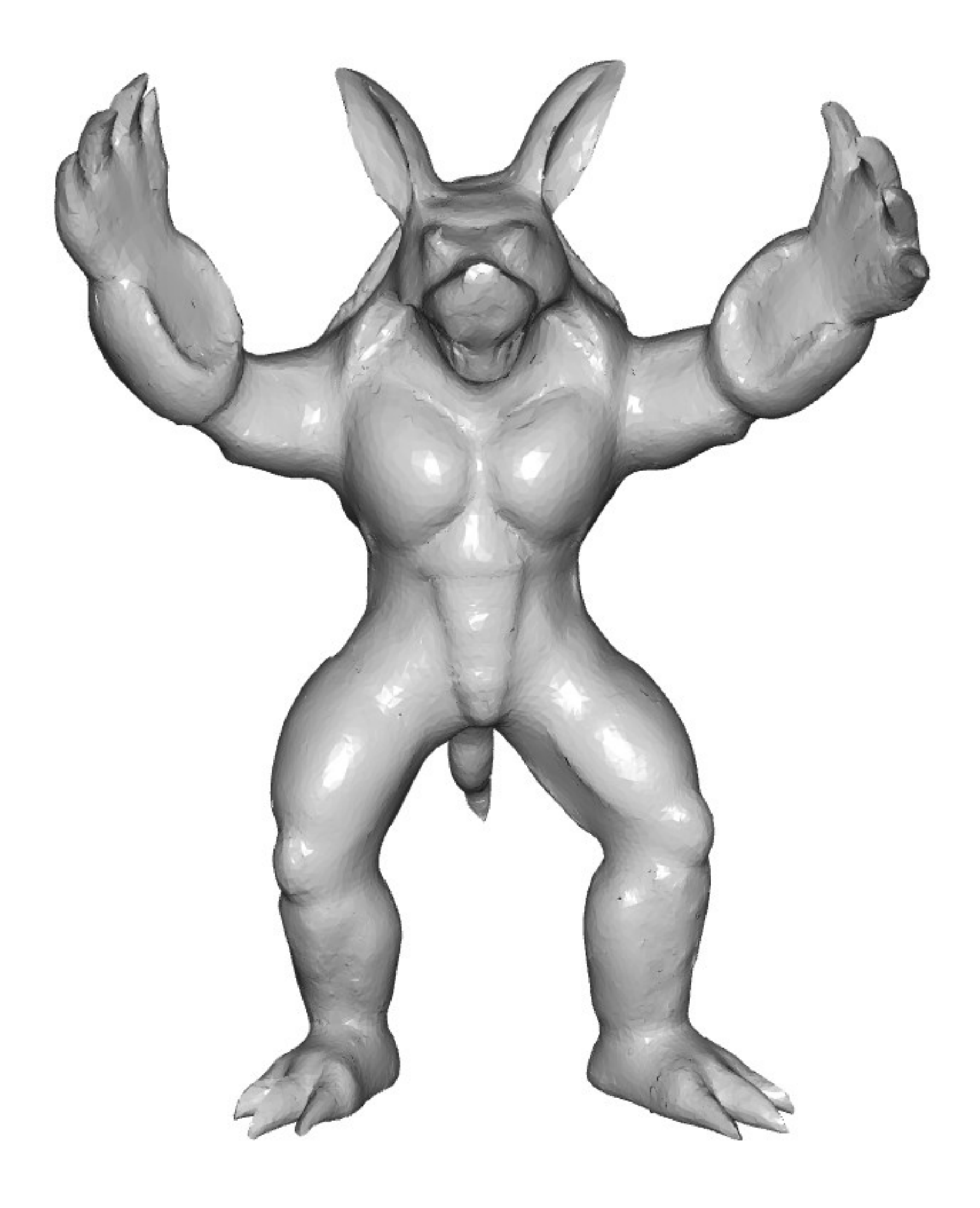}&
		\includegraphics[width=1.72cm]{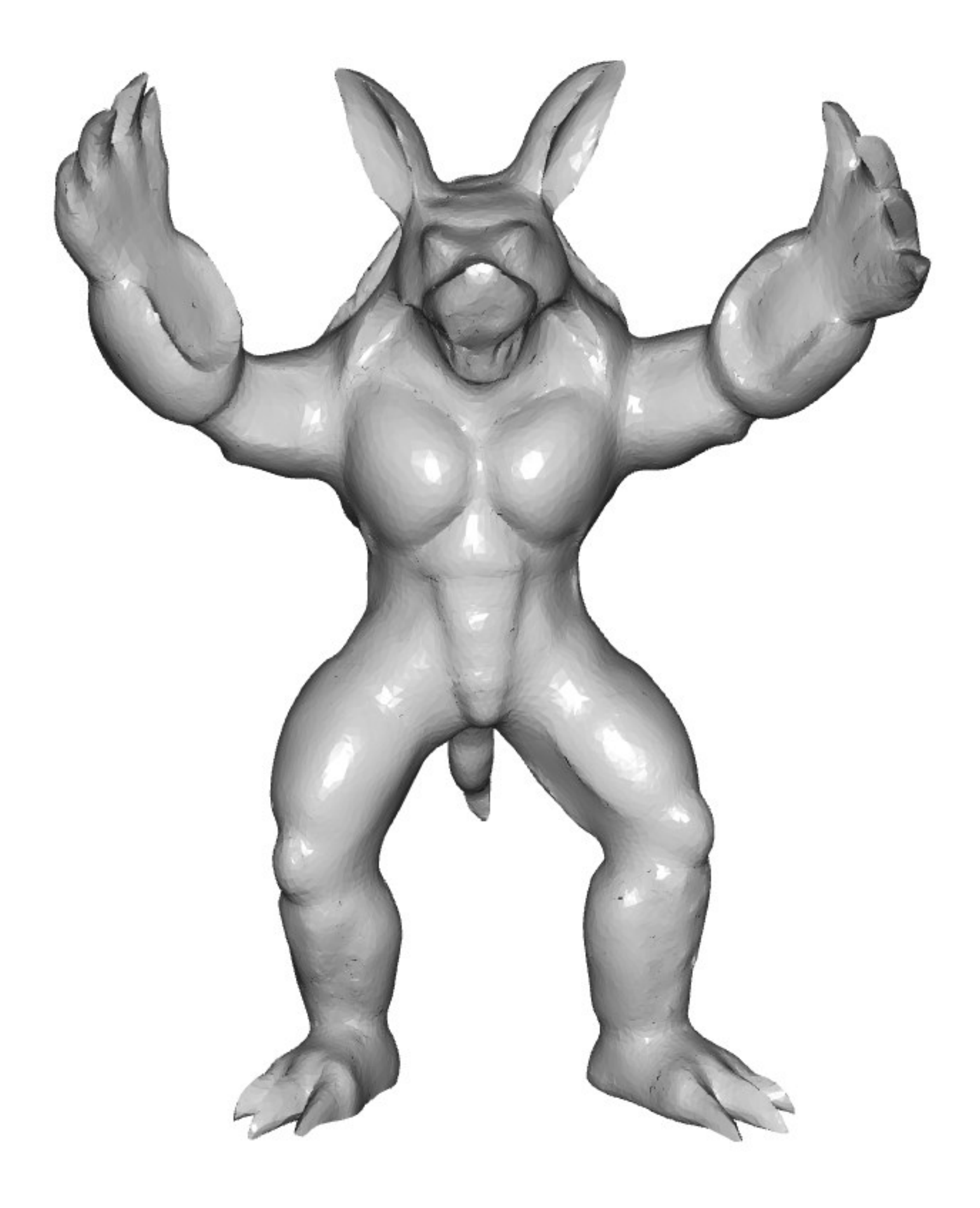}&
		\includegraphics[width=1.72cm]{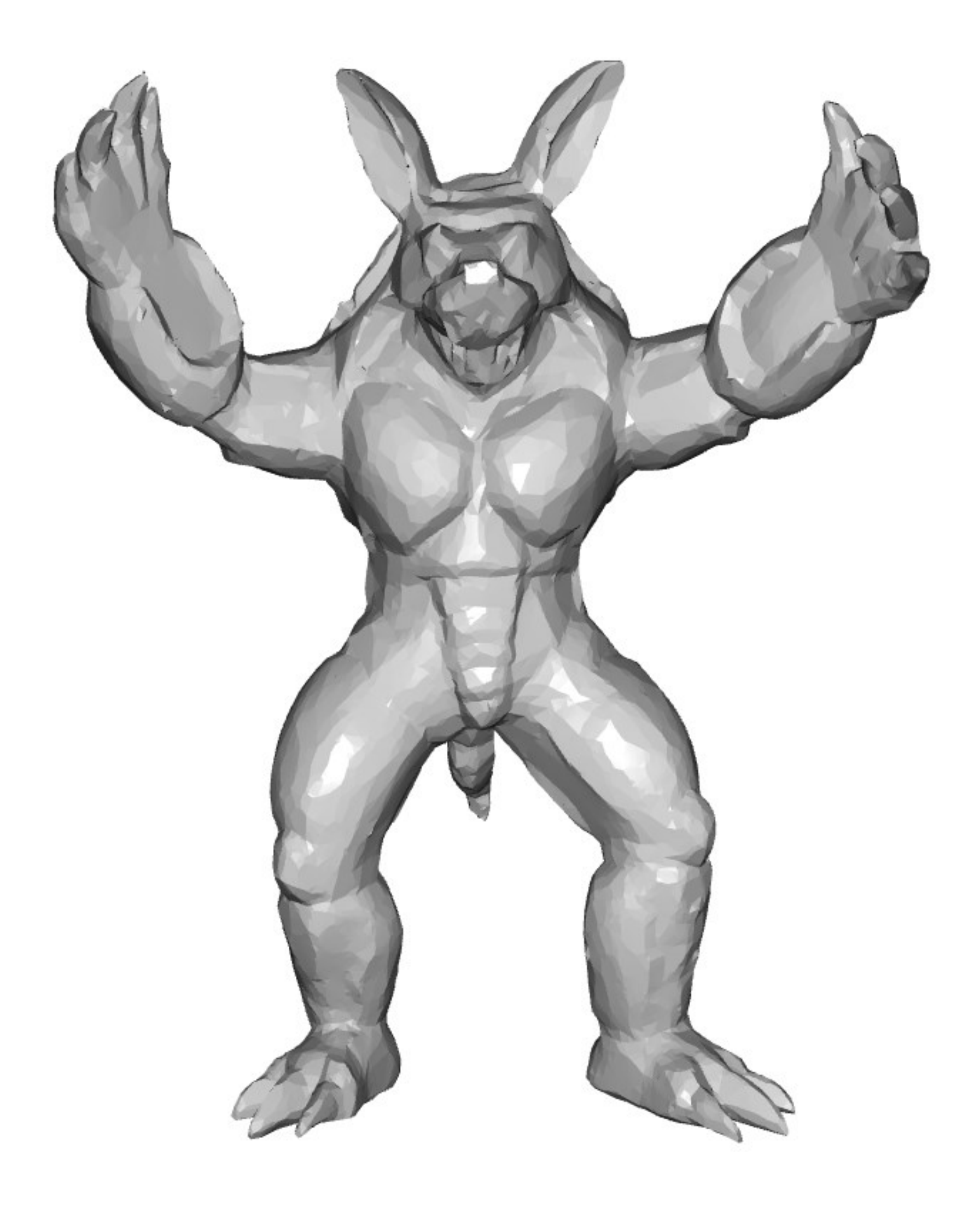}&
		\includegraphics[width=1.72cm]{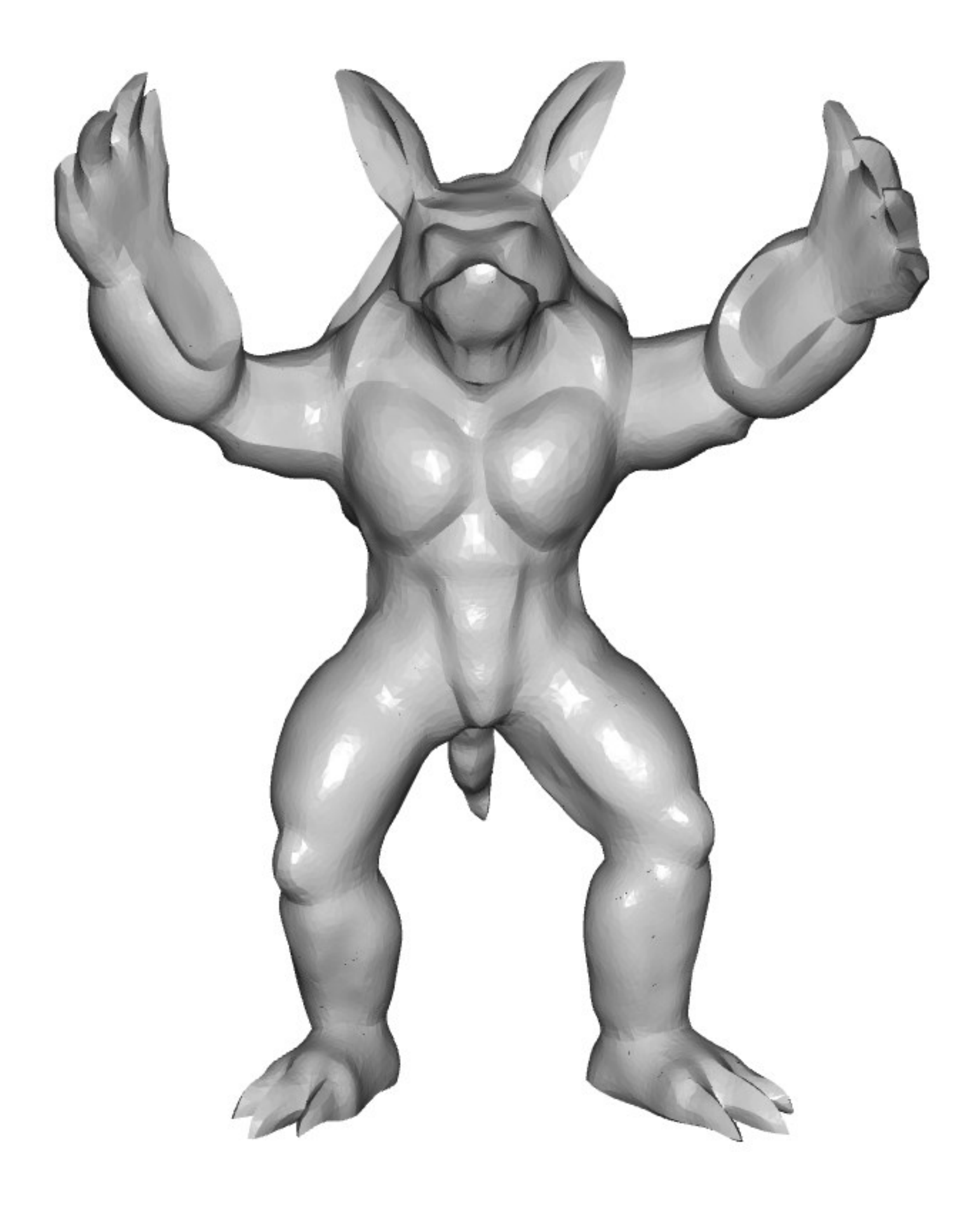}&
		\includegraphics[width=1.72cm]{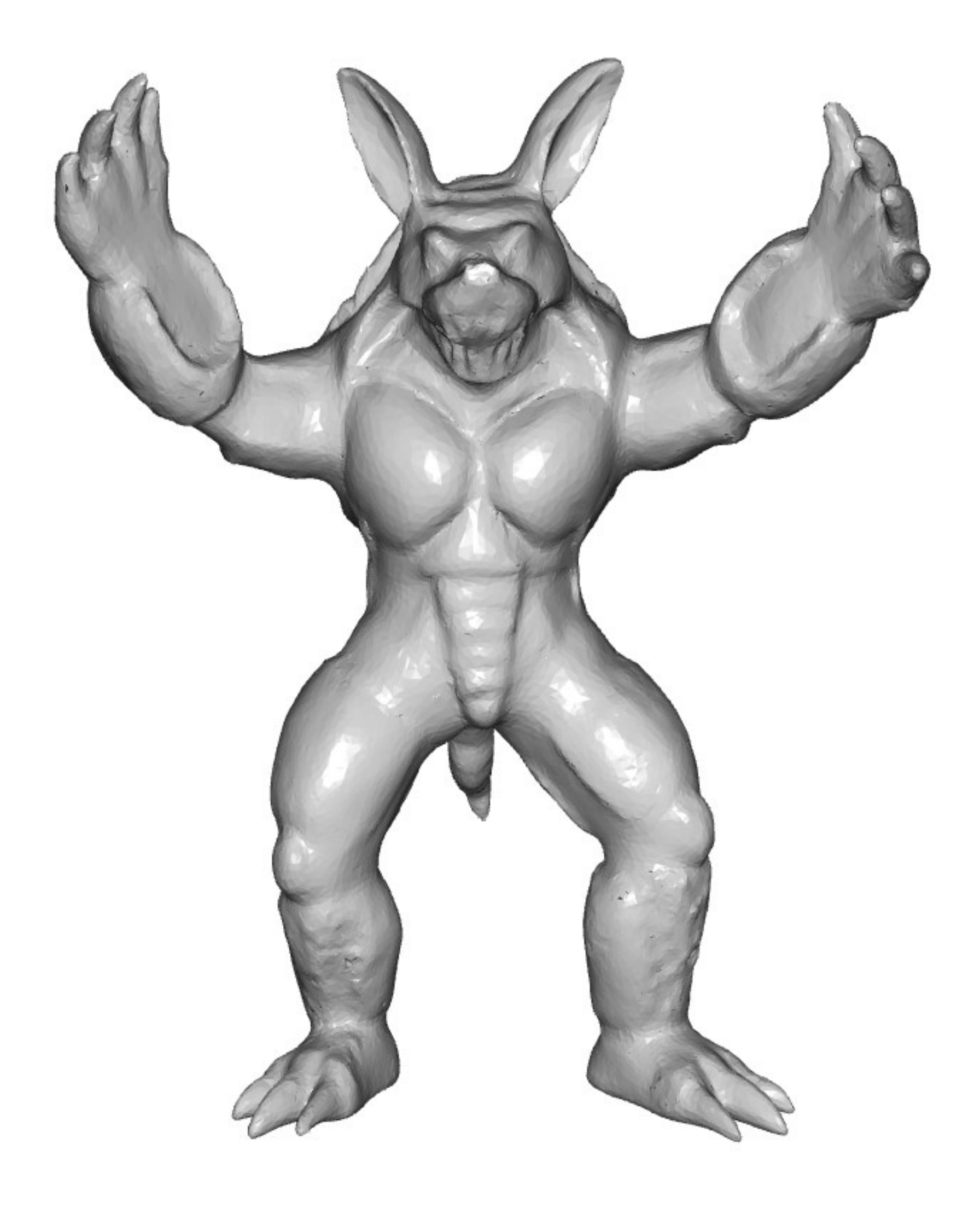}&
		\includegraphics[width=1.72cm]{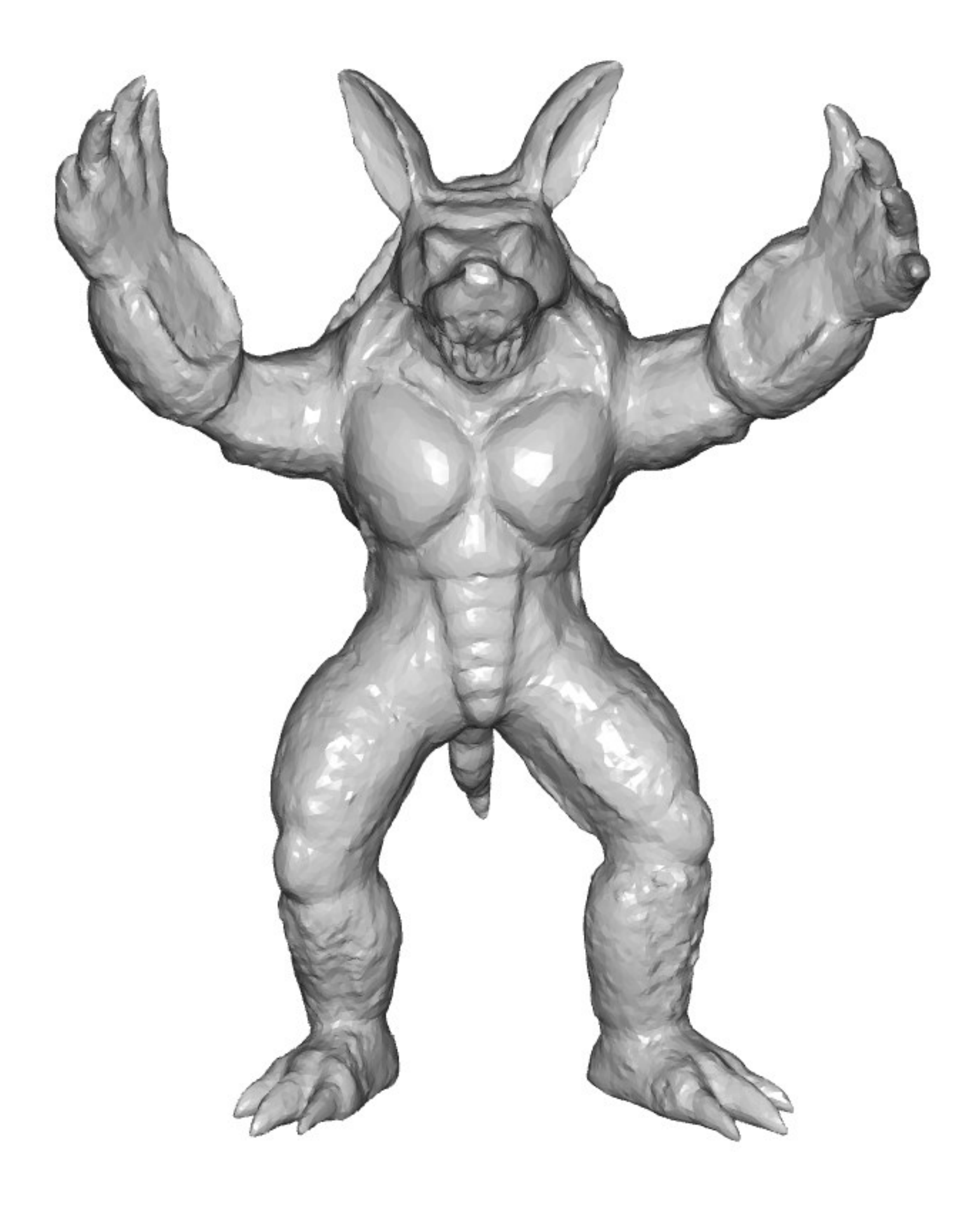}\\
		\includegraphics[width=1.72cm]{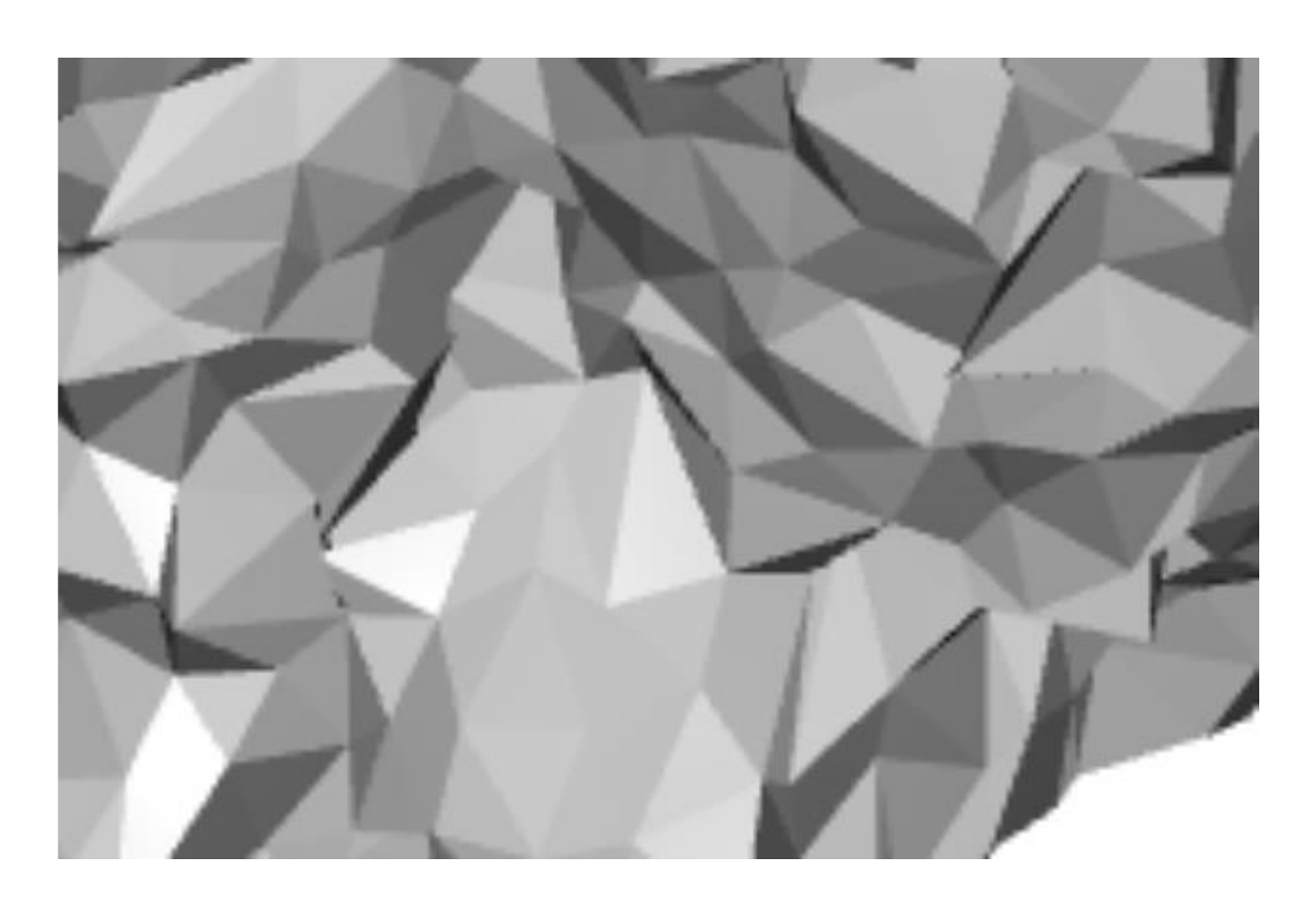}&
		\includegraphics[width=1.72cm]{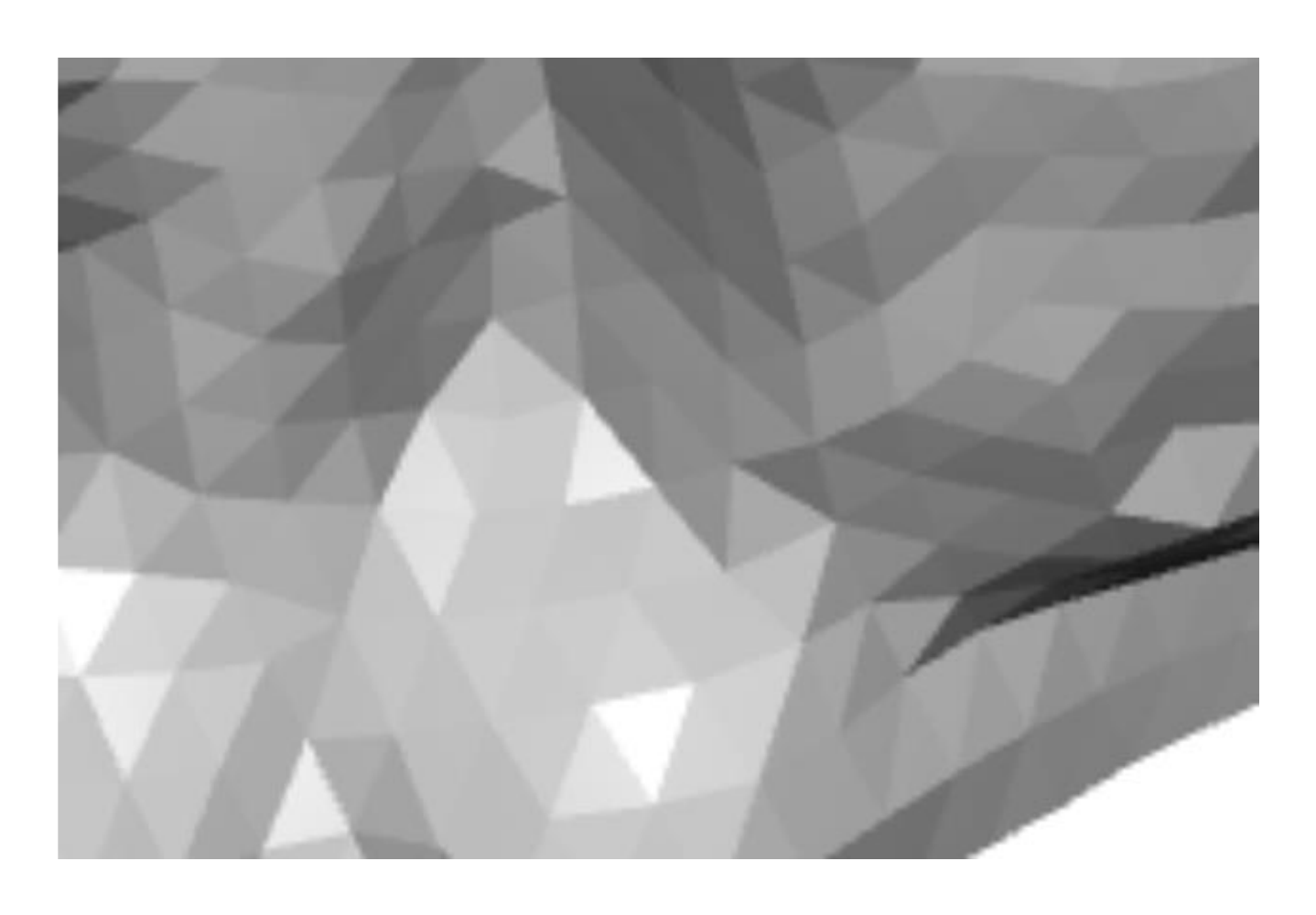}&
		\includegraphics[width=1.72cm]{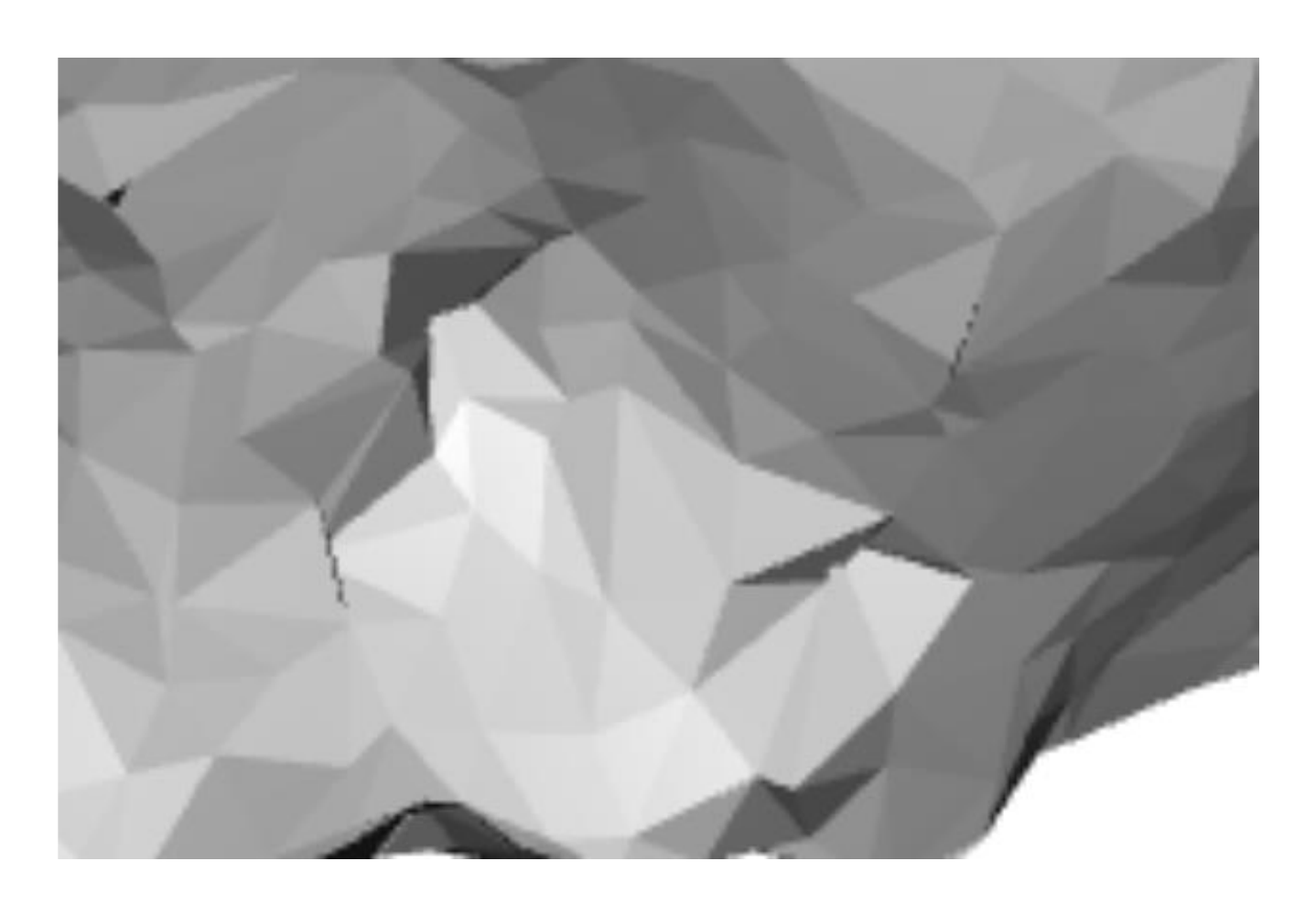}&
		\includegraphics[width=1.72cm]{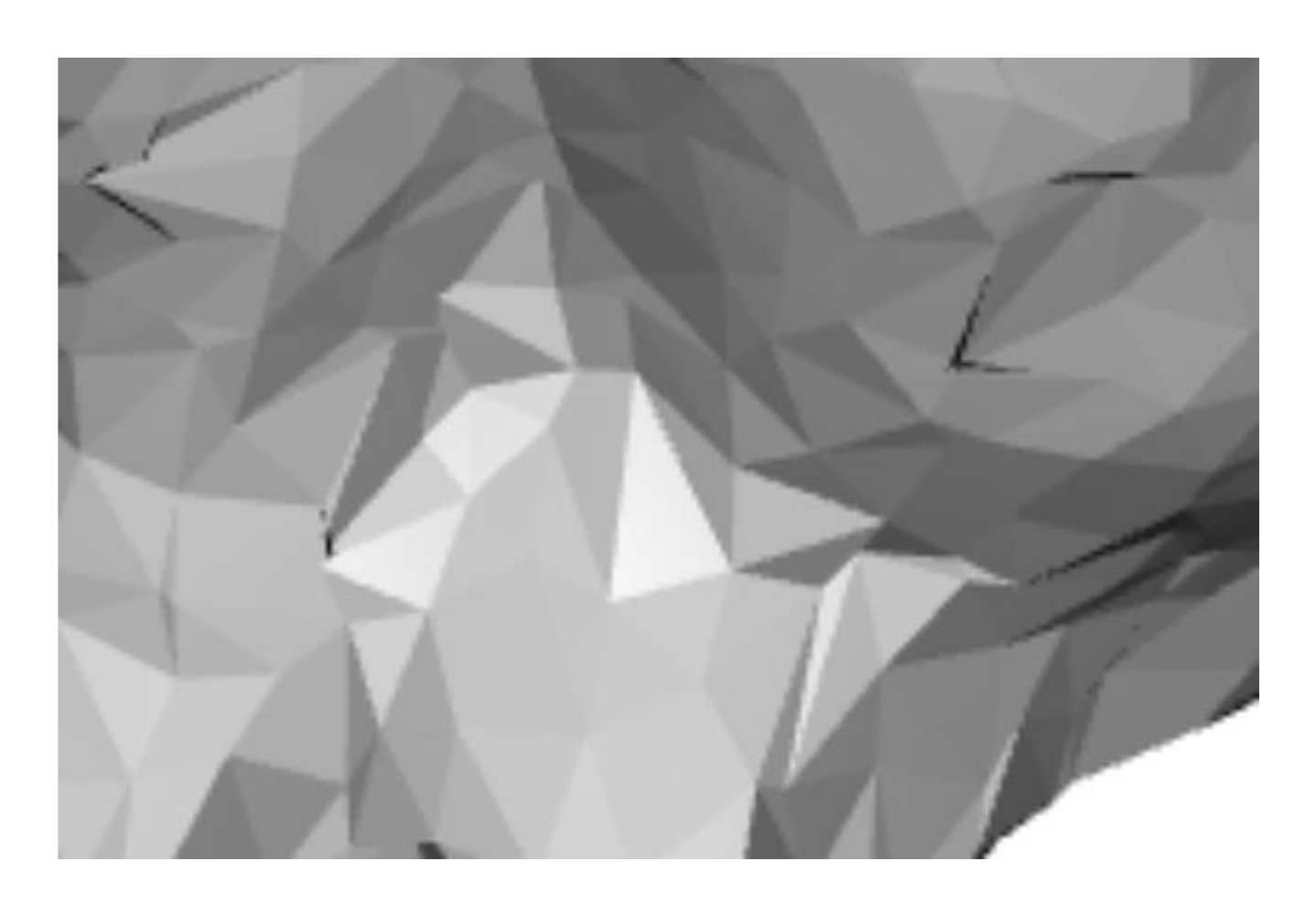}&
		\includegraphics[width=1.72cm]{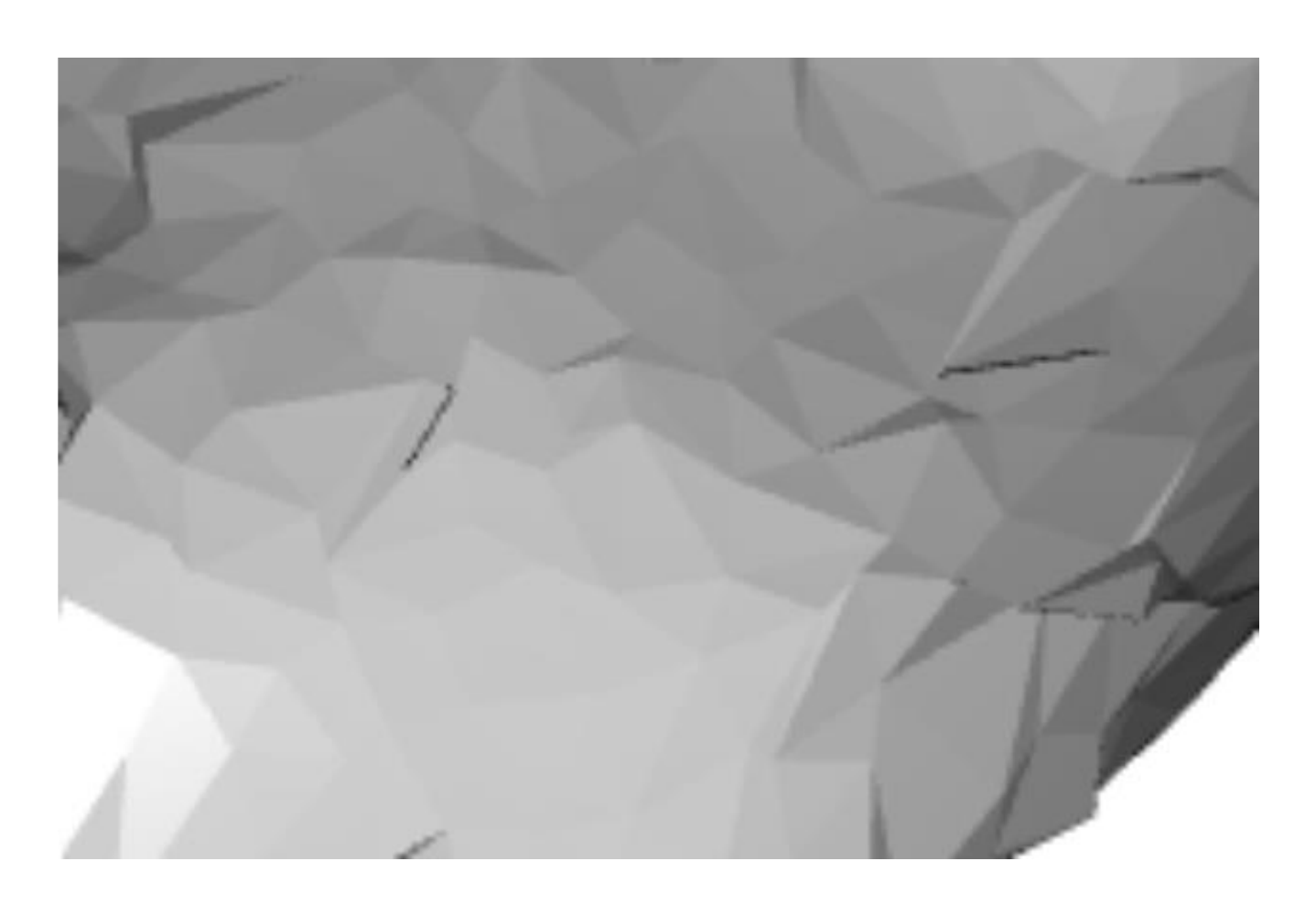}&
		\includegraphics[width=1.72cm]{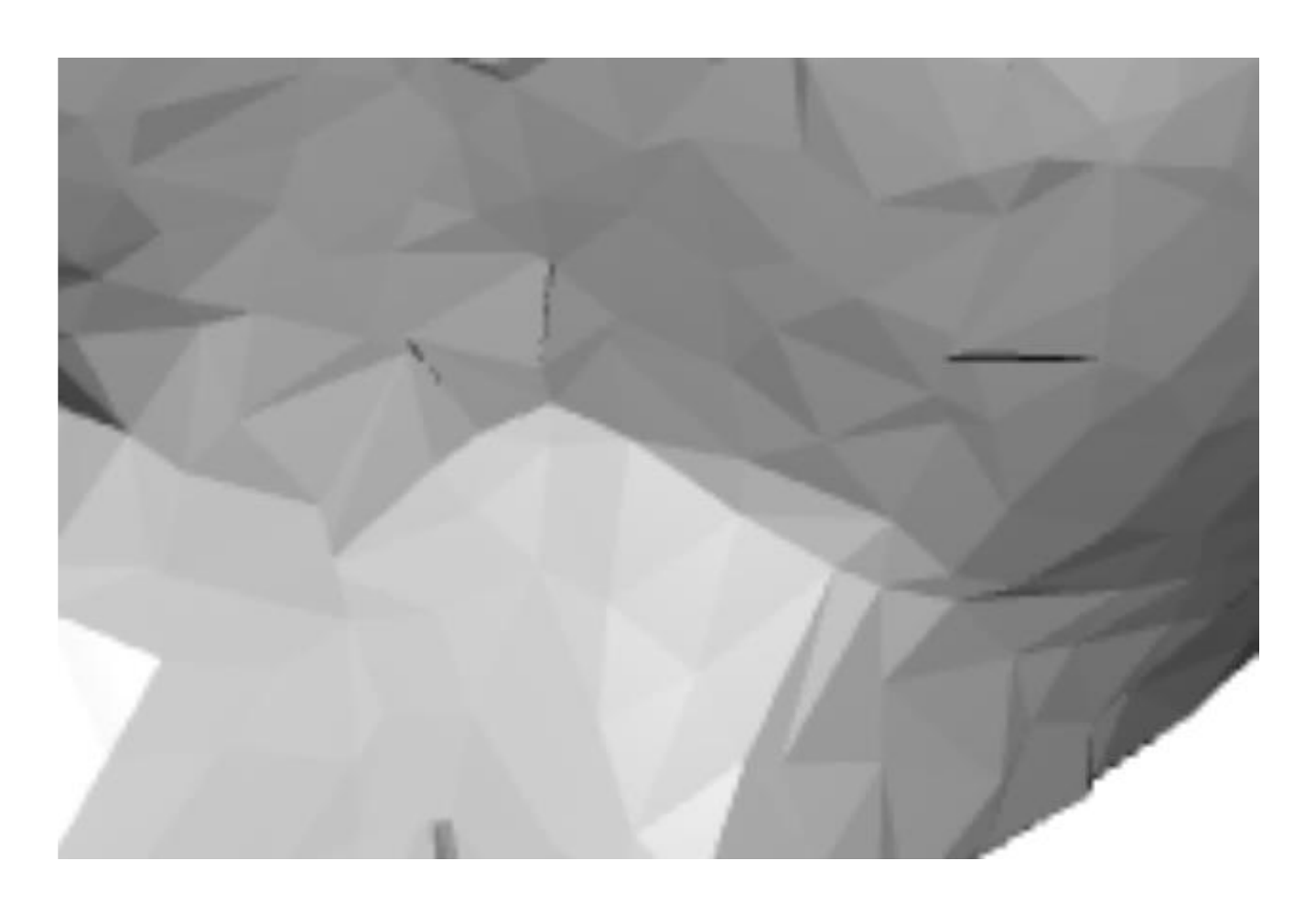}&
		\includegraphics[width=1.72cm]{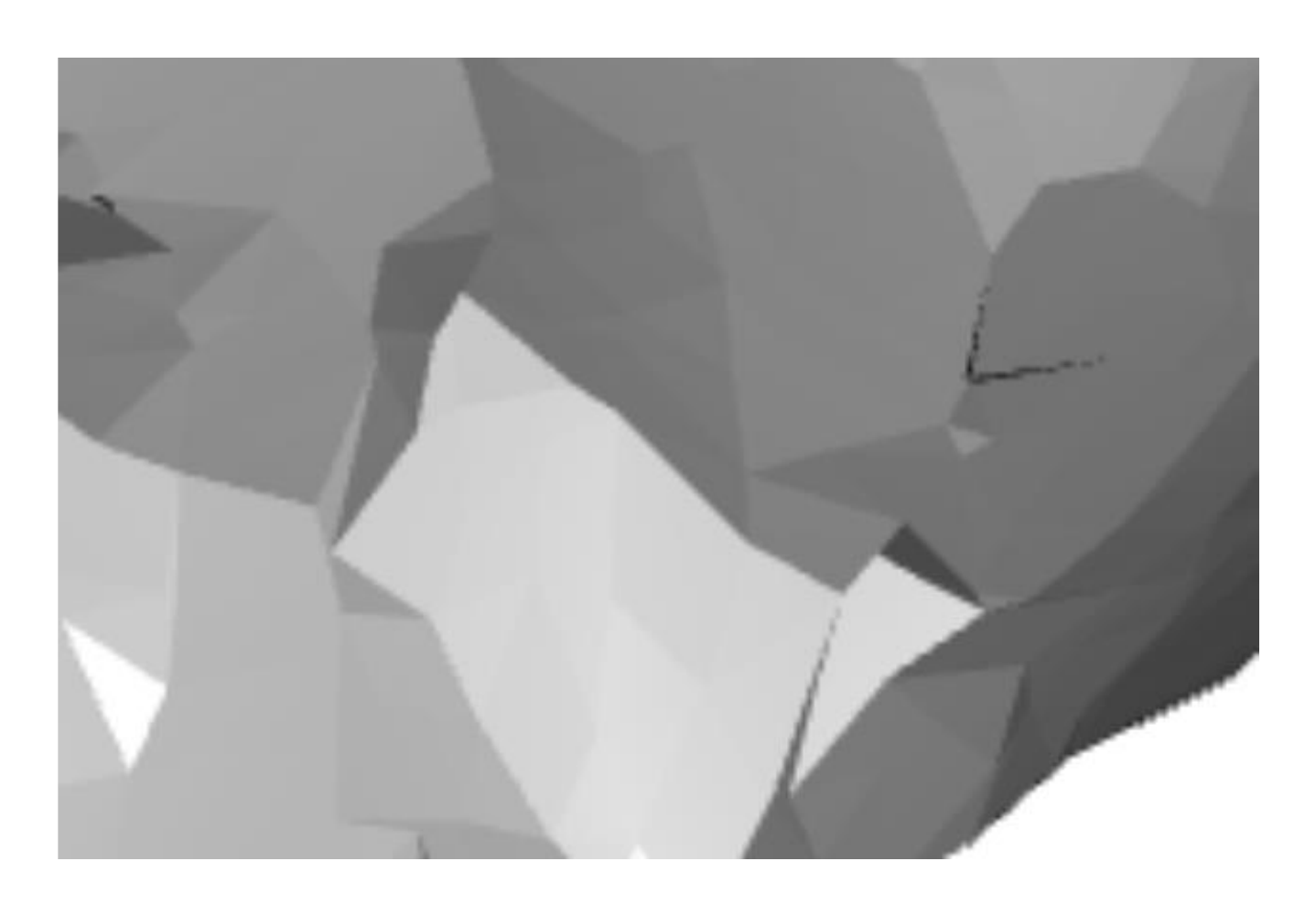}&
		\includegraphics[width=1.72cm]{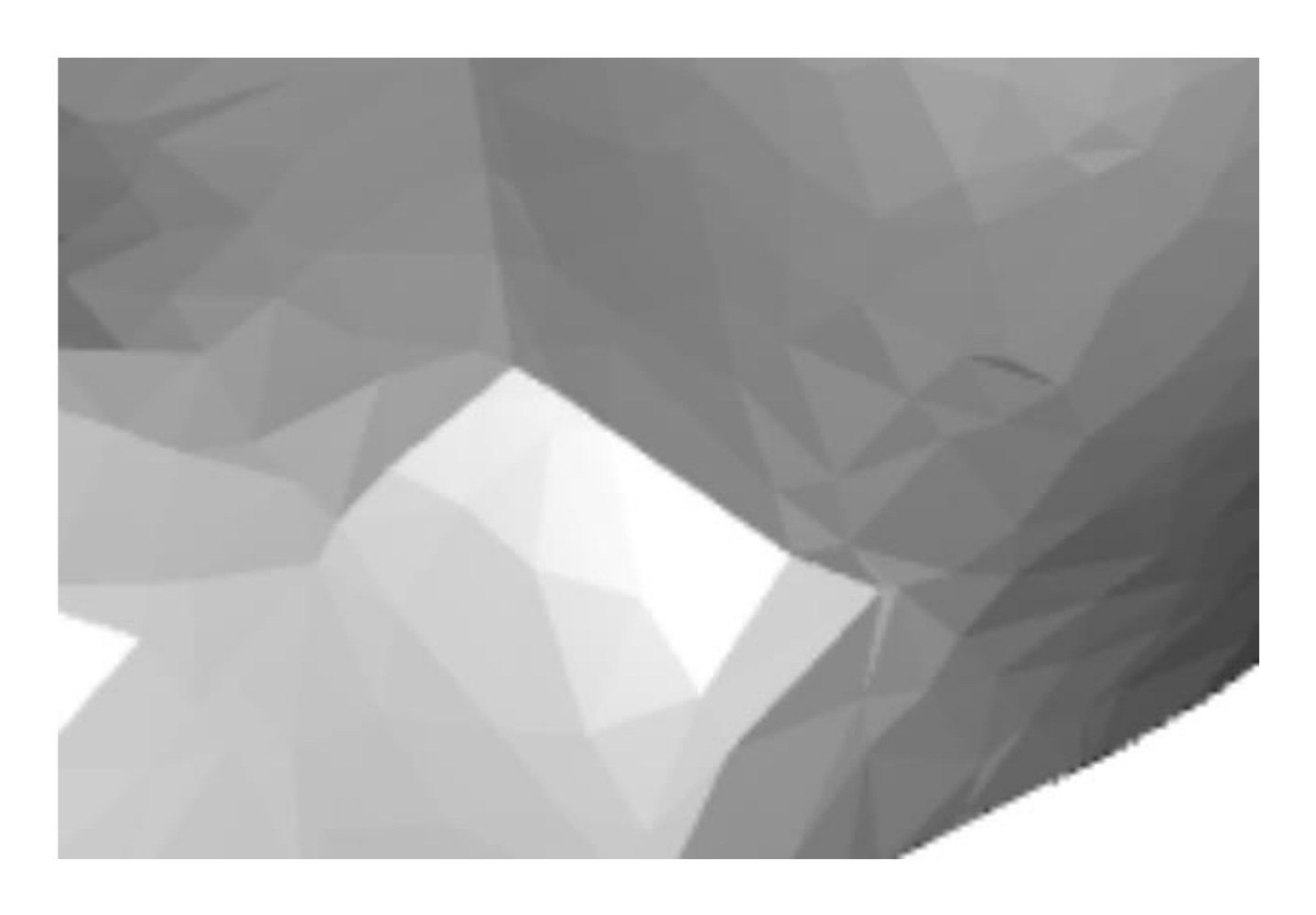}&
		\includegraphics[width=1.72cm]{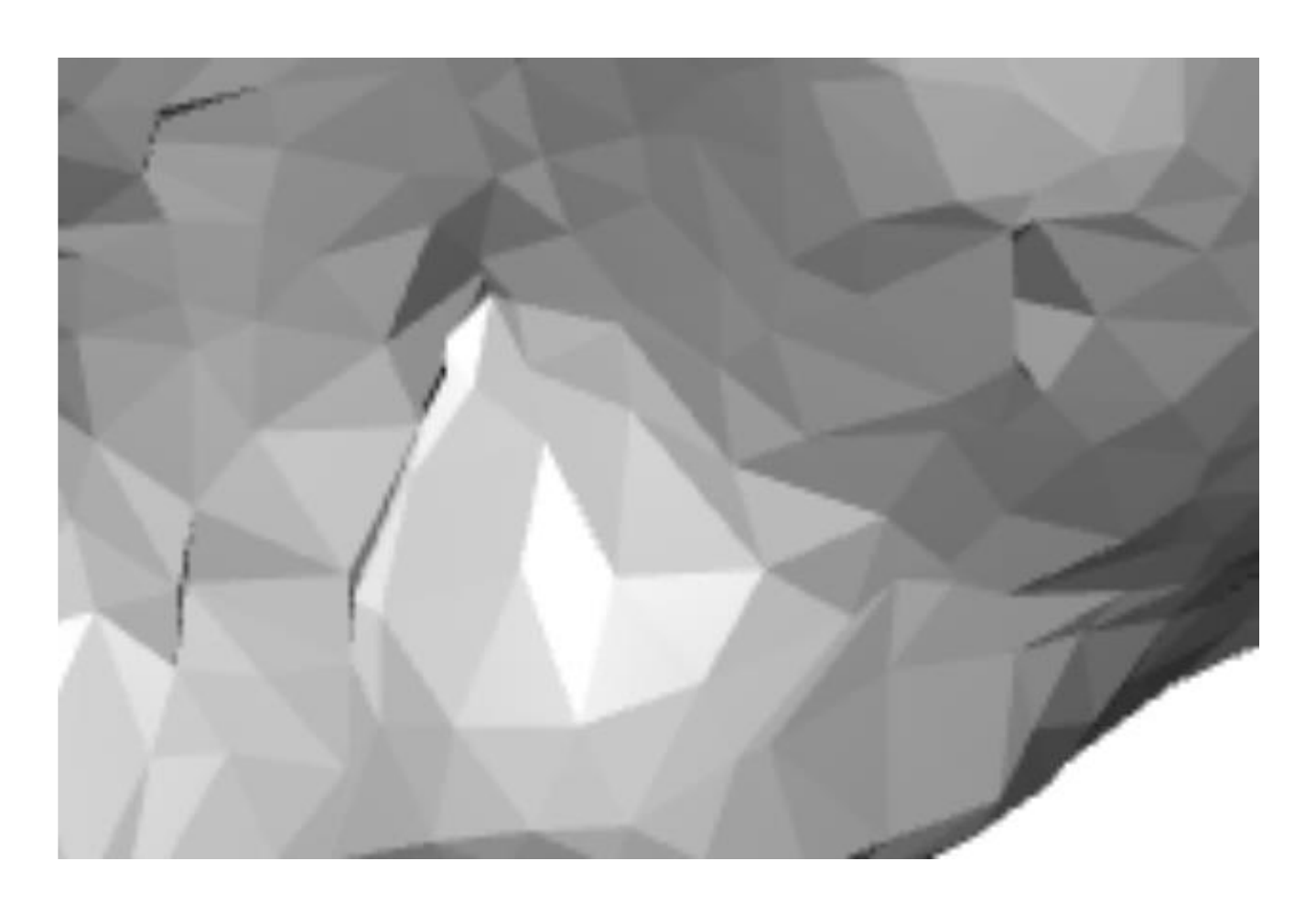}&
		\includegraphics[width=1.72cm]{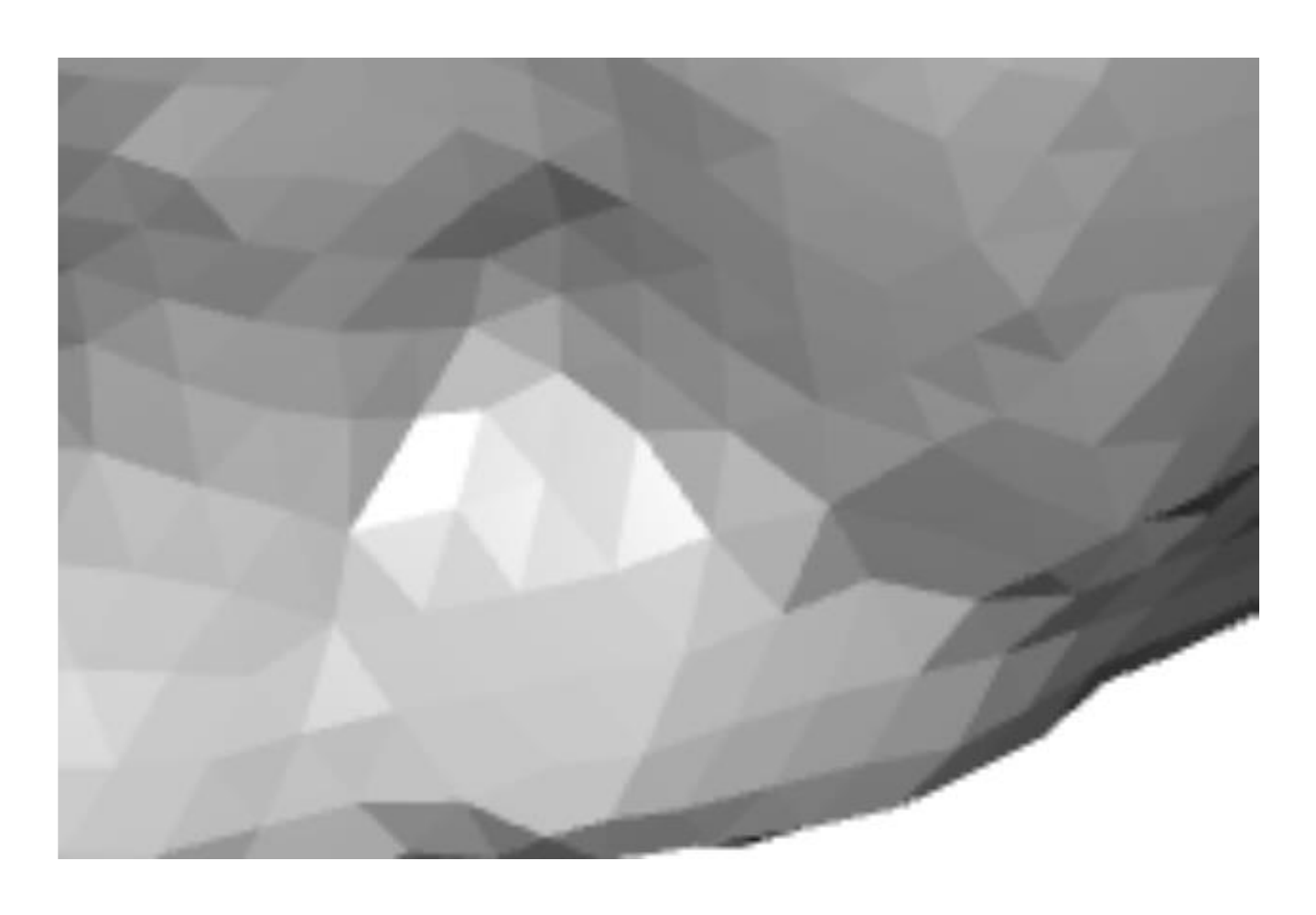}\\
		\includegraphics[width=1.72cm]{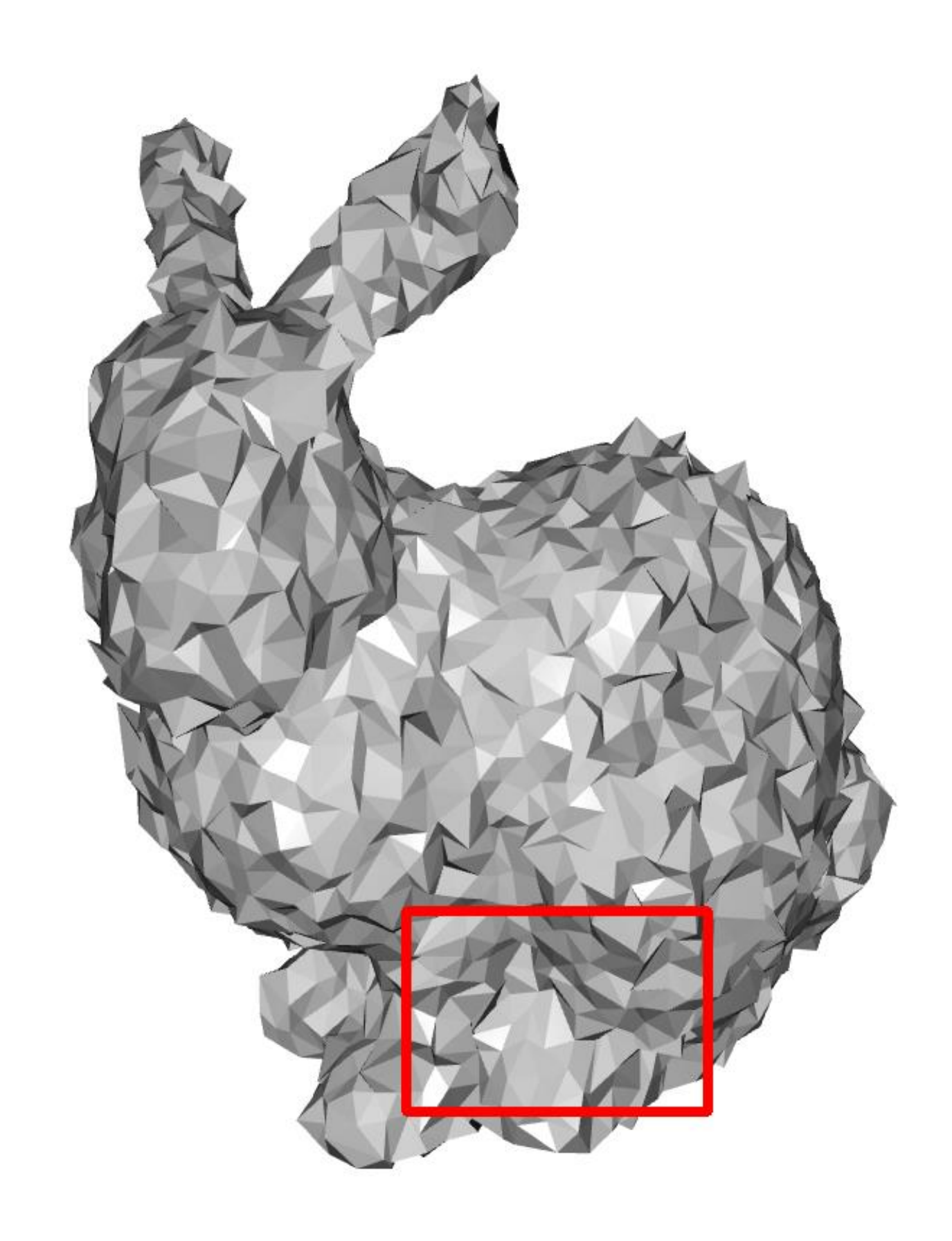}&
		\includegraphics[width=1.72cm]{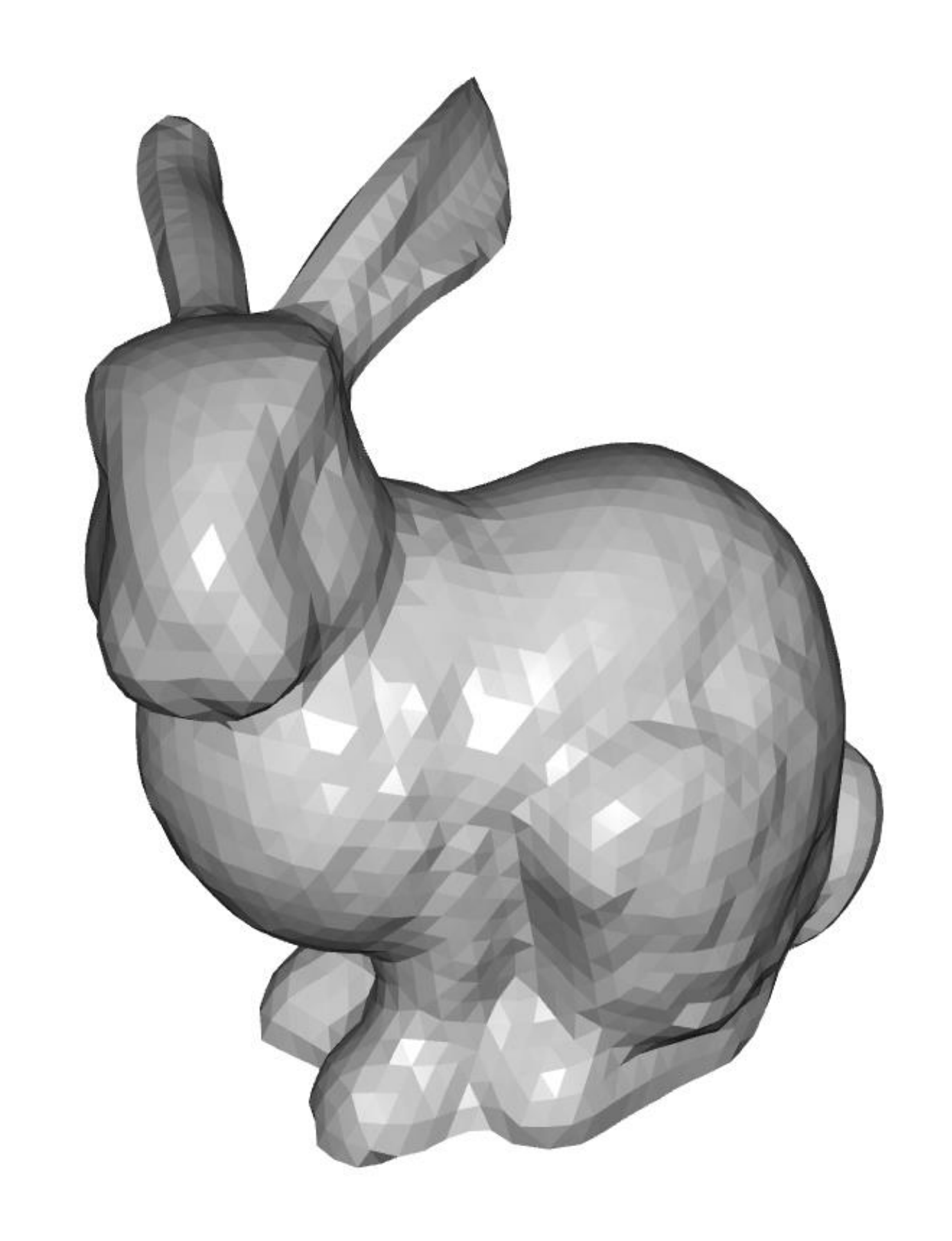}&
		\includegraphics[width=1.72cm]{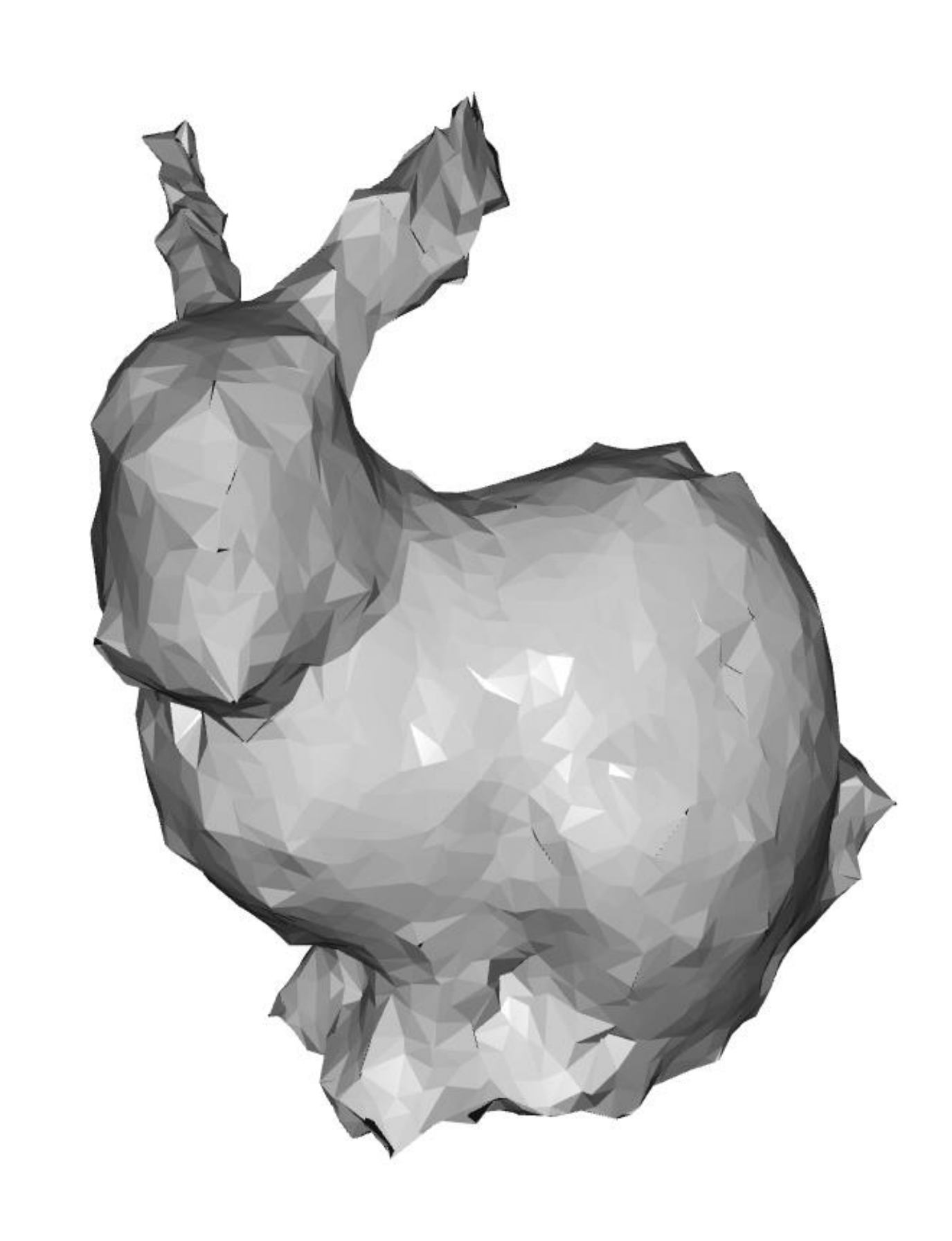}&
		\includegraphics[width=1.72cm]{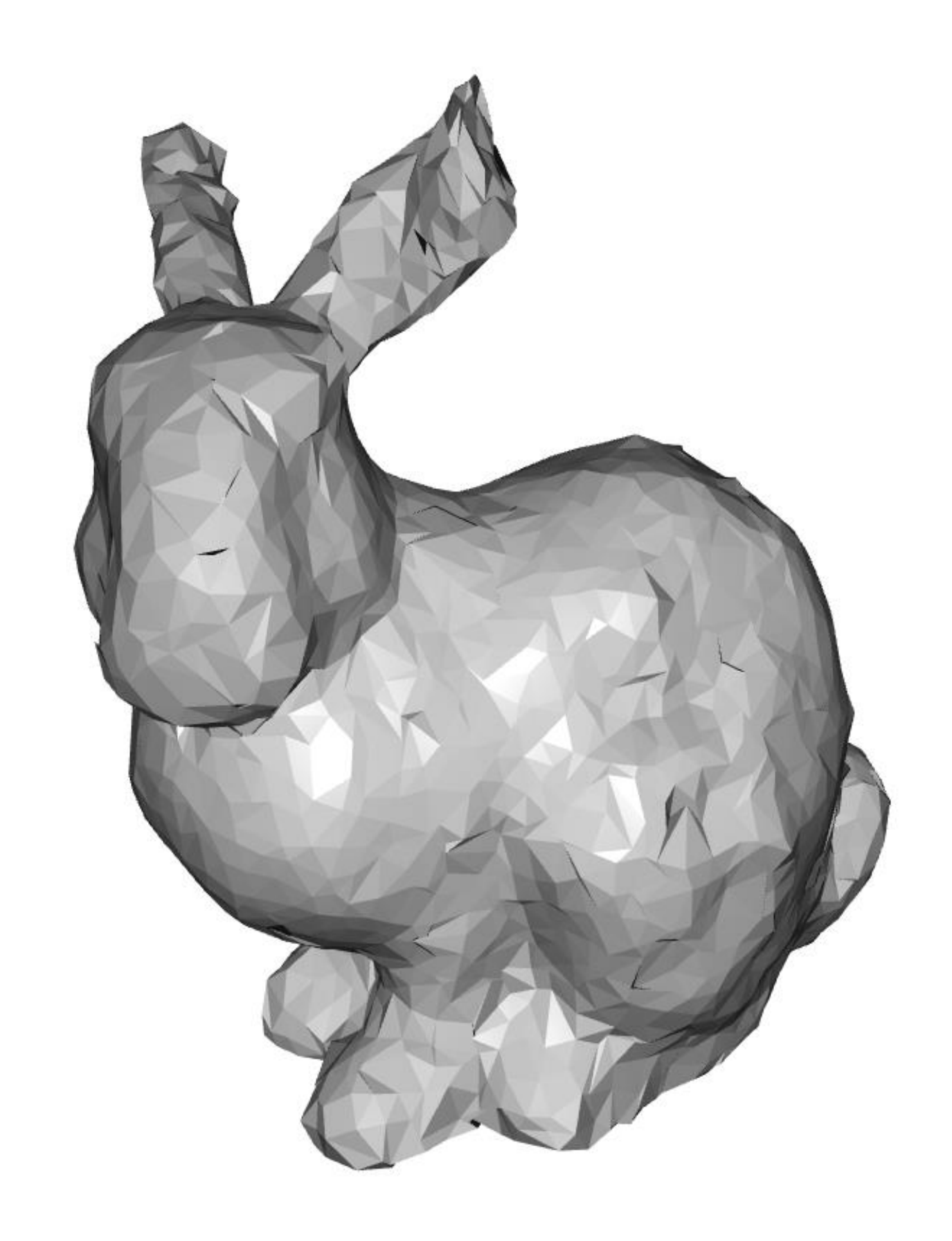}&
		\includegraphics[width=1.72cm]{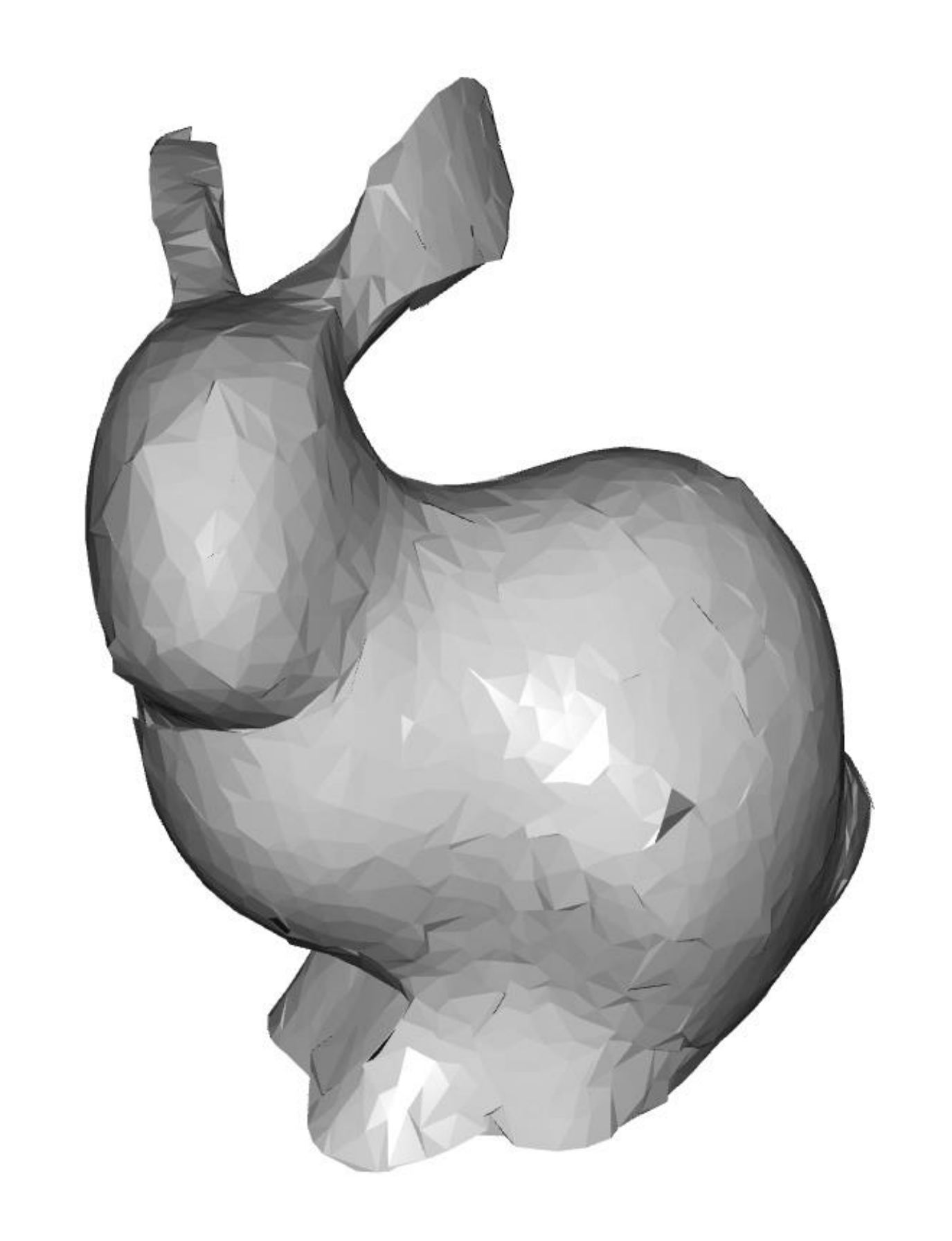}&
		\includegraphics[width=1.72cm]{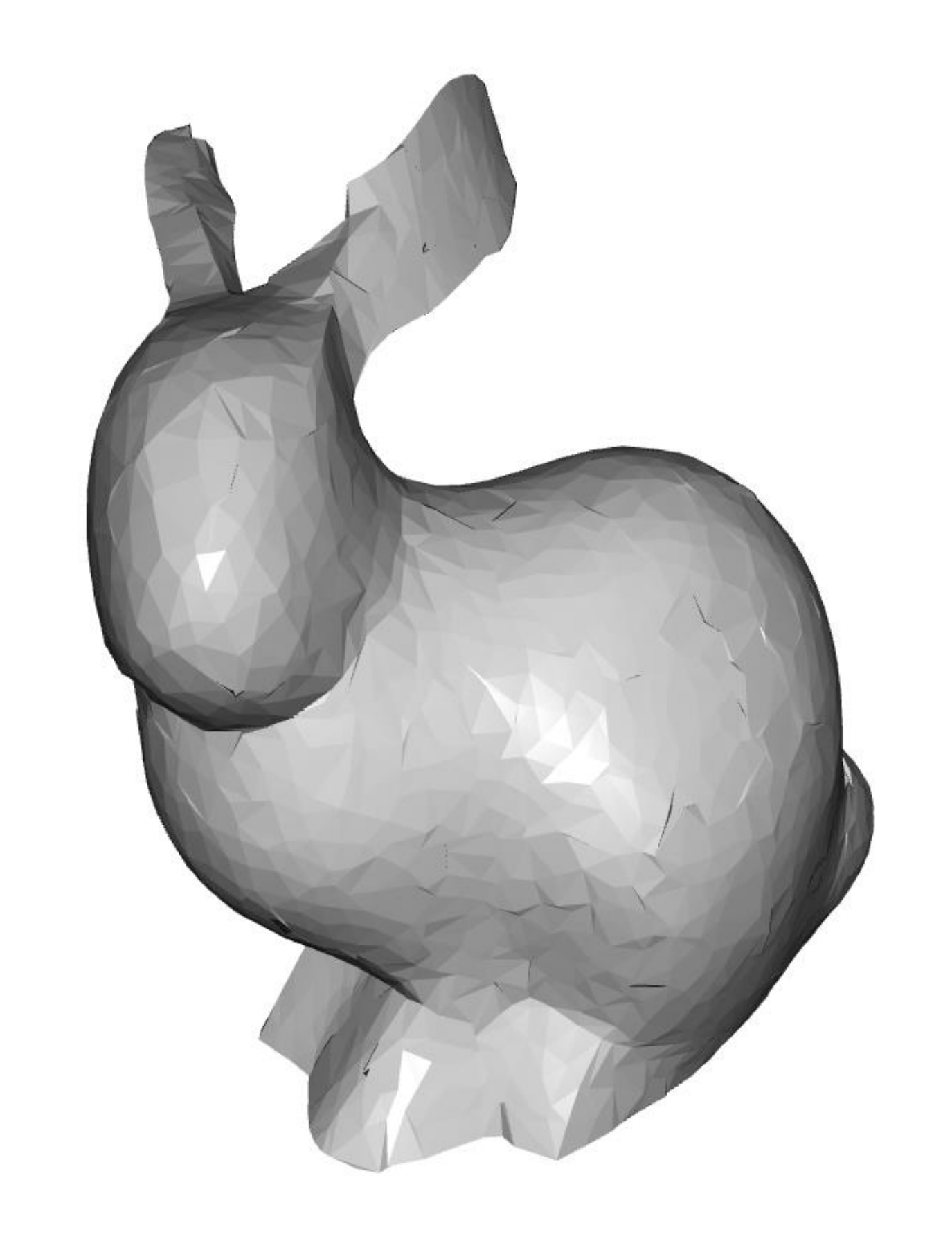}&
		\includegraphics[width=1.72cm]{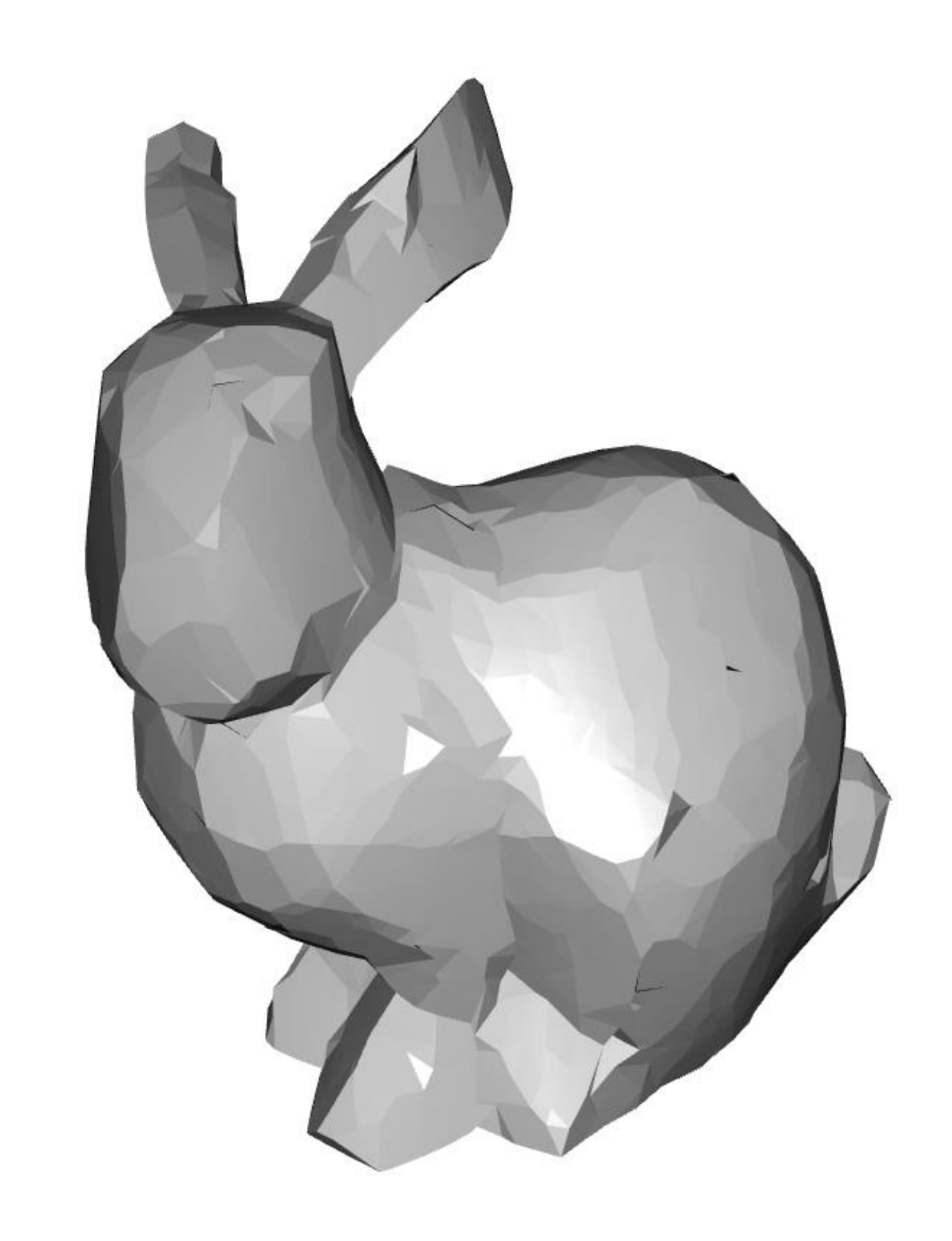}&
		\includegraphics[width=1.72cm]{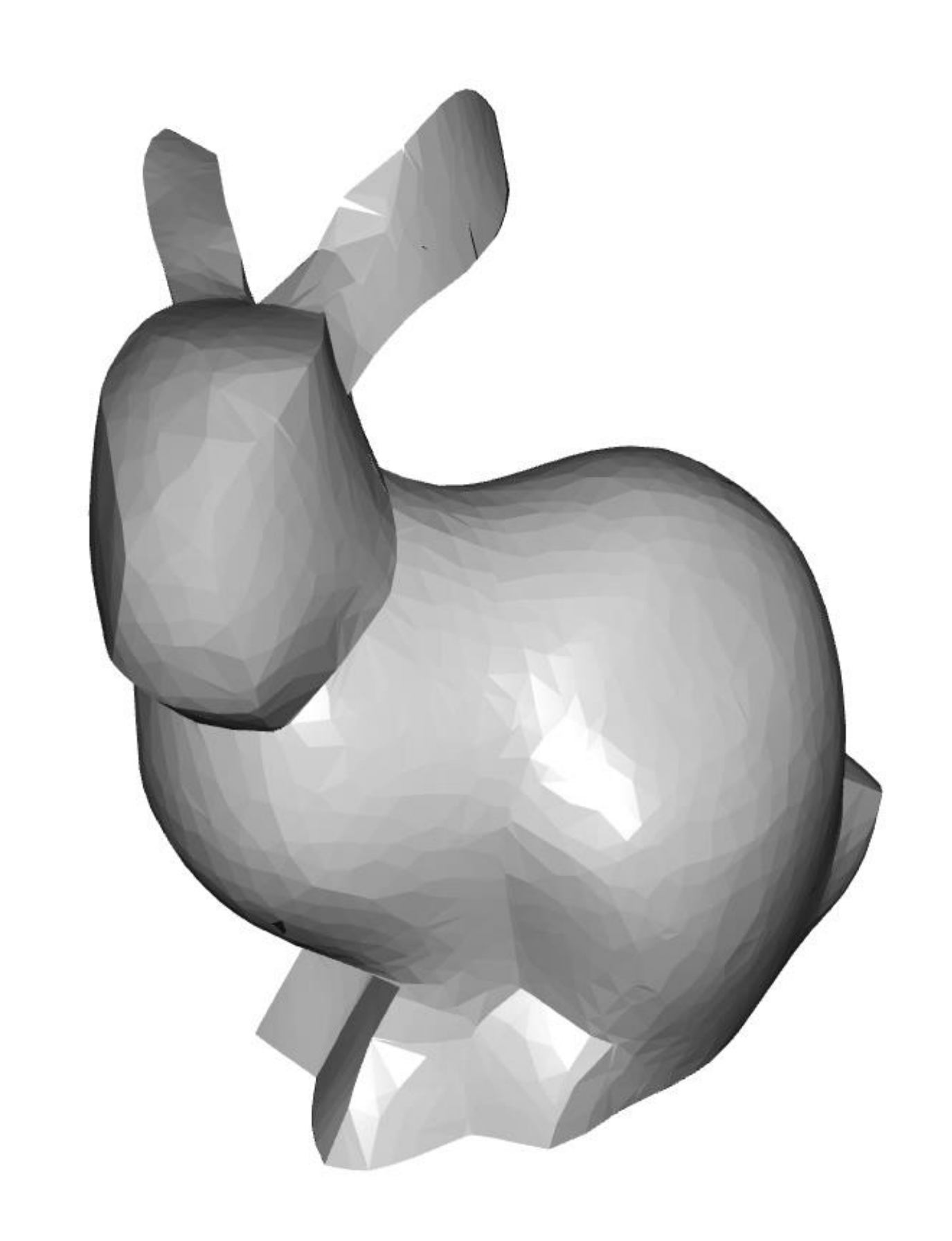}&
		\includegraphics[width=1.72cm]{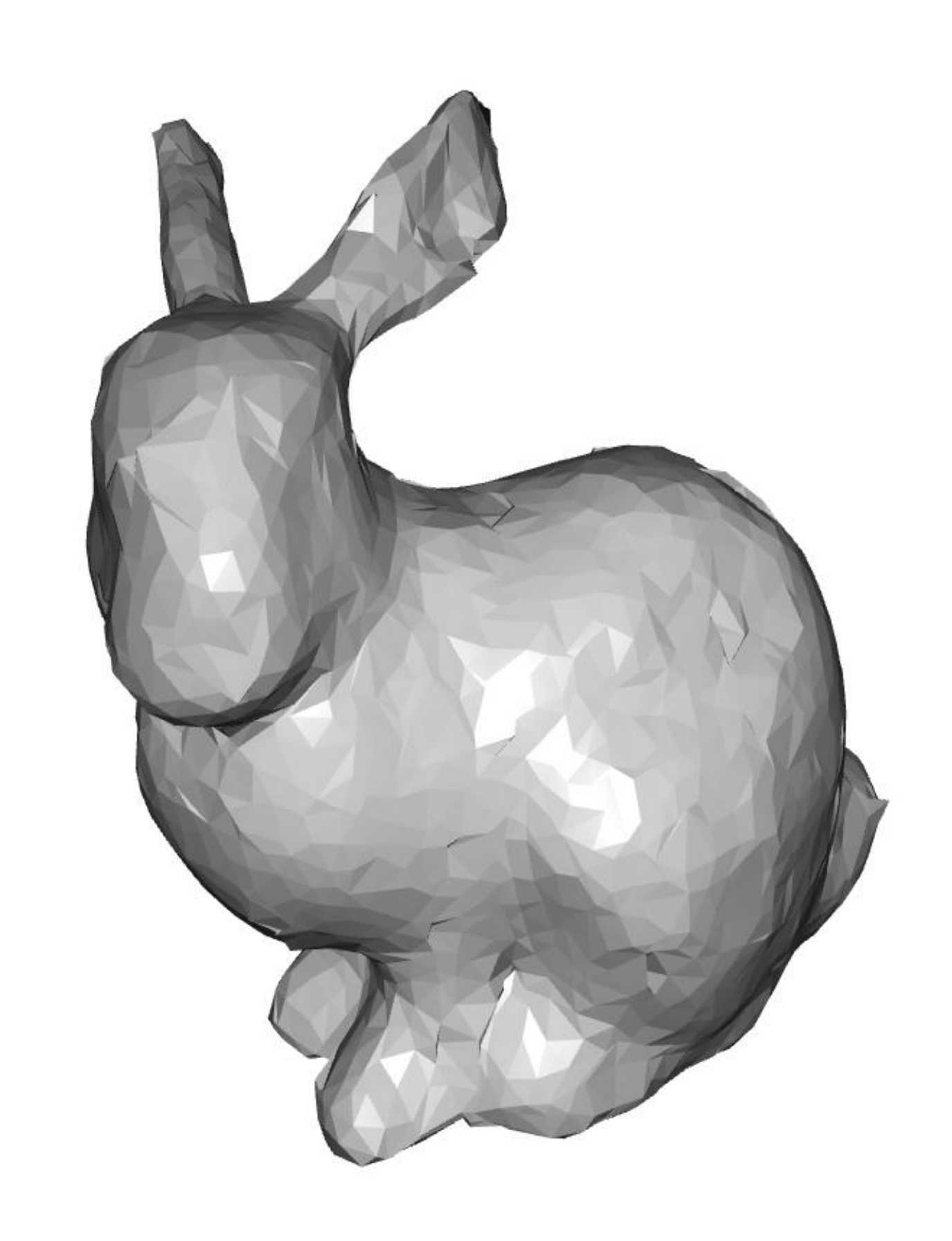}&
		\includegraphics[width=1.72cm]{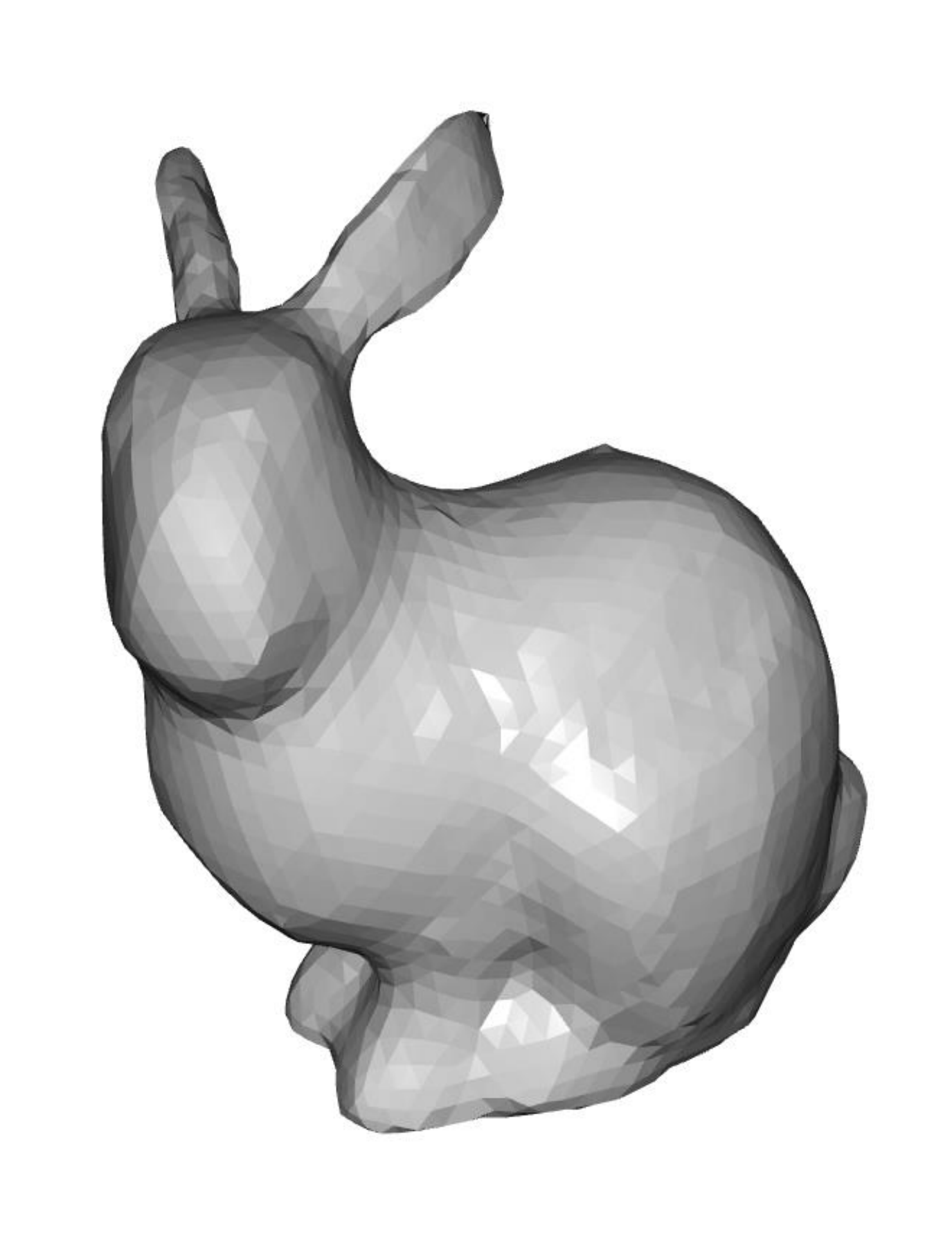}\\
		\includegraphics[width=1.72cm ,height=1.2cm]{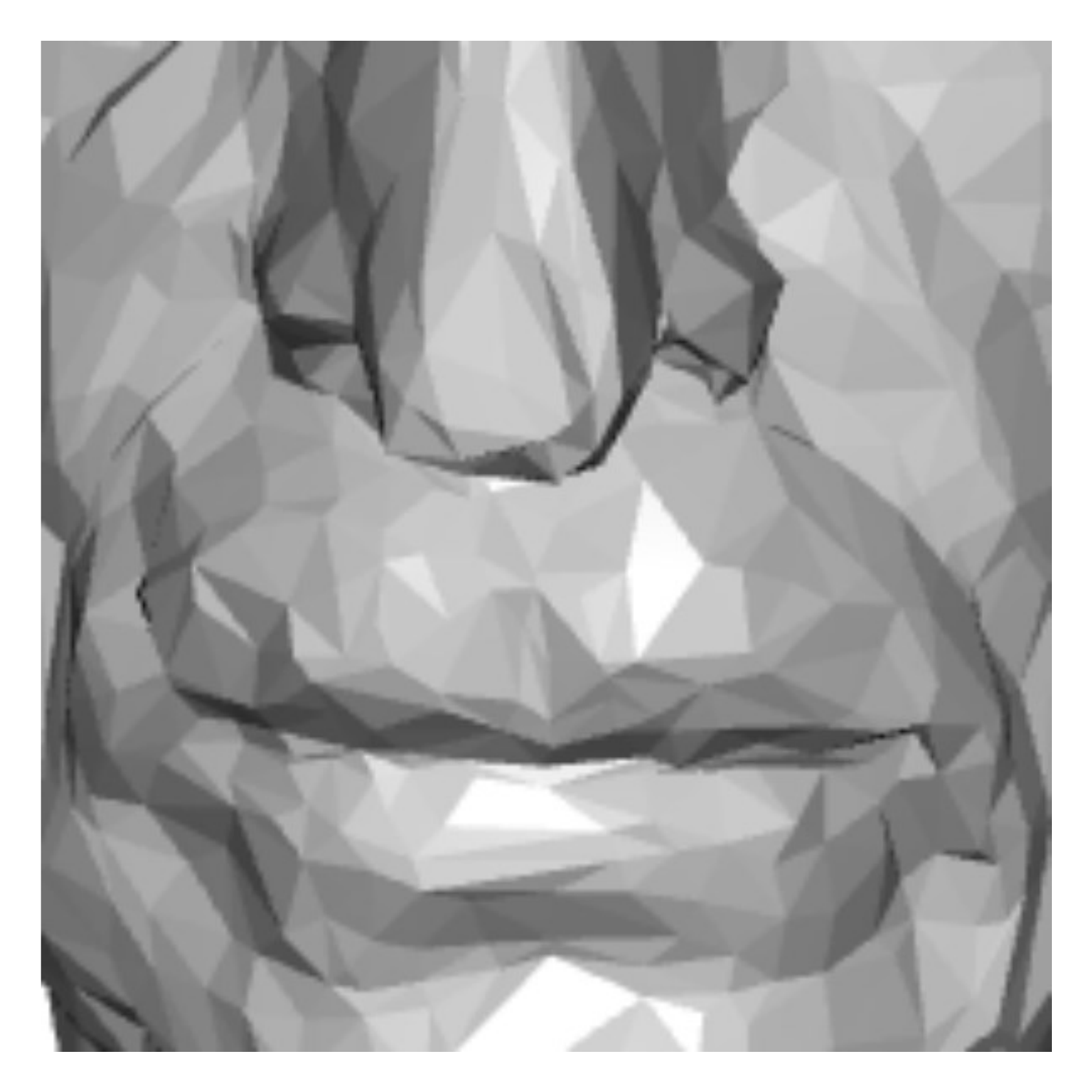}&
		\includegraphics[width=1.72cm,height=1.2cm]{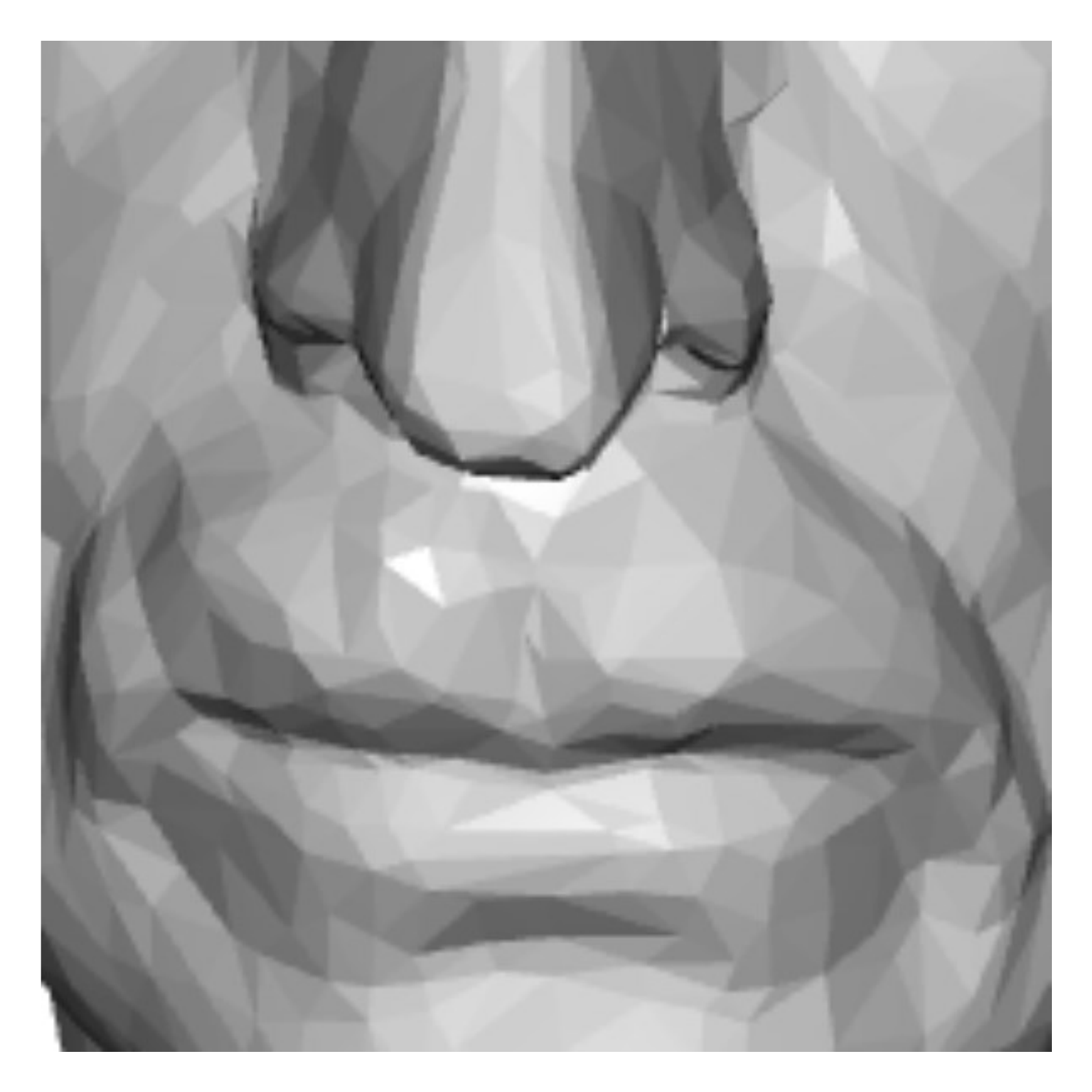}&
		\includegraphics[width=1.72cm,height=1.2cm]{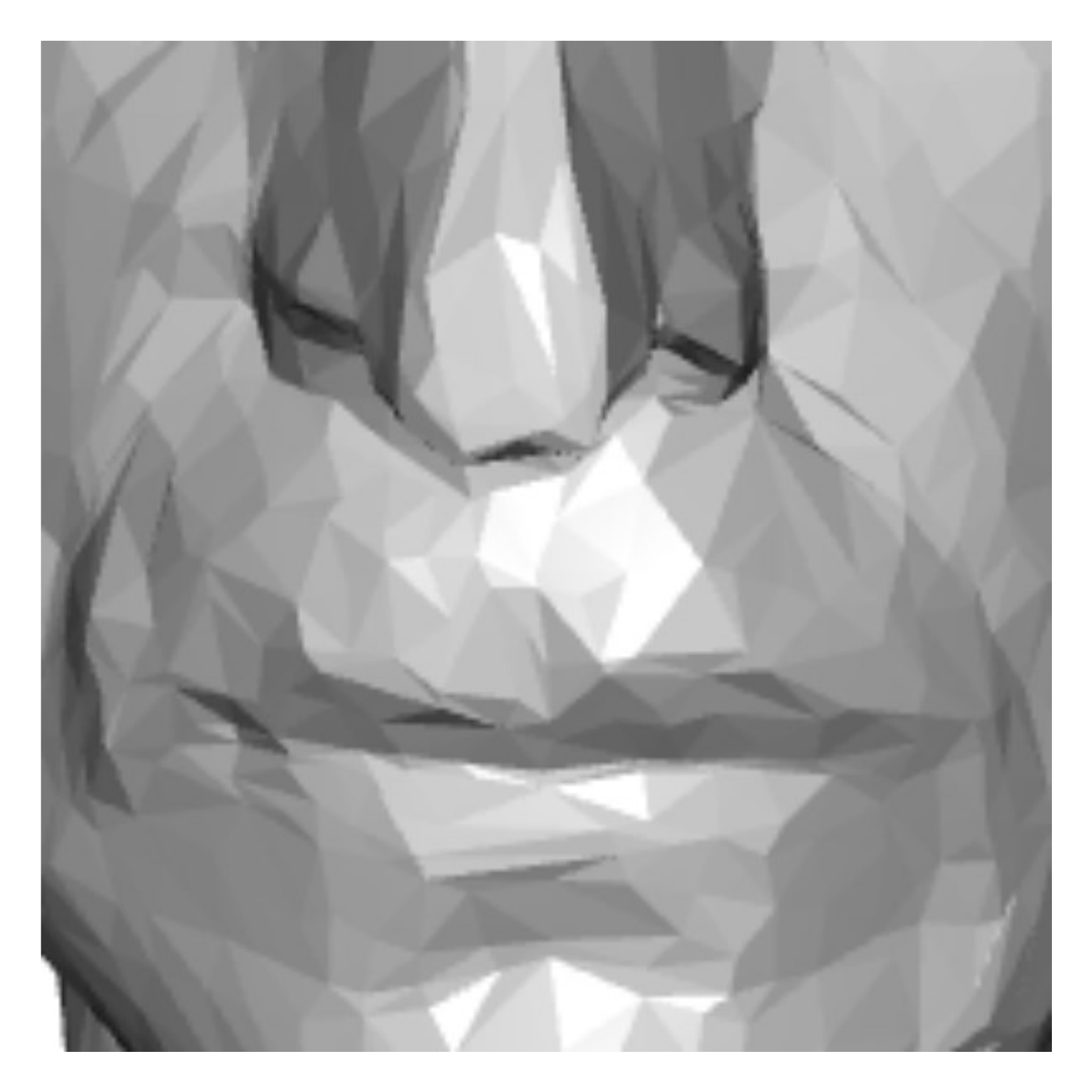}&
		\includegraphics[width=1.72cm,height=1.2cm]{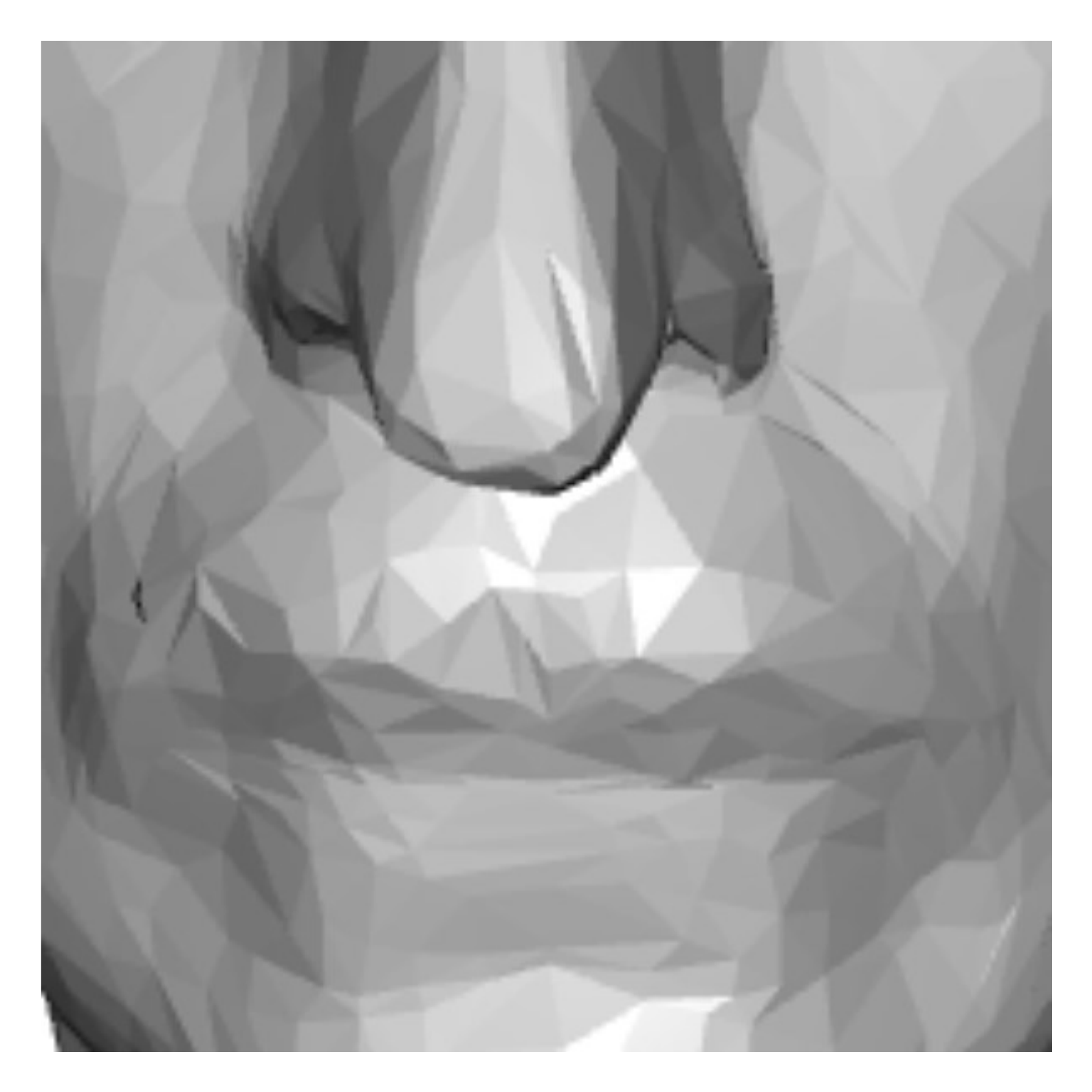}&
		\includegraphics[width=1.72cm,height=1.2cm]{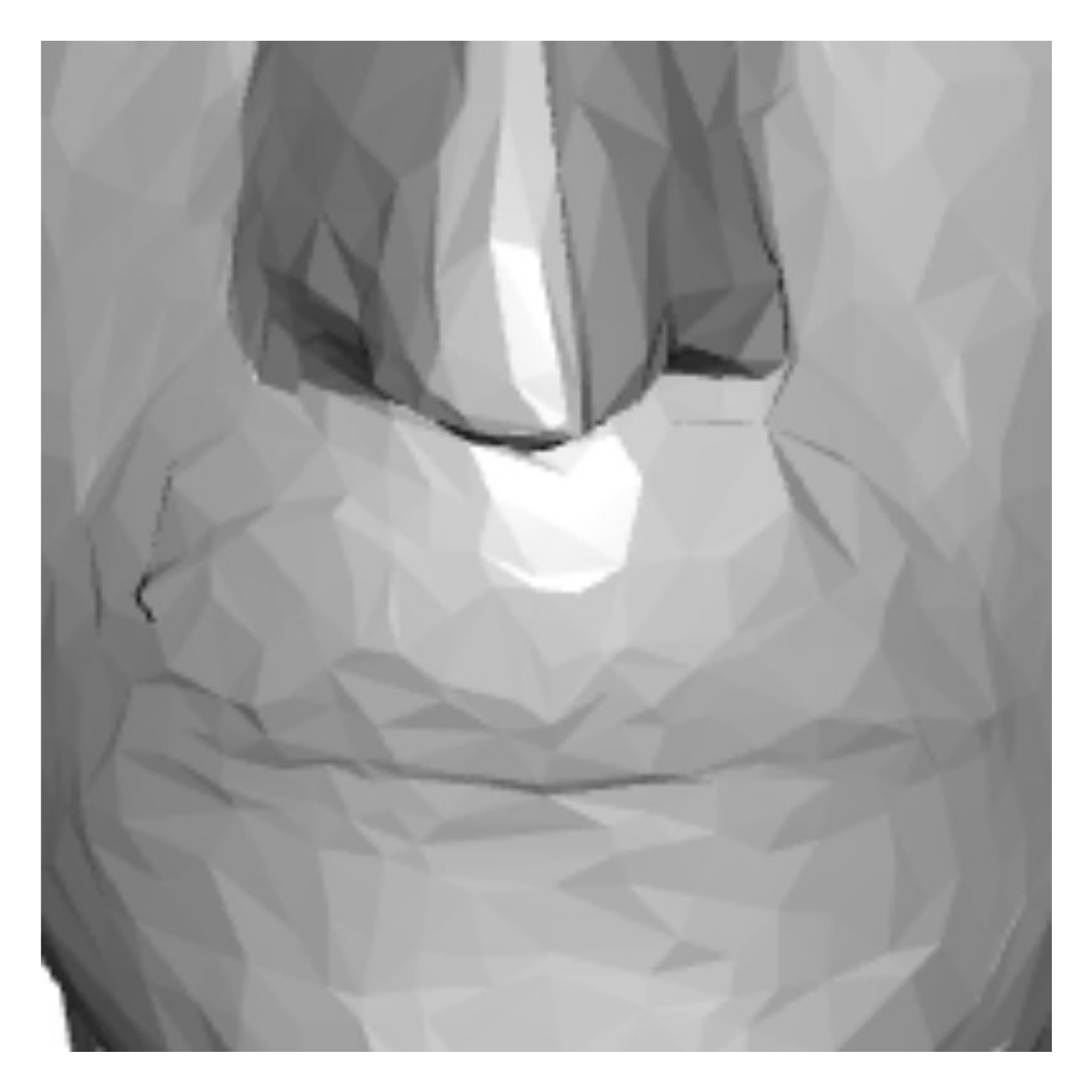}&
		\includegraphics[width=1.72cm,height=1.2cm]{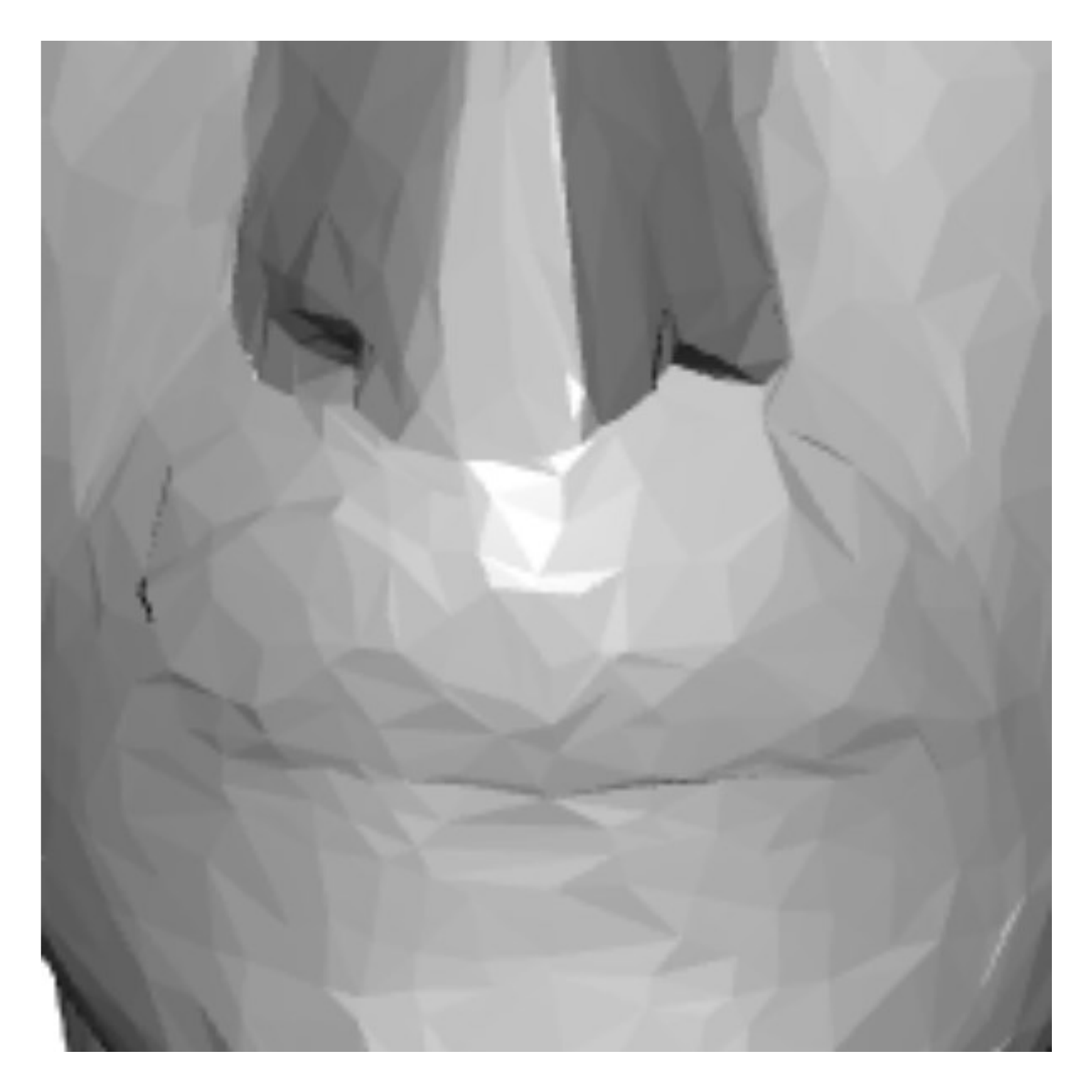}&
		\includegraphics[width=1.72cm,height=1.2cm]{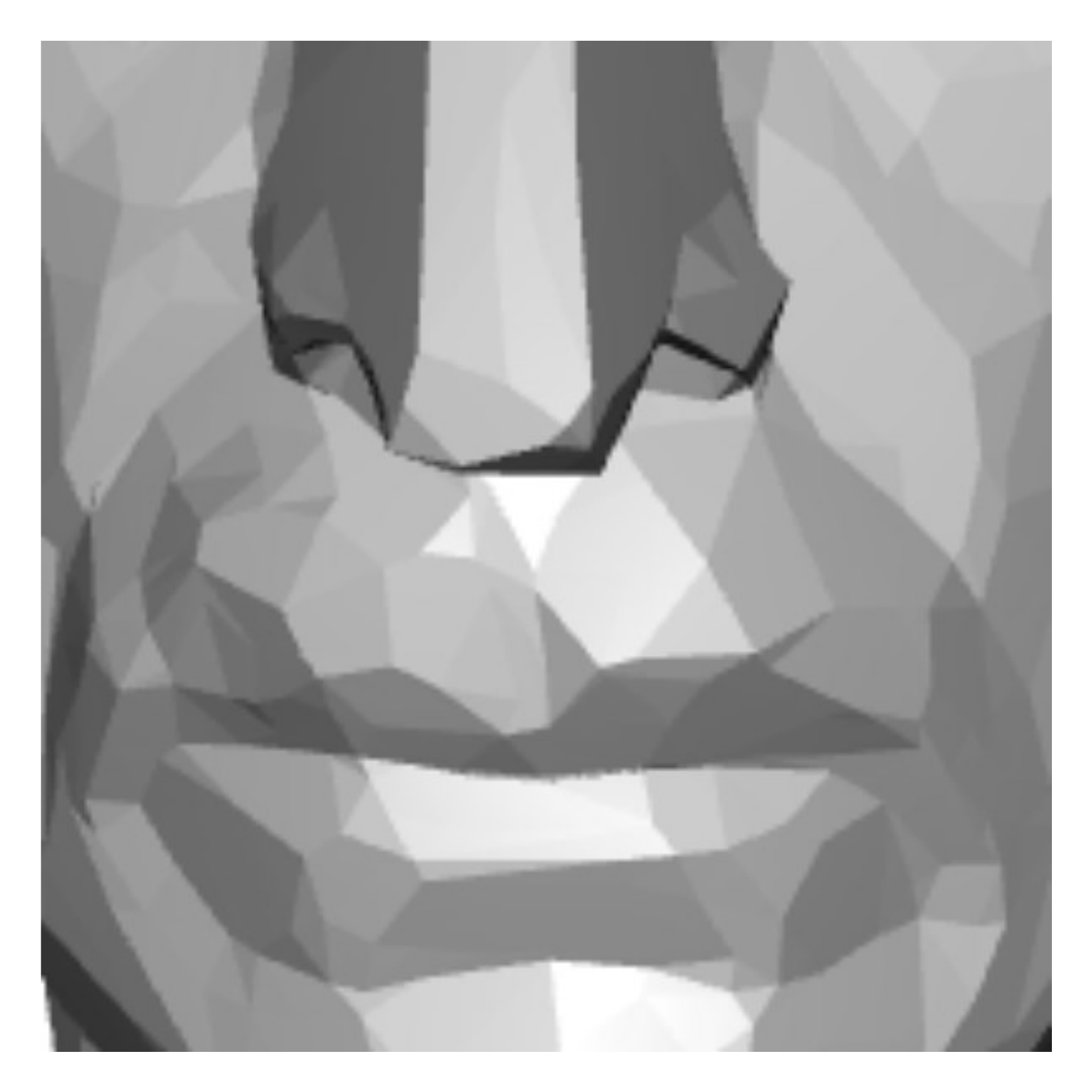}&
		\includegraphics[width=1.72cm,height=1.2cm]{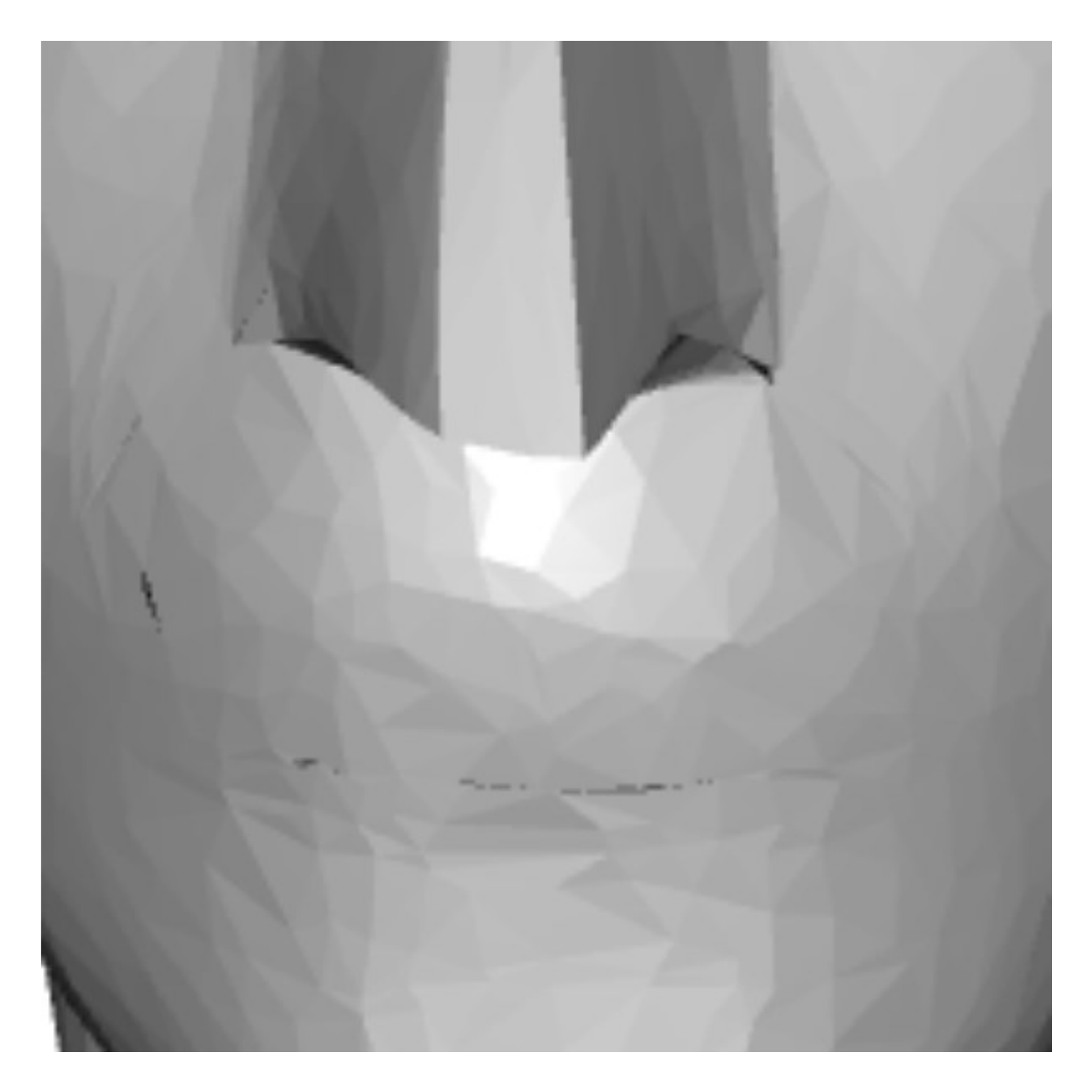}&
		\includegraphics[width=1.72cm,height=1.2cm]{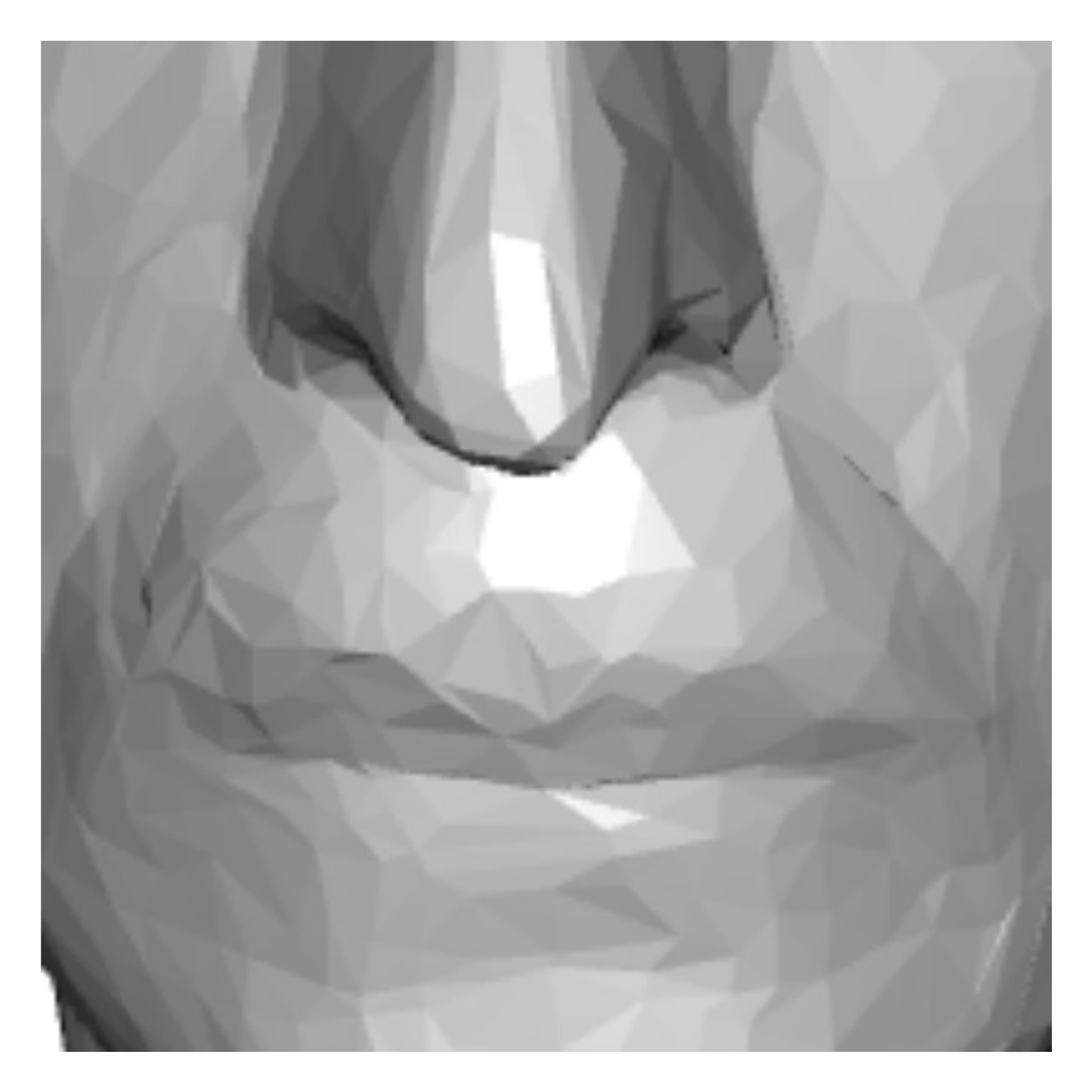}&
		\includegraphics[width=1.72cm,height=1.2cm]{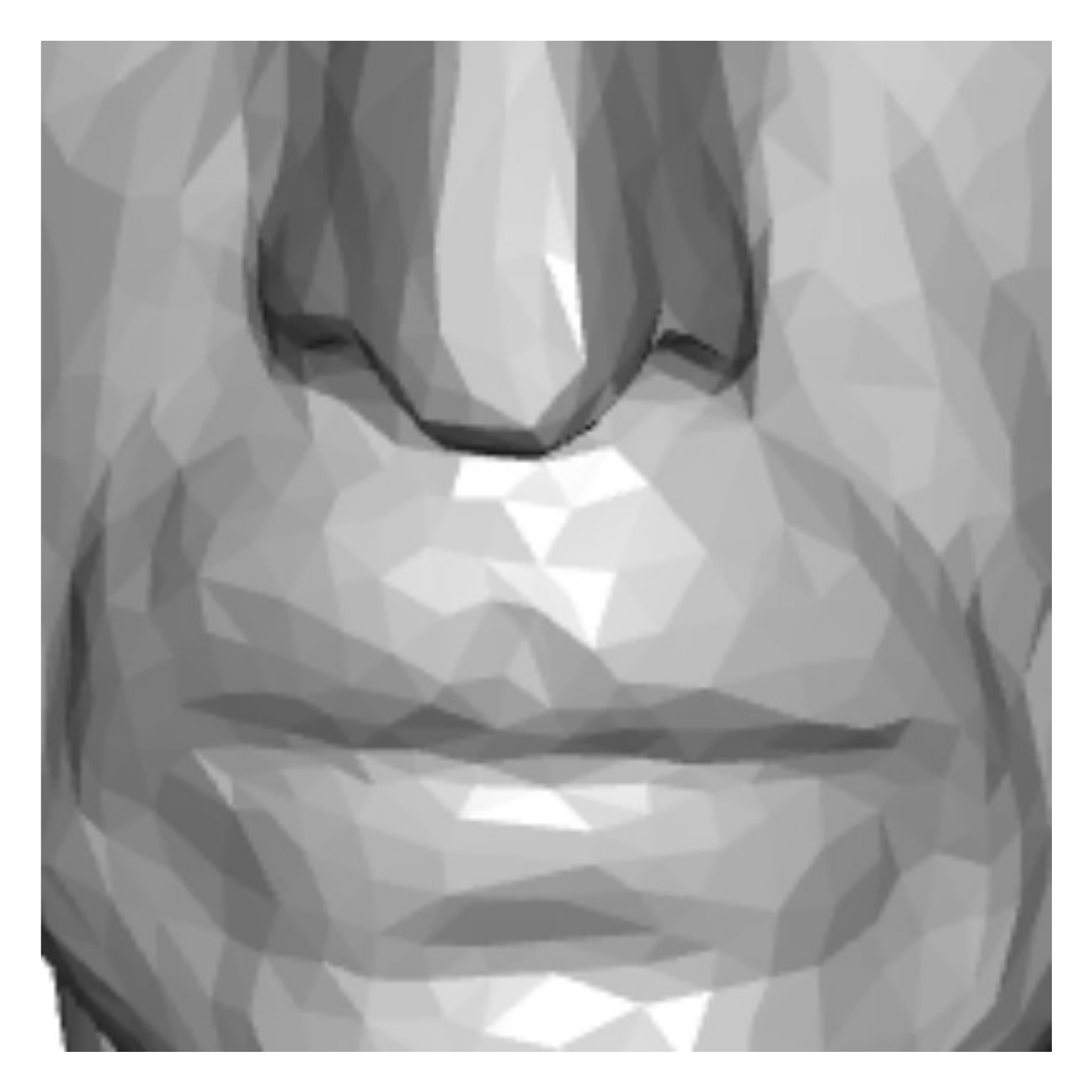}\\
		\includegraphics[width=1.72cm]{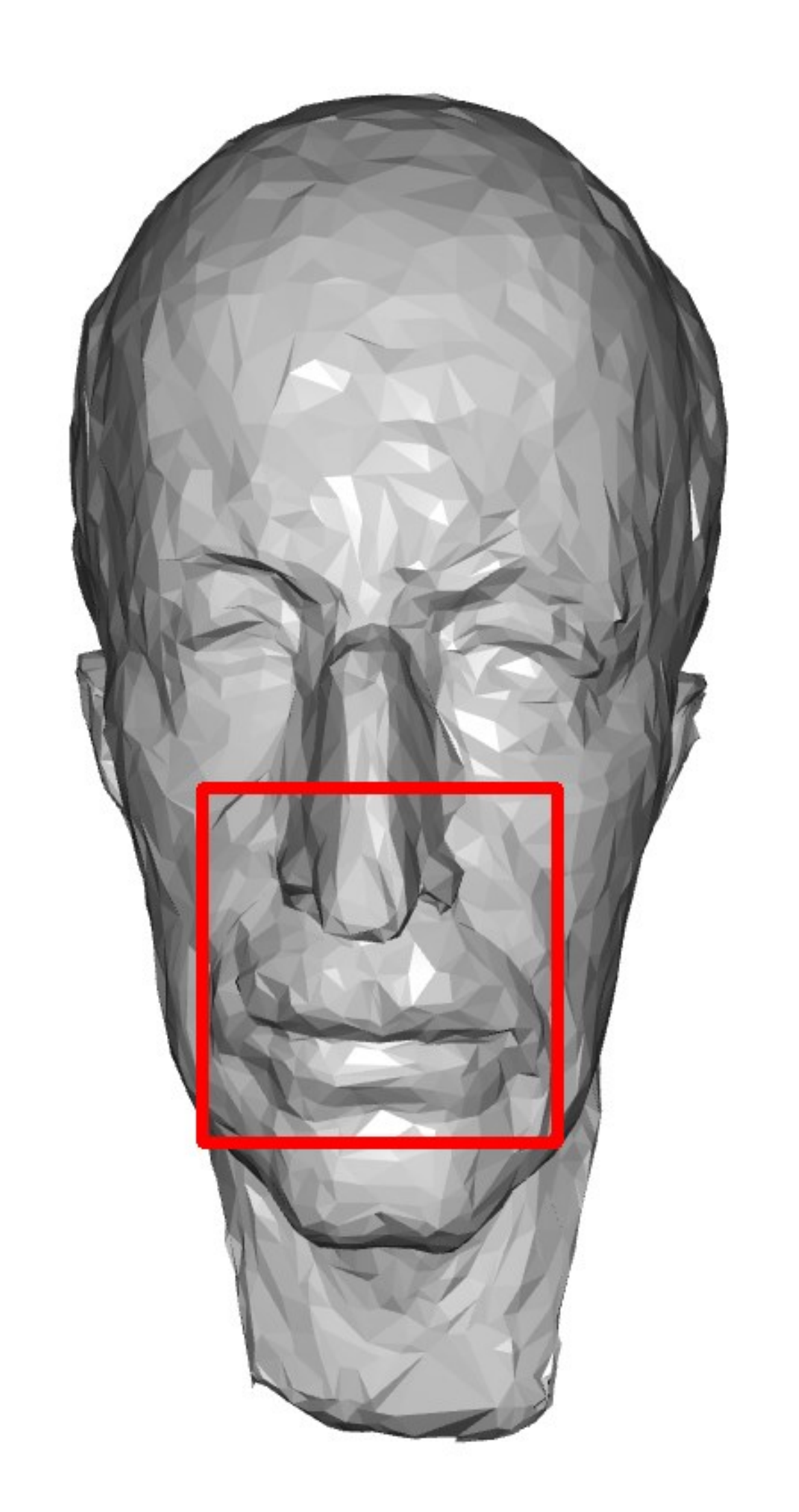}&
		\includegraphics[width=1.72cm]{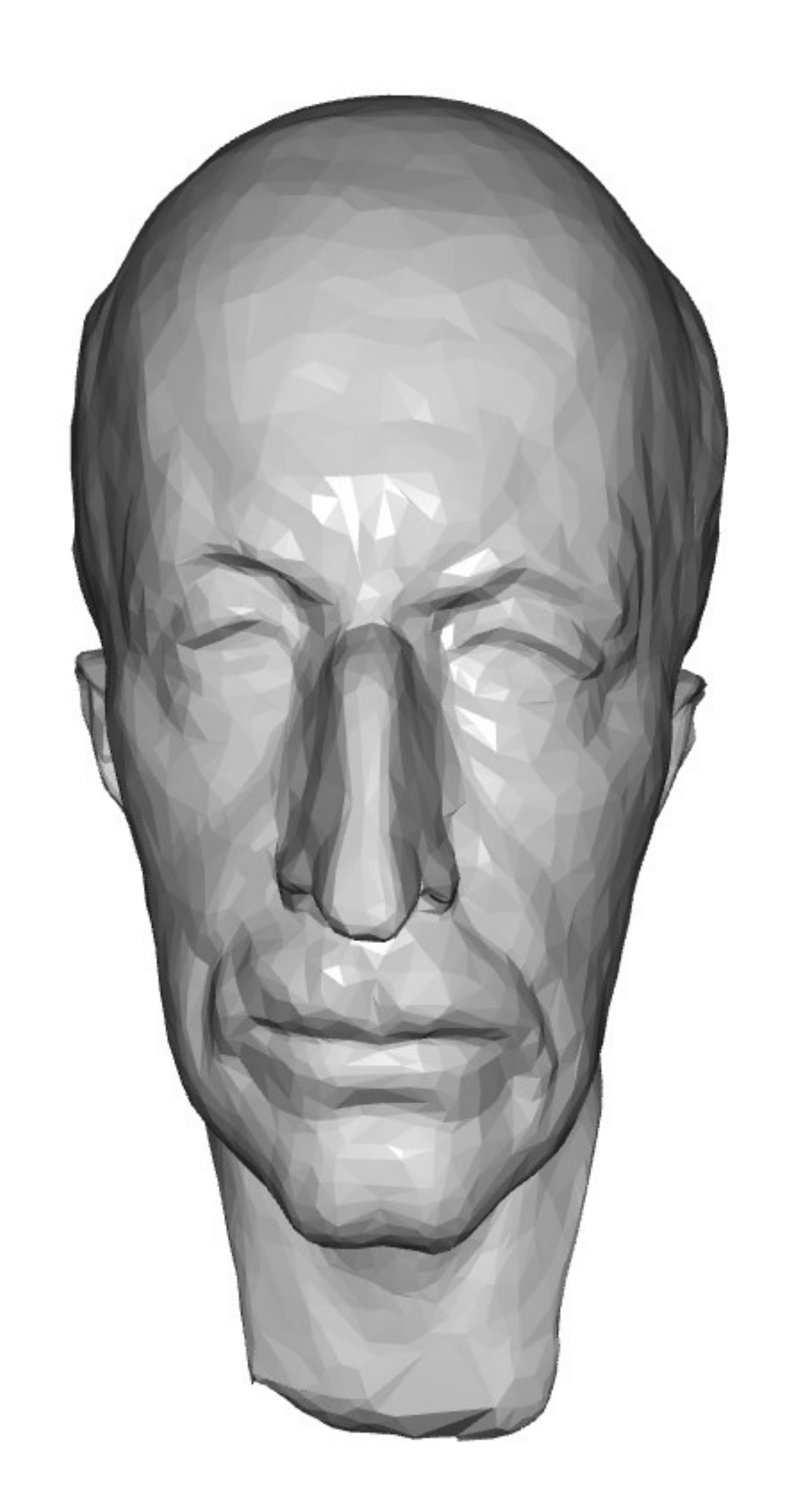}&
		\includegraphics[width=1.72cm]{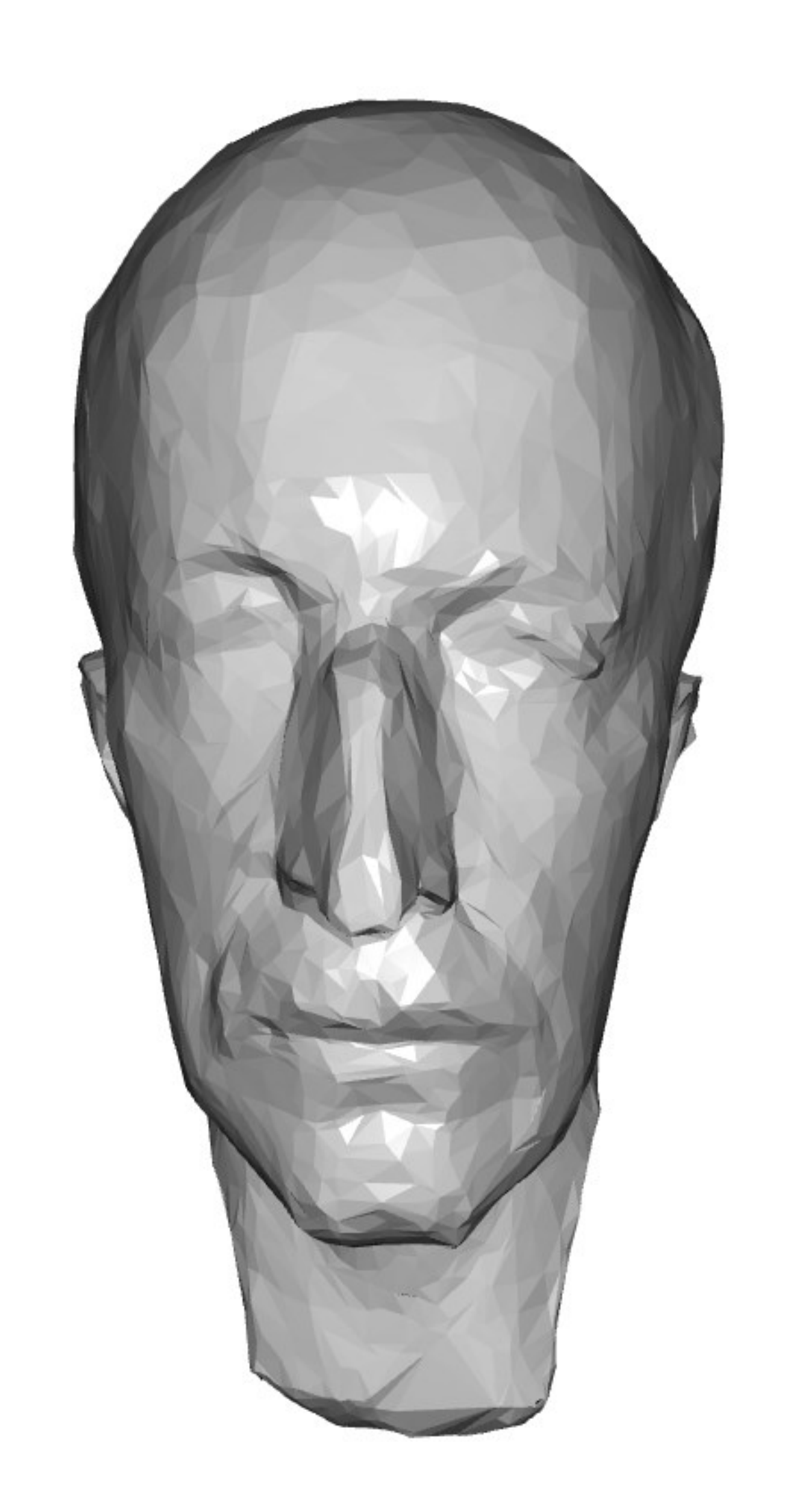}&
		\includegraphics[width=1.72cm]{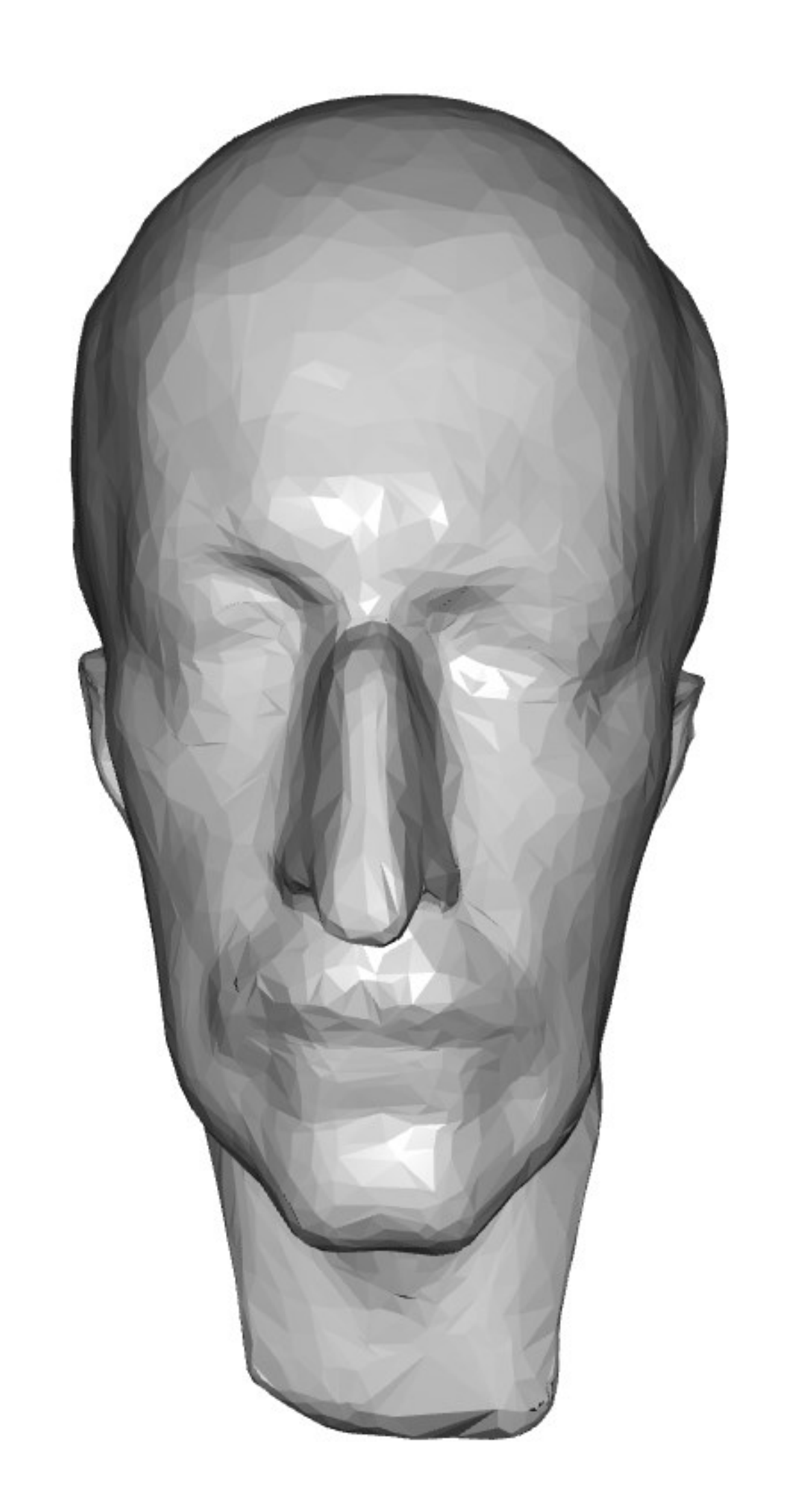}&
		\includegraphics[width=1.72cm]{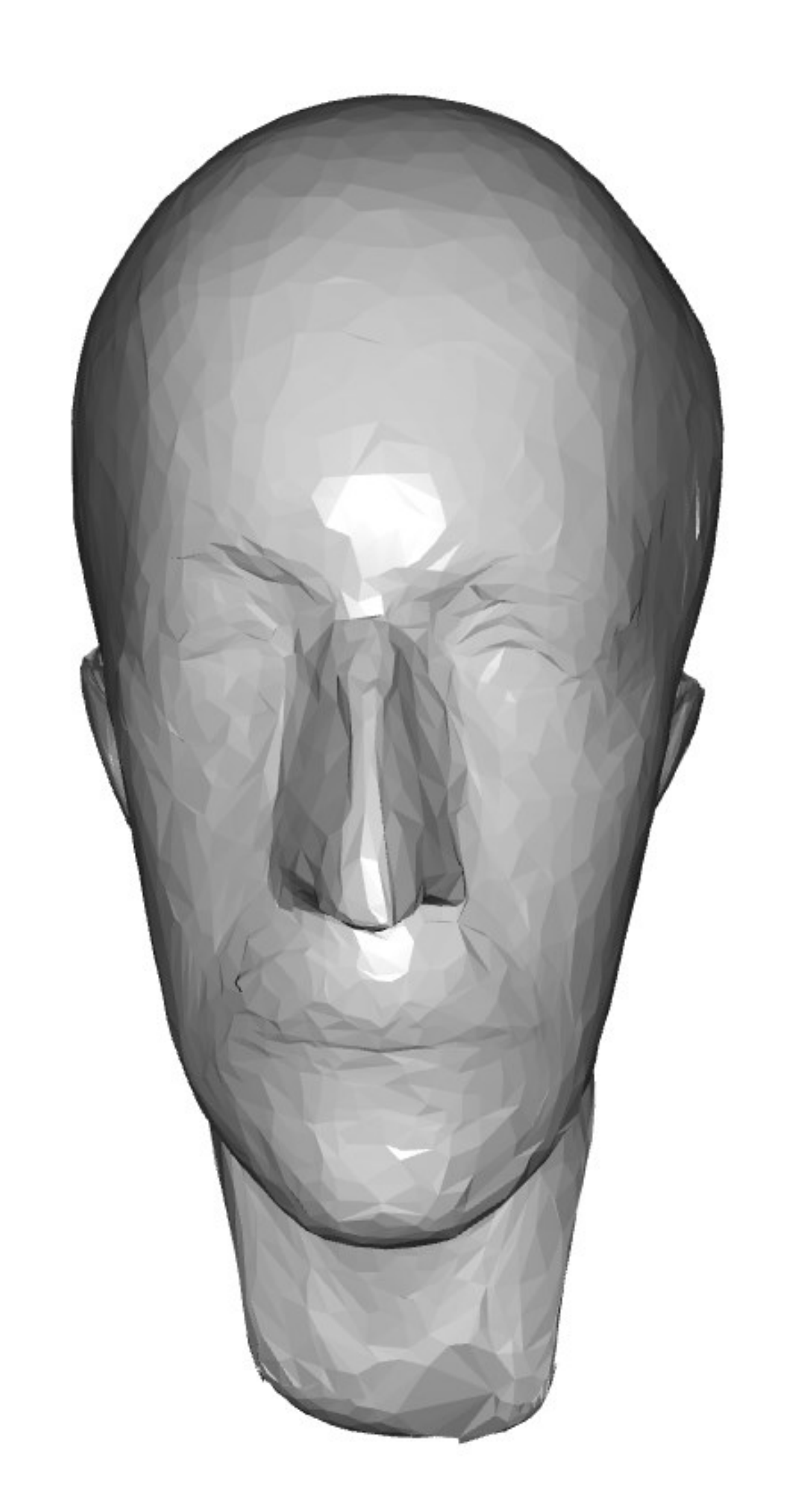}&
		\includegraphics[width=1.72cm]{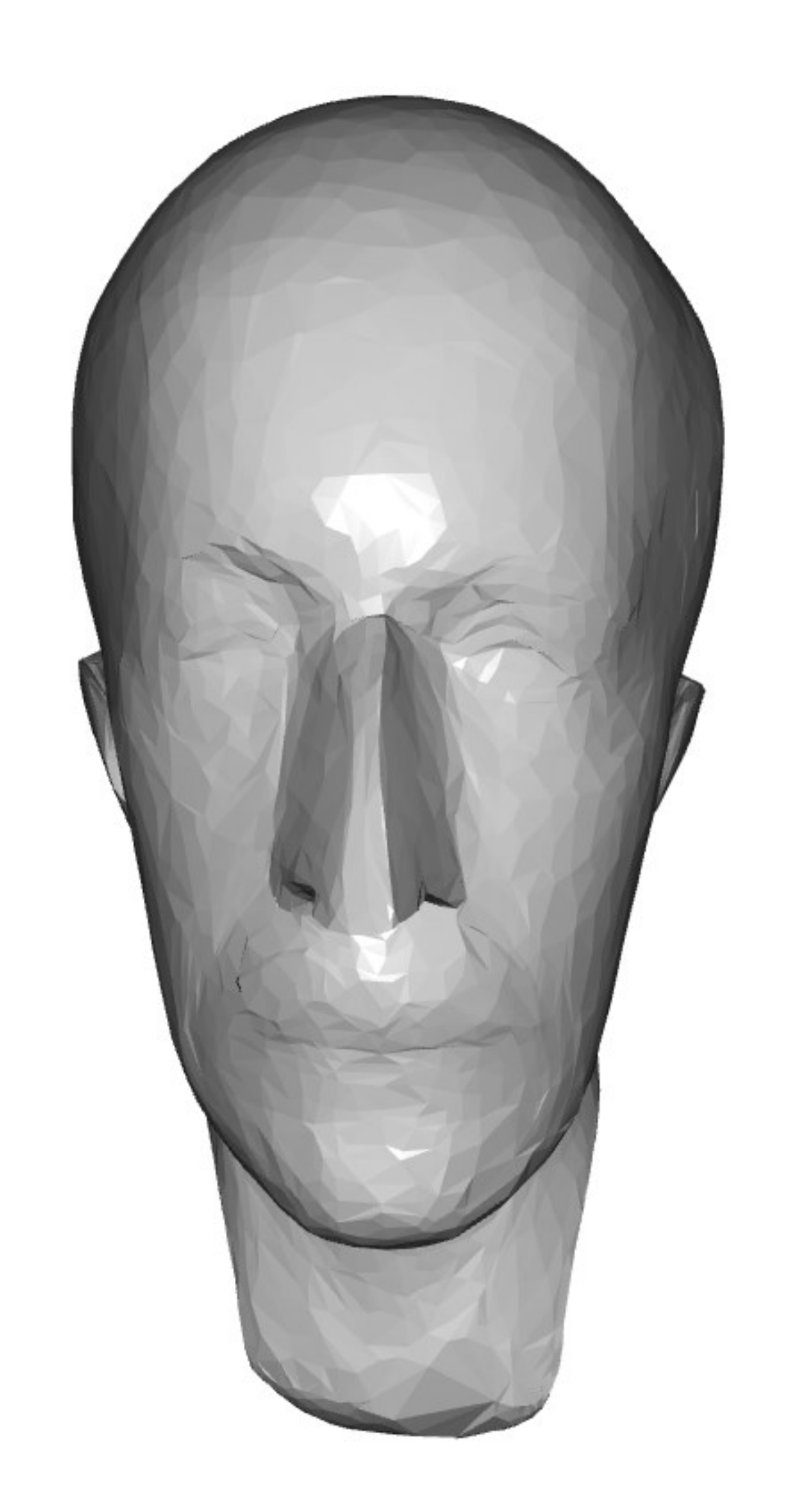}&
		\includegraphics[width=1.72cm]{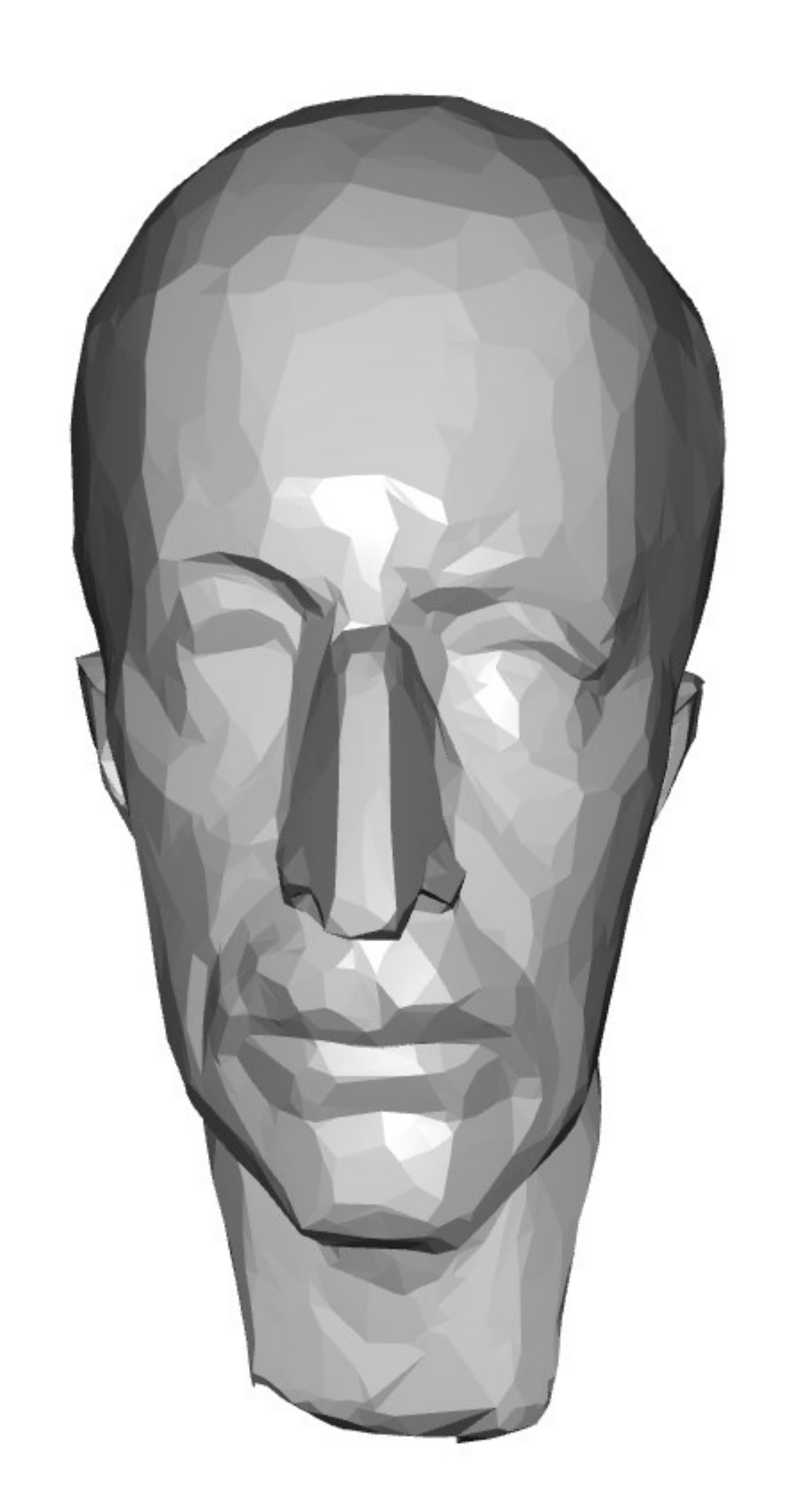}&
		\includegraphics[width=1.72cm]{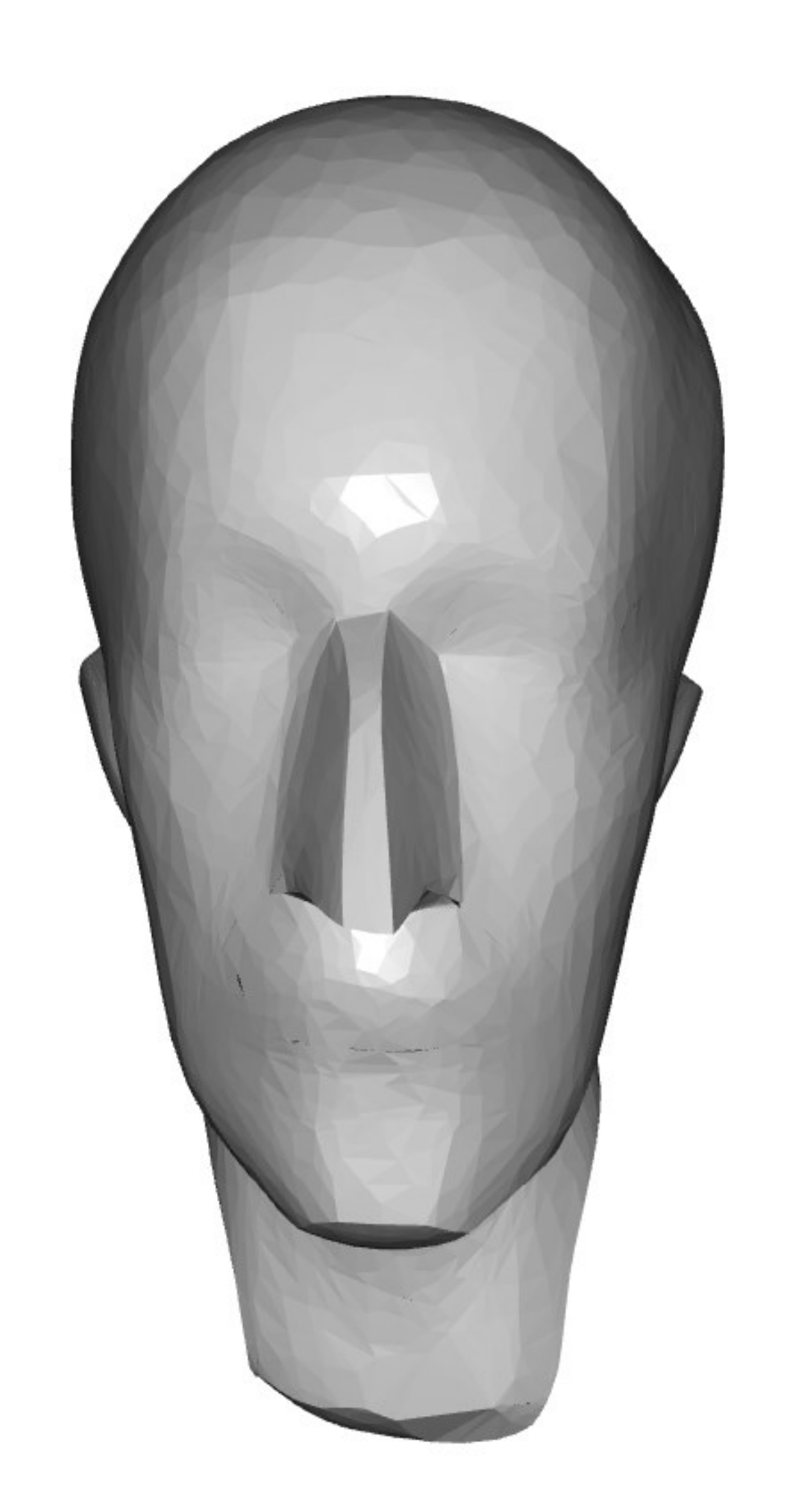}&
		\includegraphics[width=1.72cm]{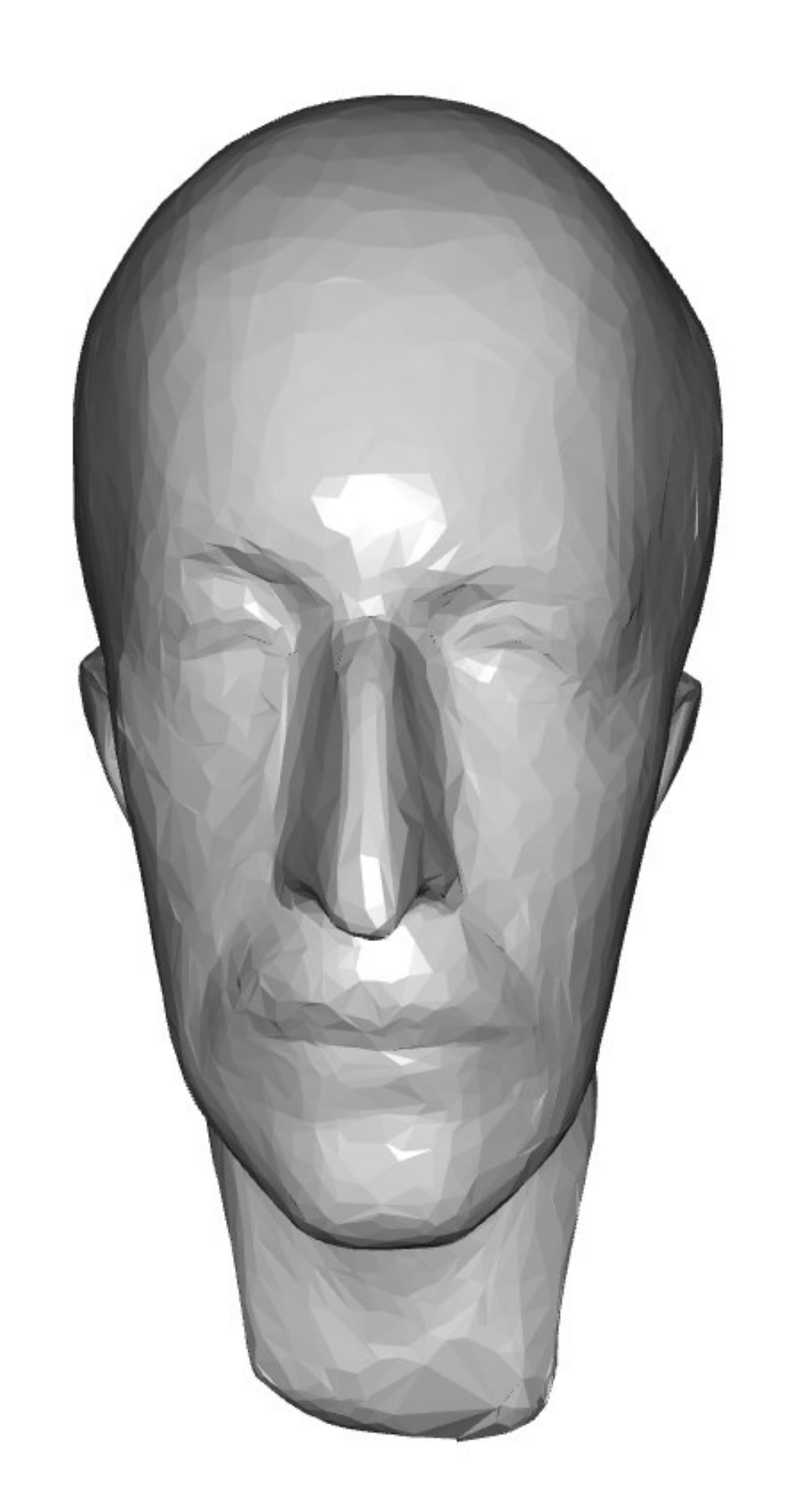}&
		\includegraphics[width=1.72cm]{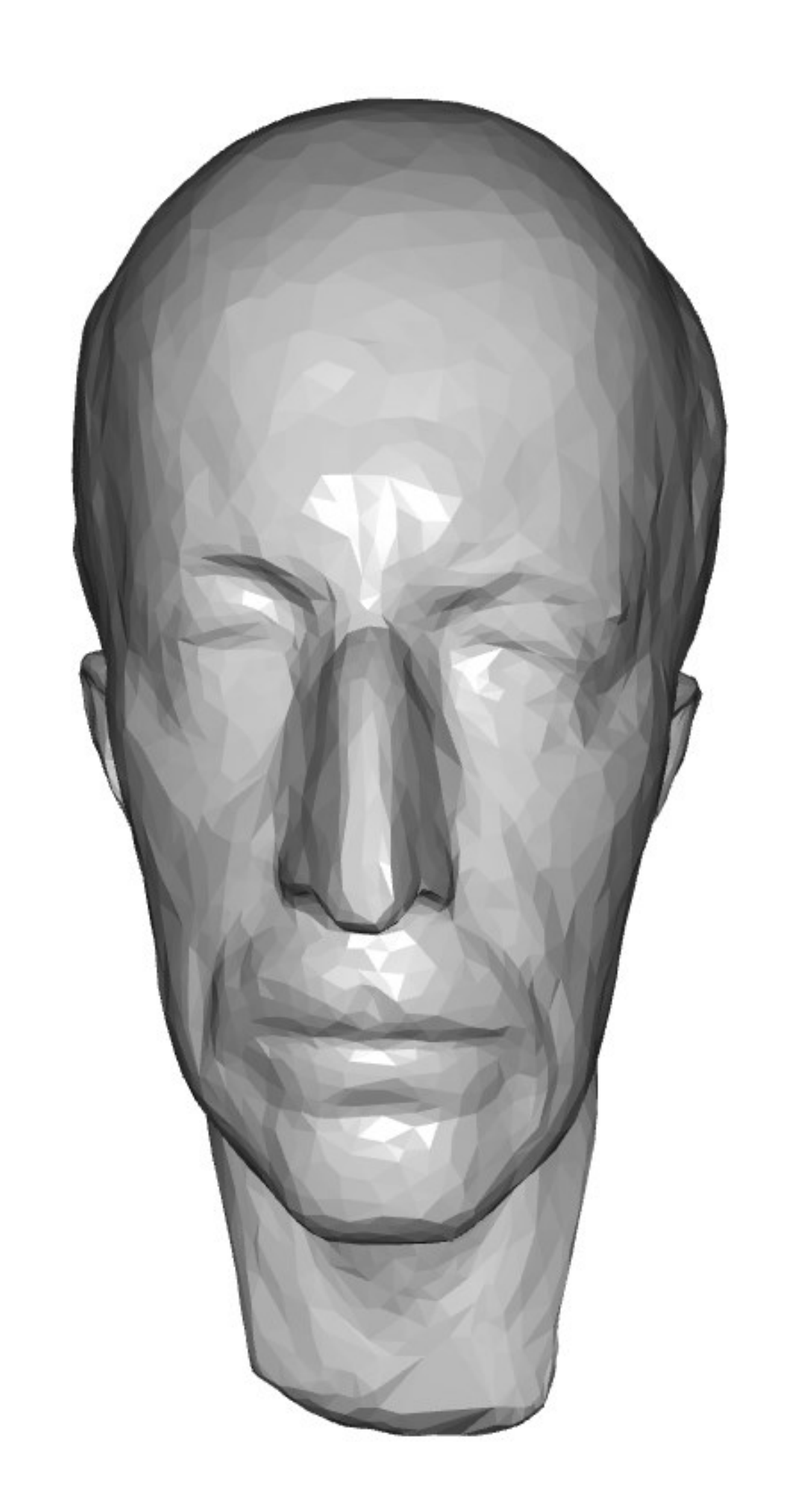}\\
		\includegraphics[width=1.72cm]{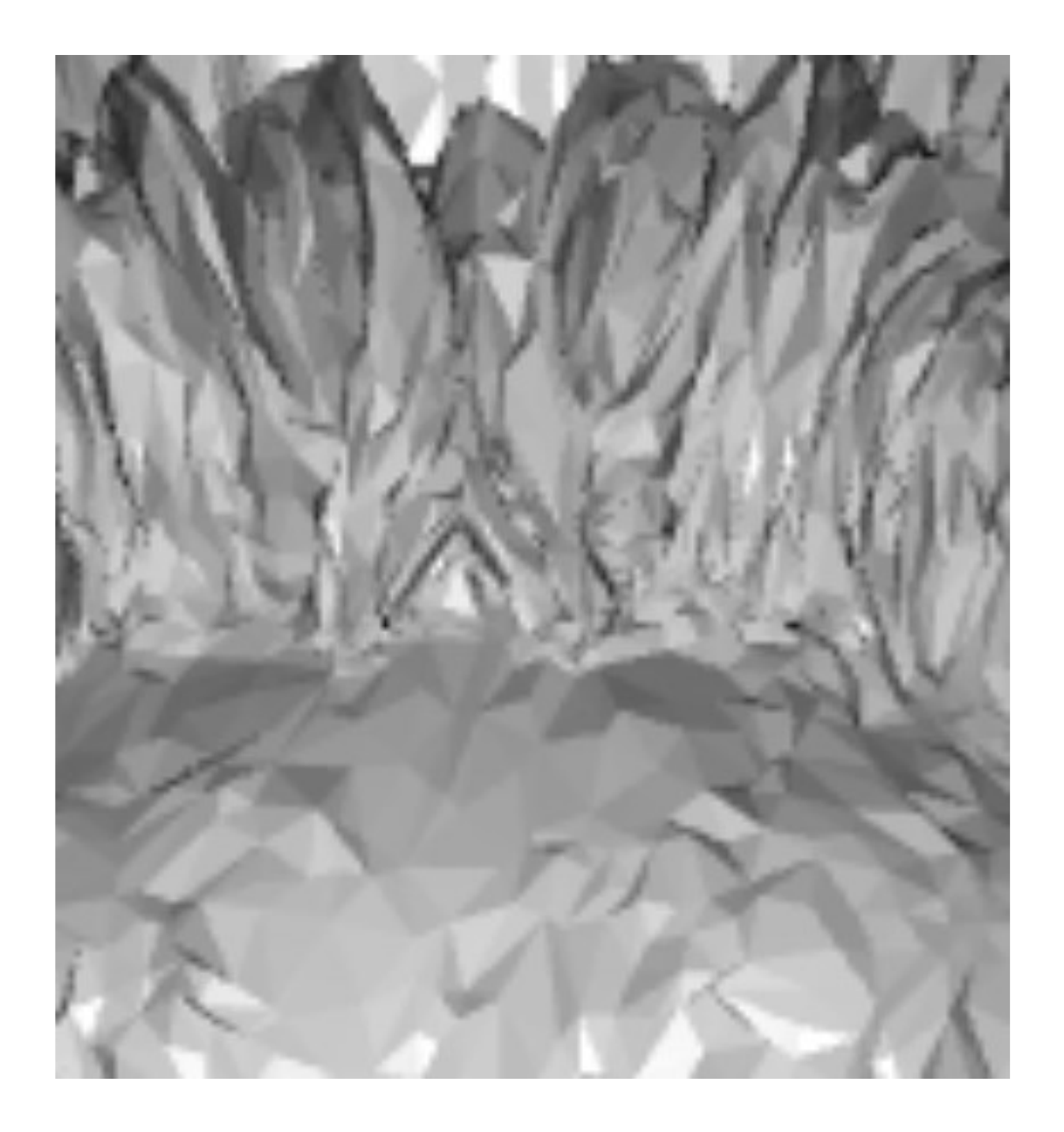}&
		\includegraphics[width=1.72cm]{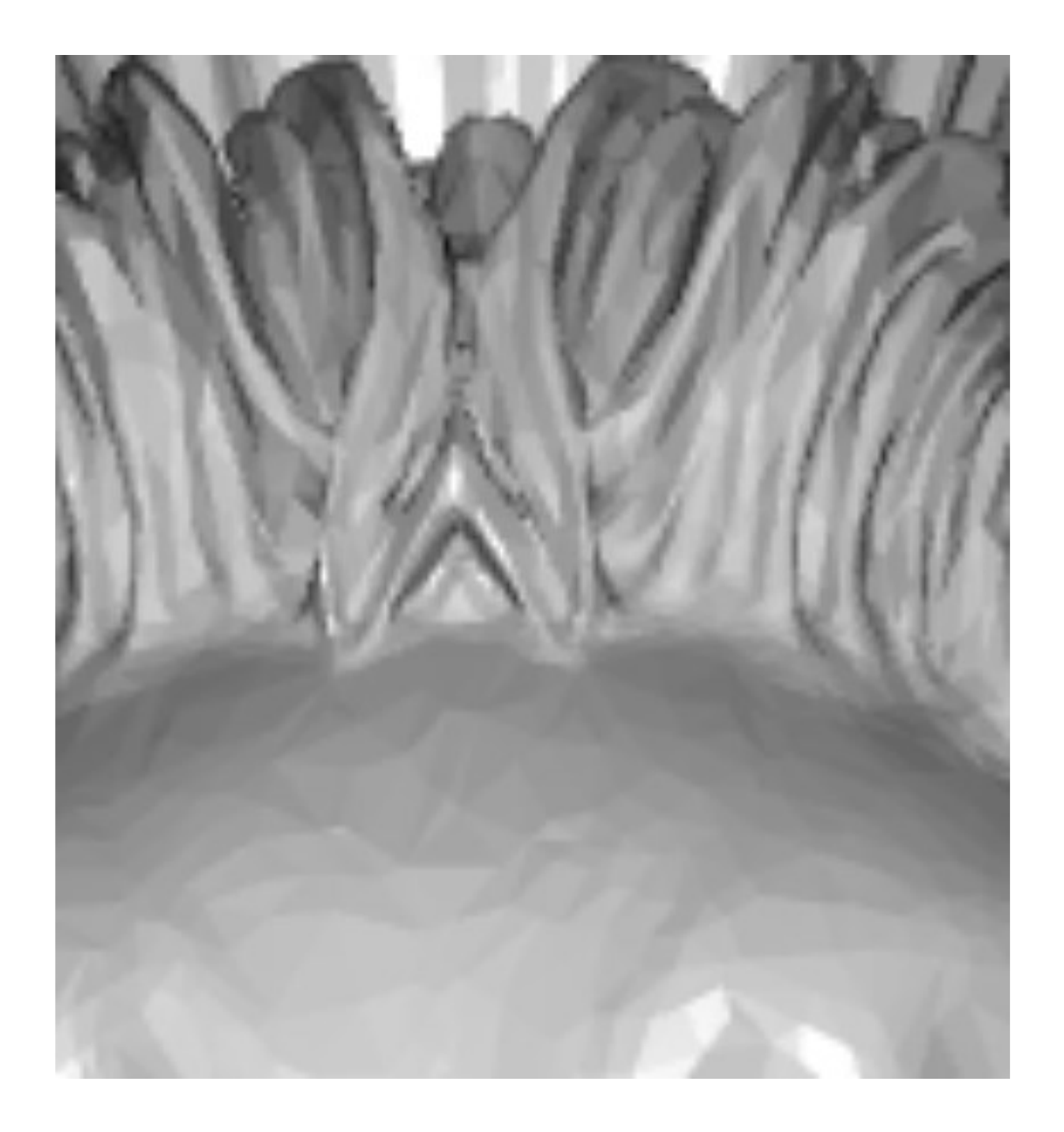}&
		\includegraphics[width=1.72cm]{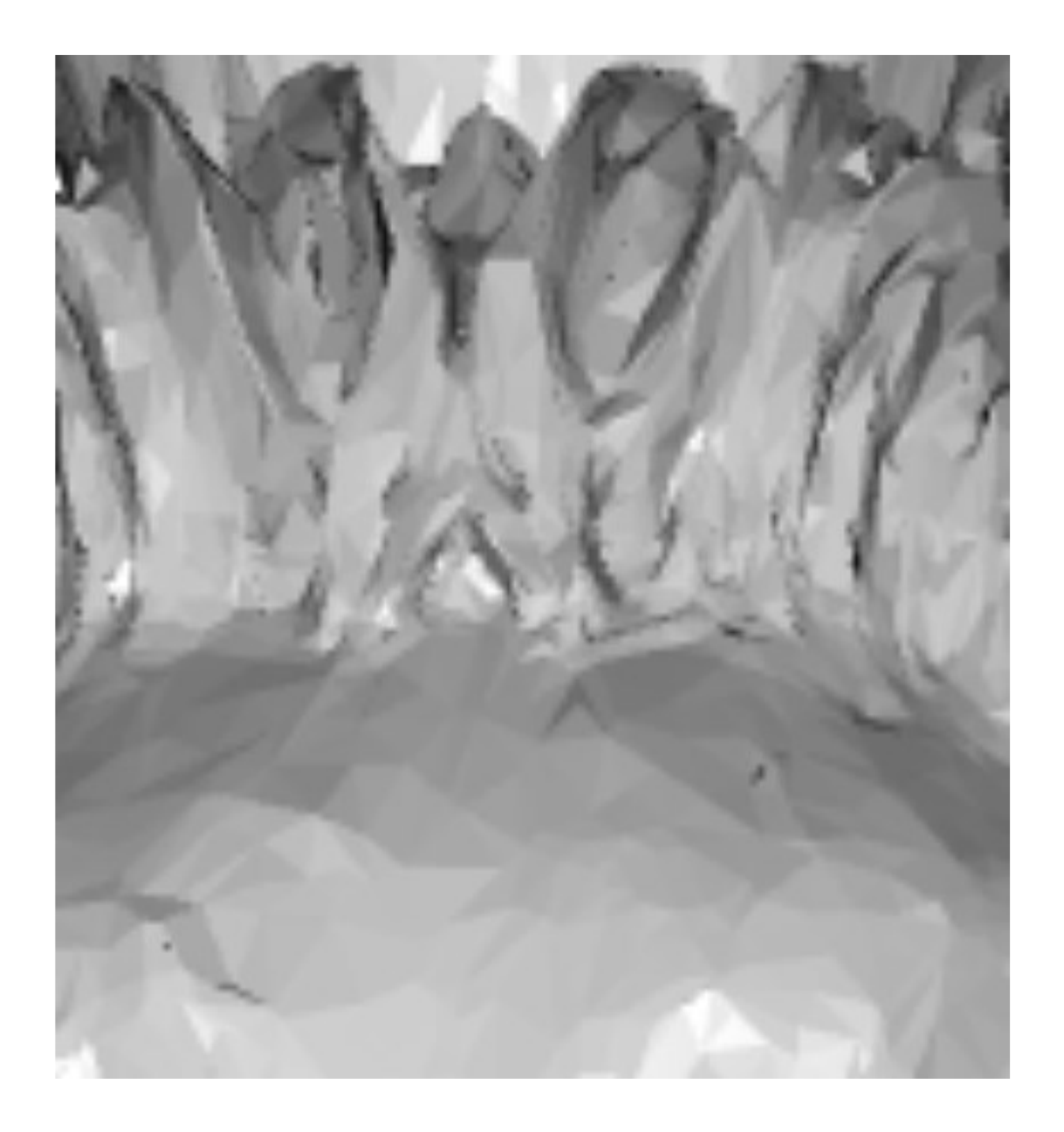}&
		\includegraphics[width=1.72cm]{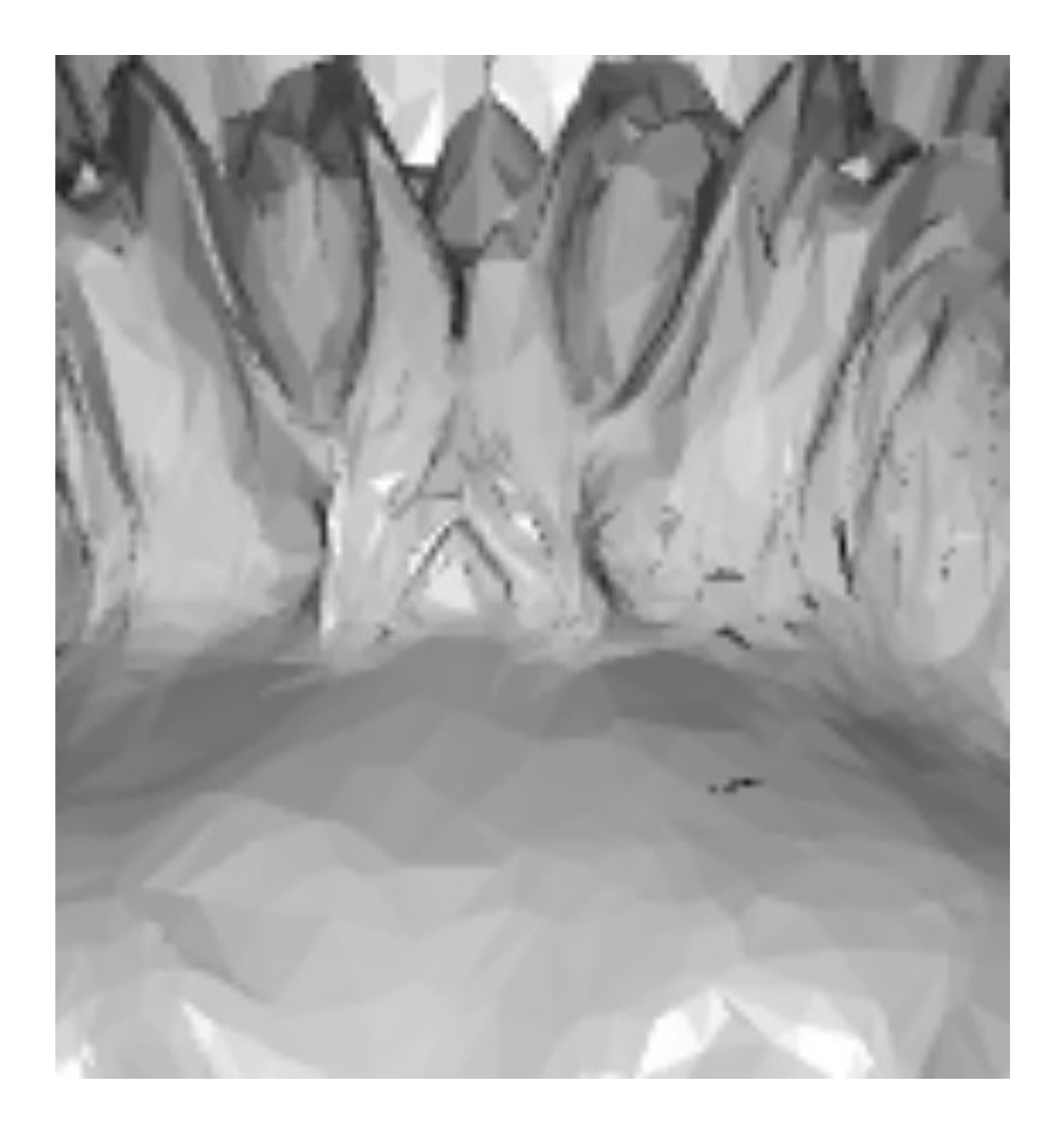}&
		\includegraphics[width=1.72cm]{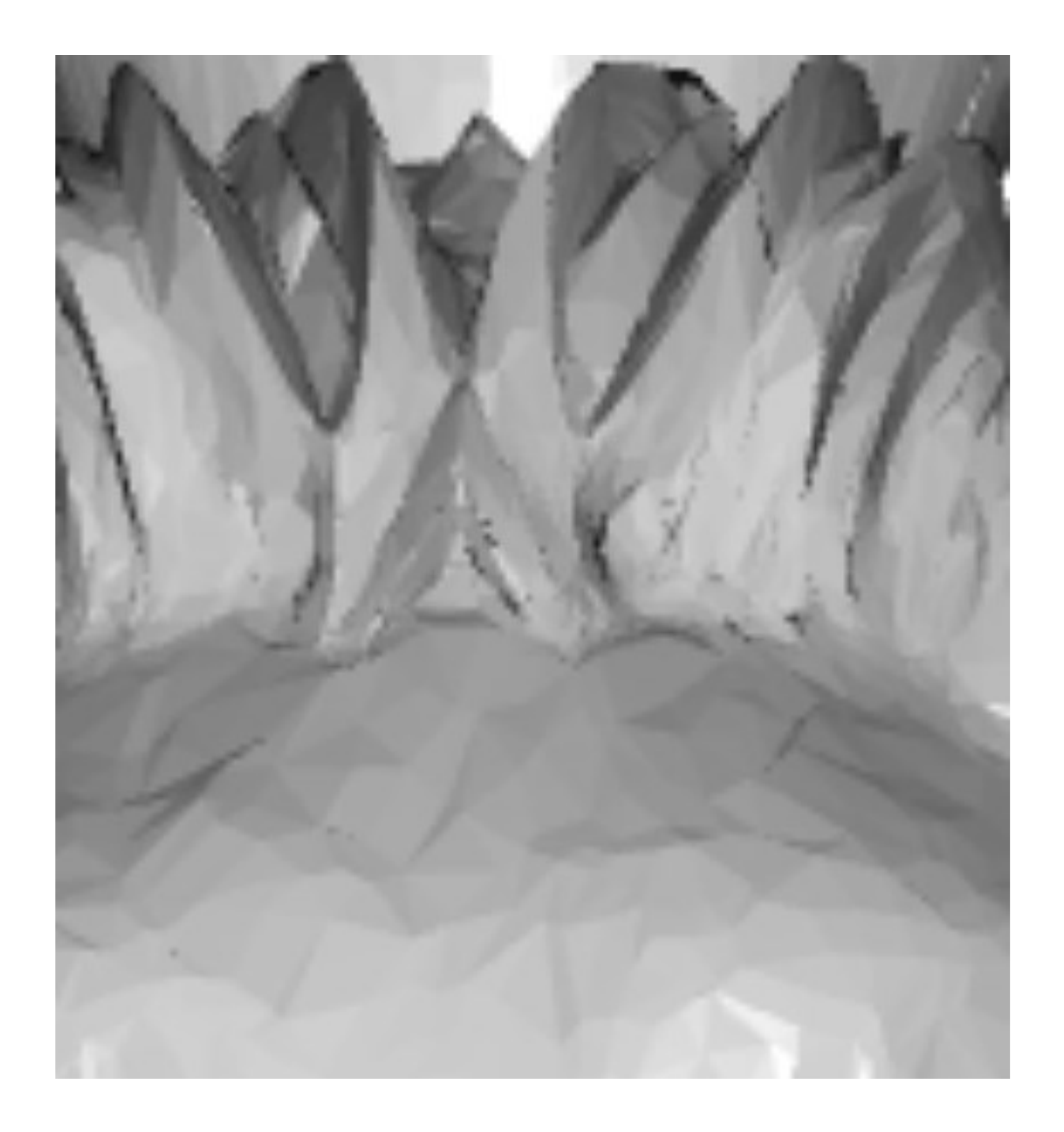}&
		\includegraphics[width=1.72cm]{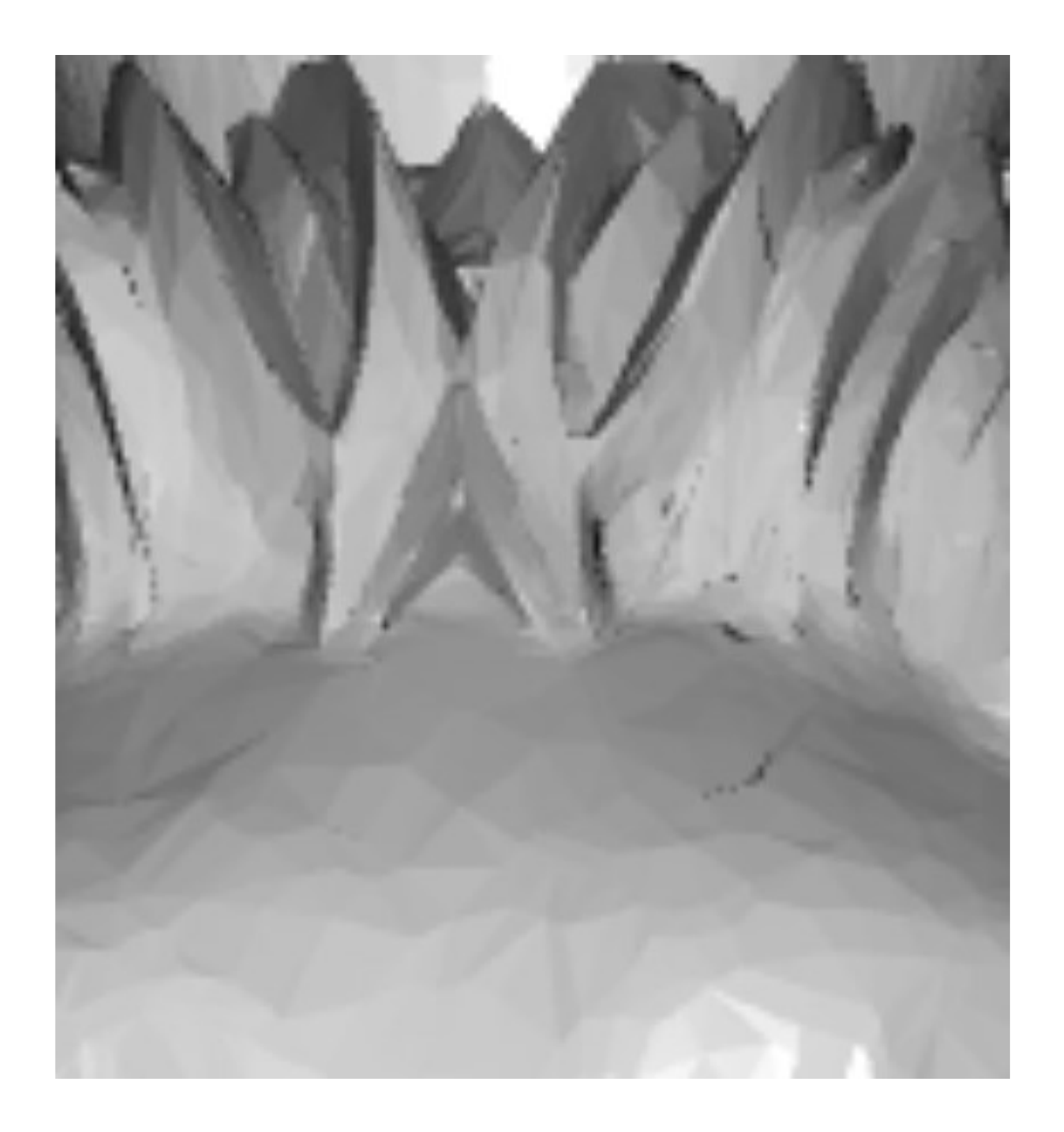}&
		\includegraphics[width=1.72cm]{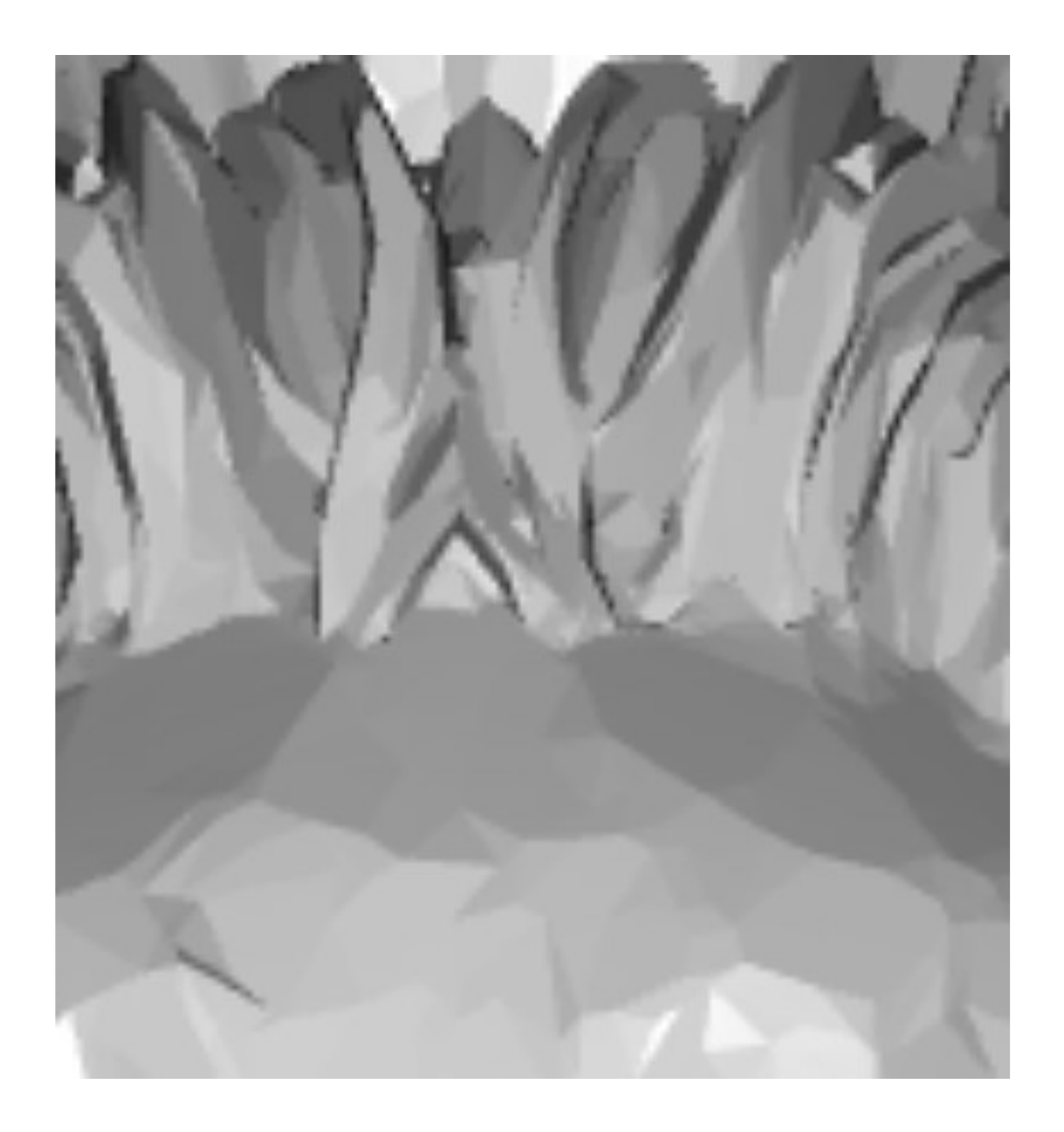}&
		\includegraphics[width=1.72cm]{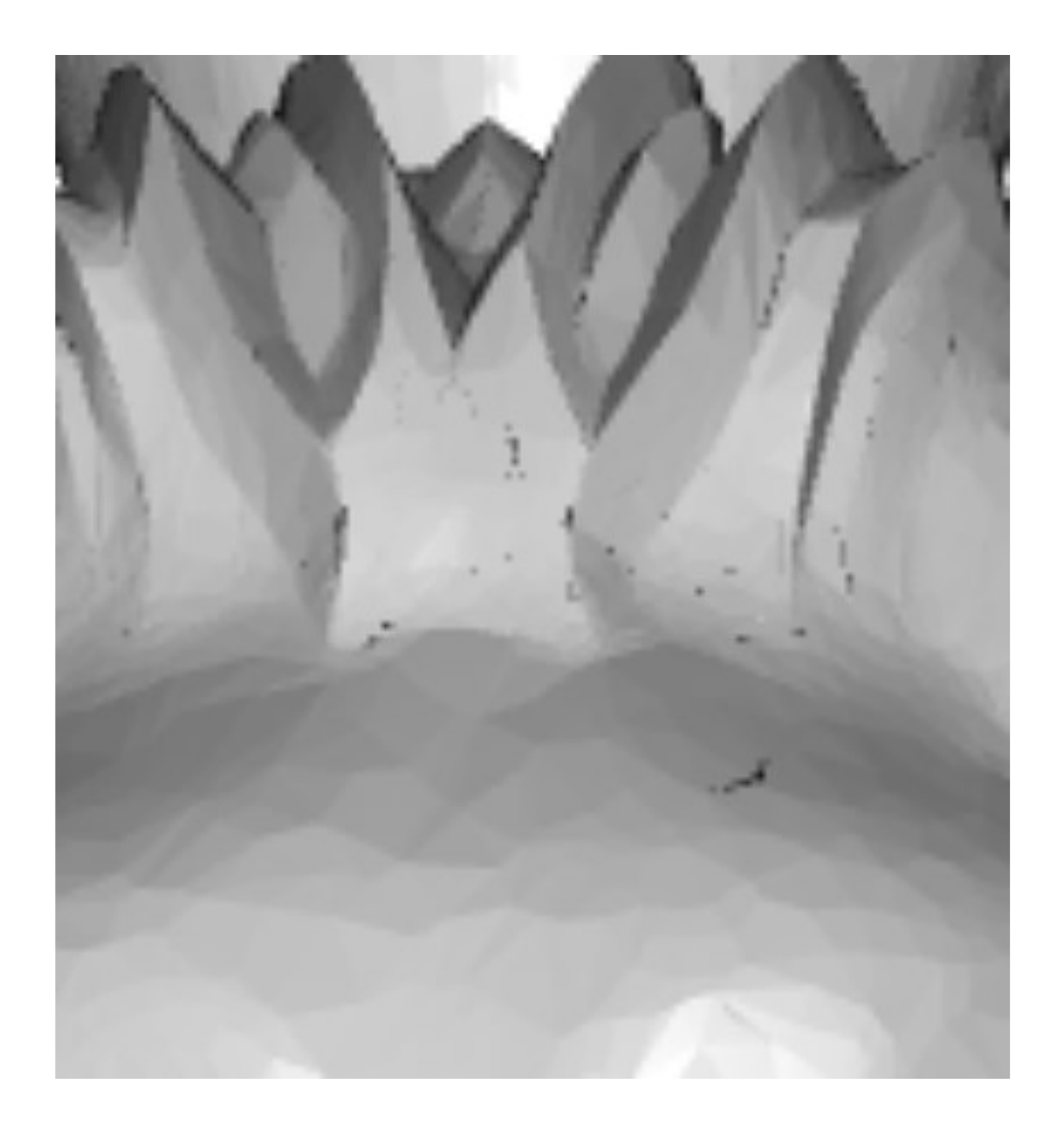}&
		\includegraphics[width=1.72cm]{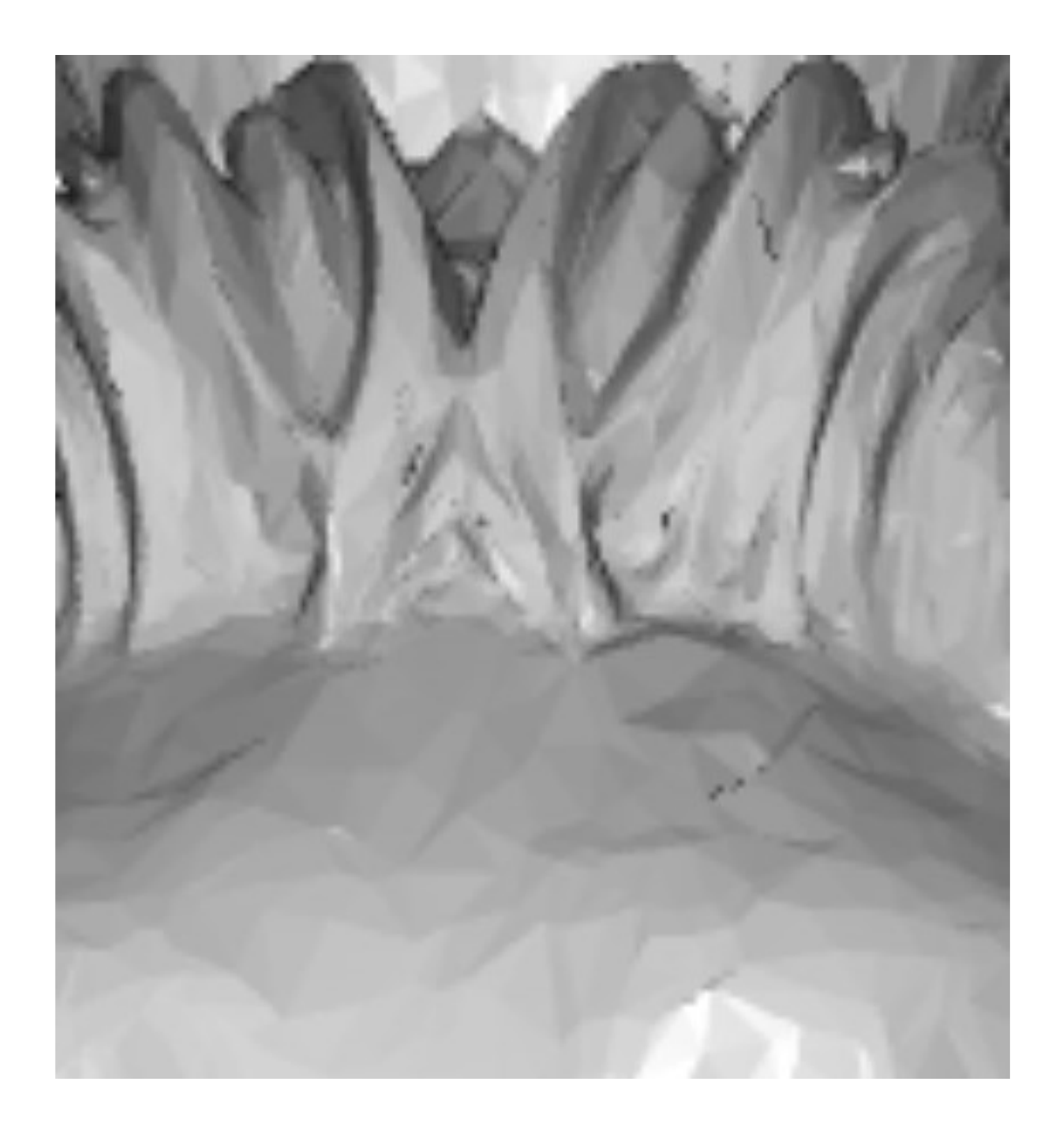}&
		\includegraphics[width=1.72cm]{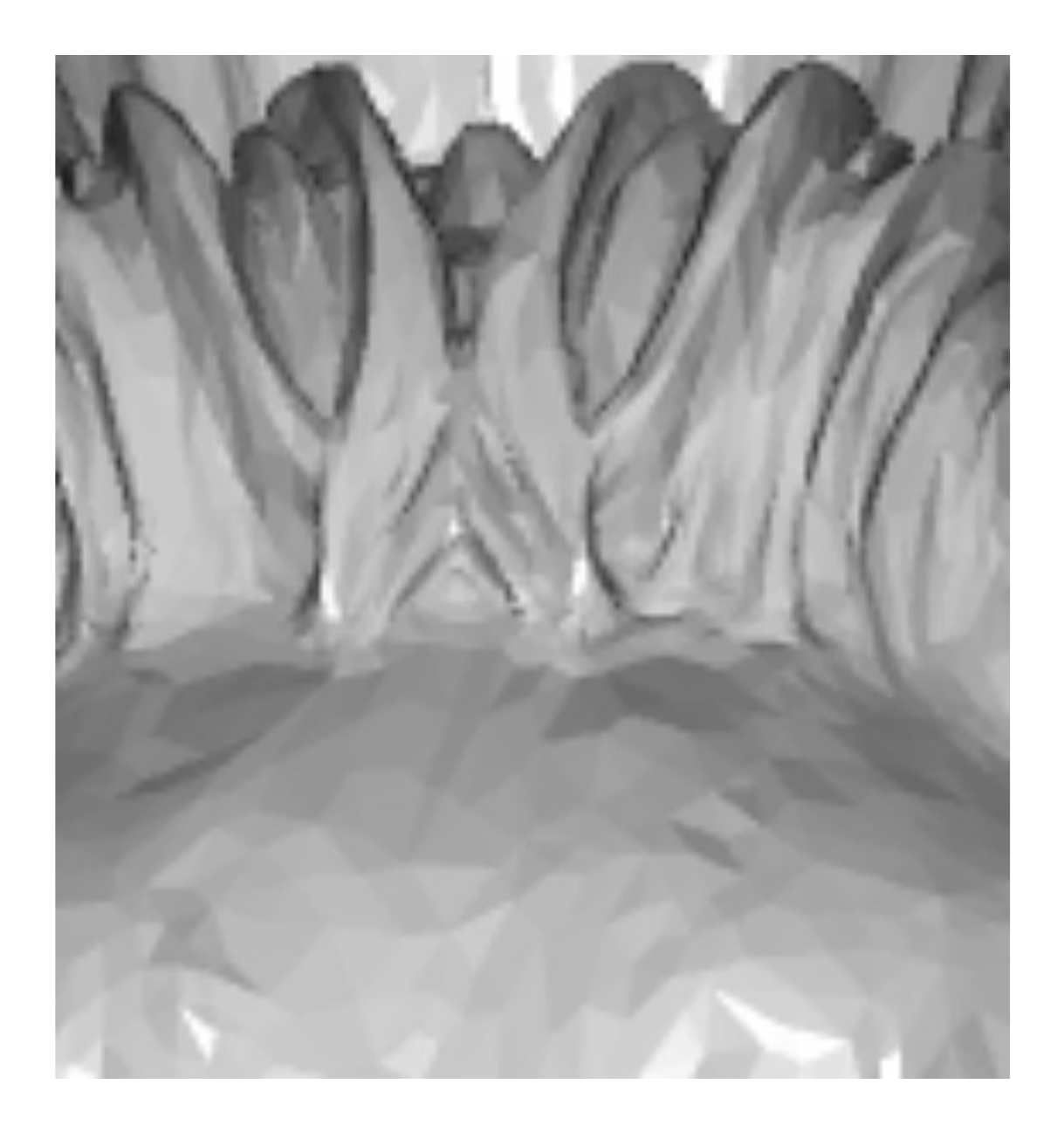}\\
		\includegraphics[width=1.72cm]{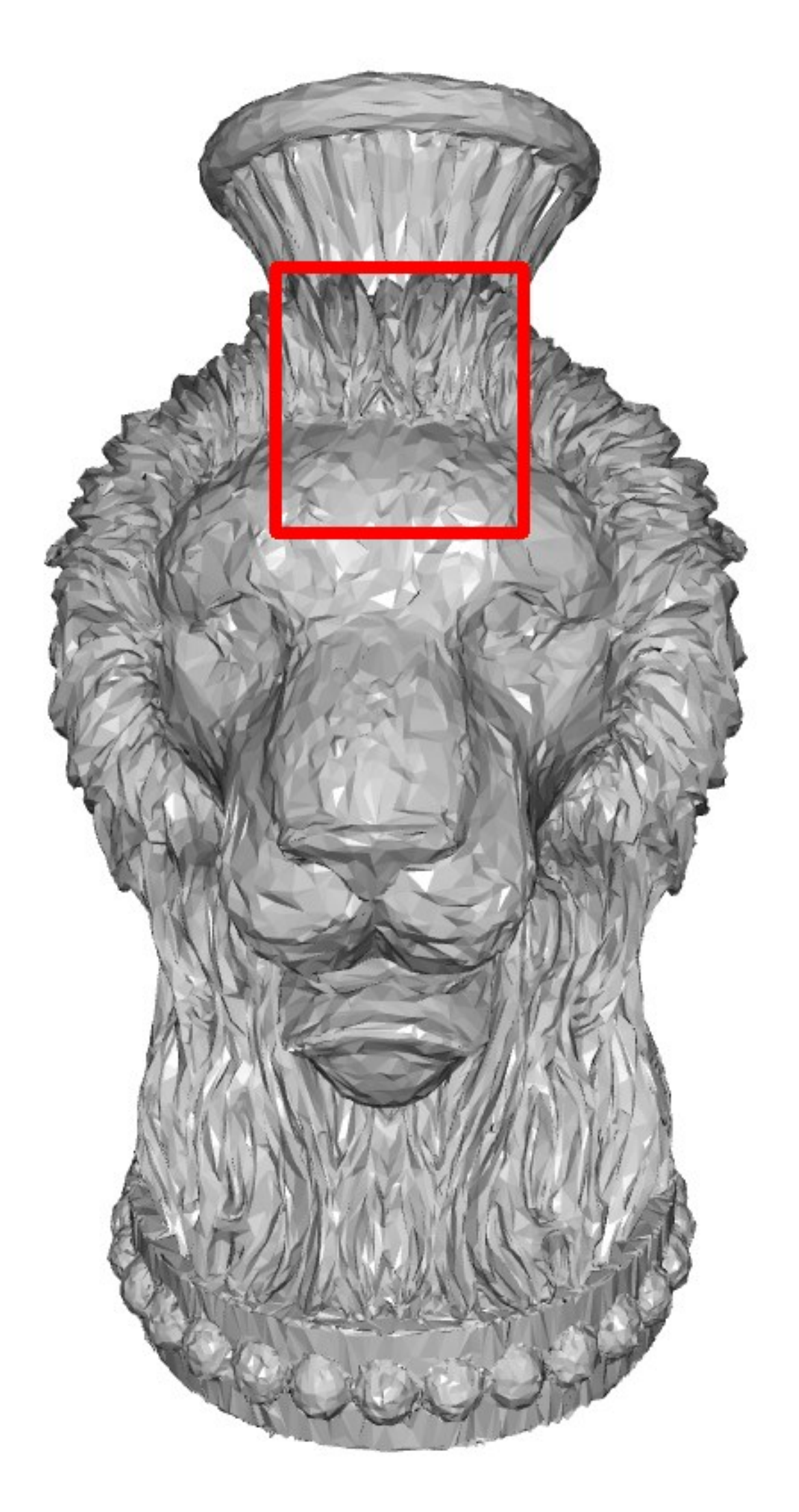}&
		\includegraphics[width=1.72cm]{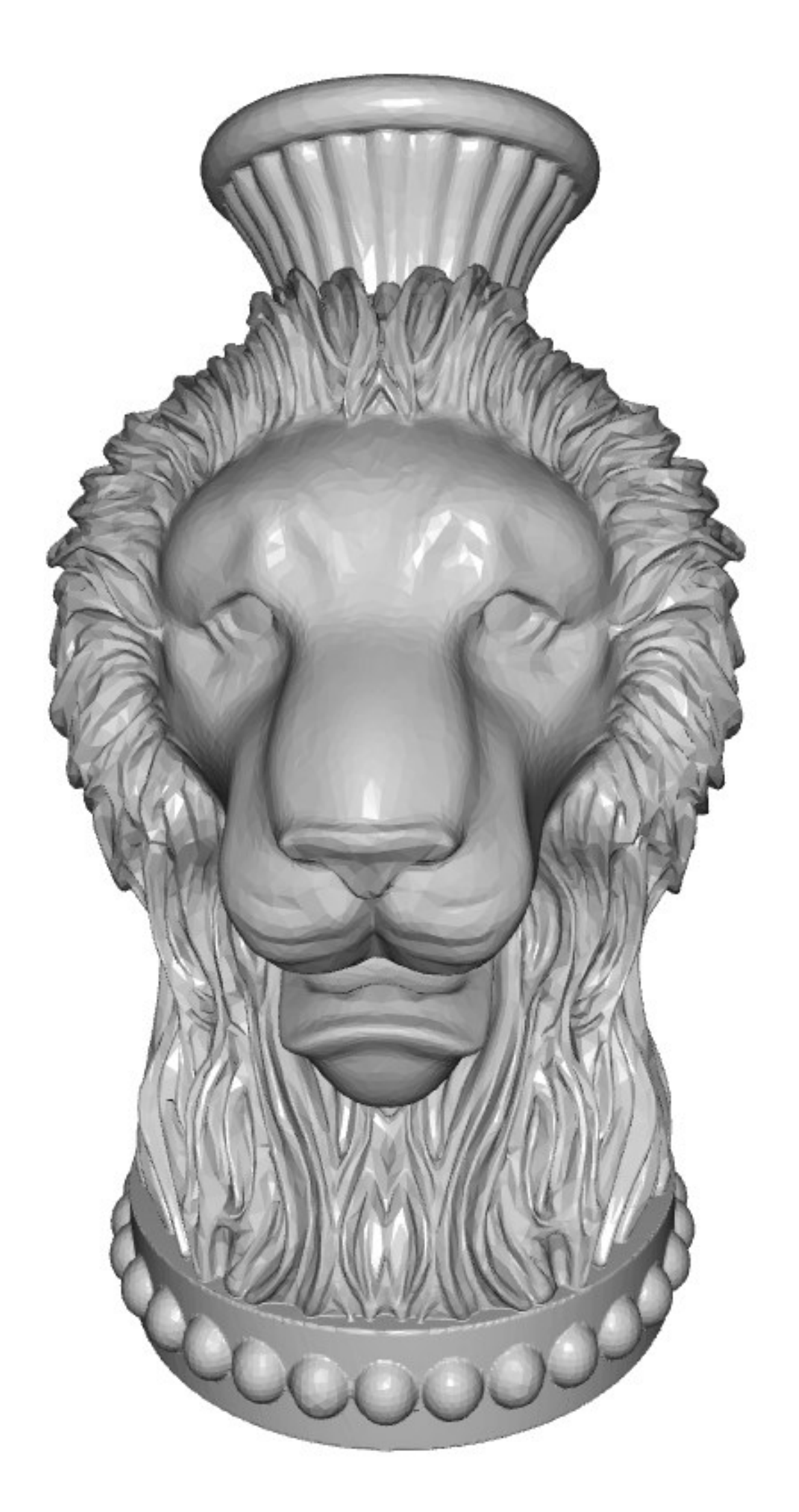}&
		\includegraphics[width=1.72cm]{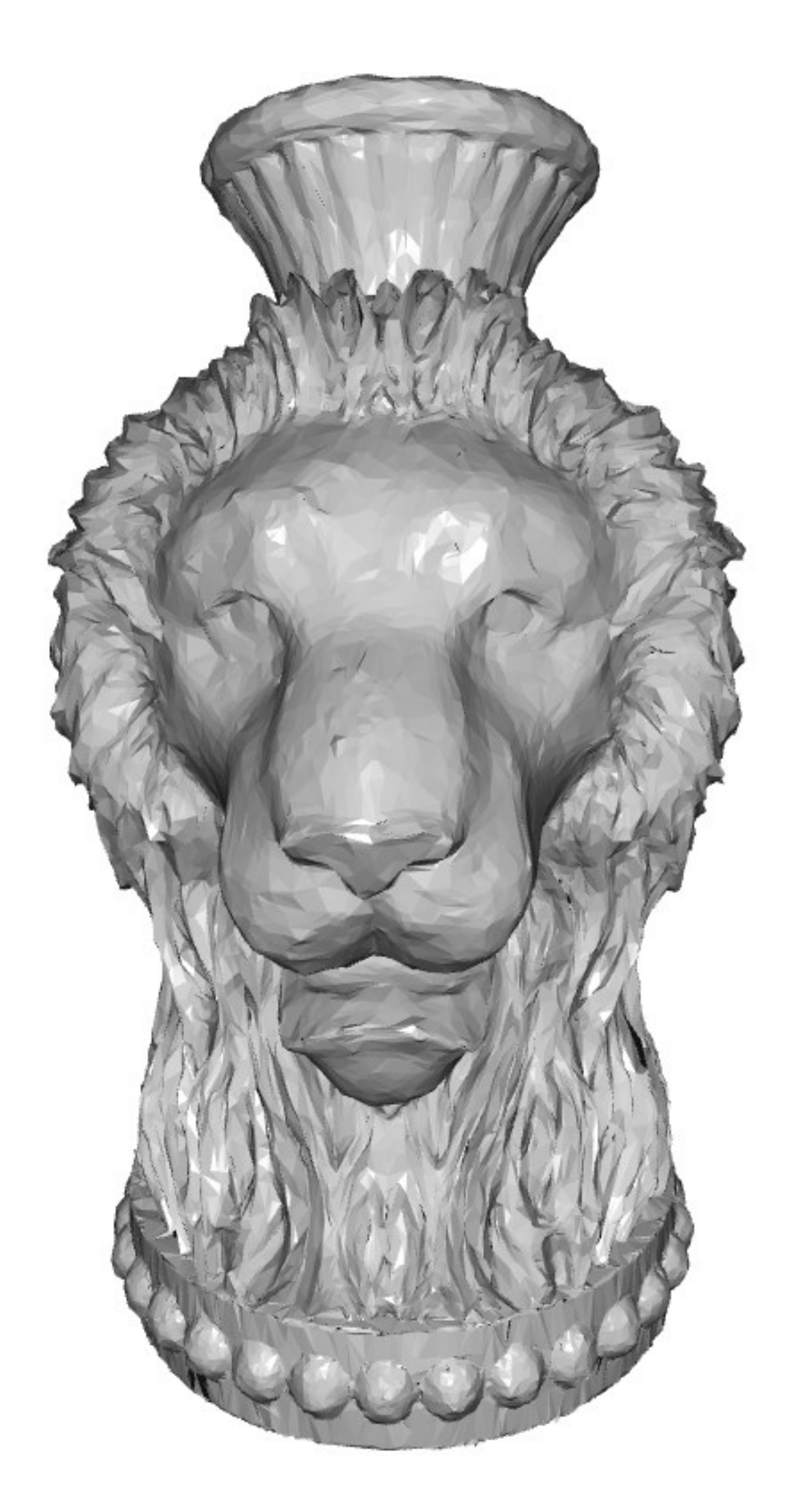}&
		\includegraphics[width=1.72cm]{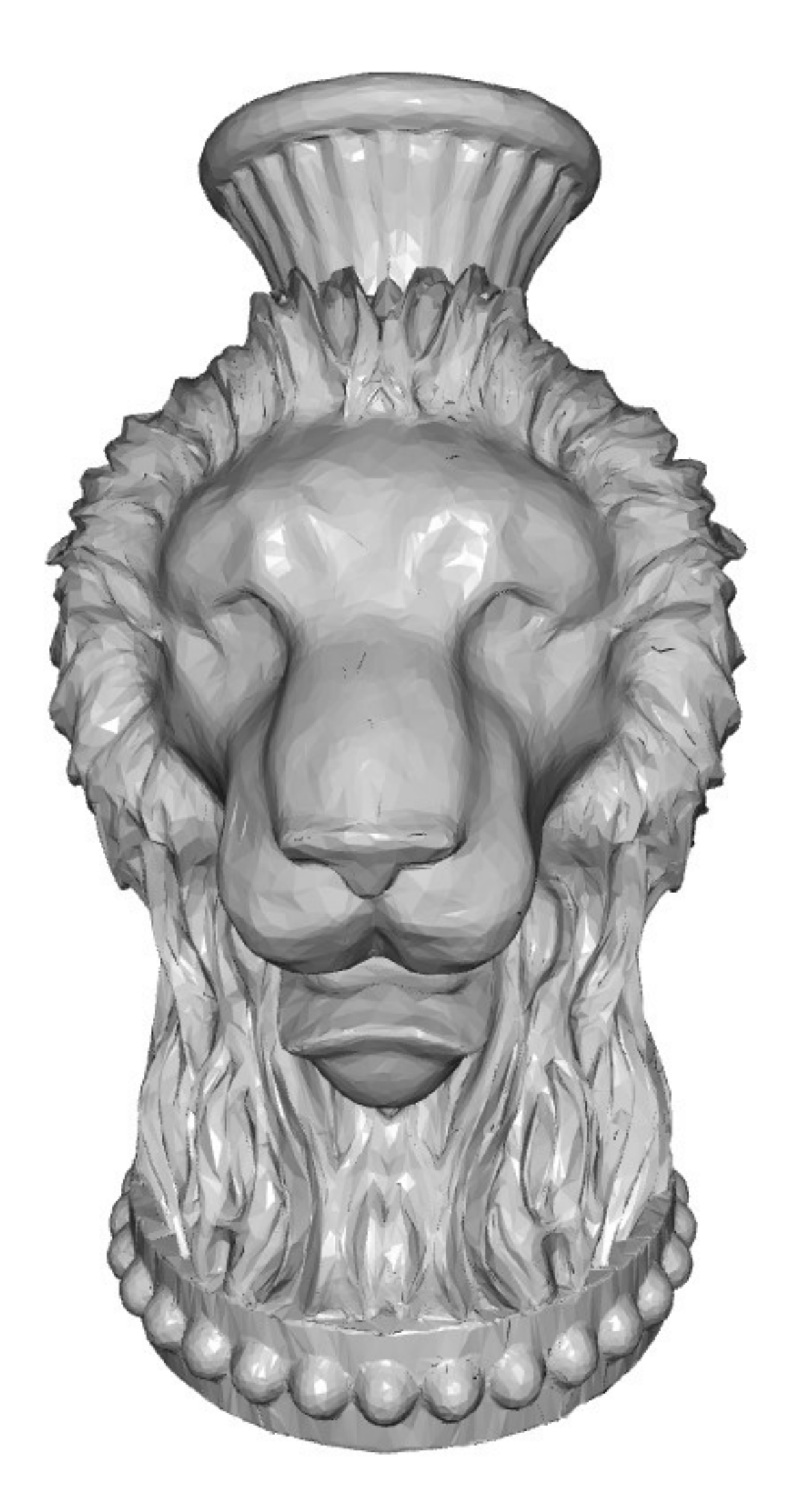}&
		\includegraphics[width=1.72cm]{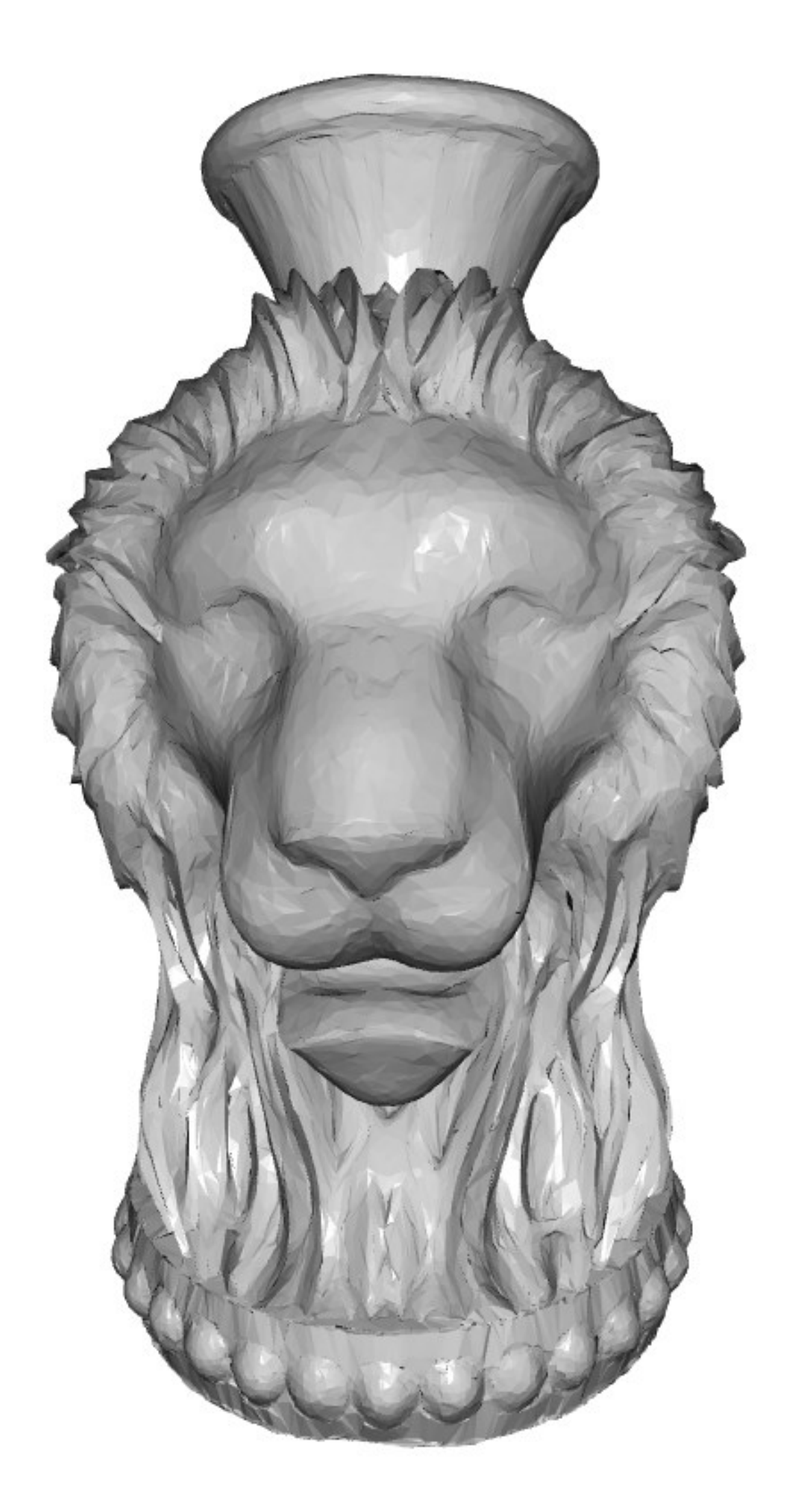}&
		\includegraphics[width=1.72cm]{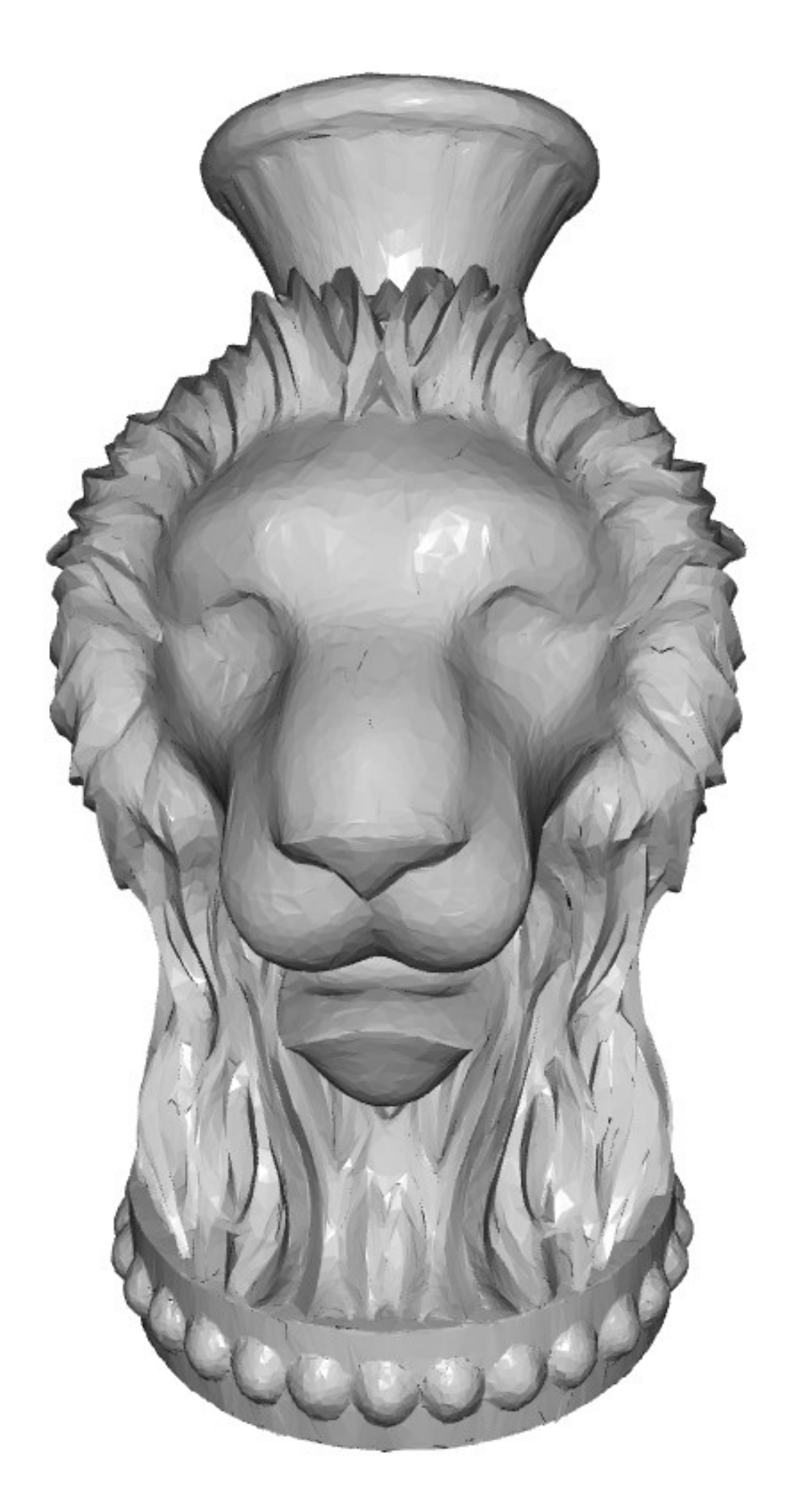}&
		\includegraphics[width=1.72cm]{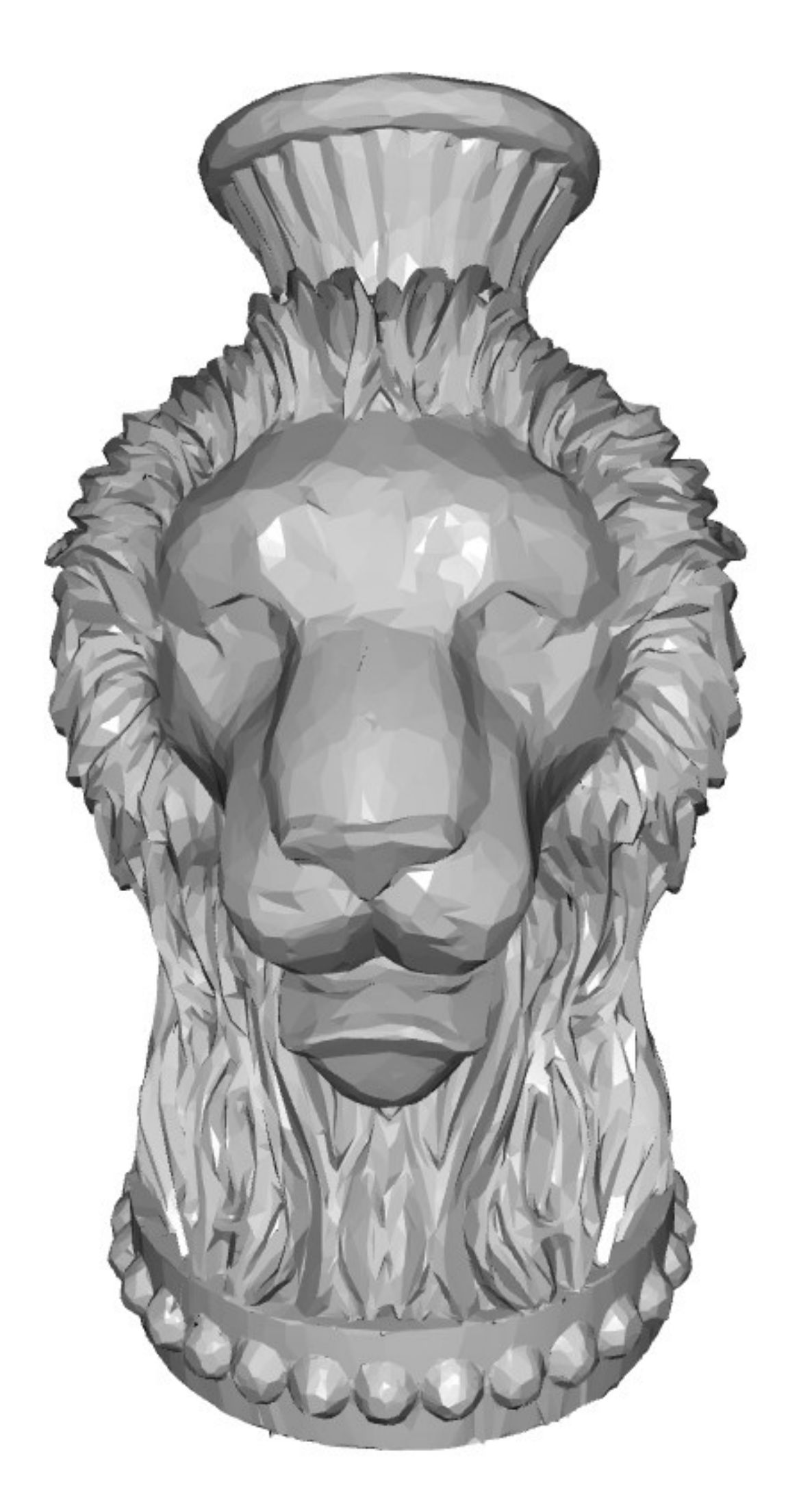}&
		\includegraphics[width=1.72cm]{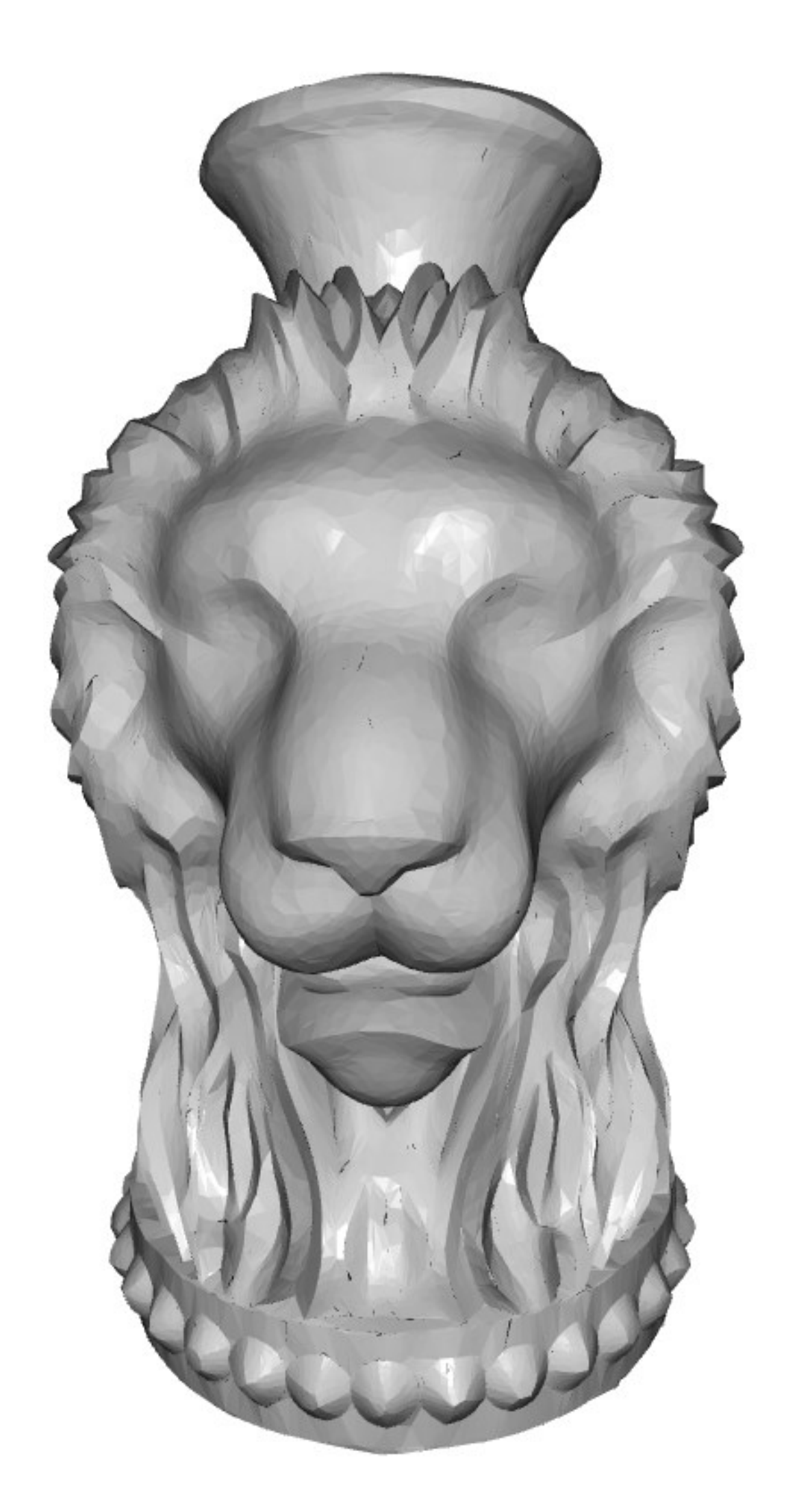}&
		\includegraphics[width=1.72cm]{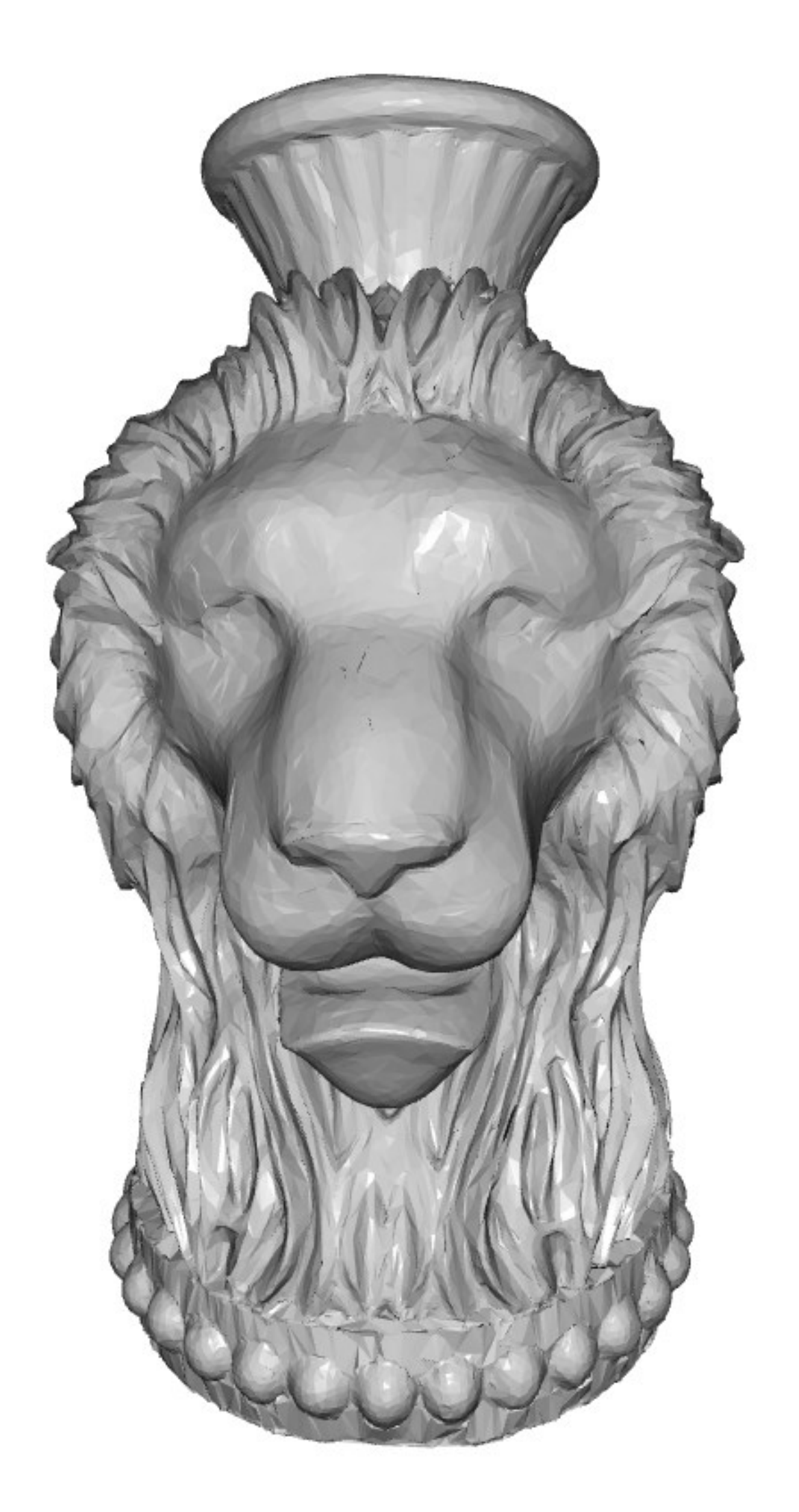}&
		\includegraphics[width=1.72cm]{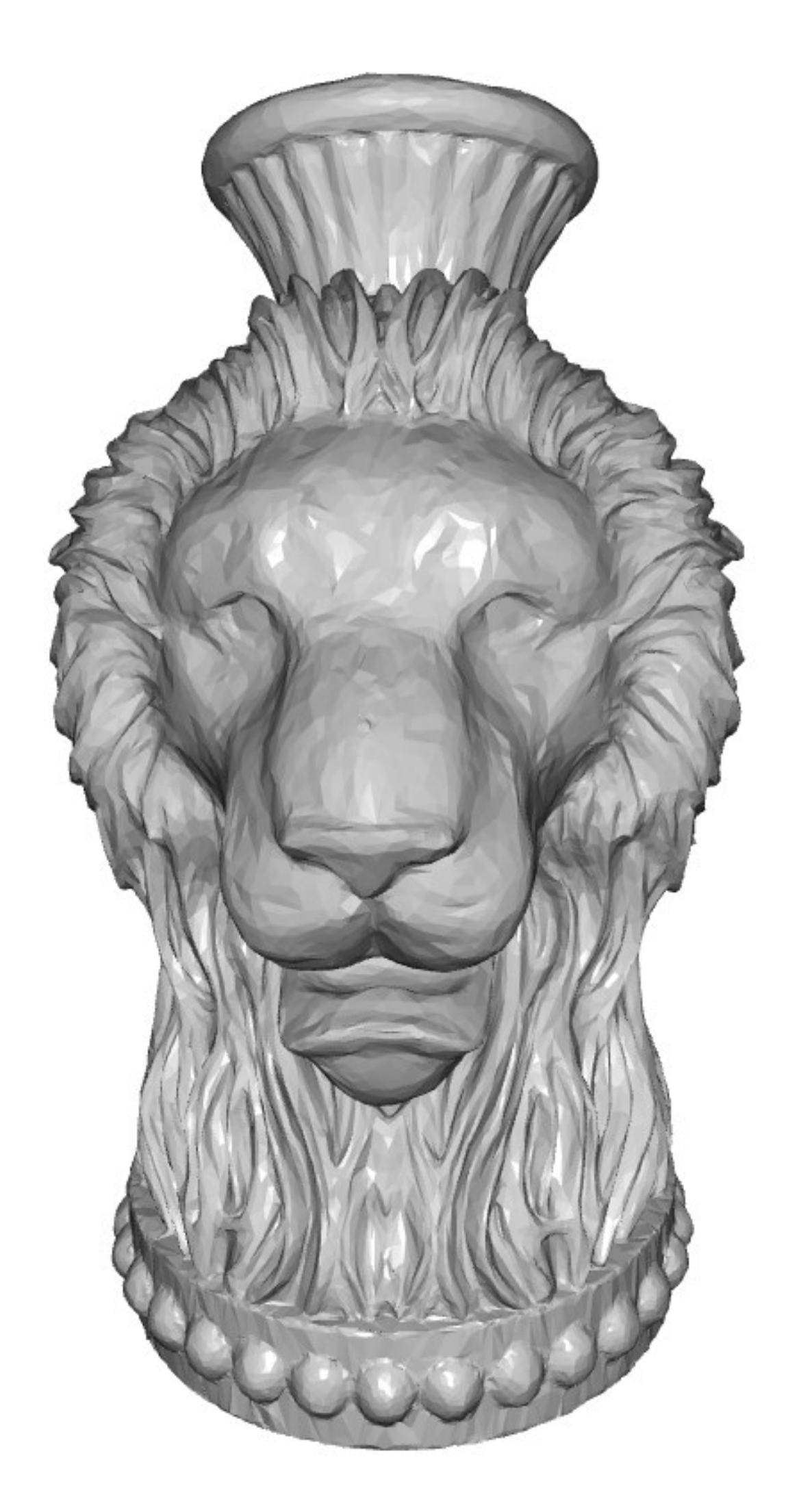}\\	
		\textrm{(a)~Noisy}&\textrm{(b)~Original}&\textrm{(c)~\cite{fleishman2003bilateral}}&\textrm{(d)~\cite{jones2003non}}&\textrm{(e)~\cite{sun2007fast}}&	\textrm{(f)~\cite{zheng2011bilateral}}&\textrm{(g)~\cite{he2013mesh}}&\textrm{(h)~\cite{zhang2015guided}}&\textrm{(i)~\cite{li2018non}}&\textrm{(j)~Ours}\\
		\end{array}
		$}
	\caption{ \label{fig:com_denosing and_features} Comparison between our algorithm and the selected state-of-the-art algorithms on the armadillo $(\sigma_{n}=0.3{e}_l)$, bunny $(\sigma_{n}=0.5{e}_l)$, Max Planck $(\sigma_{n}=0.1{e}_l)$, vaselion $(\sigma_{n}=0.2{e}_l)$.}
\end{figure*}

\begin{figure*}[htbp]
	\centering
	\setlength{\arraycolsep}{0pt}
	\makebox[\textwidth][c]{$
		\begin{array}{c}	
		\includegraphics[width=8cm]{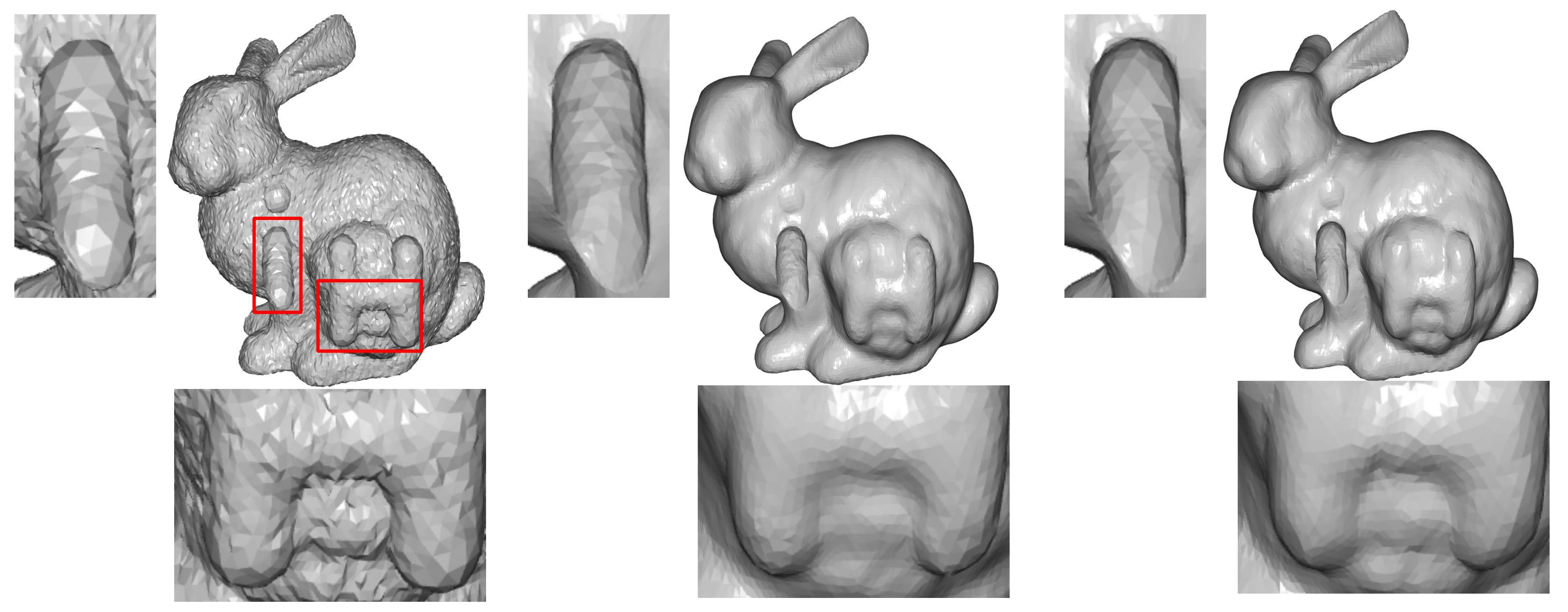}\\
		\begin{array}{ccc}
		\makebox[3cm]{\small Input}&
		\makebox[3cm]{\small ~\cite{pan2020hlo}}&
		\makebox[3cm]{\small Ours}\\
		\end{array}\\
		\includegraphics[width=8cm]{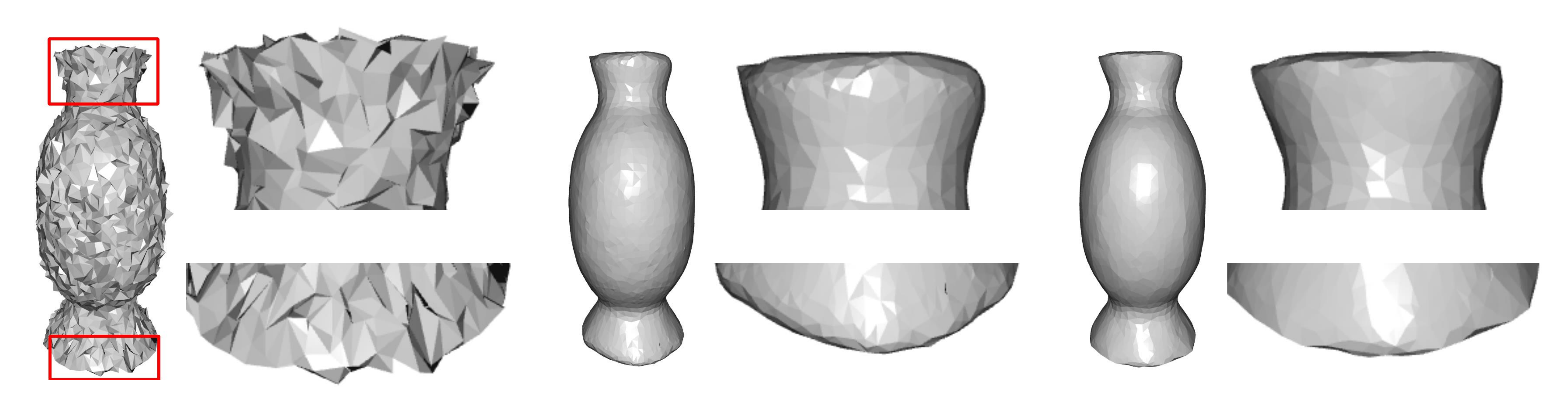}\\
		\begin{array}{ccc}
		\makebox[3cm]{\small Input}&
		\makebox[3cm]{\small ~\cite{pan2020hlo}}&
		\makebox[3cm]{\small Ours}\\
		\end{array}\\
		\end{array}
		$}
	\caption{\label{Fig.compare hlo} Comparison between our algorithm (Number of iterations: iHbunny: 40, Vase: 100) and our previous work~\cite{pan2020hlo}(Number of iterations: iHbunny: 10, Vase: 20) on noisy iHbunny $(\sigma_{n}=0.2{e}_l)$ and noisy Vase $(\sigma_{n}=0.5{e}_l)$.}
\end{figure*}

\begin{figure*}[htbp]
	\centering
	\setlength{\arraycolsep}{0pt}
	\makebox[\textwidth][c]{$
		\begin{array}{ccccccccc}	
		\includegraphics[width=1.72cm]{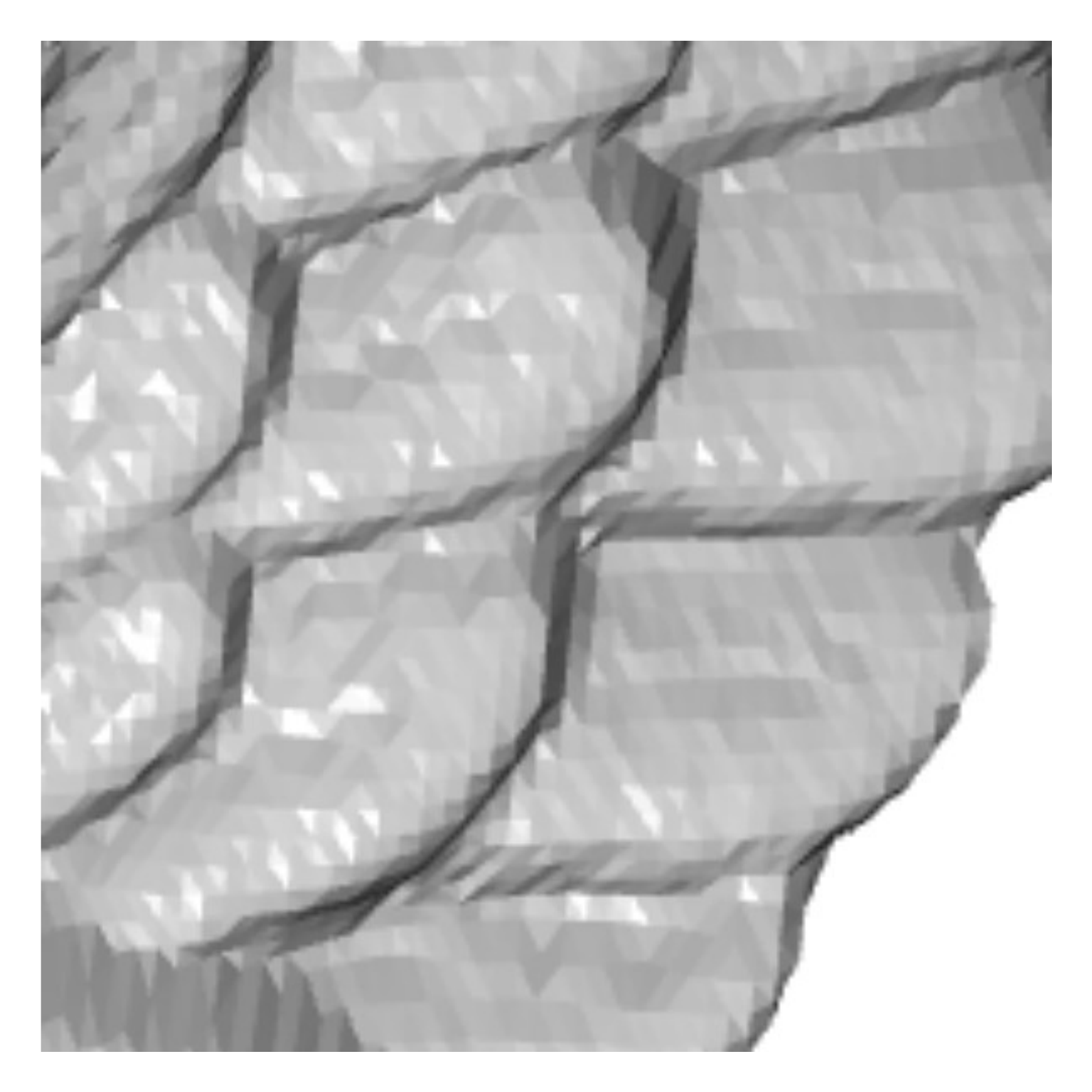}&
		\includegraphics[width=1.72cm]{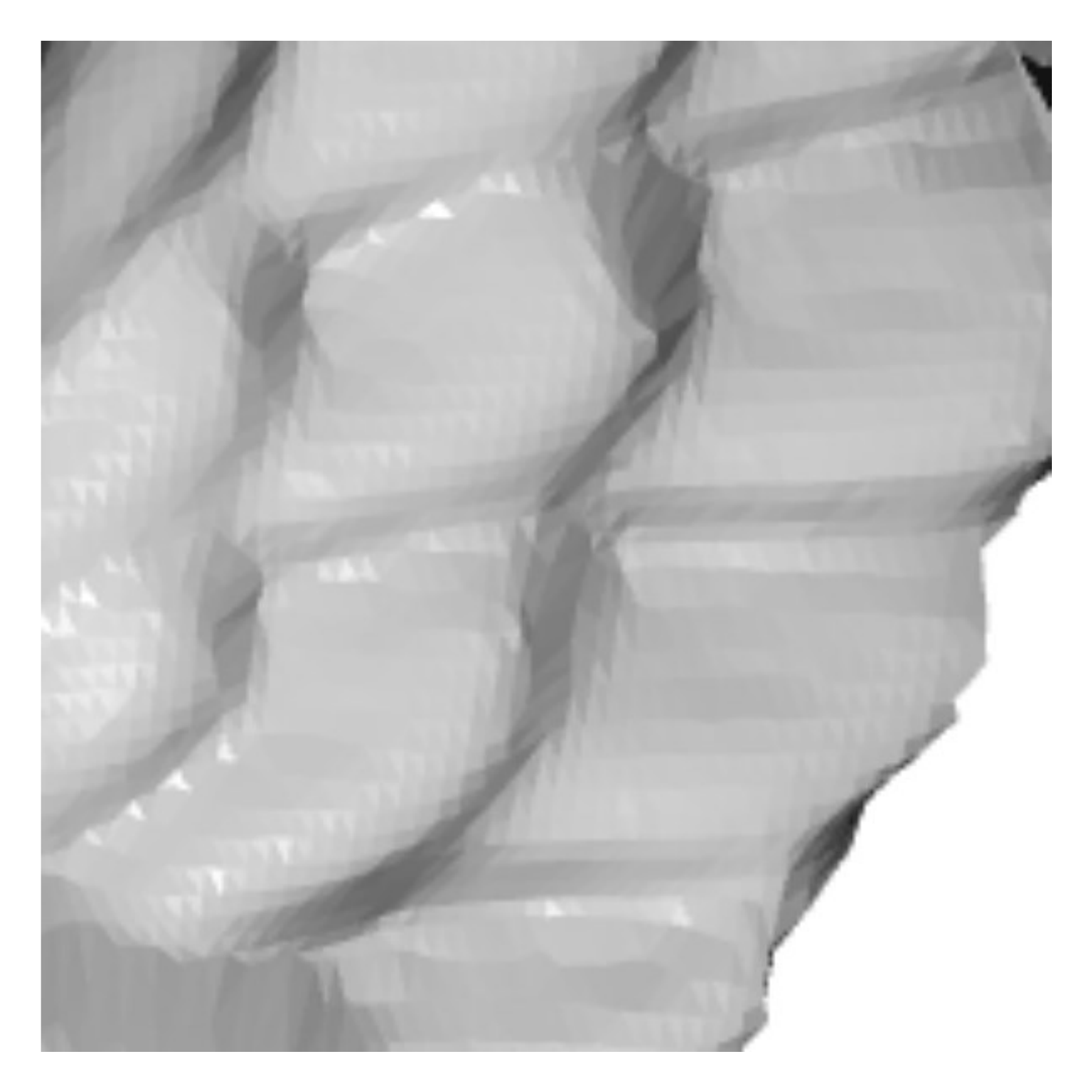}&
		\includegraphics[width=1.72cm]{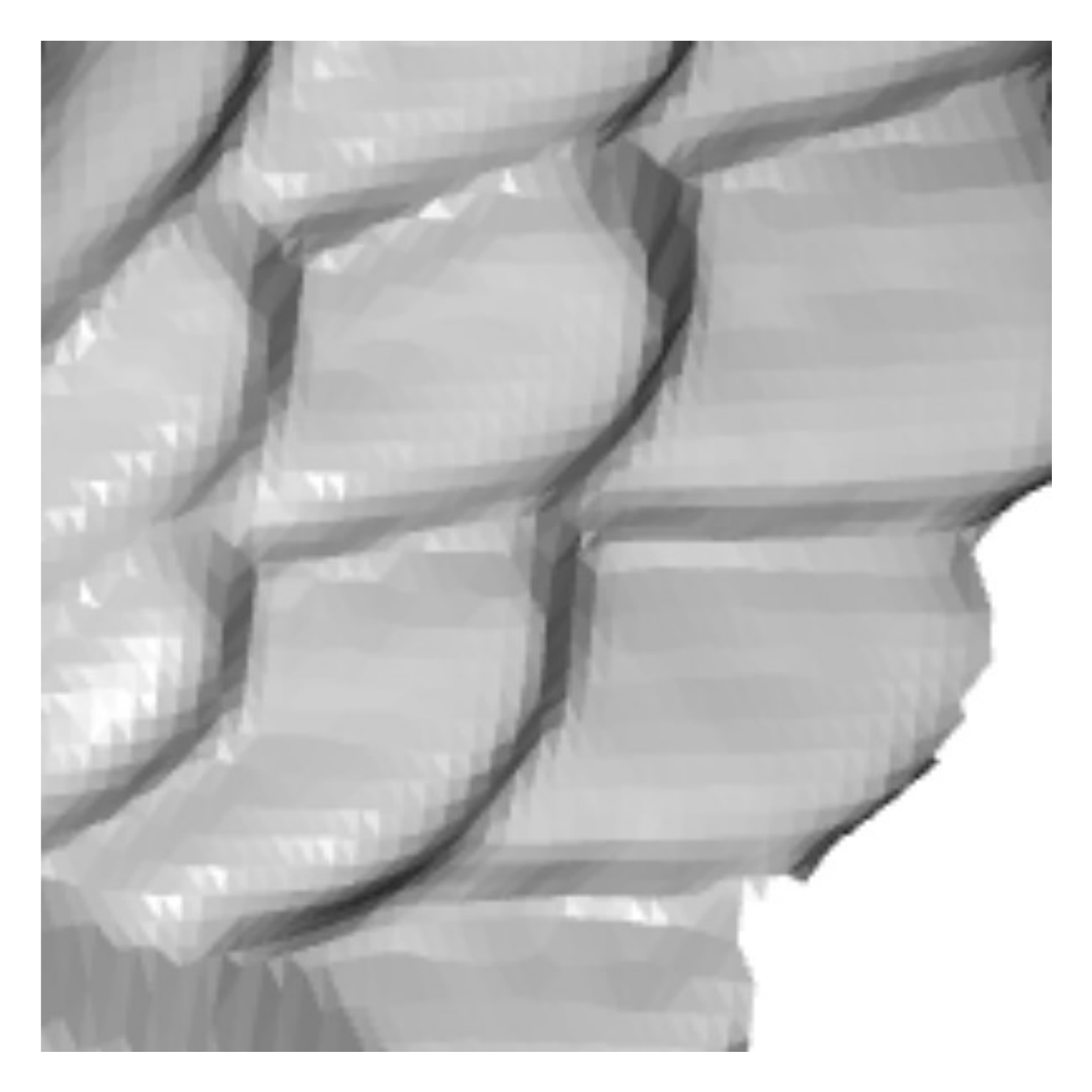}&
		\includegraphics[width=1.72cm]{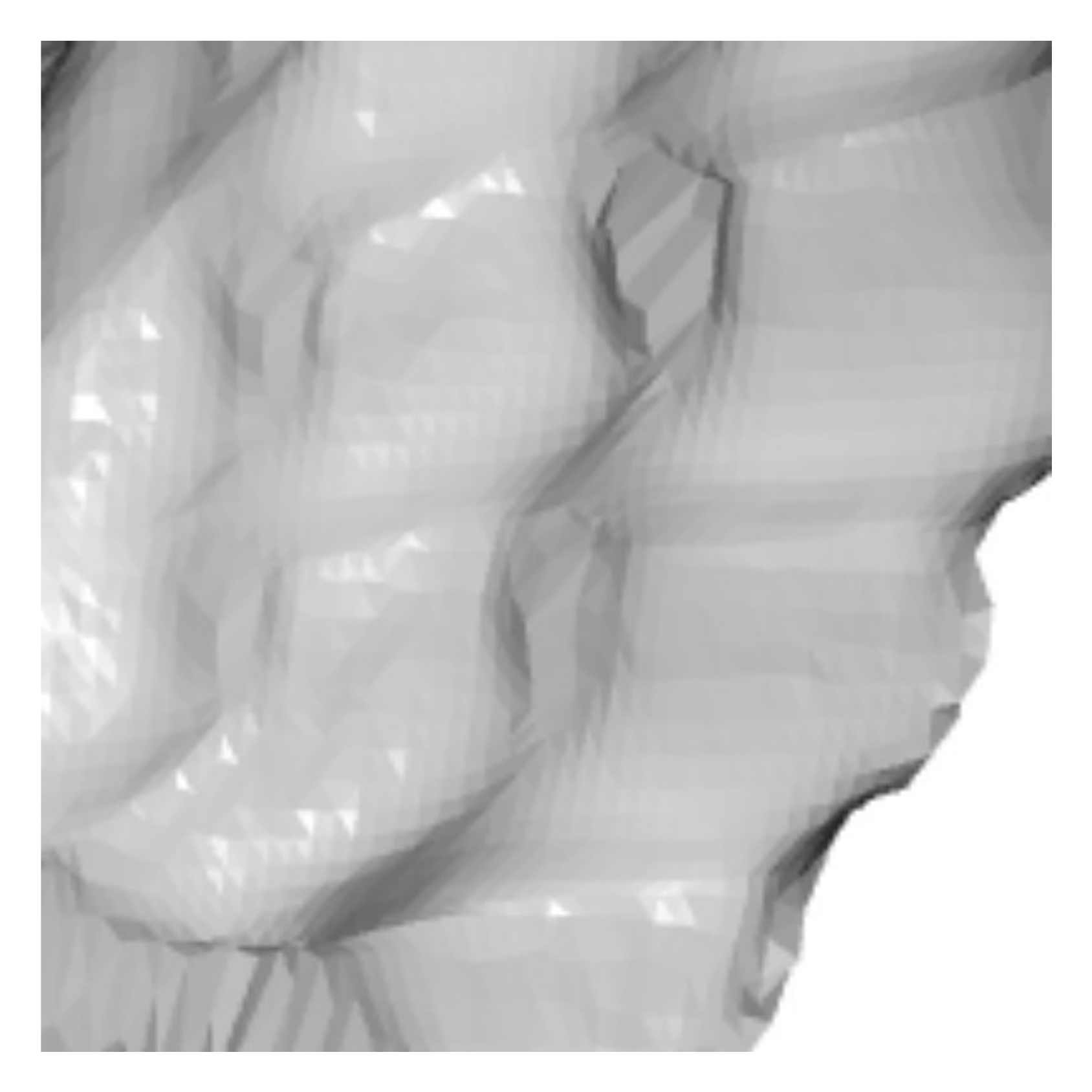}&
		\includegraphics[width=1.72cm]{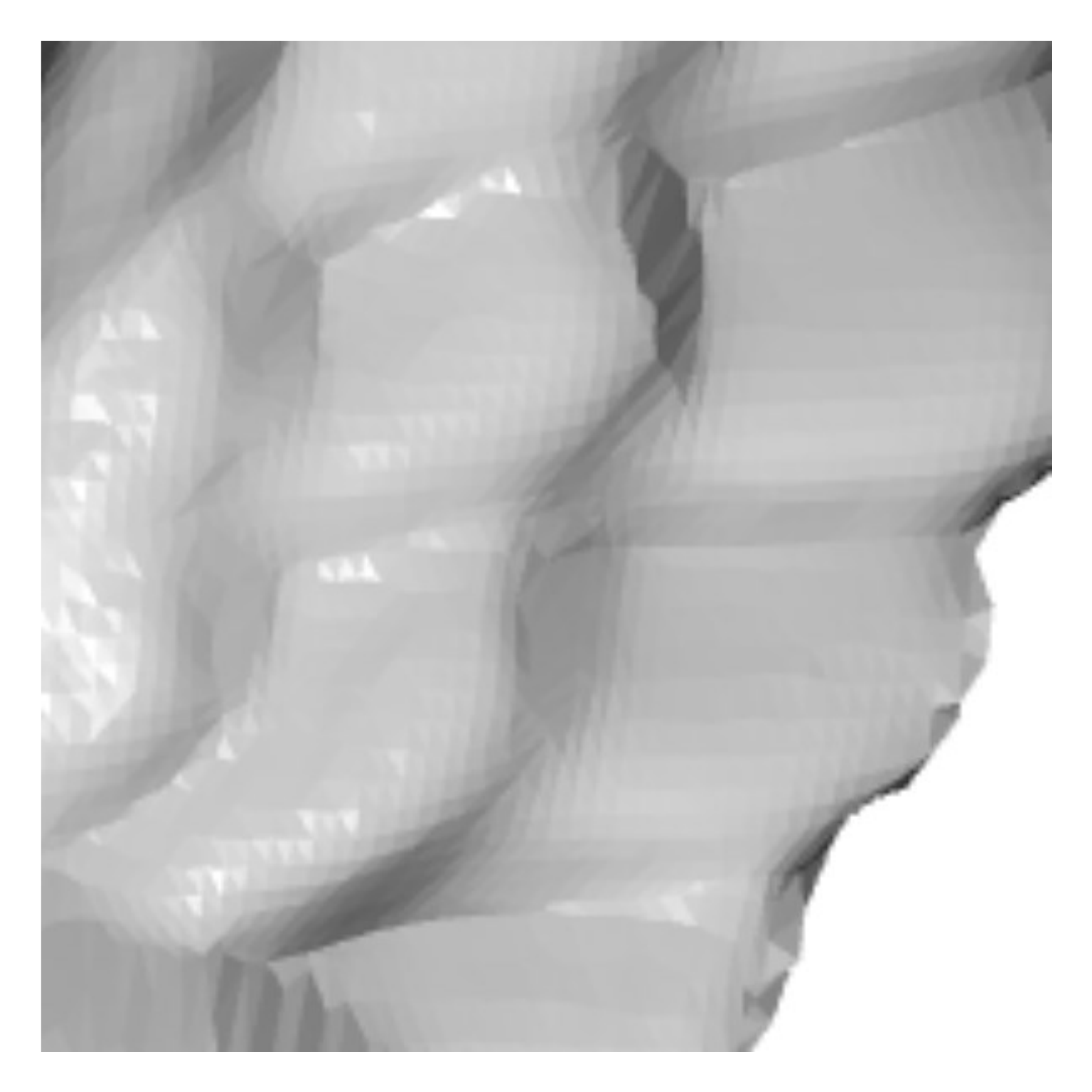}&
		\includegraphics[width=1.72cm]{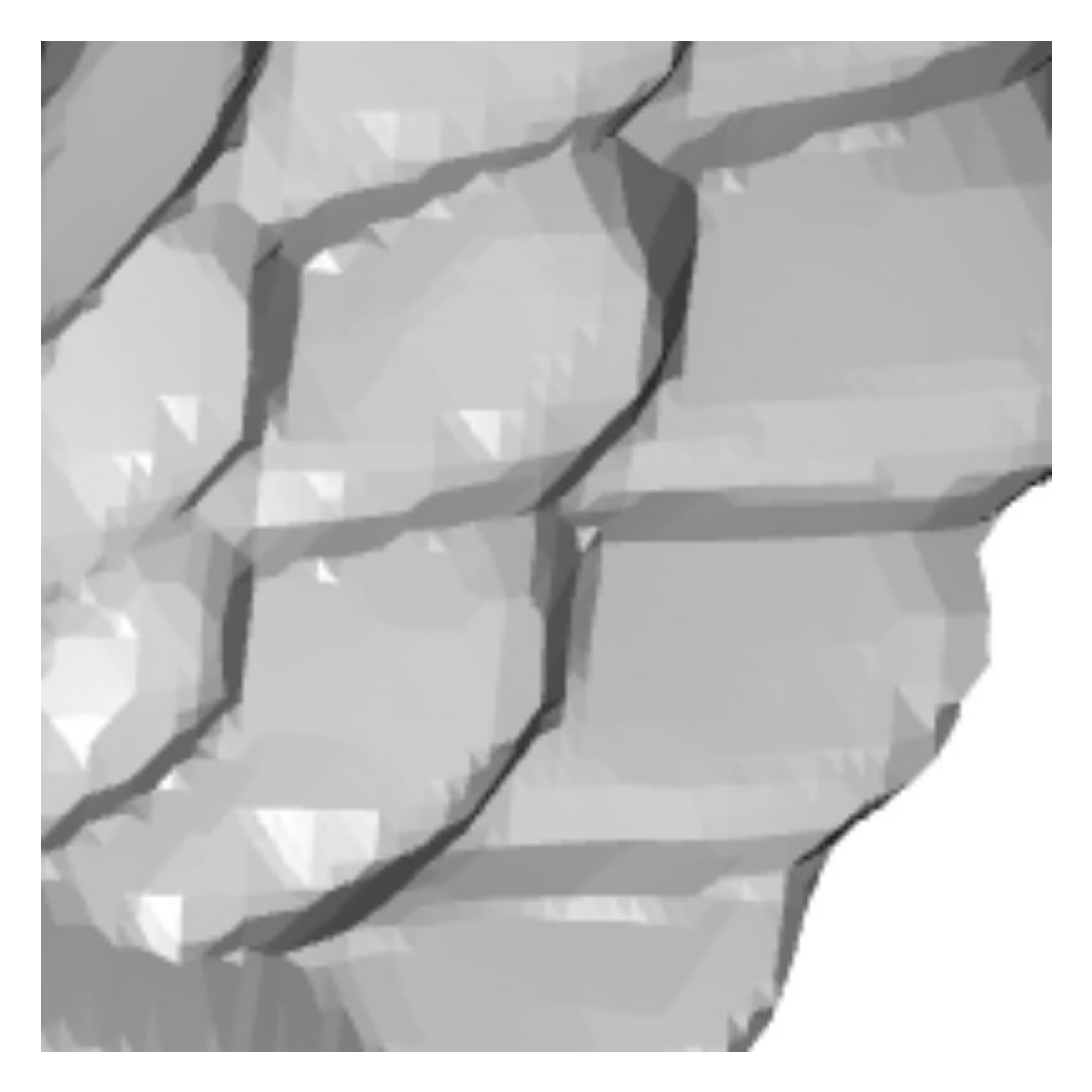}&
		\includegraphics[width=1.72cm]{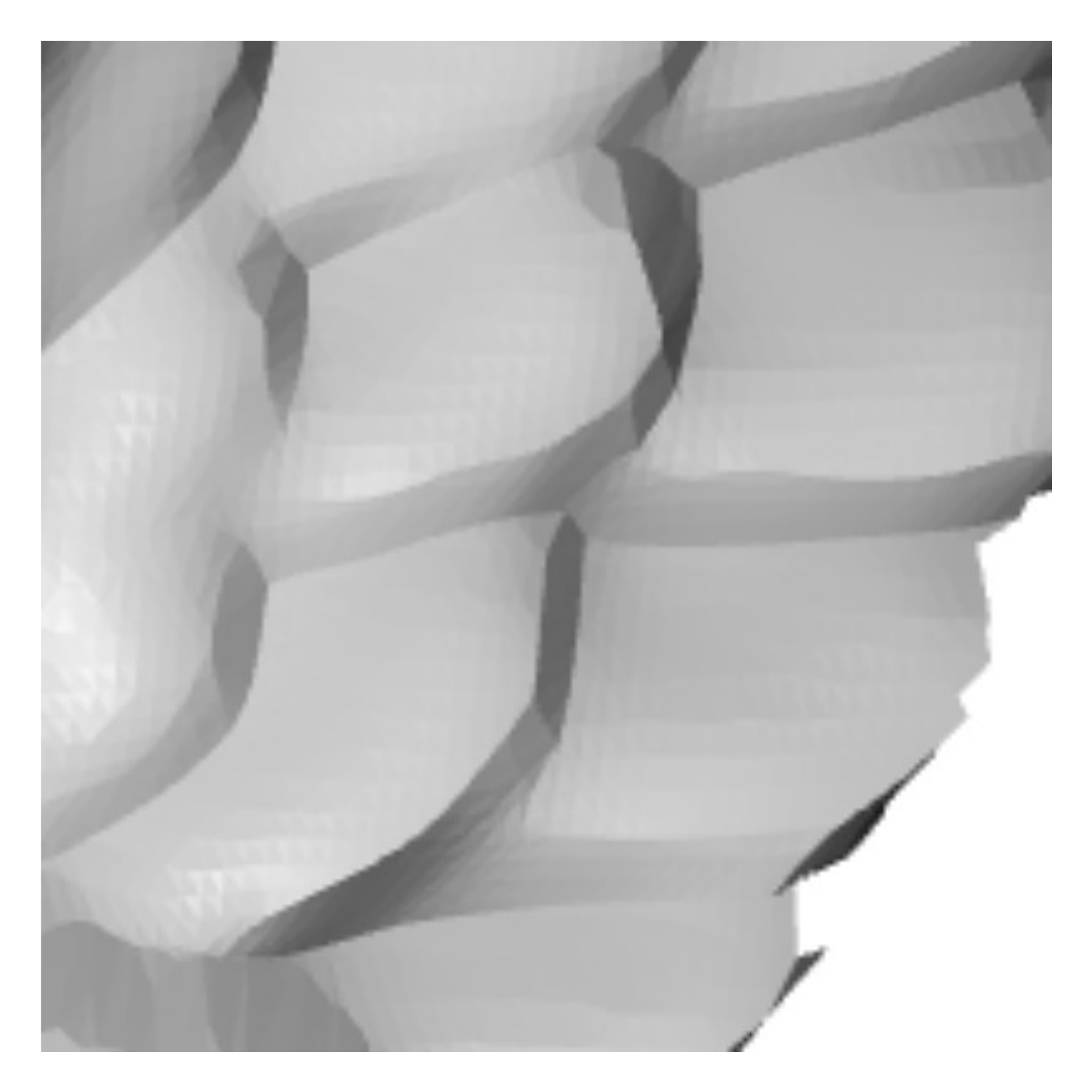}&
		\includegraphics[width=1.72cm]{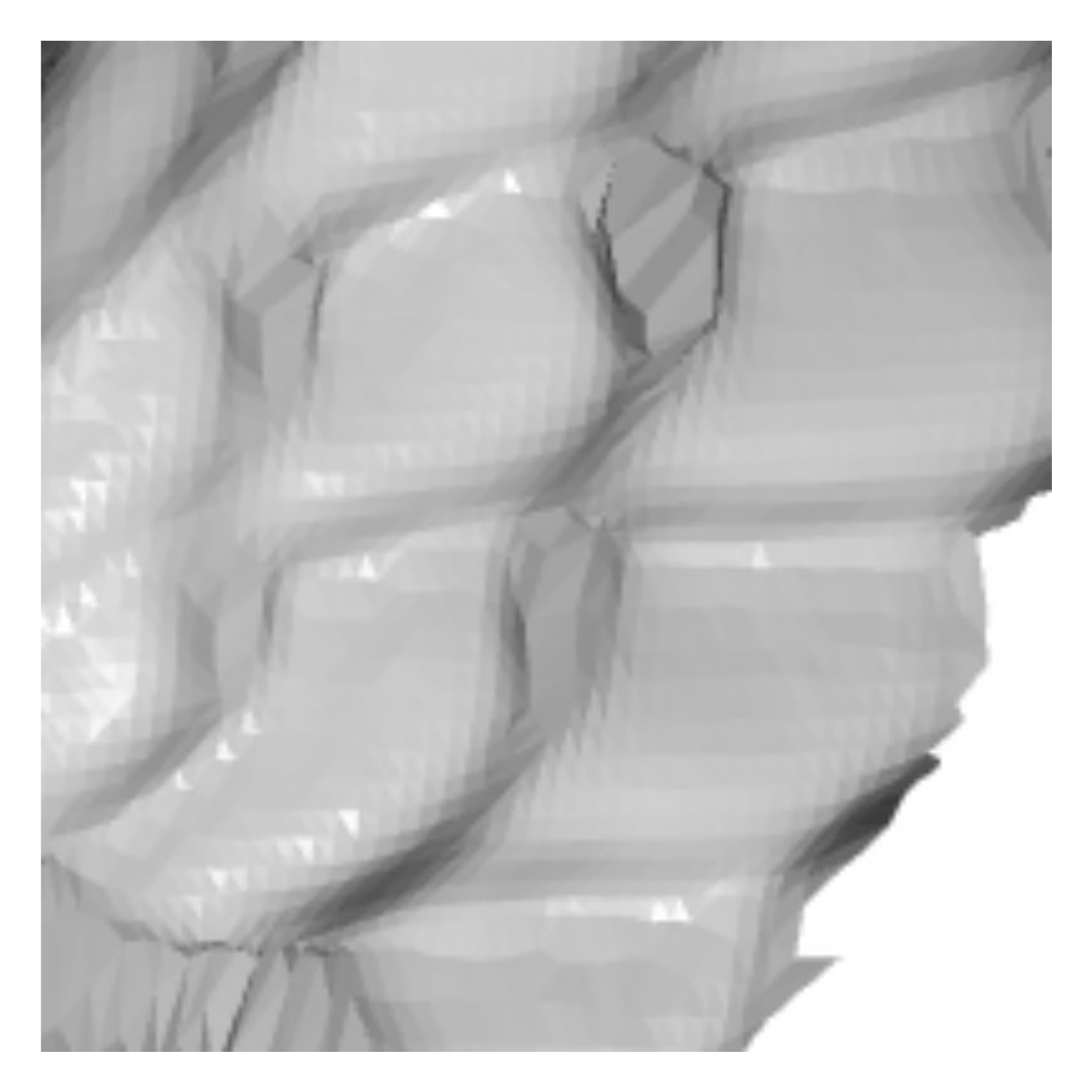}&
		\includegraphics[width=1.72cm]{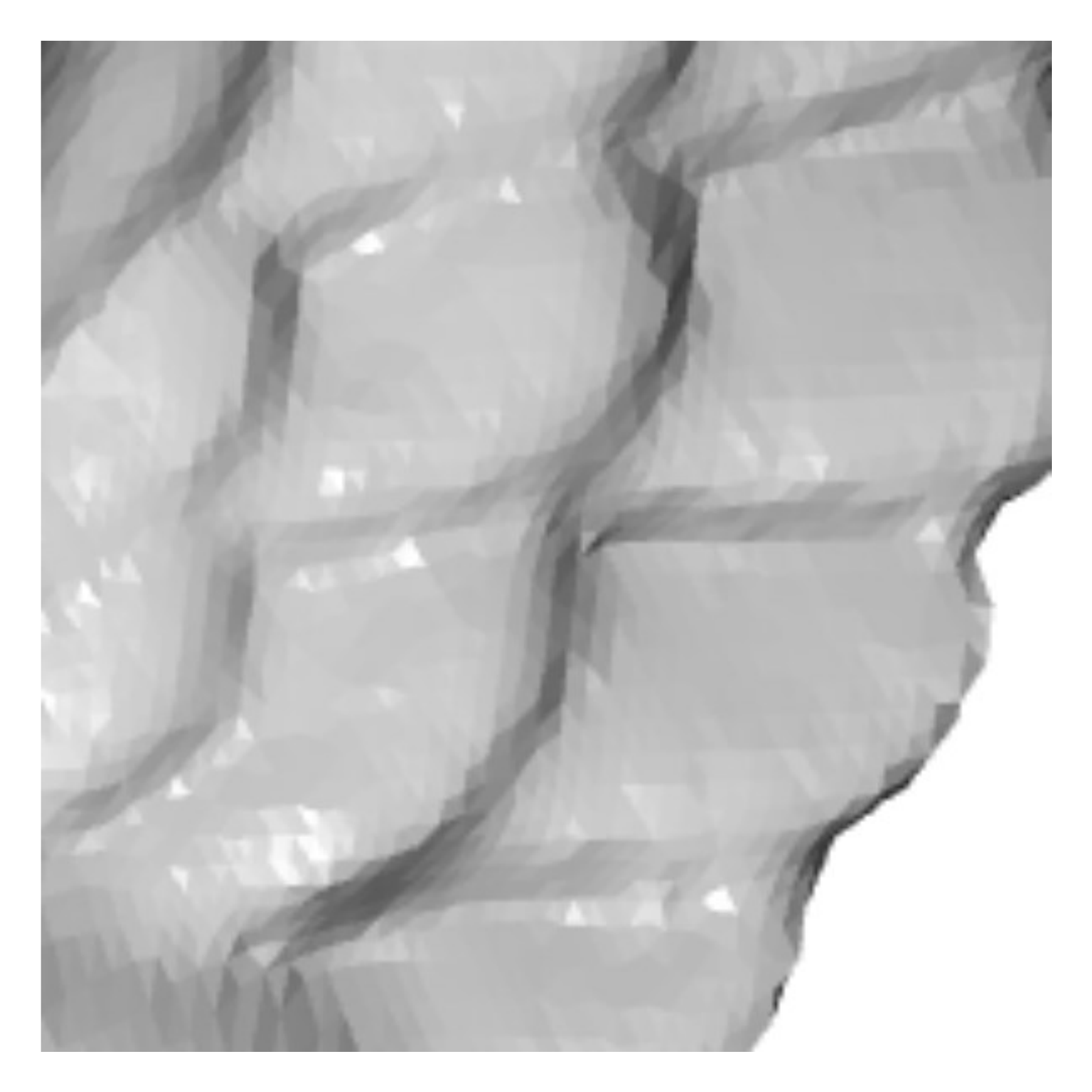}\\
		\includegraphics[width=1.72cm]{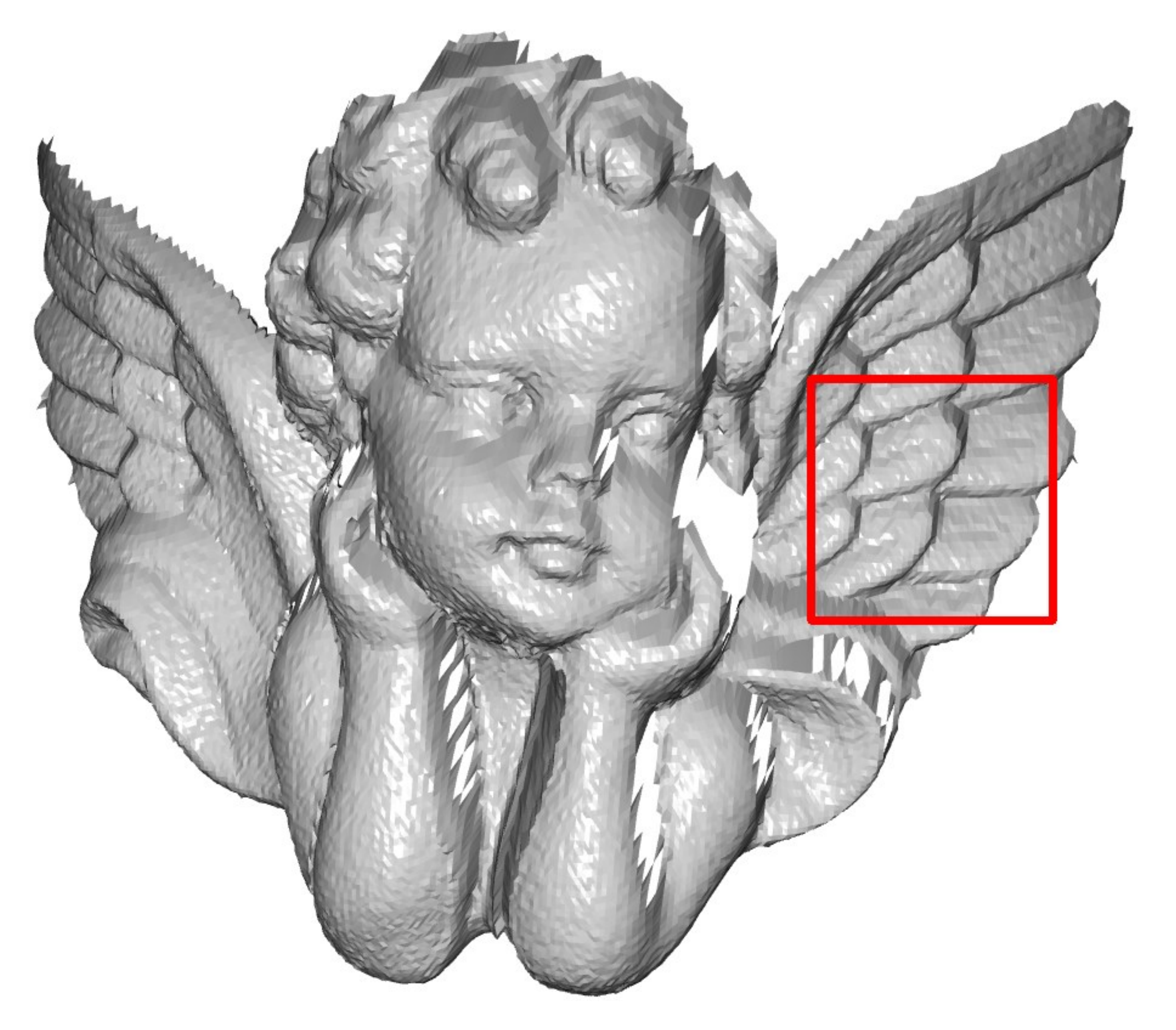}&
		\includegraphics[width=1.72cm]{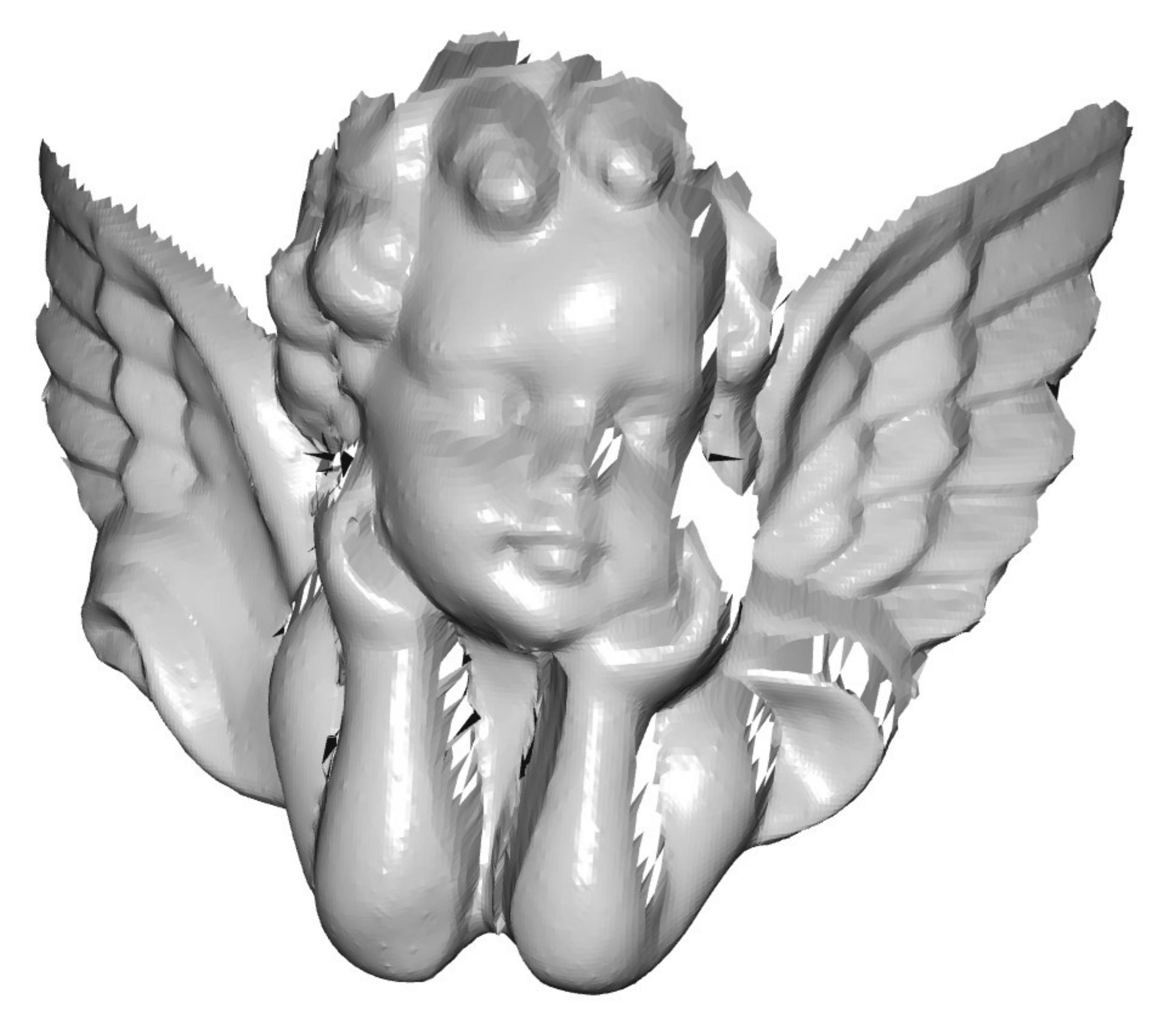}&
		\includegraphics[width=1.72cm]{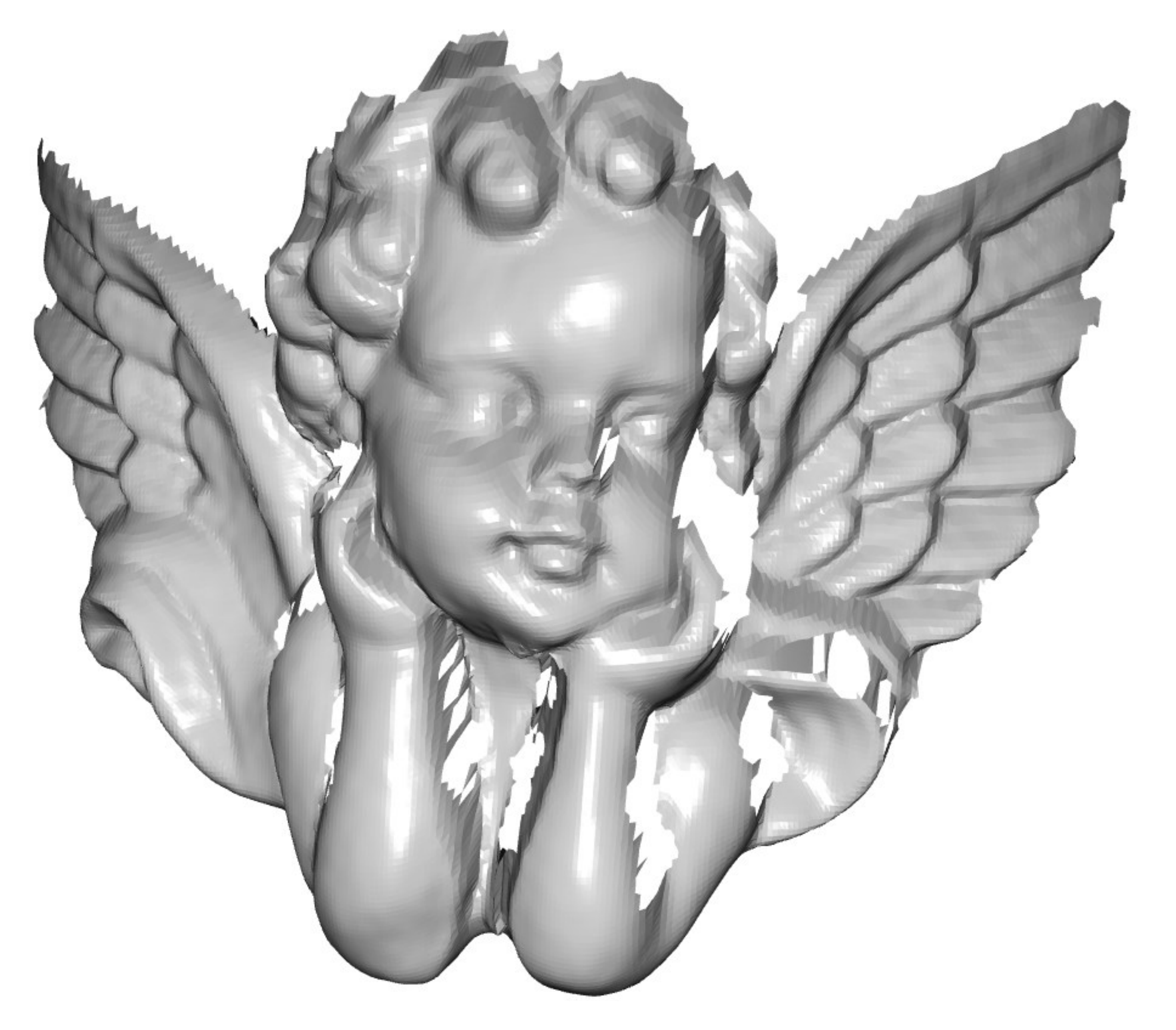}&
		\includegraphics[width=1.72cm]{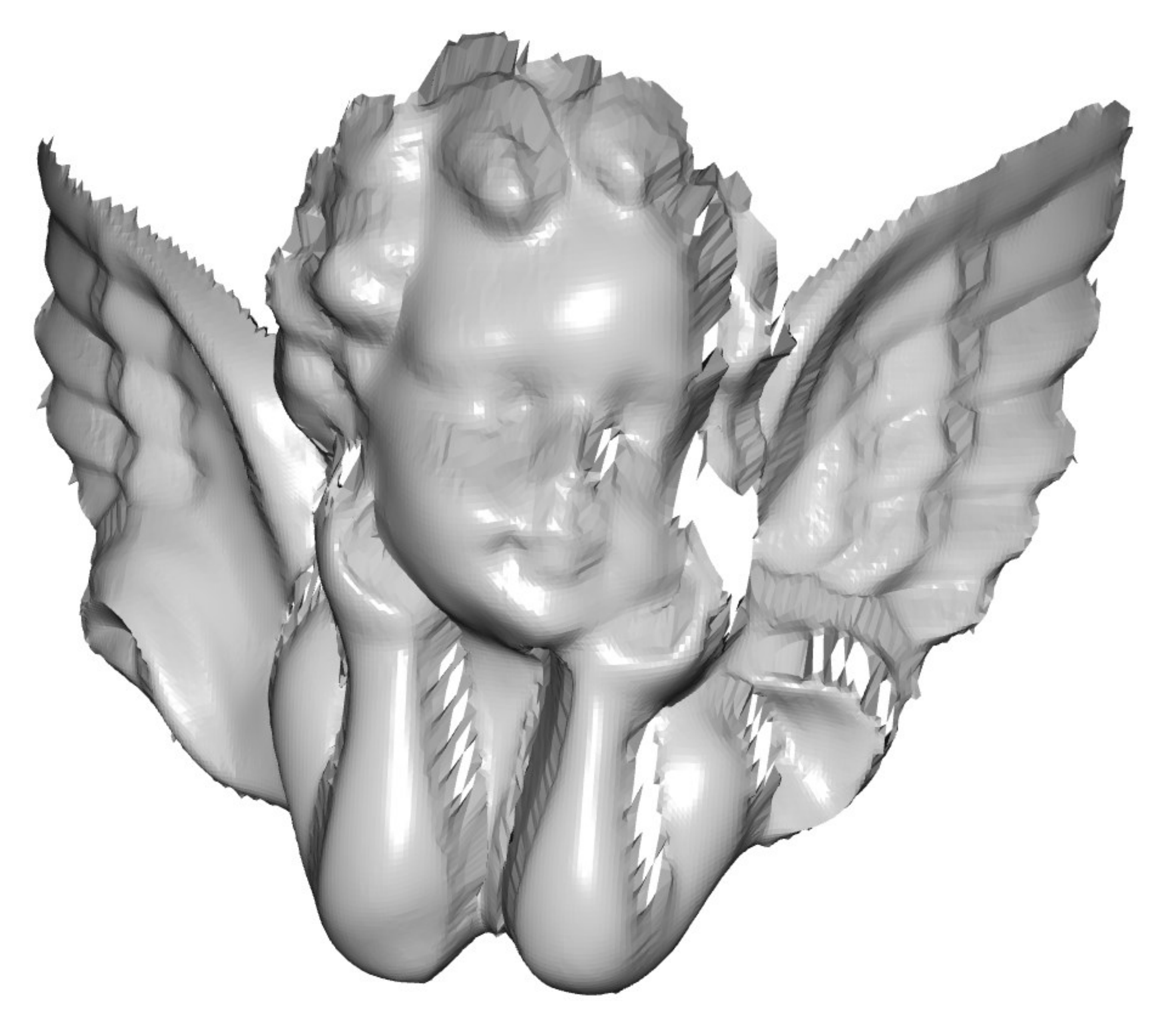}&
		\includegraphics[width=1.72cm]{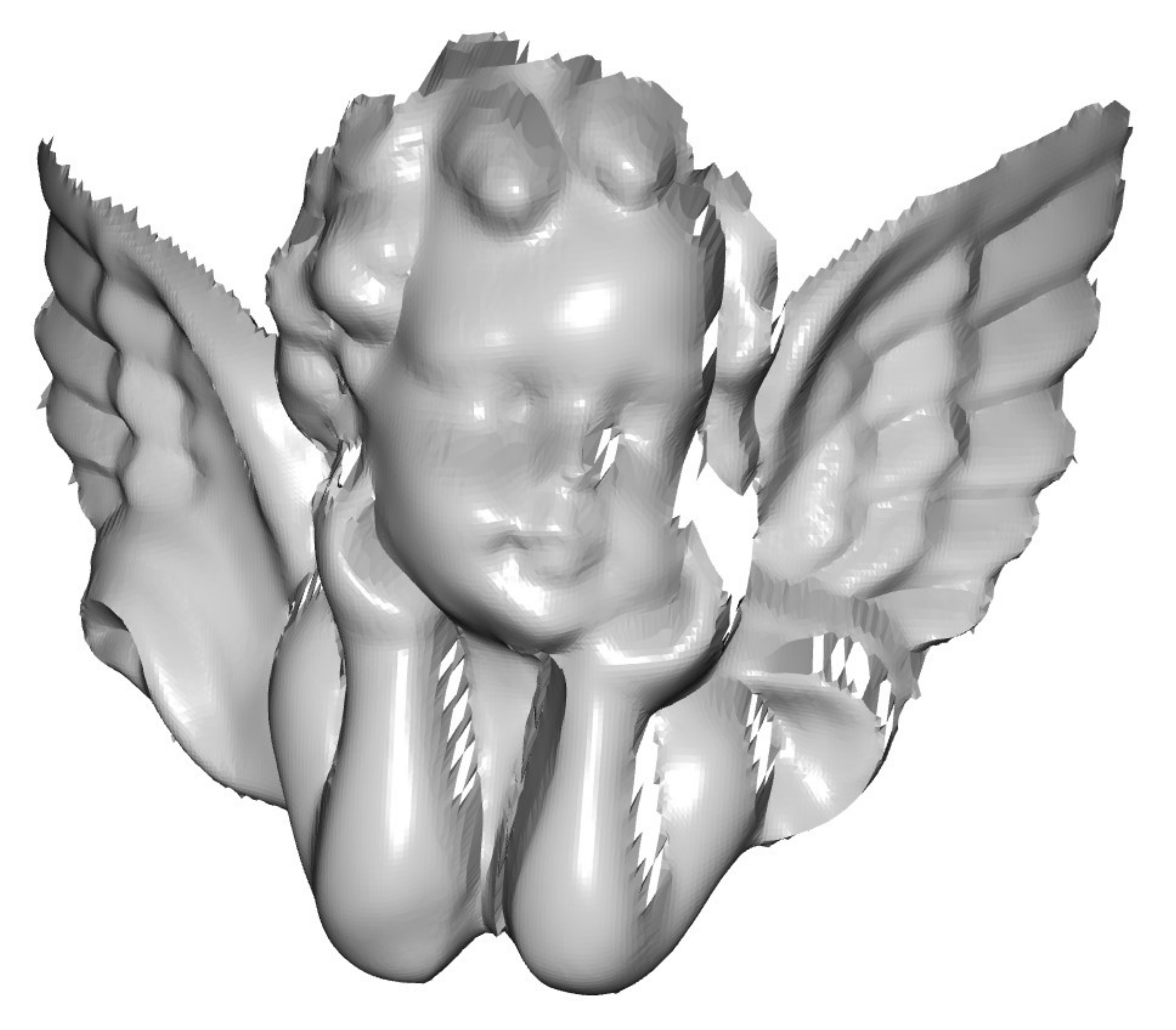}&
		\includegraphics[width=1.72cm]{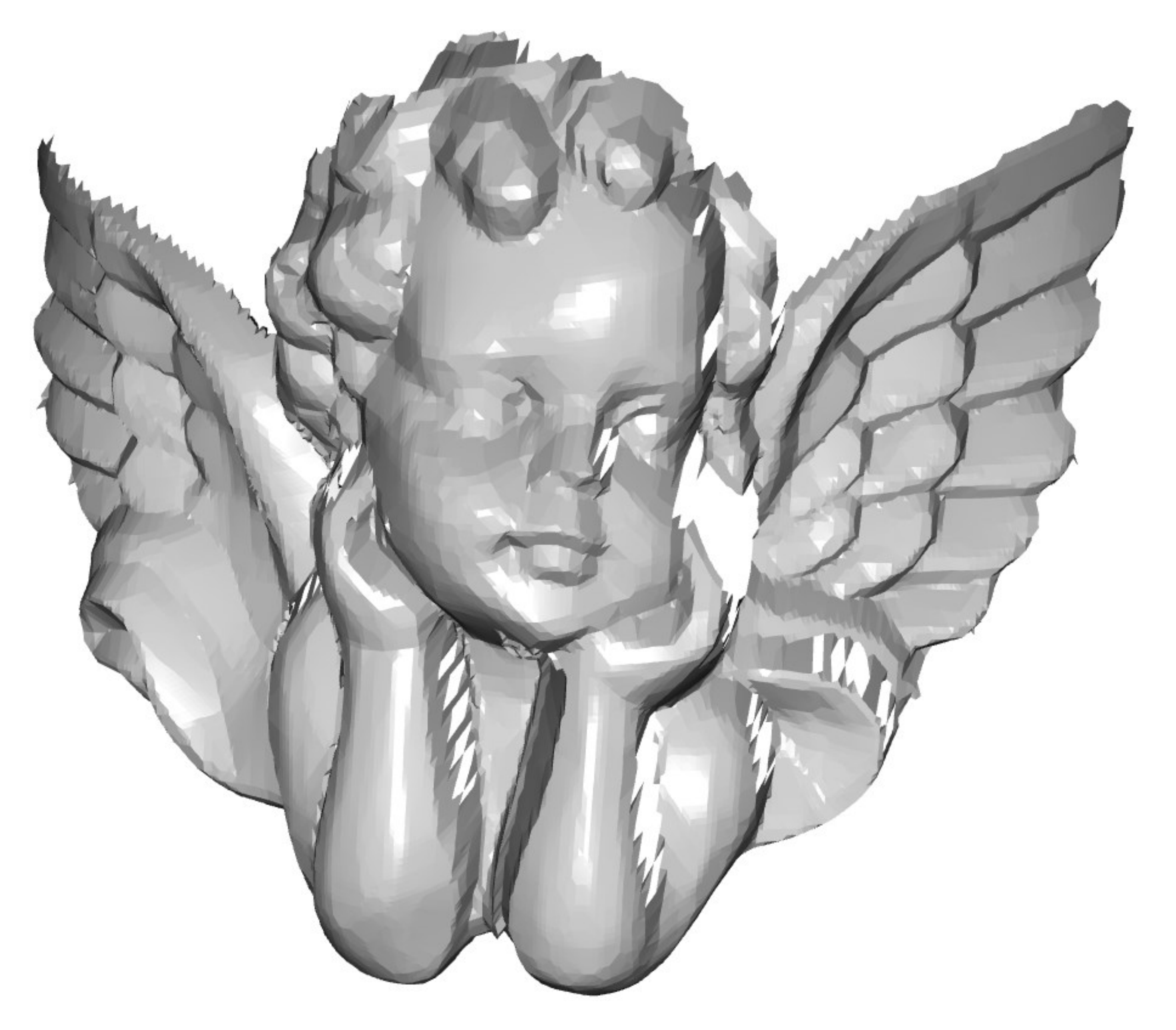}&
		\includegraphics[width=1.72cm]{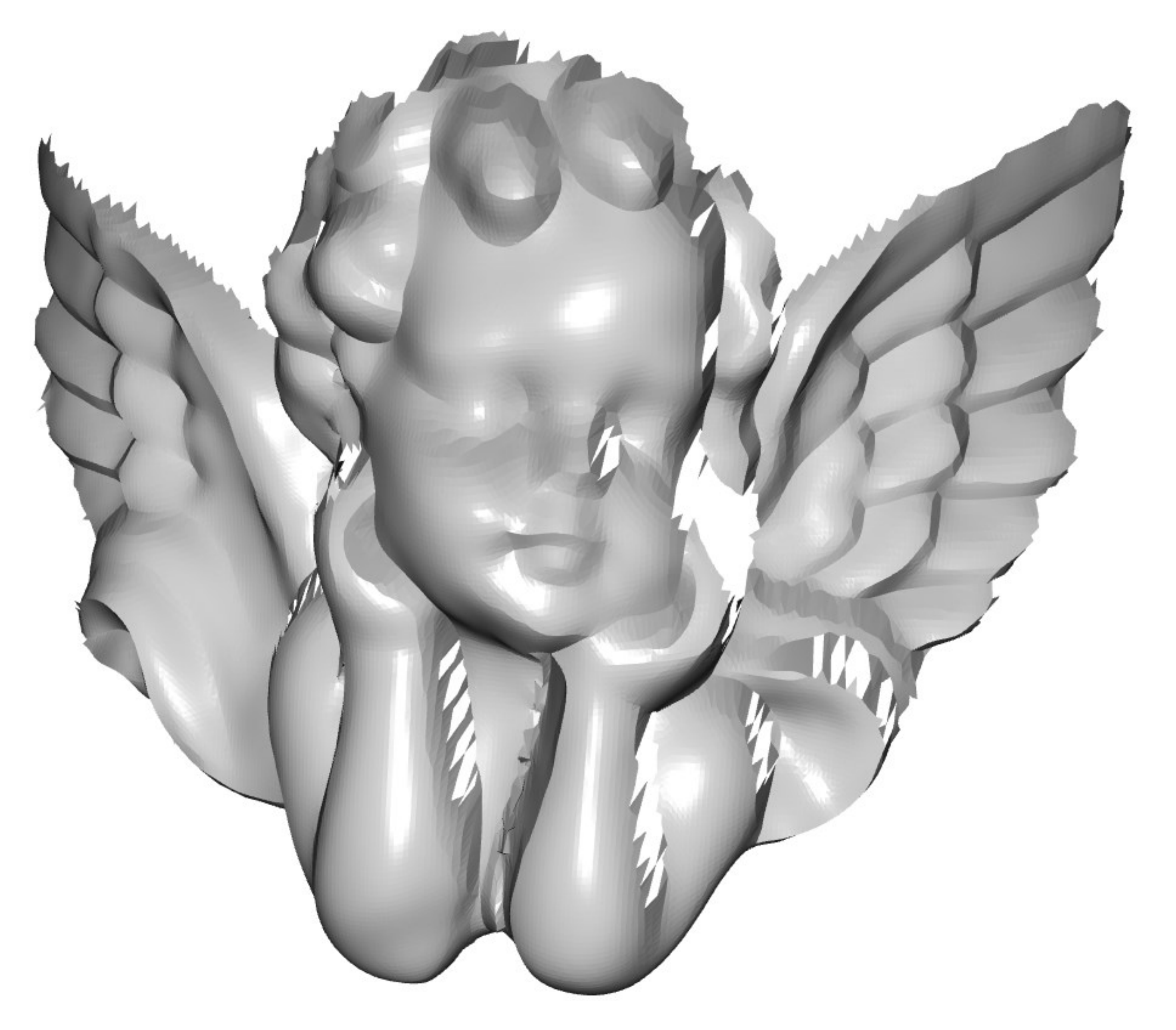}&
		\includegraphics[width=1.72cm]{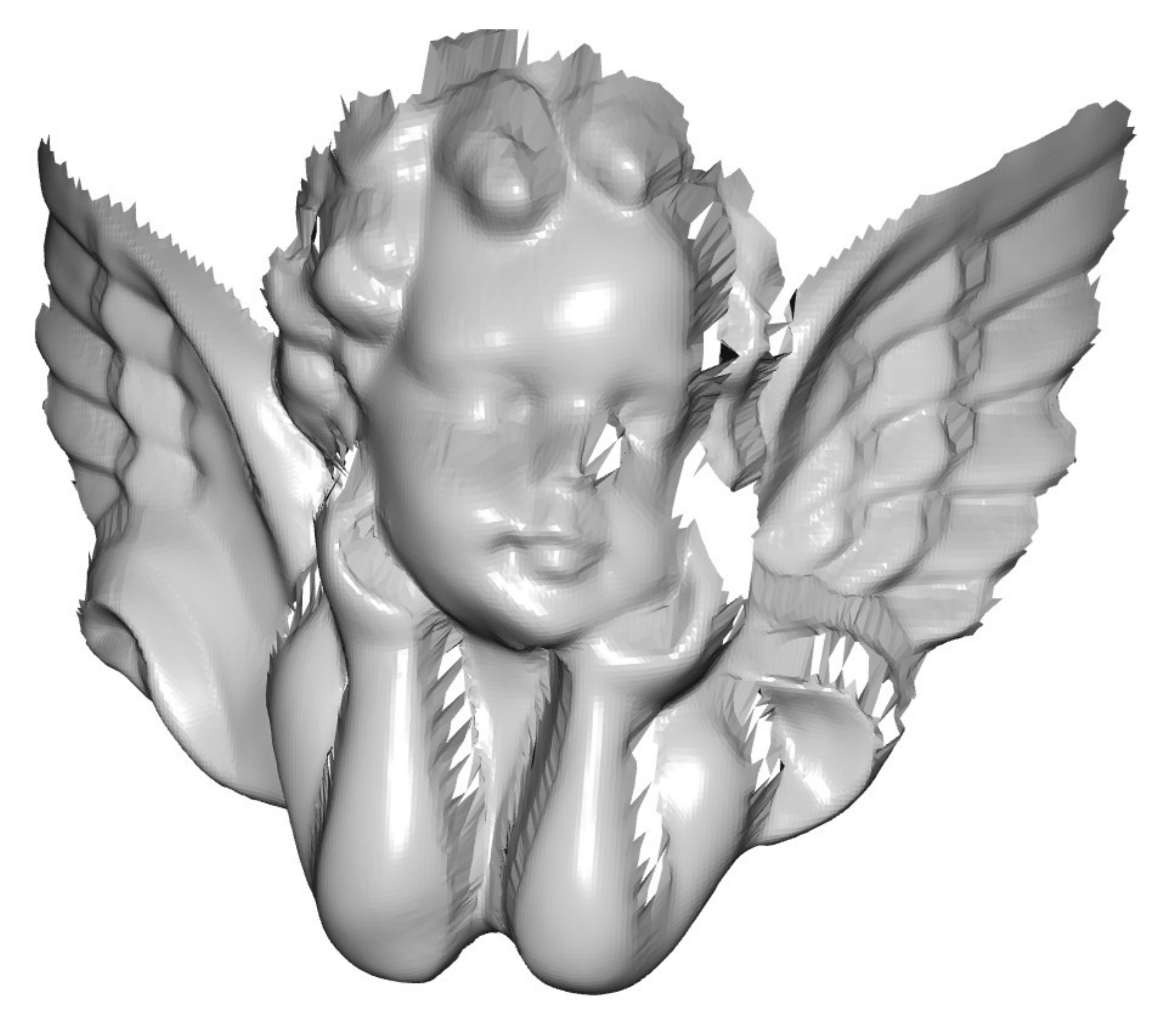}&
		\includegraphics[width=1.72cm]{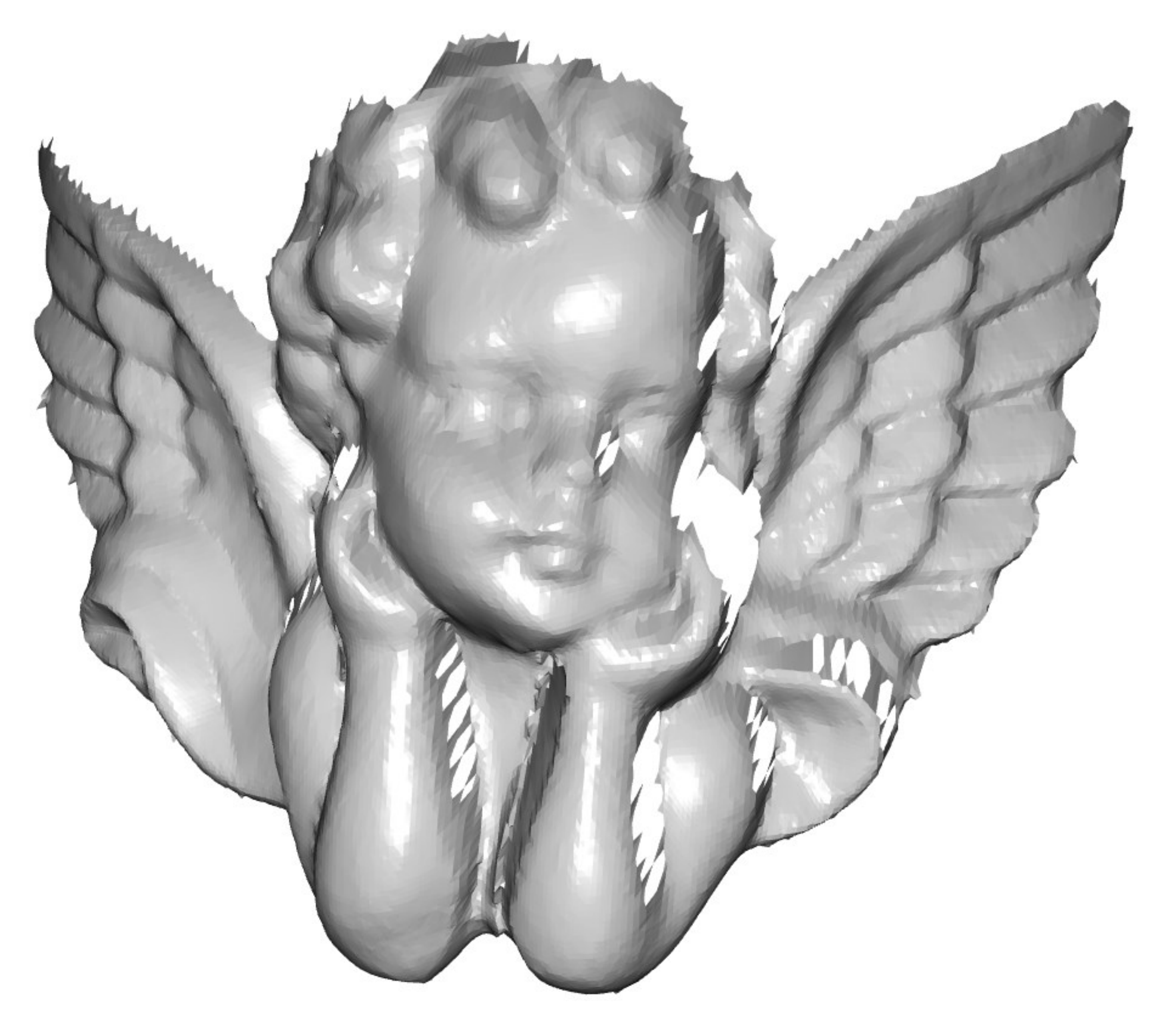}\\
		\includegraphics[width=1.72cm]{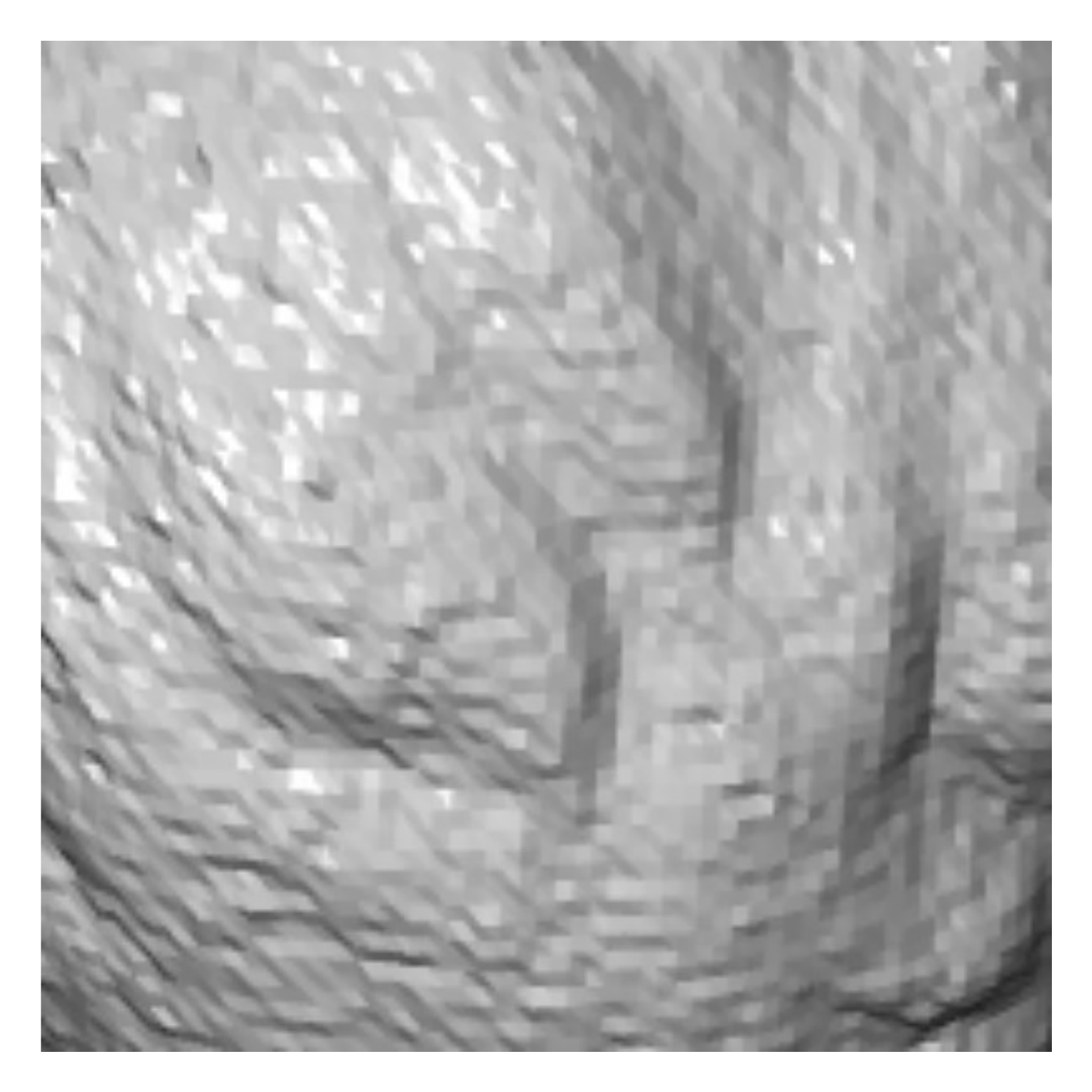}&
		\includegraphics[width=1.72cm]{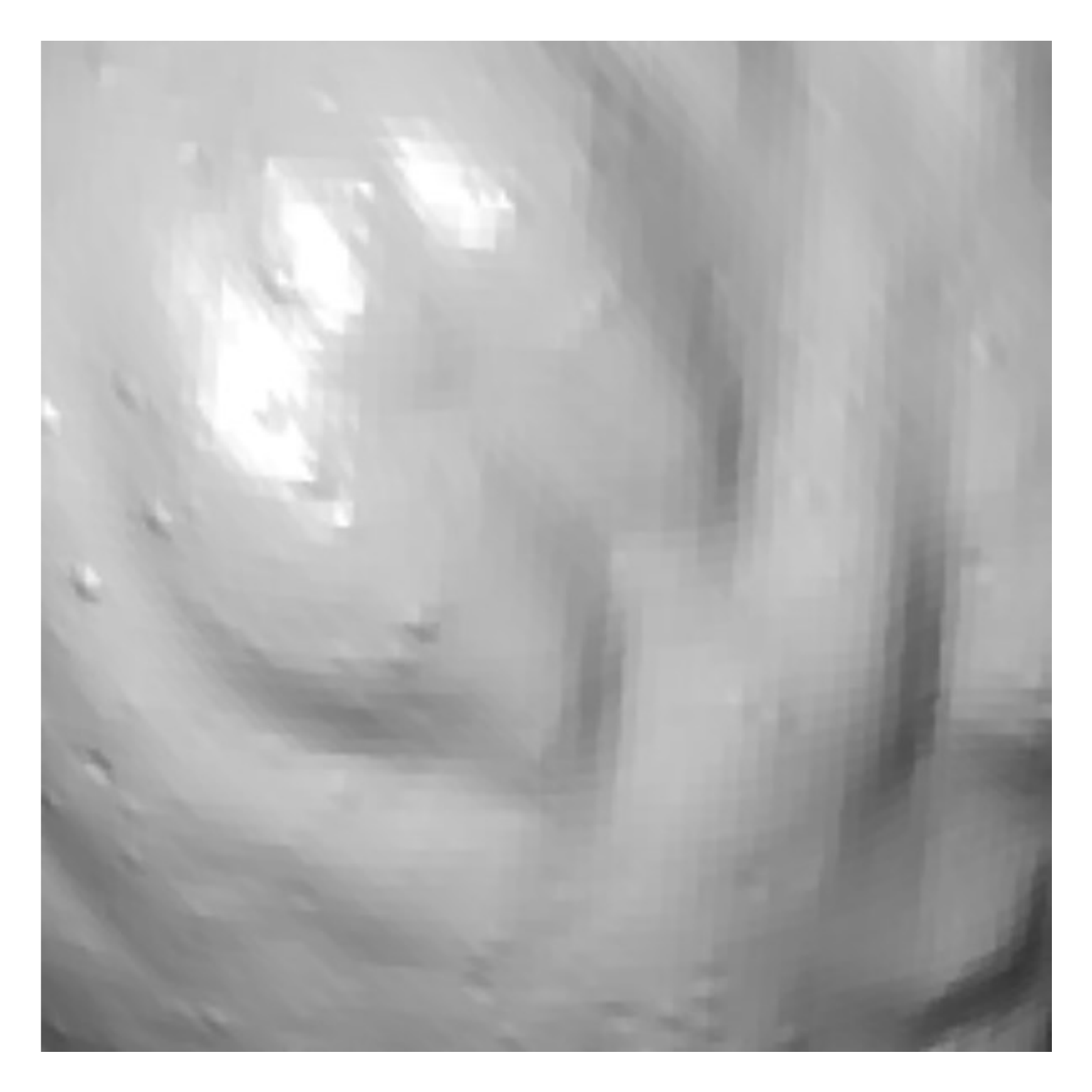}&
		\includegraphics[width=1.72cm]{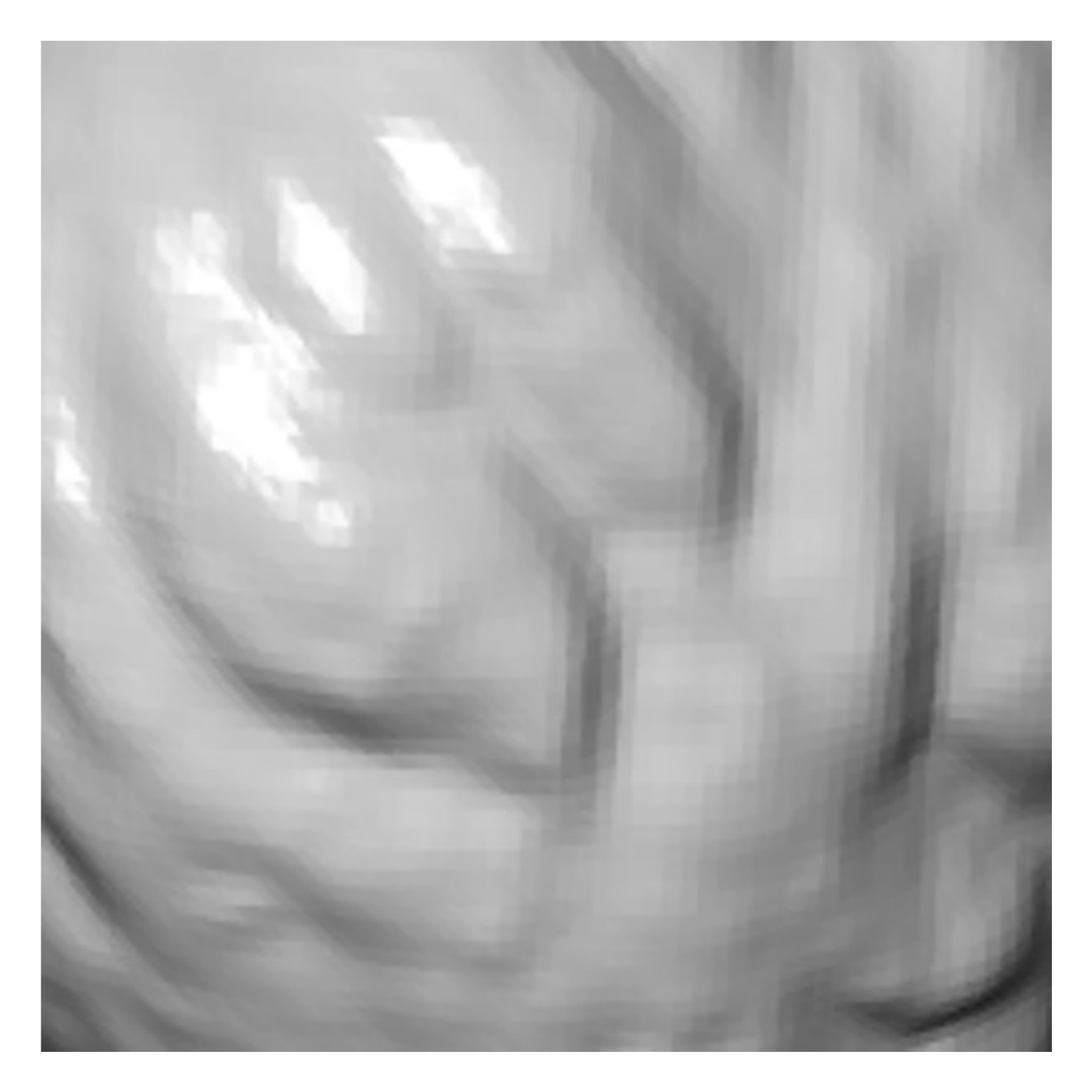}&
		\includegraphics[width=1.72cm]{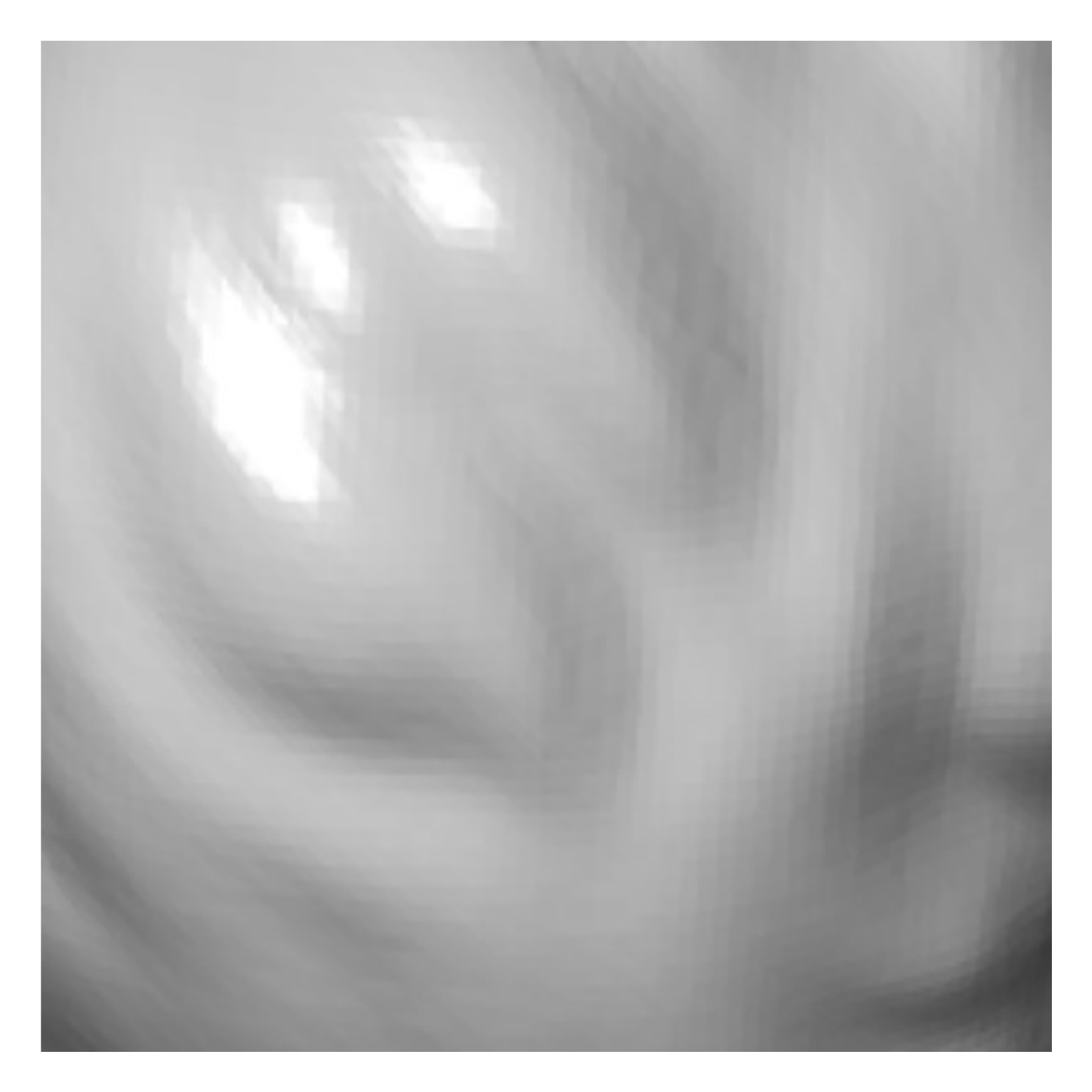}&
		\includegraphics[width=1.72cm]{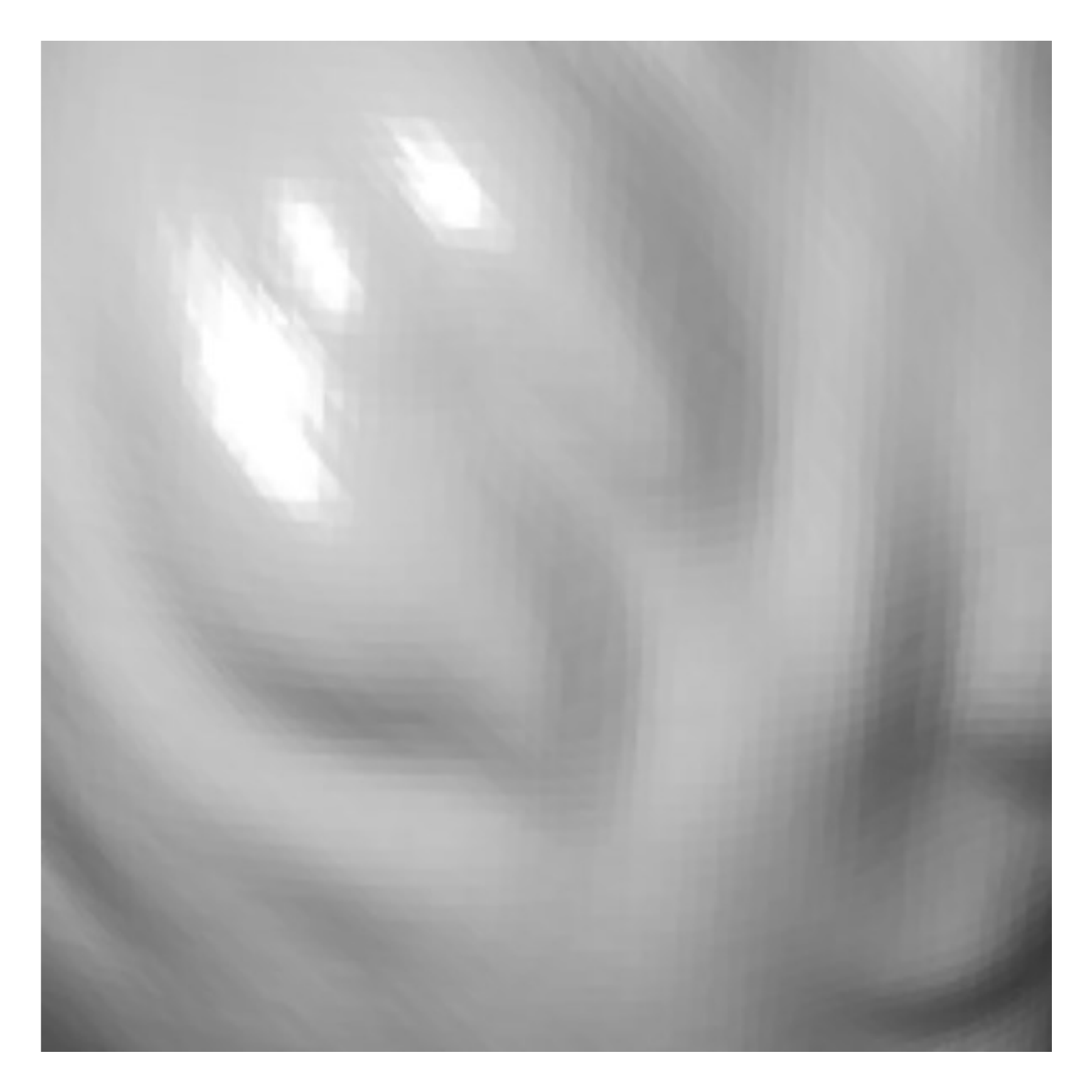}&
		\includegraphics[width=1.72cm]{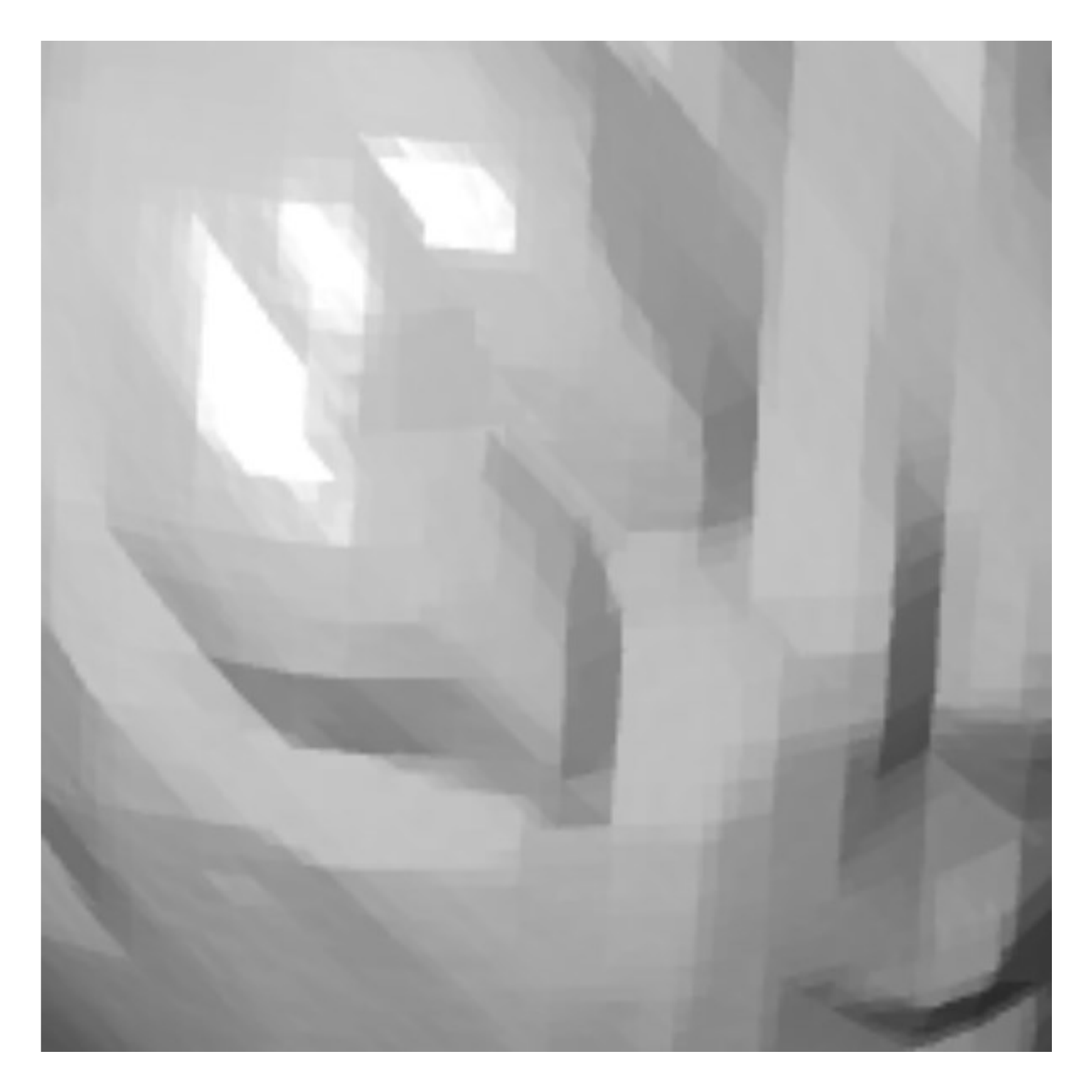}&
		\includegraphics[width=1.72cm]{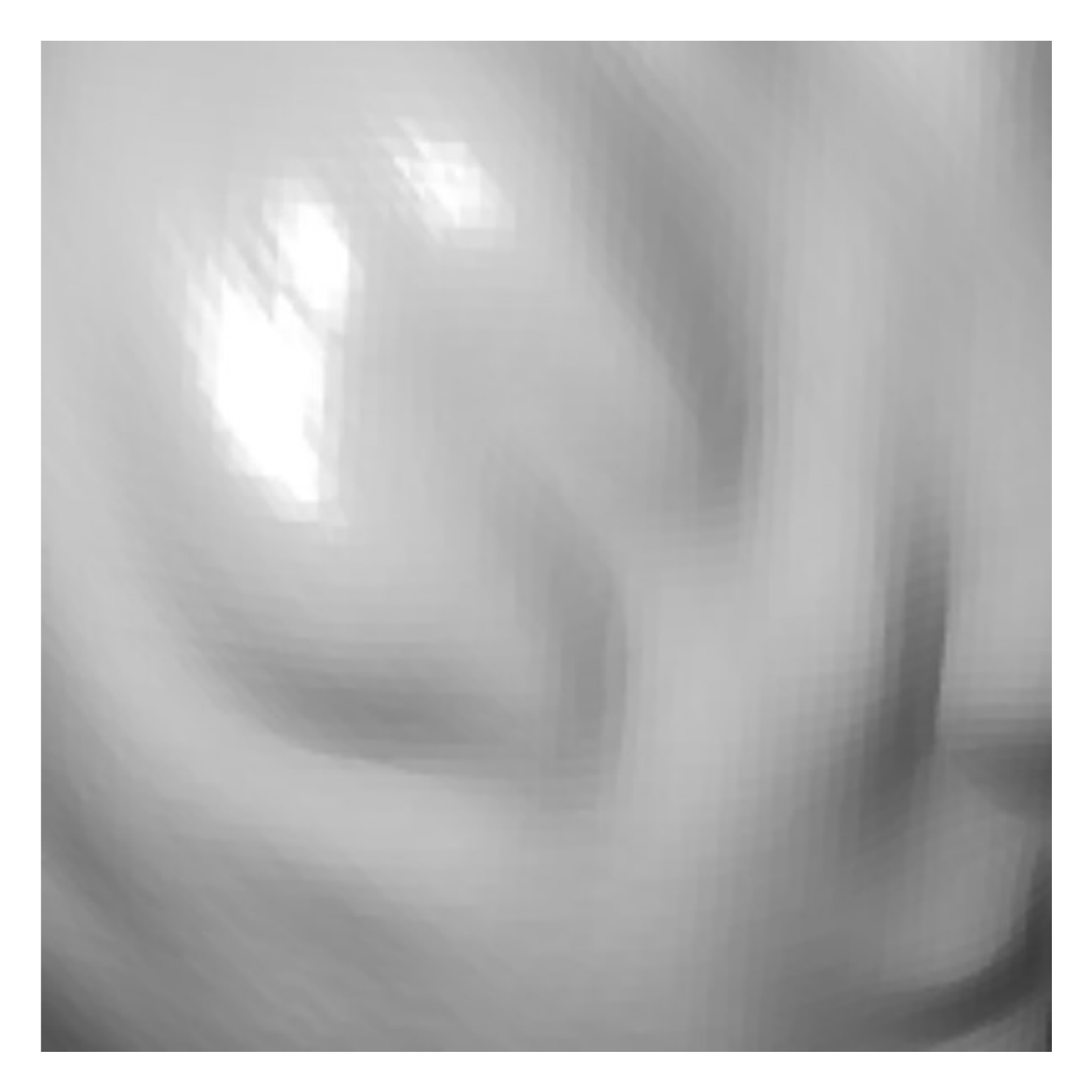}&
		\includegraphics[width=1.72cm]{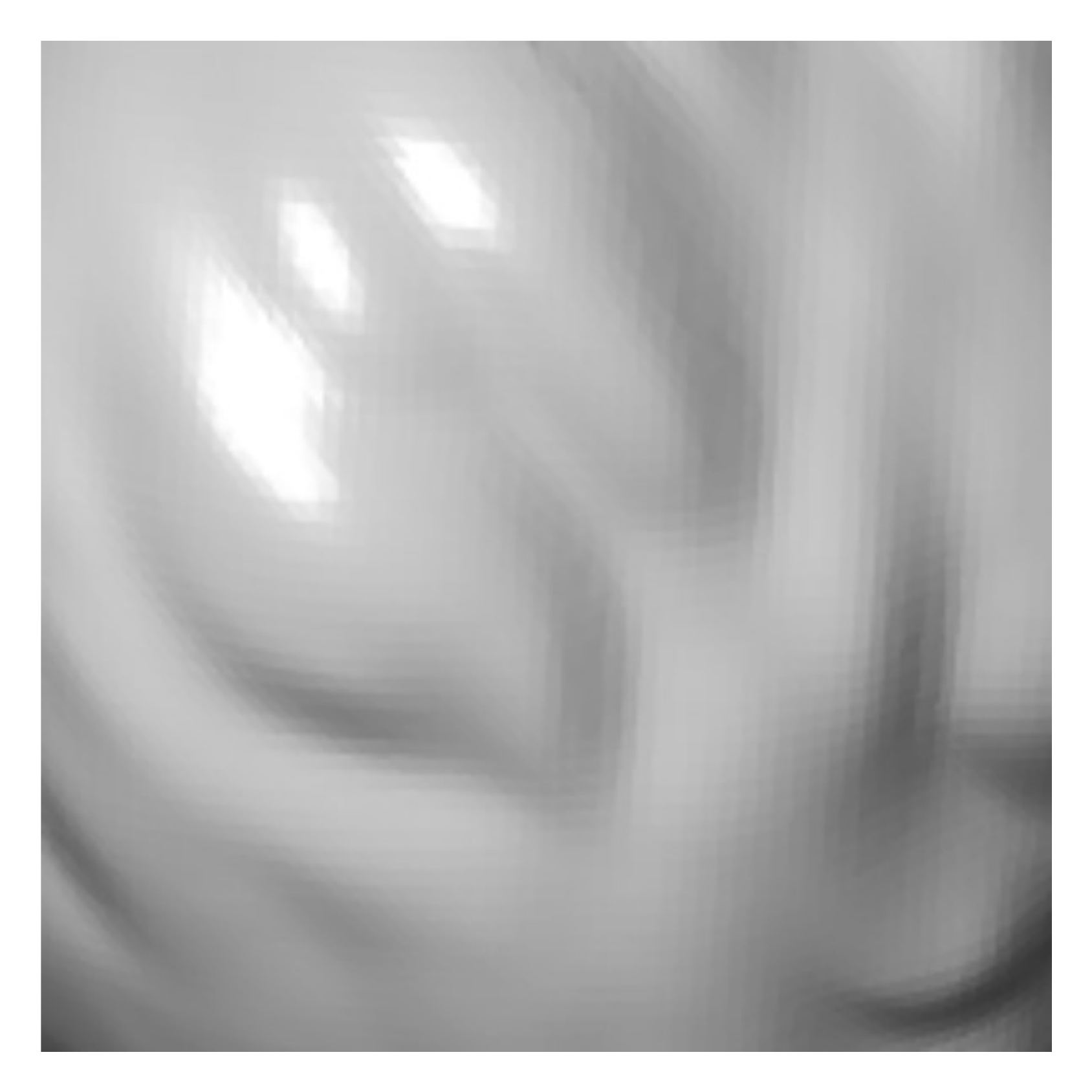}&
		\includegraphics[width=1.72cm]{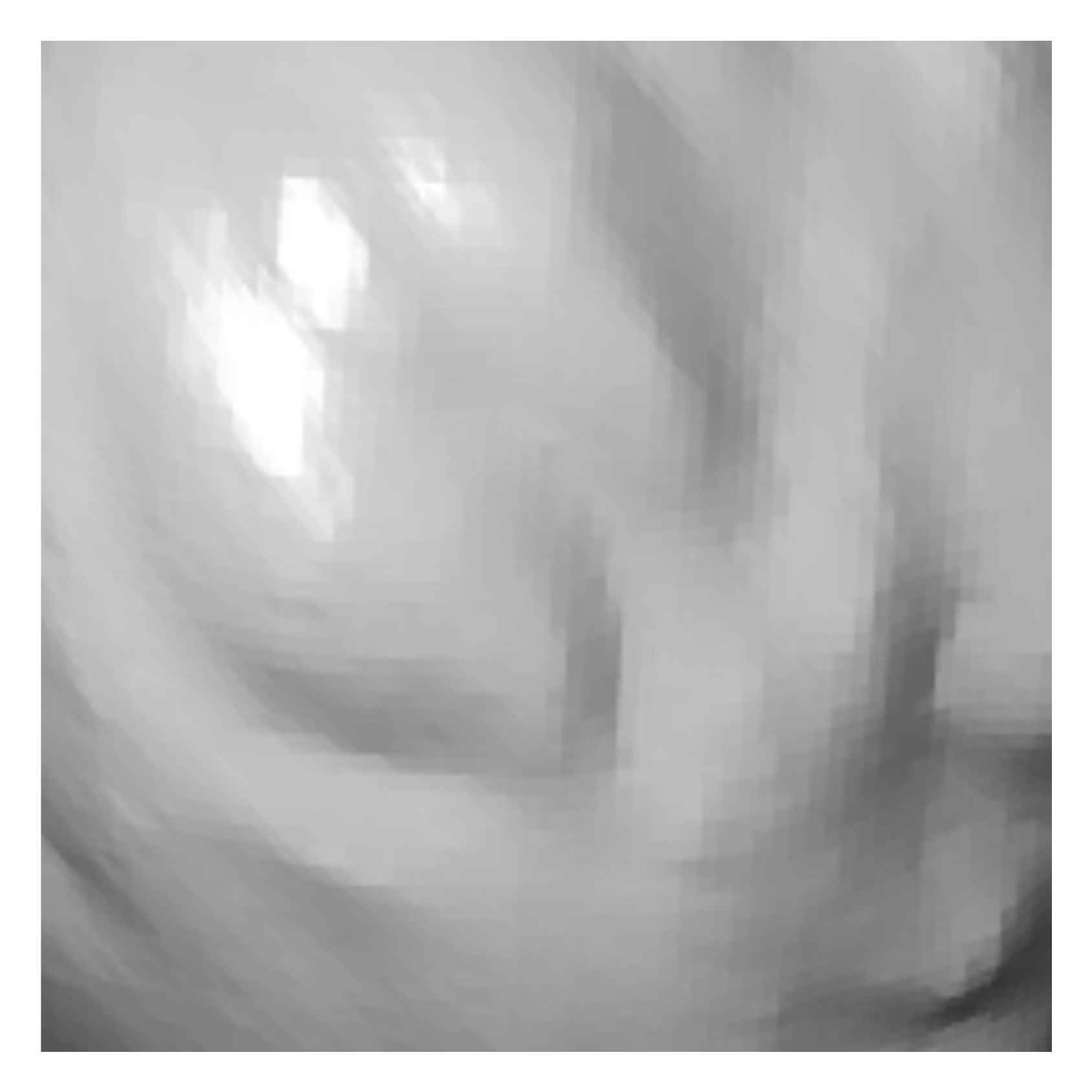}\\
		\includegraphics[width=1.72cm]{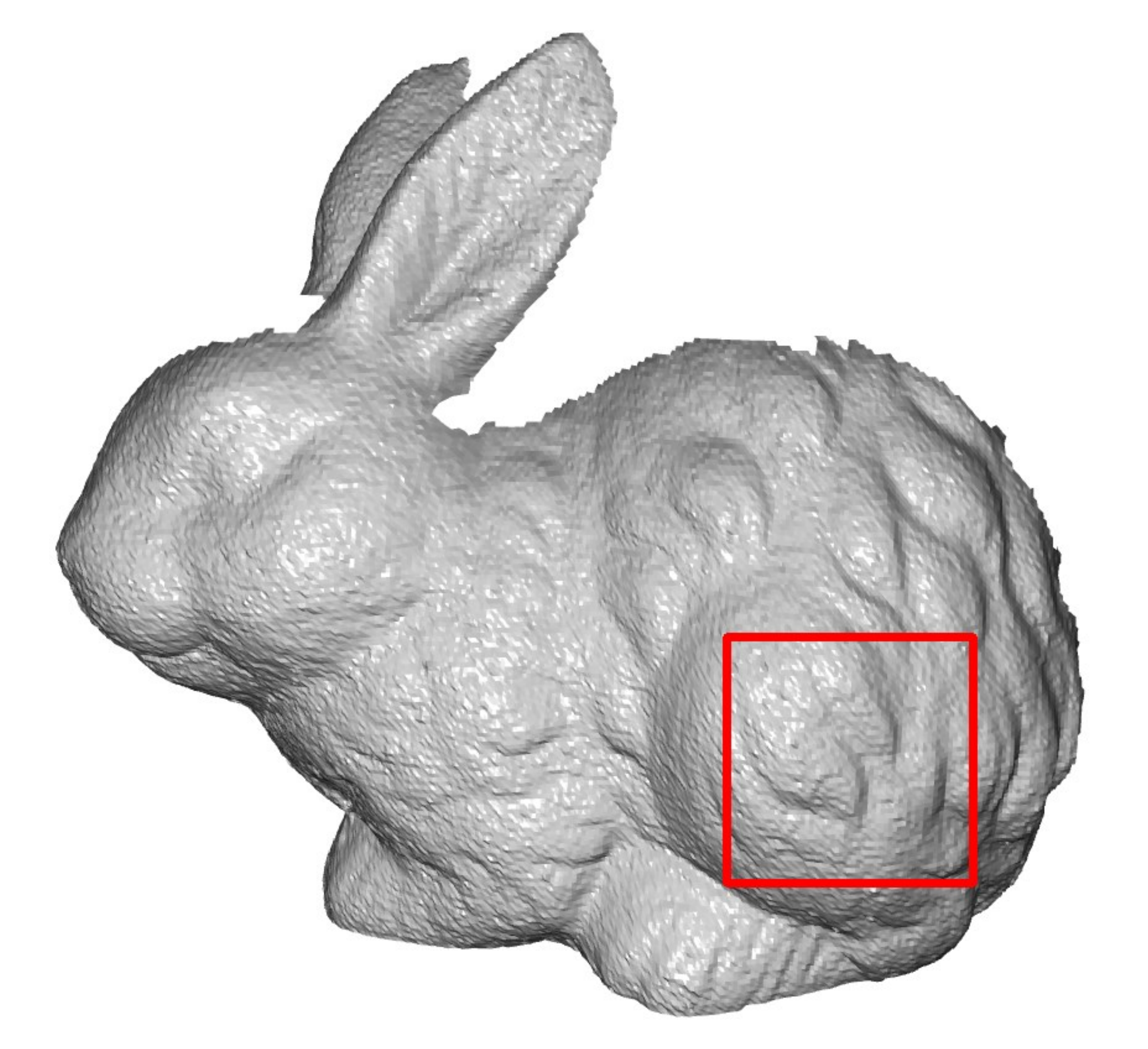}&
		\includegraphics[width=1.72cm]{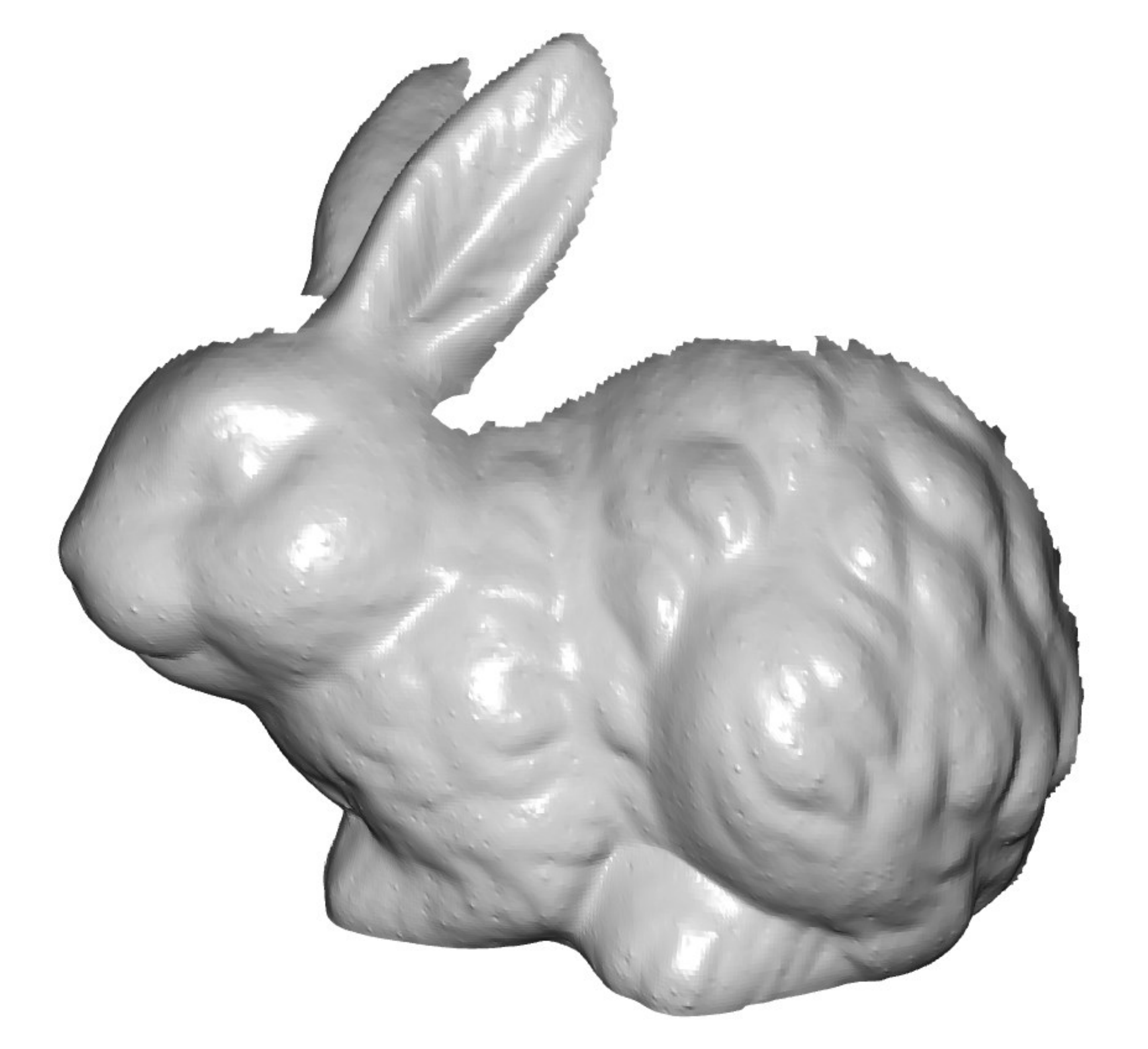}&
		\includegraphics[width=1.72cm]{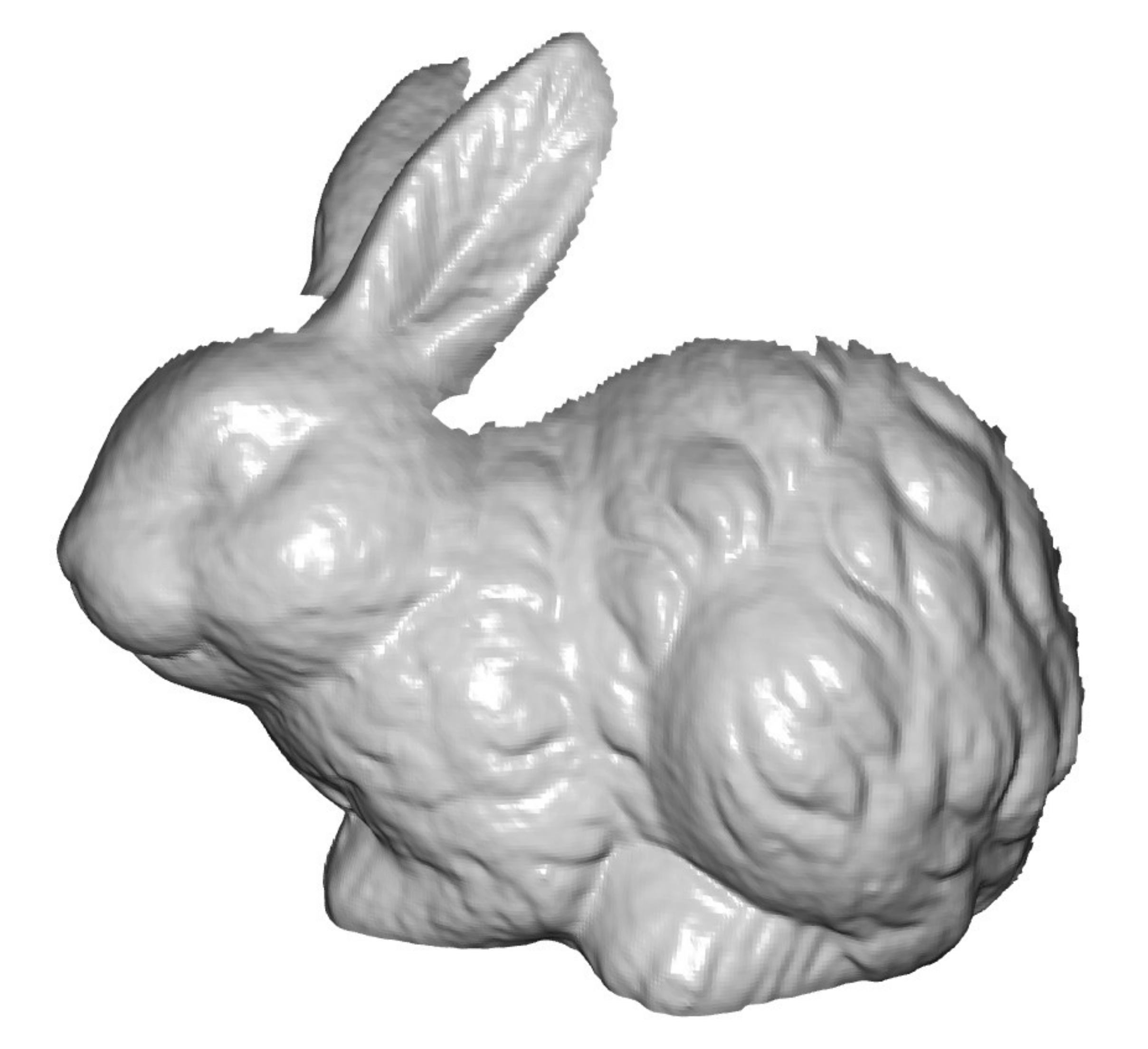}&
		\includegraphics[width=1.72cm]{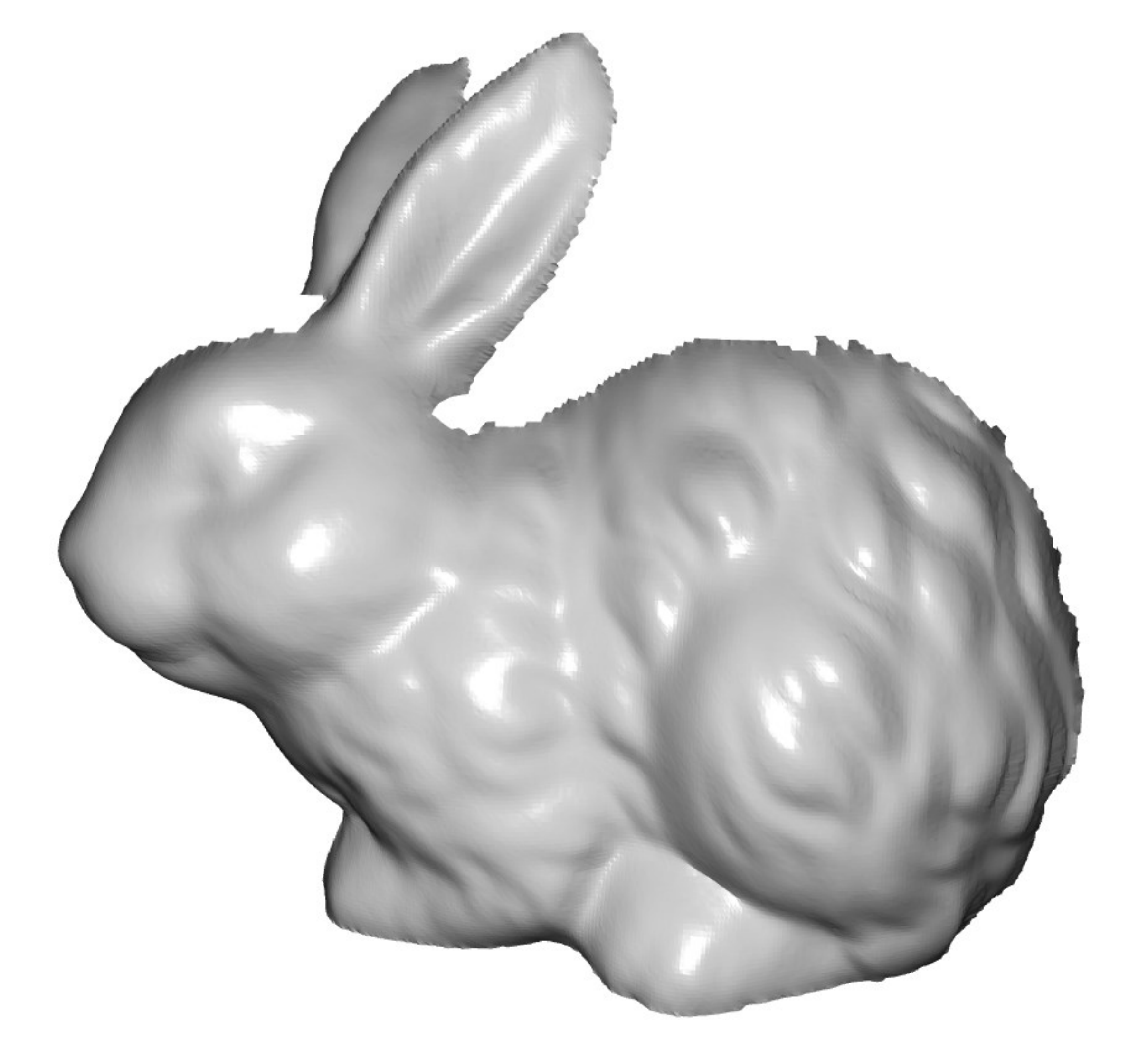}&
		\includegraphics[width=1.72cm]{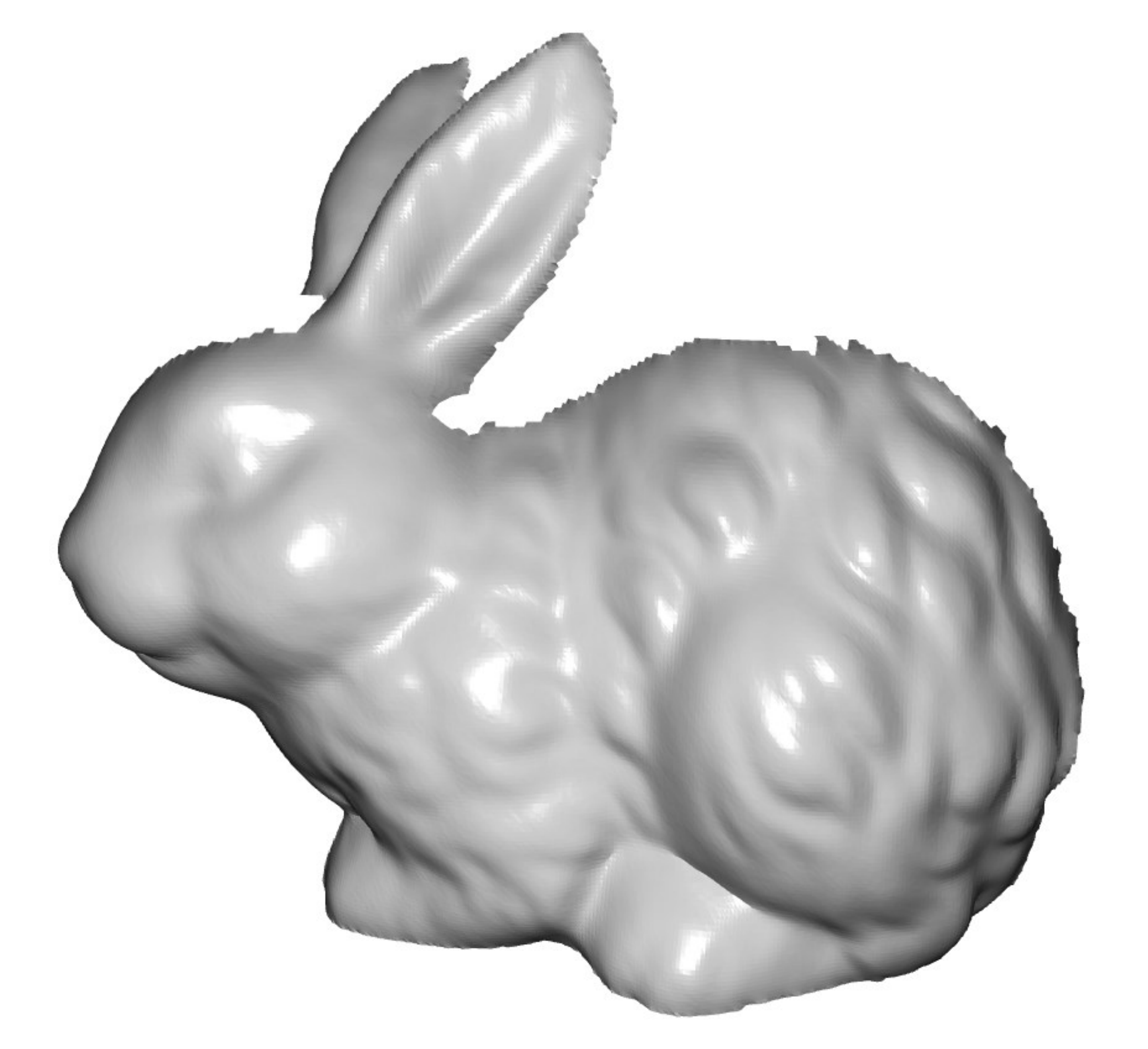}&
		\includegraphics[width=1.72cm]{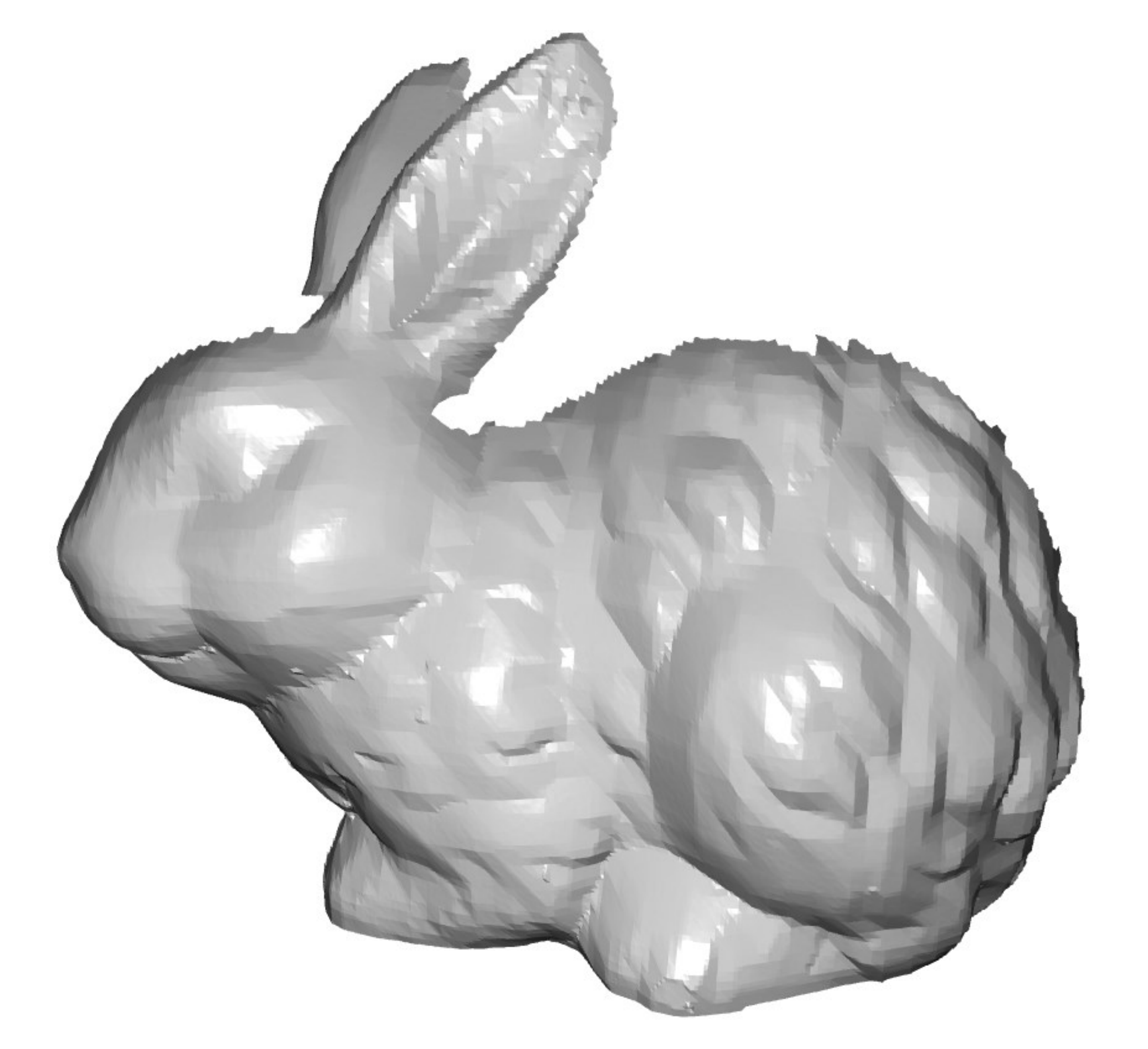}&
		\includegraphics[width=1.72cm]{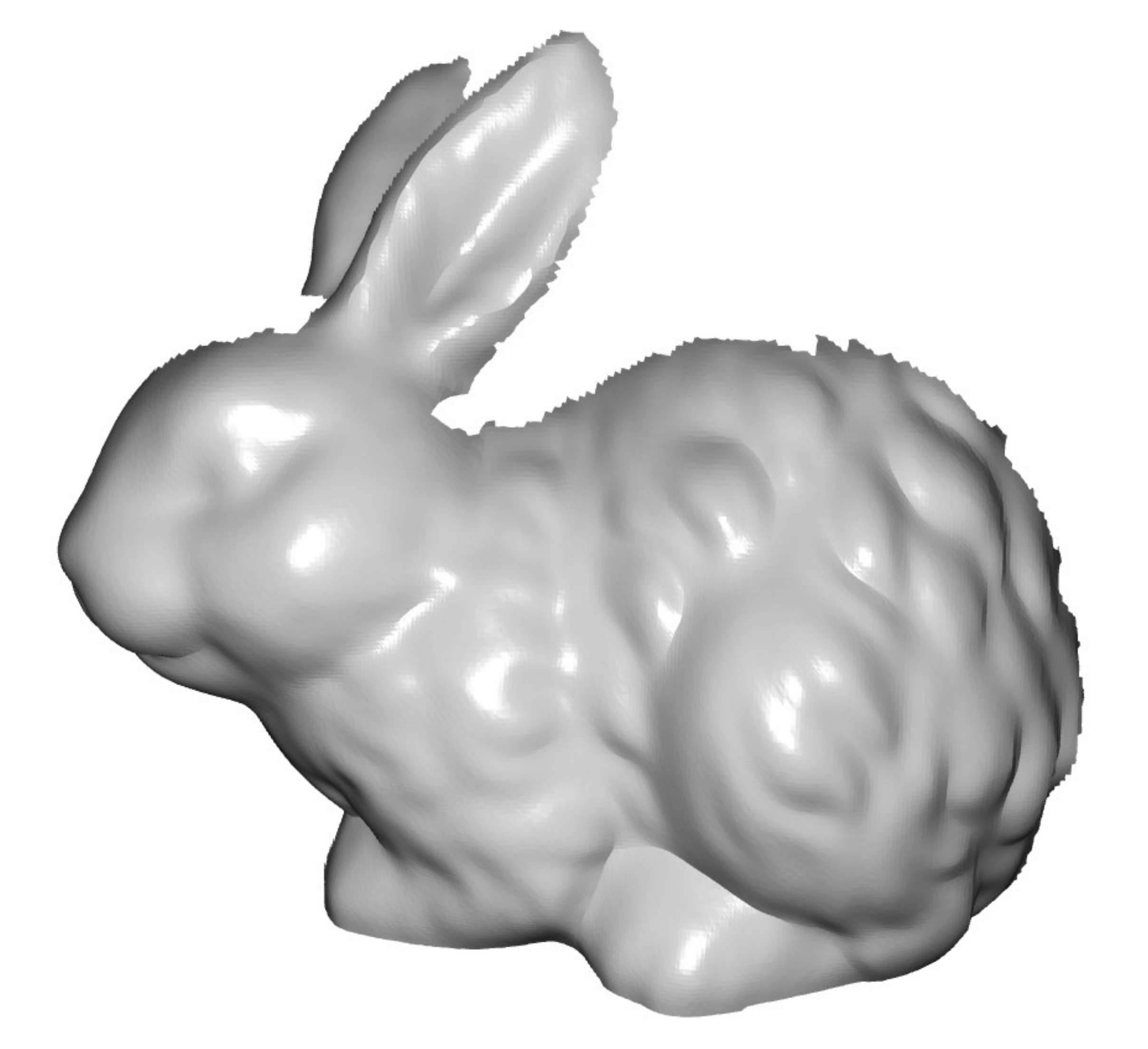}&
		\includegraphics[width=1.72cm]{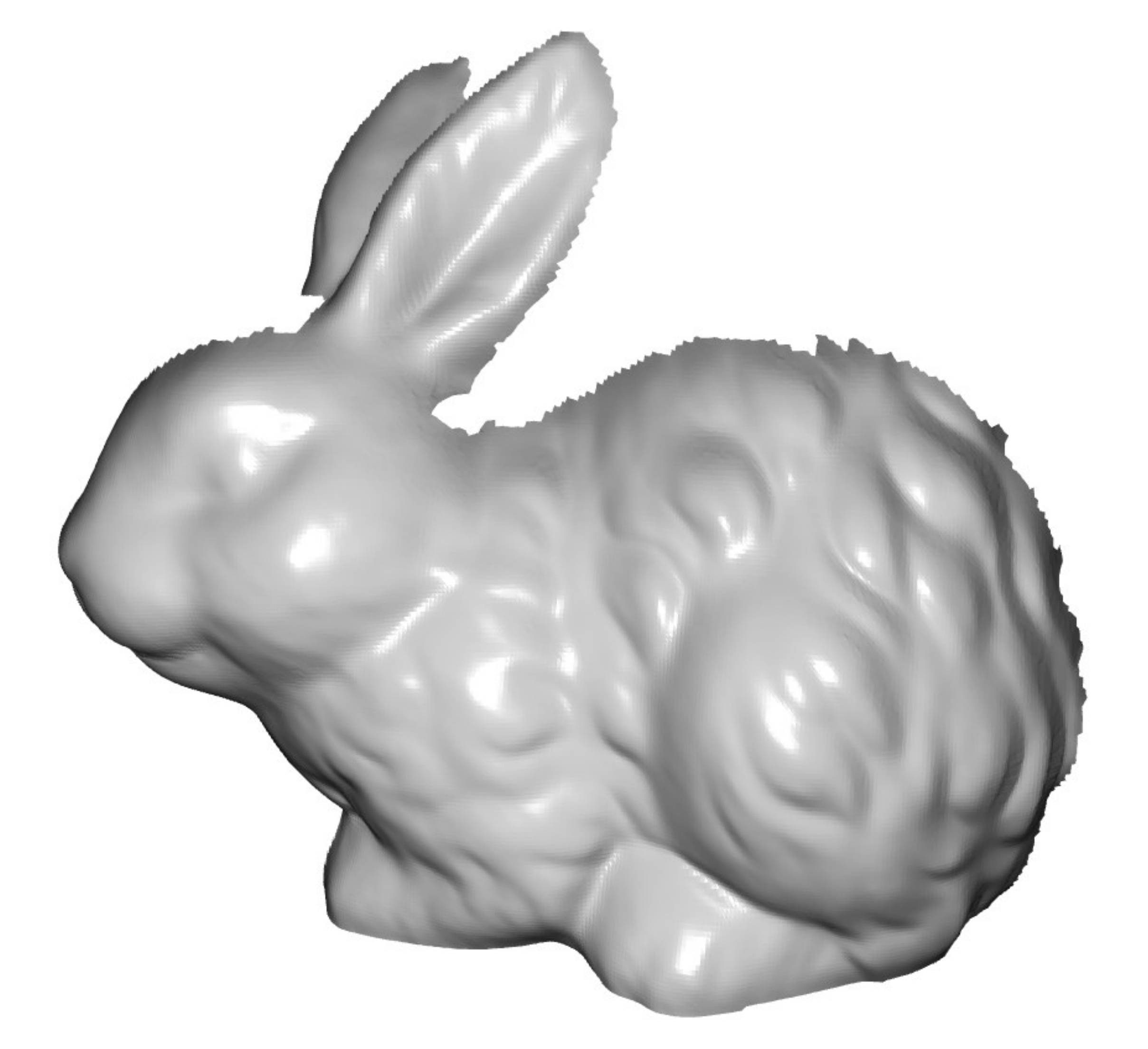}&
		\includegraphics[width=1.72cm]{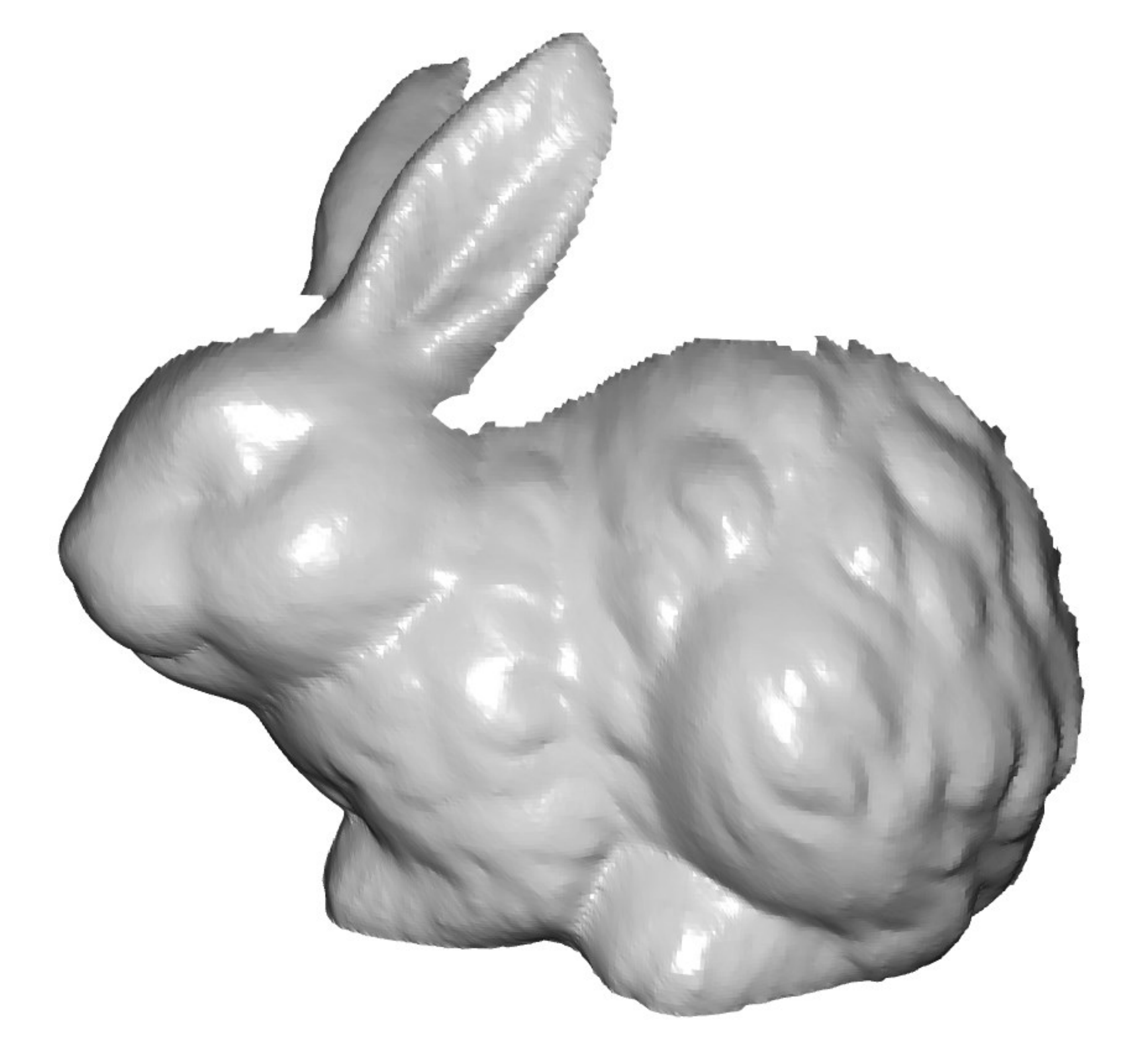}\\
		\textrm{(a)~Noisy}&\textrm{(b)~\cite{fleishman2003bilateral}}&\textrm{(c)~\cite{jones2003non}}&\textrm{(d)~\cite{sun2007fast}}&\textrm{(e)~\cite{zheng2011bilateral}}&\textrm{(f)~\cite{he2013mesh}}&\textrm{(g)~\cite{zhang2015guided}}&\textrm{(h)~\cite{li2018non}}&\textrm{(i)~Ours}	
		\end{array}
		$}
	\caption{\label{fig:com_scan_rabbit_and_angel_100} Comparison between our algorithm and the selected state-of-the-art algorithms on the model of the real scan (angel and rabbit). The parameters are selected as the second row of Table~\ref{table2}}
\end{figure*}

\subsubsection{Result analysis}
In the experiments, we have selected four representative models with rich features: armadillo, bunny, Max Planck, and vaselion. In the armadillo model of Fig. ~\ref{fig:com_denosing and_features} row 1, this represents a type of mesh with more vertices and faces, features, and a complete shape. The face normal filtering methods are mostly two-step filtering algorithms, such as~\cite{sun2007fast,zheng2011bilateral,zhang2015guided}, those algorithms first filter the faces normal, and then updates the vertices. Such methods rely heavily on the threshold of the first step, and is suitable for models with rich planar features. For models with rich geometric features, it is easy to over smoothing. We reduce the Gaussian curvature energy of the overall model by minimizing the Gaussian curvature absolute value of each vertex. Our model's Gaussian curvature energy value is closest to the ground truth model. In the comparison method, our MSAE value is the smallest, which also shows that our algorithm preserves the features the best in the smoothing process (see Table~\ref{table2}). As shown in Fig.~\ref{fig:com_denosing and_features}, our results are the best in terms of both overall shape and local detail.

\begin{table*}[!]
	\centering
	\caption{Quantitative comparison of noise removal and feature preservation performances with other state-of-the-art methods. 
	}\label{table2}
	\resizebox{1.0\textwidth}{!}{ 
		\begin{tabular}{cccccccc}\hline
			Models & Methods &\tabincell{c}{Parameters}&\tabincell{c}{MSAE} & \tabincell{c}{GCE}& \tabincell{c}{KLD}& \tabincell{c}{$\mathscr{D}$$_{mean}$}& \tabincell{c}{$\mathscr{D}$$_{max}$}\\
			\hline
			\tabincell{c}{Armadillo\\$\sigma_{n}=0.3{e}_l$\\(Figure~\ref{fig:com_denosing and_features} row 1)\\$\left|V\right|$: 43243\\ $\left|F\right|$: 86482} &
			\tabincell{l}{ 
				Noisy \\
				Ground truth \\
				~\cite{fleishman2003bilateral} \\
				~\cite{jones2003non} \\
				~\cite{sun2007fast}\\
				~\cite{zheng2011bilateral}\\
				~\cite{he2013mesh} \\
				~\cite{zhang2015guided}\\
				~\cite{li2018non}\\
				Ours\\
			}
			& \tabincell{l}{$(-)$\\$(-)$\\$ (10) $\\$(1,1) $\\$ (0.5,20,10) $\\$ (1,3.5\times10^{-1},20,1.0\times10^{-2},10) $\\$ (1.4\times10^{-2},1.0\times10^{-3},1.0\times10^{3},0.5,2.9\times10^{-3},3.0\times10^{-6}) $\\$ (2,1,0.35,20,1,1.0\times10^{-3},10) $\\$ (3.9\times10^{-1},20,10) $\\($40$)}
			& \tabincell{l}{$22.07$\\$0.00$\\$13.43$\\$12.36$\\$15.06$\\$14.89$\\$13.44$\\$14.88$\\$12.15$\\$\pmb{9.72}$}
			& \tabincell{l}{$13282.30$\\$1958.15$\\$4524.08$\\$4854.35$\\$3765.05$\\$4012.49$\\$2385.80$\\$2167.83$\\$3376.38$\\$\pmb{1665.57}$}
			& \tabincell{l}{$0.79$\\$0.00$\\$0.23$\\$0.13$\\$0.35$\\$0.45$\\$0.52$\\$0.75$\\$0.26$\\$\pmb{0.08}$}
			&\tabincell{l}{$1.62\times10^{-3}$\\$0.00$\\$1.67\times10^{-3}$\\$1.54\times10^{-3}$\\$1.85\times10^{-3}$\\$1.80\times10^{-3}$\\$1.67\times10^{-3}$\\$1.78\times10^{-3}$\\$1.58\times10^{-3}$\\$\pmb{1.27\times10^{-3}}$}
			&\tabincell{l}{$8.59\times10^{-3}$\\$0.00$\\$8.75\times10^{-3}$\\$8.28\times10^{-3}$\\$7.60\times10^{-3}$\\$8.46\times10^{-3}$\\$8.21\times10^{-3}$\\$8.15\times10^{-3}$\\$\pmb{7.61\times10^{-3}}$\\$7.71\times10^{-3}$}
			\\\hline
			\tabincell{l}{Bunny\\$\sigma_{n}=0.5{e}_l$\\(Figure~\ref{fig:com_denosing and_features} row 2)\\$\left|V\right|$: 3301\\ $\left|F\right|$: 6598} &
			\tabincell{l}{ 
				Noisy \\
				Ground truth \\
				~\cite{fleishman2003bilateral} \\
				~\cite{jones2003non} \\
				~\cite{sun2007fast}\\
				~\cite{zheng2011bilateral}\\
				~\cite{he2013mesh} \\
				~\cite{zhang2015guided}\\
				~\cite{li2018non}\\
				Ours\\
			}
			& \tabincell{l}{$(-)$\\$(-)$\\$ (30) $\\$(1,1) $\\$ (0.5,20,30) $\\$ (1,3.5\times10^{-1},20,1.0\times10^{-2},30) $\\$ (1.4\times10^{-2},1.0\times10^{-3},1.0\times10^{3},0.5,5.9\times10^{-3},3.2\times10^{-4}) $\\$ (2,1,0.35,20,1,1.0\times10^{-3},30) $\\$ (3.9\times10^{-1},20,30) $\\($100$)}
			& \tabincell{l}{$31.58$\\$0.00$\\$19.37$\\$18.47$\\$20.05$\\$18.95$\\$15.52$\\$14.63$\\$17.08$\\$\pmb{11.93}$}
			& \tabincell{l}{$1733.72$\\$146.60$\\$602.77$\\$773.32$\\$504.60$\\$548.70$\\$411.53$\\$221.53$\\$509.81$\\$\pmb{99.37}$}
			& \tabincell{l}{$1.65$\\$0.00$\\$0.31$\\$0.26$\\$0.18$\\$0.19$\\$0.20$\\$0.37$\\$0.12$\\$\pmb{0.07}$}
			&\tabincell{l}{$1.82\times10^{-2}$\\$0.00$\\$2.03\times10^{-2}$\\$1.68\times10^{-2}$\\$1.99\times10^{-2}$\\$1.86\times10^{-2}$\\$1.70\times10^{-2}$\\$1.69\times10^{-2}$\\$1.72\times10^{-2}$\\$\pmb{1.60\times10^{-2}}$}
			&\tabincell{l}{$8.30\times10^{-2}$\\$0.00$\\$9.80\times10^{-2}$\\$7.79\times10^{-2}$\\$9.19\times10^{-2}$\\$6.97\times10^{-2}$\\$8.21\times10^{-2}$\\$7.25\times10^{-2}$\\$7.06\times10^{-2}$\\$\pmb{6.86\times10^{-3}}$}
			\\ \hline
			\tabincell{l}{Max Planck\\$\sigma_{n}=0.1{e}_l$\\(Figure~\ref{fig:com_denosing and_features} row 3)\\$\left|V\right|$: 5272\\ $\left|F\right|$: 10540} &
			\tabincell{l}{ 
				Noisy \\
				Ground truth \\
				~\cite{fleishman2003bilateral} \\
				~\cite{jones2003non} \\
				~\cite{sun2007fast}\\
				~\cite{zheng2011bilateral}\\
				~\cite{he2013mesh} \\
				~\cite{zhang2015guided}\\
				~\cite{li2018non}\\
				Ours\\
			}
			&\tabincell{l}{$(-)$\\$(-)$\\$ (10) $\\$(1,1) $\\$ (0.5,20,10) $\\$ (1,3.5\times10^{-1},20,1.0\times10^{-2},10) $\\$ (1.4\times10^{-2},1.0\times10^{-3},1.0\times10^{3},0.5,3.8\times10^{-3},0.4) $\\$ (2,1,0.35,20,1,1.0\times10^{-3},10) $\\$ (3.9\times10^{-1},20,10) $\\($40$)}
			& \tabincell{l}{$9.54$\\$0.00$\\$8.59$\\$7.32$\\$11.83$\\$10.75$\\$8.67$\\$11.63$\\$9.57$\\$\pmb{6.60}$}
			& \tabincell{l}{$455.27$\\$226.34$\\$242.63$\\$221.09$\\$240.25$\\$225.02$\\$170.66$\\$179.21$\\$221.26$\\$\pmb{127.10}$}
			& \tabincell{l}{$0.11$\\$0.00$\\$0.08$\\$0.07$\\$0.12$\\$0.20$\\$0.18$\\$0.46$\\$0.14$\\$\pmb{0.07}$}
			&\tabincell{l}{$5.58\times10^{-1}$\\$0.00$\\$8.19\times10^{-1}$\\$7.25\times10^{-1}$\\$1.10$\\$9.45\times10^{-1}$\\$\pmb{6.63\times10^{-1}}$\\$1.05$\\$8.39\times10^{-1}$\\$1.19$}
			&\tabincell{l}{$2.71$\\$0.00$\\$3.47$\\$3.09$\\$5.30$\\$3.88$\\$\pmb{2.71}$\\$5.58$\\$5.11$\\$5.19$}
			\\  \hline      
			\tabincell{l}{Vaselion\\$\sigma_{n}=0.2{e}_l$\\(Figure~\ref{fig:com_denosing and_features} row 4)\\$\left|V\right|$: 38728\\ $\left|F\right|$: 77452} &
			\tabincell{l}{ 
				Noisy \\
				Ground truth \\
				~\cite{fleishman2003bilateral} \\
				~\cite{jones2003non} \\
				~\cite{sun2007fast}\\
				~\cite{zheng2011bilateral}\\
				~\cite{he2013mesh} \\
				~\cite{zhang2015guided}\\
				~\cite{li2018non}\\
				Ours\\
			}
			&\tabincell{l}{$(-)$\\$(-)$\\$ (10) $\\$(1,1) $\\$ (0.5,20,10) $\\$ (1,3.5\times10^{-1},20,1.0\times10^{-2},10) $\\$ (1.4\times10^{-2},1.0\times10^{-3},1.0\times10^{3},0.5,1.1\times10^{-3},2.0\times10^{-6}) $\\$ (2,1,0.35,20,1,1.0\times10^{-3},10) $\\$ (3.9\times10^{-1},20,10) $\\($40$)}
			& \tabincell{l}{$25.11$\\$0.00$\\$21.13$\\$NaN$\\$24.92$\\$21.78$\\$16.84$\\$25.66$\\$21.55$\\$\pmb{12.72}$}
			& \tabincell{l}{$12933.10$\\$2757.32$\\$7435.05$\\$8127.51$\\$7017.39$\\$6449.90$\\$5536.49$\\$4403.20$\\$6184.21$\\$\pmb{2741.82}$}
			& \tabincell{l}{$0.47$\\$0.00$\\$0.14$\\$0.12$\\$0.15$\\$0.18$\\$0.26$\\$0.41$\\$0.10$\\$\pmb{0.02}$}
			&\tabincell{l}{$1.46\times10^{-3}$\\$0.00$\\$1.72\times10^{-3}$\\$NaN$\\$2.18\times10^{-3}$\\$1.85\times10^{-3}$\\$1.46\times10^{-3}$\\$2.18\times10^{-3}$\\$\pmb{1.91\times10^{-3}}$\\$1.94\times10^{-3}$}
			&\tabincell{l}{$7.65\times10^{-3}$\\$0.00$\\$1.41\times10^{-1}$\\$7.46\times10^{-3}$\\$2.80\times10^{-2}$\\$1.32\times10^{-2}$\\$\pmb{7.20\times10^{-3}}$\\$1.19\times10^{-2}$\\$6.77\times10^{-2}$\\$1.80\times10^{-2}$}
			\\ 
			\hline           
		\end{tabular}
	}
\end{table*}

\begin{figure}[htbp]
	\centering
	\includegraphics[width=4cm]{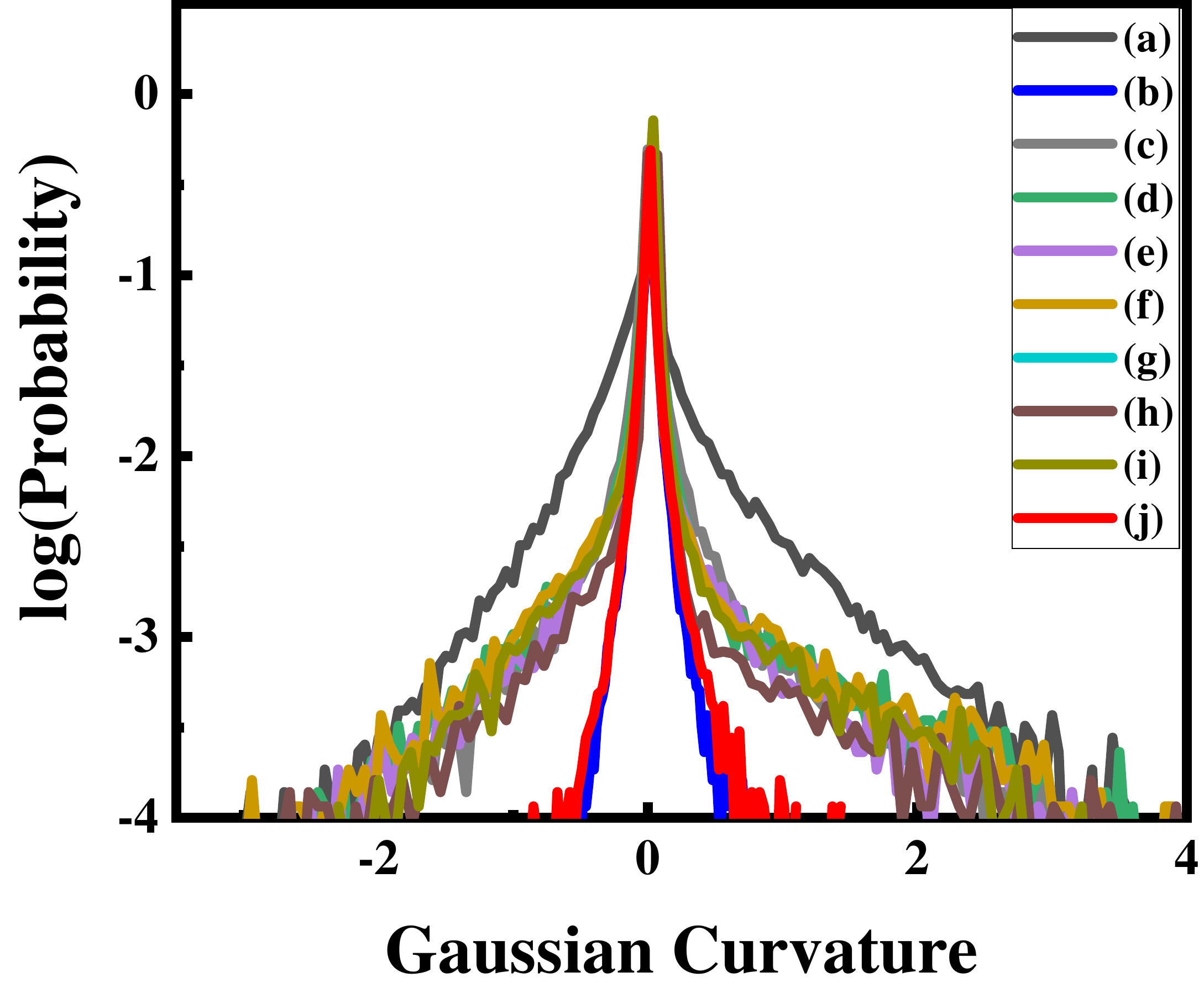}
	\includegraphics[width=4cm]{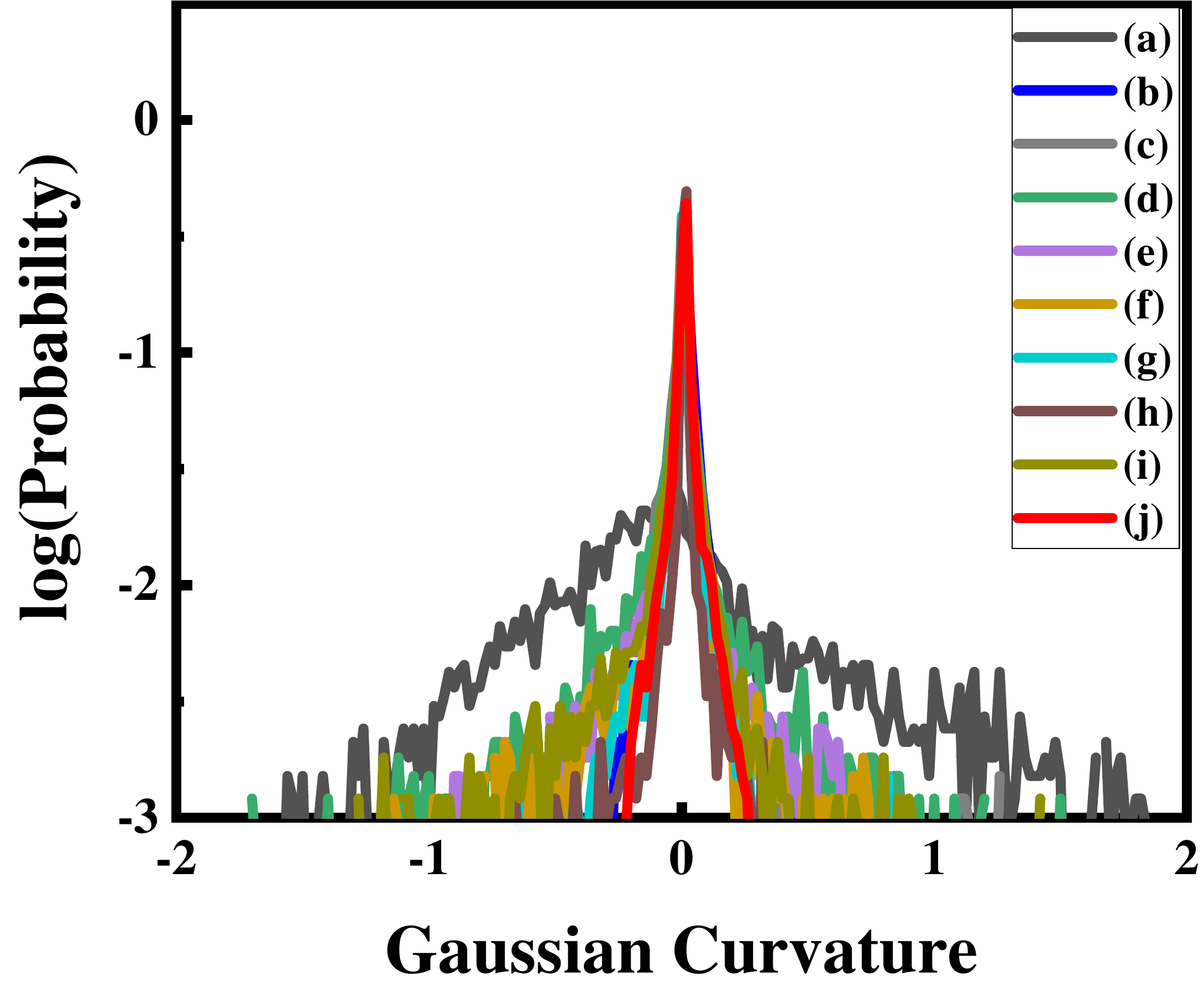}\\
	\makebox[4cm]{Armadillo~($\sigma_{n}=0.3{e}_l$)}
	\makebox[4cm]{Bunny~($\sigma_{n}=0.5{e}_l$)}
	\includegraphics[width=4cm]{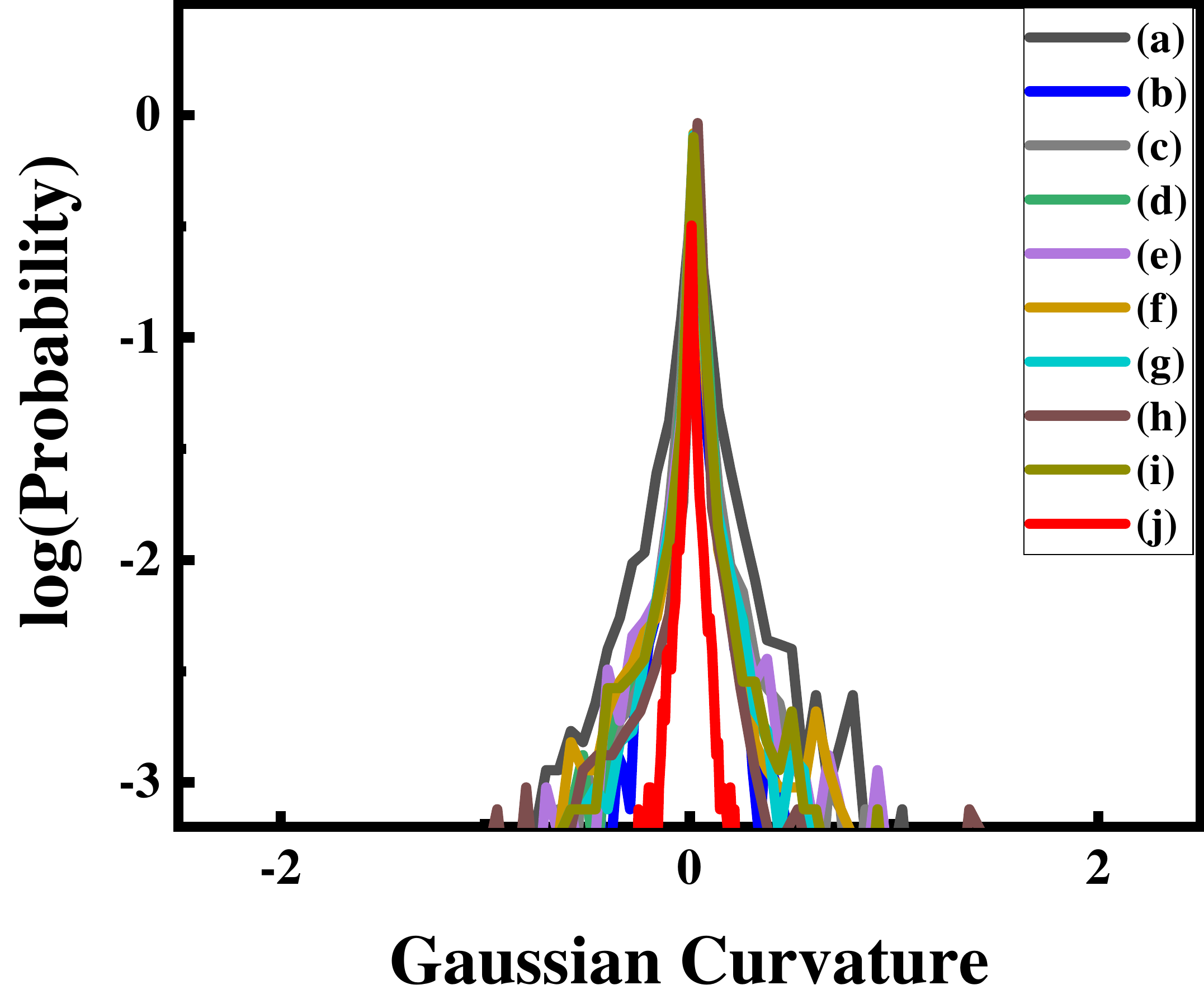}
	\includegraphics[width=4cm]{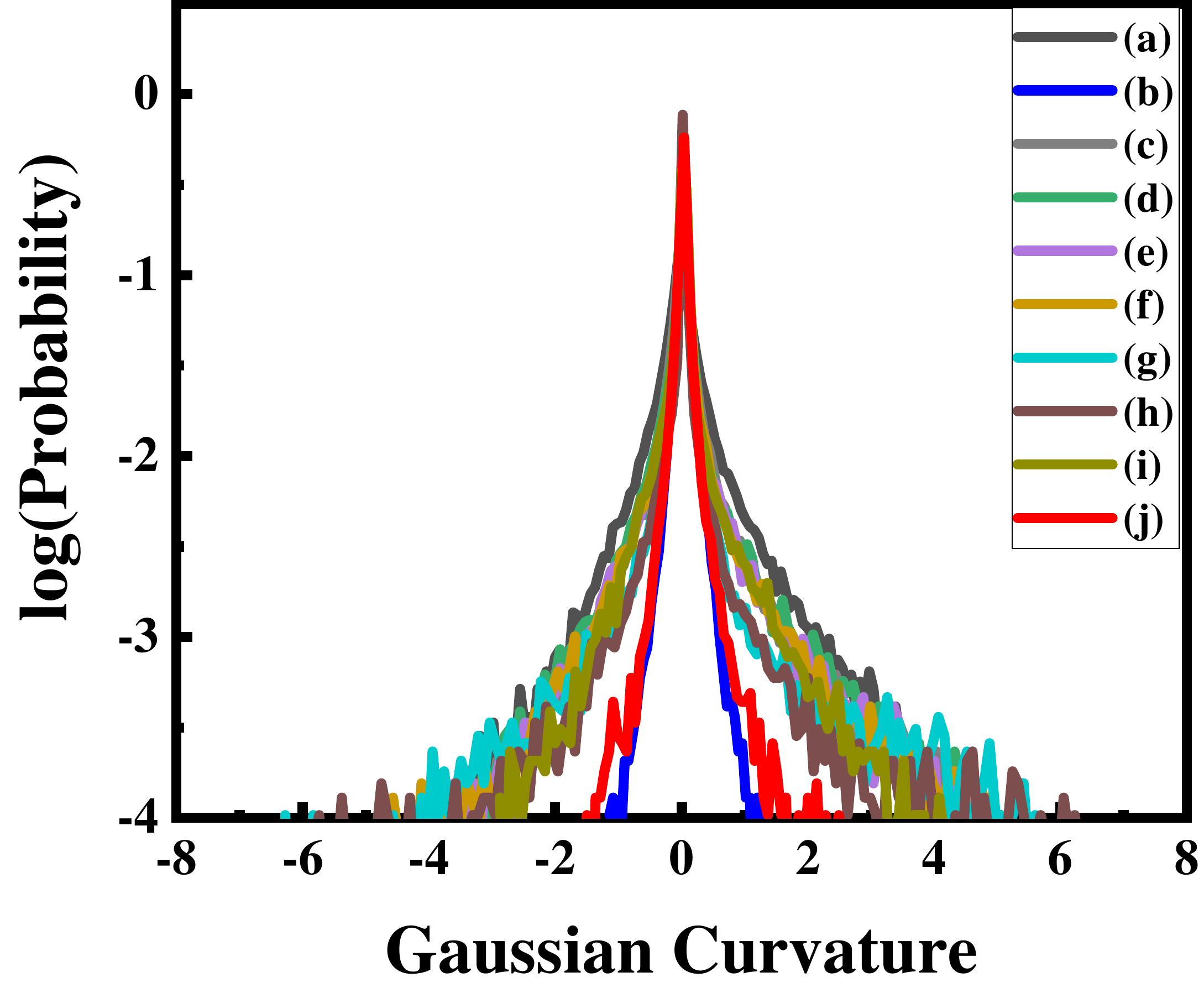}\\
	\makebox[4cm]{Max Planck~($\sigma_{n}=0.1{e}_l$)}
	\makebox[4cm]{Vaselion~($\sigma_{n}=0.2{e}_l$)}\\
	\caption{ \label{fig:com_distibution_8_method} Gaussian curvature distribution map corresponding to Fig.\ref{fig:com_denosing and_features} row 1-4. The left axis is Gaussian curvature probability in Log scale. And the bottom axis is the Gaussian curvature value of the vertices on the model. The curves of the eight different colors are: (a) is the Noisy input, (b) is the Ground truth . (c) is the method ~\cite{fleishman2003bilateral}, (d) is the method ~\cite{jones2003non}, (e) is the method ~\cite{sun2007fast}, (f) is the method ~\cite{zheng2011bilateral}, (g) is the method ~\cite{he2013mesh}, (h) is the method ~\cite{zhang2015guided}, (i) is the method ~\cite{li2018non}, (j) is Ours. Our results are close to the ground truth. The quantitative K-L divergence is summarized in the right column of Table~\ref{table2}.}
\end{figure}

\par Figures~\ref{fig:com_denosing and_features} row 2 demonstrates that our method also works in models with few vertices and faces but rich features with high noise level $(\sigma_{n}=0.5{e}_l)$. In Fig.~\ref{fig:com_denosing and_features} row 2, we can see that for the method of ~\cite{zhang2015guided}, handling low-resolution mesh models can result in local regions using larger neighborhood filtering normals and calculating normals on larger patches, which can result in over smoothing. Figure~\ref{fig:com_denosing and_features} demonstrates the robustness of GCF at different noise levels and vertex numbers.

\par The quantitative comparison results are shown in Table~\ref{table2}. In Fig.~\ref{fig:com_distibution_8_method}, we draw the Gaussian curvature probability distributions of the models which can help better explain why our scheme achieves such a good effect. A wider Gaussian curvature probability distribution indicates a higher level of noise. As can be seen from the armadillo in Fig.~\ref{fig:com_distibution_8_method}, for the noisy model, the Gaussian curvature energy is the largest, so the curve is the widest. For the ground truth model, the Gaussian curvature energy is concentrated near 0, and the distribution curve is relatively narrow. Our method's distribution curve is closest to the ground truth distribution amongst all compared methods. The KLD values of our method in Table~\ref{table2} is the smallest, indicating that the Gaussian curvature probability distribution of our method is the closest to the ground truth distribution. The MSAE values of our method are the smallest, indicating that the output of our method can preserve the model's geometric feature the best. It is also interesting to observe that the GCE values of our method's outputs are also the lowest.
\par We also compared to our previous work~\cite{pan2020hlo} on smoothing and feature preservation. As shown in the Fig.~\ref{Fig.compare hlo}, Gaussian curvature filtering achieves better results on different noise level models.


\subsection{Applied to real scan models}
\par We have seen that our method achieves state-of-the-art effect on synthesized CAD meshes. We have applied the algorithm to process real 3D models. We use two real scanned meshes which contain unknown noise in the experiments and results are shown in Fig.~\ref{fig:com_scan_rabbit_and_angel_100}. Since these two real scanned meshes do not have a ground truth model, they are not shown here as shown in Table~\ref{table2}, but only show qualitative smoothing results.
\subsection{Apply GPU version of GCF on super large meshes}
\par Our algorithm can be implemented by GPU parallel computing, which is especially practical for super large meshes that need to be optimized. Select the fastest algorithm~\cite{sun2007fast} in the comparison method of this paper and compare our algorithm on two super large meshes in terms of computing time. While~\cite{sun2007fast} failed in processing large meshes, our method is successfully applied on large meshes. The results are shown in Table~\ref{table4}:

\begin{table}
	\begin{minipage}{0.8\linewidth}
		$\begin{array}{cccc}
		\subfigure[Kitten]{\includegraphics[width=2cm]{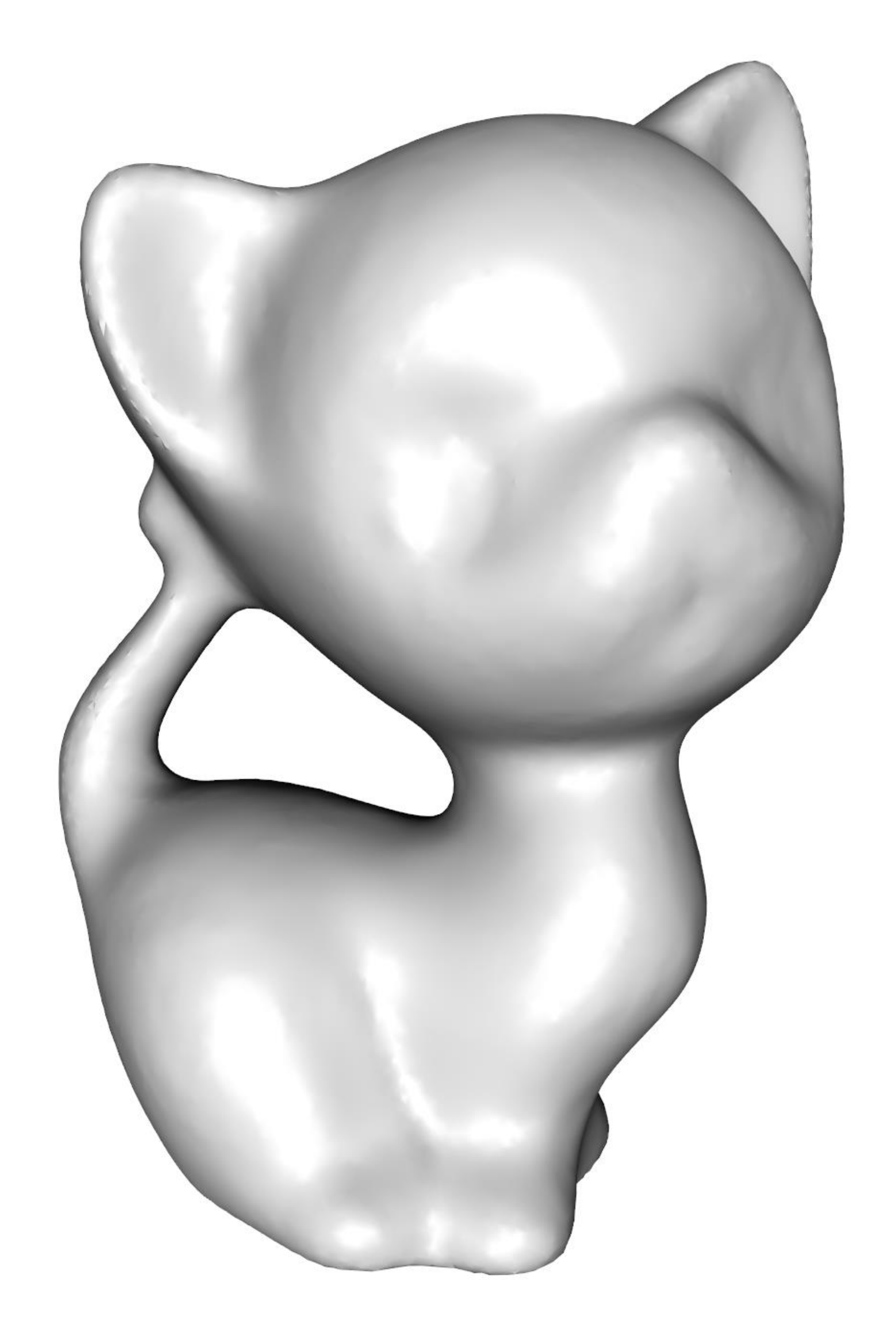}}&
		\subfigure[Dragon]{\includegraphics[width=2cm]{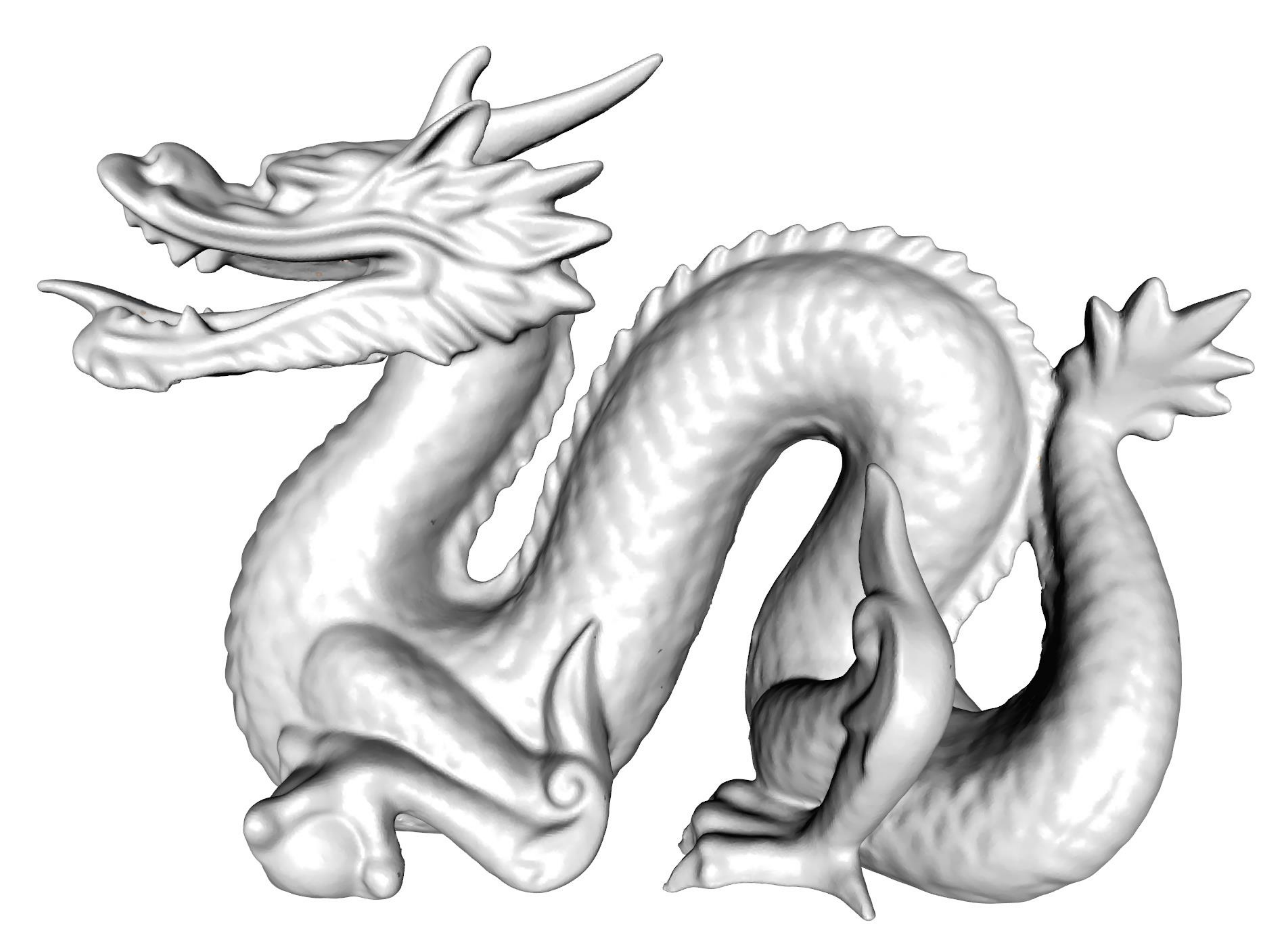}}
		\subfigure[Statues]{\includegraphics[width=2cm]{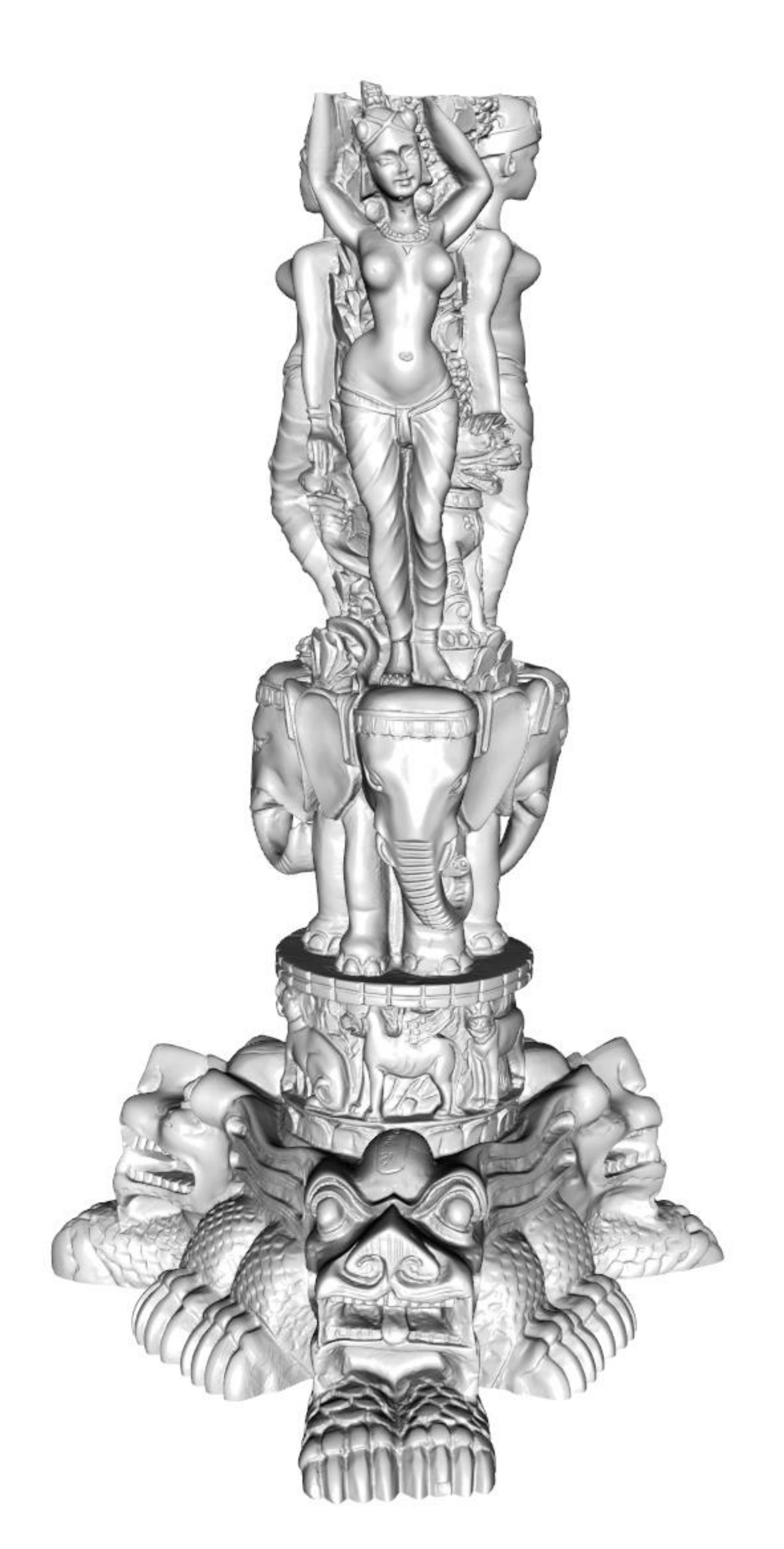}}&
		\subfigure[Lucy]{\includegraphics[width=2cm]{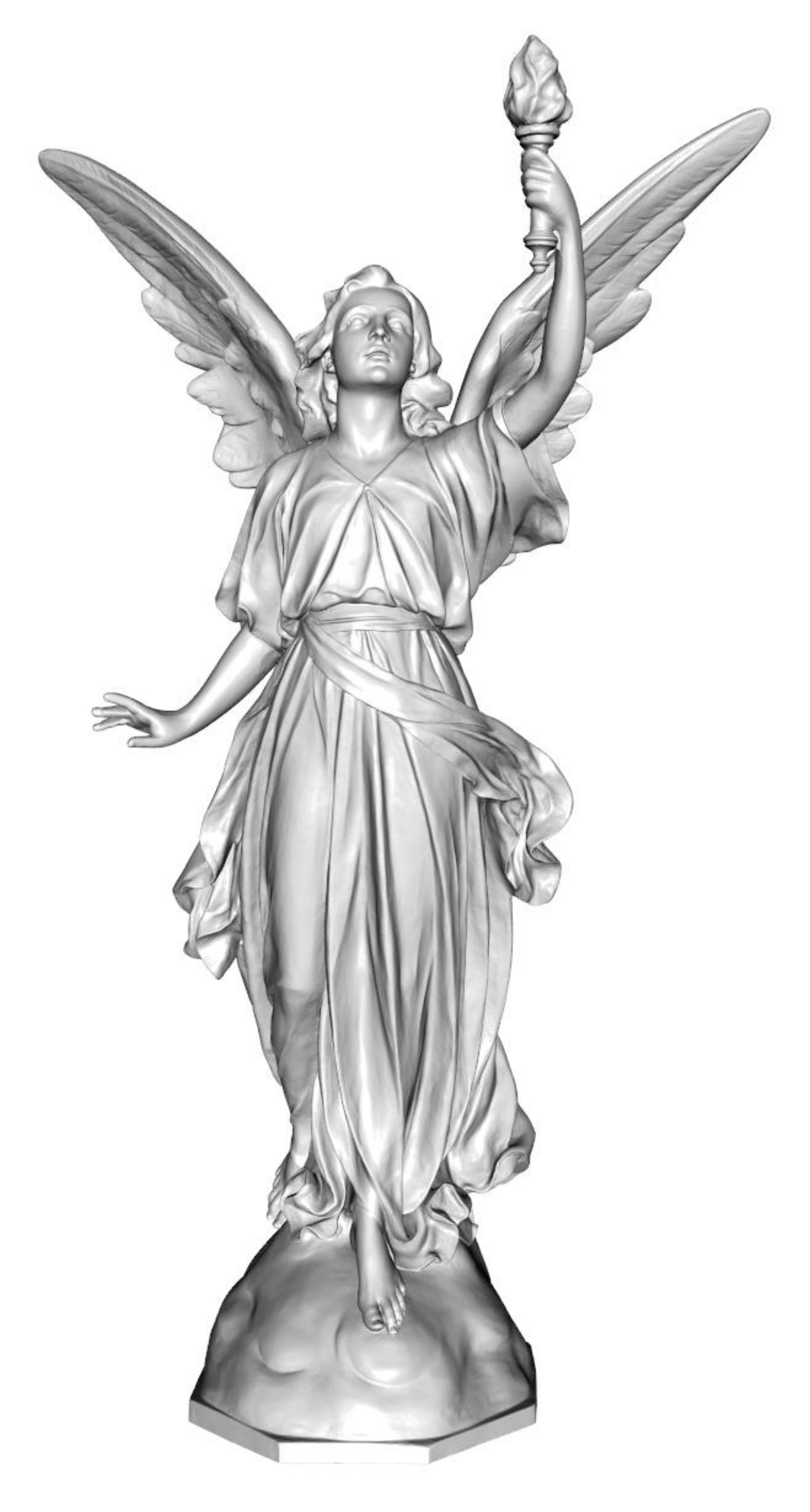}}
		
	\end{array}$
	\end{minipage}\\
	\begin{minipage}{0.8\linewidth} \small 
	\centering \begin{tabular}{c|cccc} 
	\hline \diagbox{Methods}{Times(s)}{Models} & Kitten &Dragon & Statues& Lucy  \\ \hline
	~\cite{sun2007fast} & $16.83$ & $258.8$& $NaN$&$NaN$\\ GCF(GPU) & $3.28$ & $3.39$ & $4.93$& $7.50$\\ \hline
	\end{tabular} 
	\end{minipage}\caption{Time-consuming comparison on meshes with a huge number of vertices.(Kitten:~$\left|V\right|$: 24956, $\left|F\right|$: 49912; Dragon:~$\left|V\right|$: 437645, $\left|F\right|$: 871414; Statues:~$\left|V\right|$: 4999996, $\left|F\right|$: 10000000; Lucy:~$\left|V\right|$: 14027872, $\left|F\right|$: 28055742;). Number of iterations: ~\cite{sun2007fast} iterates 20 times for the normal of the face and 20 iterations for the vertices. Our iterations are 40. } \label{table4} 
	\end{table}

	\subsection{Limitations and future work}
	\par For some corner features like Fig.~\ref{Fig.projection_and_update}, GCF can ensure that these corner features are preserved during the smoothing process. However, In the sharp corner feature shown in Fig.~\ref{Fig.hl_compare}, the apex of the cone tip, GCF treats it as noise. This is a common problem shared by all local methods that do not have a priori assumption about edges. For noisy CAD models with particularly sharp features, our method is inferior to methods based on face normal filtering at preserving sharp features. In future work, we will conduct further research on the possibility of GCF in the field of mesh developability and editing.

	\section{Conclusions}
	In this paper, we propose an iterative filter that optimizes the Gaussian curvature of a triangular mesh. Our method does not need to explicitly calculate the Gaussian curvature. Our method is simple but effective and efficient, as confirmed by the numerical experiments. Our algorithm has only one parameter in the form of the number of iterations, which makes our algorithm easy to use. From the results of multiple sets of experiments, whether it is visual or quantitative analysis, we have verified that our method outperforms the state-of-the-art. With this Gaussian curvature filter, minimizing Gaussian curvature on triangular meshes becomes much easier. This is important for many industries that require Gaussian curvature optimization, such as ship manufacture, car shape design, etc. And we believe that our method can be beneficial to both academical research and practical industries.
	
	
	\section*{Acknowledgments}
	This work was supported in part by the National Natural Science Foundation of China under Grant 61907031. Also partially supported by the Education Department of Guangdong Province, PR China, under project No 2019KZDZX1028.

\bibliographystyle{IEEEtran}  
\bibliography{IEEEabrv,mybibfile}

\end{document}